\newcommand{\arxiv}[1]{\href{http://arxiv.org/abs/#1}{\tt arXiv:\nolinkurl{#1}}}
\newcommand{\arXiv}[1]{\href{http://arxiv.org/abs/#1}{\tt arXiv:\nolinkurl{#1}}}
\newcommand{\googlebooks}[1]{(preview at \href{http://books.google.com/books?id=#1}{google books})}
\definecolor{dark-red}{rgb}{0.7,0.25,0.25}
\definecolor{dark-blue}{rgb}{0.15,0.15,0.55}
\definecolor{medium-blue}{rgb}{0,0,.8}
\definecolor{violet}{RGB}{138,43,226}
\definecolor{DarkGreen}{RGB}{0,150,0}
\theoremstyle{plain}
\newtheorem{thm}{Theorem}[section]
\newtheorem*{thm*}{Theorem}
\newtheorem{cor}[thm]{Corollary}
\newtheorem*{cor*}{Corollary}
\newtheorem*{conj*}{Conjecture}
\newtheorem{lem}[thm]{Lemma}
\newtheorem{prop}[thm]{Proposition}
\newtheorem*{quest*}{Question}
\newtheorem*{claim*}{Claim}
\newtheorem{ansatz}[thm]{Ansatz}
\theoremstyle{definition}
\newtheorem{defn}[thm]{Definition}
\newtheorem{construction}[thm]{Construction}
\newtheorem{nota}[thm]{Notation}
\newtheorem{ex}[thm]{Example}
\newtheorem{example}[thm]{Example}
\newtheorem{sub-ex}[thm]{Sub-Example}
\newtheorem{rem}[thm]{Remark}
\newtheorem*{rem*}{Remark}
\newtheorem{remark}[thm]{Remark}
\newtheorem{warn}[thm]{Warning}
\DeclareMathOperator{\Aut}{Aut}
\DeclareMathOperator{\coev}{coev}
\DeclareMathOperator{\End}{End}
\DeclareMathOperator{\ev}{ev}
\DeclareMathOperator{\Hom}{Hom}
\DeclareMathOperator{\spann}{span}
\DeclareMathOperator{\id}{id}
\DeclareMathOperator{\Id}{\Id}
\DeclareMathOperator{\Ind}{Ind}
\DeclareMathOperator{\Irr}{Irr}
\DeclareMathOperator{\Stab}{Stab}
\DeclareMathOperator{\loc}{loc}
\DeclareMathOperator{\br}{br}
\newcommand{\set}[2]{\left\{#1 \middle| #2\right\}}
\def\semicolon{;}
\def\applytolist#1{
    \expandafter\def\csname multi#1\endcsname##1{
        \def\multiack{##1}\ifx\multiack\semicolon
            \def\next{\relax}
        \else
            \csname #1\endcsname{##1}
            \def\next{\csname multi#1\endcsname}
        \fi
        \next}
    \csname multi#1\endcsname}
\def\calc#1{\expandafter\def\csname c#1\endcsname{{\mathcal #1}}}
\def\bbc#1{\expandafter\def\csname bb#1\endcsname{{\mathbb #1}}}
\def\bfc#1{\expandafter\def\csname bf#1\endcsname{{\mathbf #1}}}
\def\sfc#1{\expandafter\def\csname s#1\endcsname{{\sf #1}}}
\def\fc#1{\expandafter\def\csname f#1\endcsname{{\mathfrak #1}}}
\newcommand{\Fun}{{\sf Fun}}
\newcommand{\Rep}{{\sf Rep}}
\newcommand{\Mod}{{\sf Mod}}
\newcommand{\Bim}{{\sf Bim}}
\newcommand{\Hilb}{{\sf Hilb}}
\newcommand{\fdHilb}{{\sf Hilb_{fd}}}
\newcommand{\Ising}{\mathsf{Ising}}
\newcommand{\UmFC}{\mathsf{UmFC}}
\newcommand{\UFC}{\mathsf{UFC}}
\newcommand{\UMTC}{\mathsf{UMTC}}
\newcommand{\UBFC}{\mathsf{UBFC}}
\newcommand{\rmB}{\mathrm{B}}
\newcommand{\TY}{\cT\cY}
\newcommand{\noshow}[1]{}
\newcommand{\MR}[1]{}
\newcommand{\Cstar}{{\rm C^*}}
\newcommand{\Z}{\mathbb{Z}}
\newcommand{\xBv}{\boxbar} 
\newcommand{\xBh}{\boxminus} 
\newcommand{\xF}{\boxslash} 
\tikzset{vertex/.style = {shape=circle,draw,fill=black,inner sep=0pt,minimum size=5pt}}
\tikzset{edge/.style = {->,> = latex', bend right}}
\tikzset{
	super thick/.style={line width=3pt}
}
\tikzstyle{knot}=[preaction={super thick, white, draw}]
\tikzset{
    quadruple/.style args={[#1] in [#2] in [#3] in [#4]}{
        #1,preaction={preaction={preaction={draw,#4},draw,#3}, draw,#2}
    }
}
\tikzstyle{shaded}=[fill=red!10!blue!20!gray!30!white]
\tikzstyle{unshaded}=[fill=white]
\tikzstyle{empty box}=[circle, draw, thick, fill=white, opaque, inner sep=2mm]
\tikzstyle{annular}=[scale=.7, inner sep=1mm, baseline]
\tikzstyle{rectangular}=[scale=.75, inner sep=1mm, baseline=-.1cm]
\tikzstyle{mid>}=[decoration={markings, mark=at position 0.53 with {\arrow{>}}}, postaction={decorate}]
\tikzstyle{mid<}=[decoration={markings, mark=at position 0.5 with {\arrow{<}}}, postaction={decorate}]
\tikzstyle{over}=[double, draw=white, super thick, double=]
\tikzstyle{box} = [rectangle,draw,rounded corners=5pt,very thick]
\newcommand{\roundNbox}[6]{
	\draw[rounded corners=5pt, very thick, #1] ($#2+(-#3,-#3)+(-#4,0)$) rectangle ($#2+(#3,#3)+(#5,0)$);
	\coordinate (ZZa) at ($#2+(-#4,0)$);
	\coordinate (ZZb) at ($#2+(#5,0)$);
	\node at ($1/2*(ZZa)+1/2*(ZZb)$) {#6};
}
  \newcommand{\tikzmath}[2][]
     {\vcenter{\hbox{\begin{tikzpicture}[#1]#2
                     \end{tikzpicture}}}
     }
\newcommand{\nhex}[5][]{
\coordinate (center) at (#2, #3);
\foreach \hexSideCounter in {1,2,3,4,5,6} {
\coordinate (pointB) at (canvas polar cs:angle={60*\hexSideCounter},radius=#4cm);
\coordinate (pointA) at (canvas polar cs:angle={60*(\hexSideCounter - 1)},radius=#4cm);
\draw[#5] ($(center)+(pointA)$) -- ($(center)+(pointB)$);
}
\ifthenelse{\equal{#1}{}}{}{
\coordinate (pointD) at (canvas polar cs:angle=-30,radius=.866*#4cm);
\coordinate (pointE) at (canvas polar cs:angle=-30,radius=.288*#4cm);
\draw[#1] ($(center)+(pointD)$) -- +($#4*(-.333, .333)$);
}%
}
\newcommand{\levinHex}[5][]{
\coordinate (center) at (#2, #3);
\foreach \hexSideCounter in {1,2,3,4,5,6} {
\coordinate (pointA) at (canvas polar cs:angle={60*(\hexSideCounter - 1)},radius=#4cm);
\coordinate (pointB) at (canvas polar cs:angle={60*\hexSideCounter},radius=#4cm);
\coordinate (pointC) at (canvas polar cs:angle={60*(\hexSideCounter - 1)},radius=1.5*#4cm);
\draw[#5] ($(center)+(pointA)$) -- ($(center)+(pointB)$);
\draw[#5] ($(center)+(pointA)$) -- ($(center)+(pointC)$);
}
\ifthenelse{\equal{#1}{}}{}{
\coordinate (pointD) at (canvas polar cs:angle=-30,radius=.866*#4cm);
\coordinate (pointE) at (canvas polar cs:angle=-30,radius=.288*#4cm);
\draw[#1] ($(center)+(pointD)$) -- +($#4*(-.333, .333)$);
}%
}
\newcommand{\RN}[1]{\uppercase\expandafter{\romannumeral #1\relax}} 
\begin{document}


\title{Composing topological domain walls and anyon mobility}

\author{Peter Huston$^1$}
\author{Fiona Burnell$^2$}
\author{Corey Jones$^3$}
\author{David Penneys$^4$}
\affiliation{$^1$Department of Mathematics, Vanderbilt University, Nashville, TN 37212, USA}
\affiliation{$^2$School of Physics and Astronomy, University of Minnesota, Minneapolis, MN 55455, USA}
\affiliation{$^3$Department of Mathematics, North Carolina State University, Raleigh, North Carolina 27695, USA}
\affiliation{$^4$Department of Mathematics, The Ohio State University, Columbus, OH 43210, USA}

\begin{abstract}
Topological domain walls separating 2+1 dimensional topologically ordered phases can be understood in terms of Witt equivalences between the UMTCs describing anyons in the bulk topological orders. 
However, this picture does not provide a framework for decomposing stacks of multiple domain walls into superselection sectors — i.e., into fundamental domain wall types that cannot be mixed by any local operators.
Such a decomposition can be understood using an alternate framework in the case that the topological order is anomaly-free, in the sense that it can be realized by a commuting projector lattice model.
By placing these Witt equivalences in the context of a 3-category of potentially anomalous (2+1)D topological orders, we develop a framework for computing the decomposition of parallel topological domain walls into indecomposable superselection sectors, extending the previous understanding to topological orders with non-trivial anomaly.
We characterize the superselection sectors in terms of domain wall particle mobility, which we formalize in terms of tunnelling operators.
The mathematical model for the 3-category of topological orders is the 3-category of fusion categories enriched over a fixed unitary modular tensor category.

\end{abstract}

\maketitle

\section{Introduction}

The study of defects in topologically ordered phases of matter has many important physical applications, from engineering non-abelian anyons for quantum information applications
\cite{PhysRevLett.105.077001,PhysRevLett.105.177002,1204.5479,PhysRevX.4.041035,PhysRevB.86.195126}
to classification of phases  
\cite{PhysRevB.82.115120,PhysRevB.99.115116,PhysRevResearch.2.043165}.
In 2+1 dimensions, an important class of defects are {\it topological domain walls} separating two topologically ordered regions \cite{1008.0654,PhysRevX.3.021009,MR2942952,MR3063919,PhysRevResearch.2.023331}.

The classification of such domain walls is well understood from several different perspectives.  
A topological domain wall separating topological orders described by the \emph{unitary modular tensor categories} (UMTCs) $\cC$ and $\cD$ is defined by a \textit{Witt equivalence} between $\cC$ and $\cD$, which describes the point defects that can be localized to the domain wall in a manner that is consistent with the fusion and braiding rules of the anyons that can be brought to the wall from either of the adjacent bulk regions \cite{MR2942952}.
Choosing such a Witt equivalence is equivalent to specifying a Lagrangian algebra in $\cC\boxtimes \overline{\cD}$,\footnote{$\overline{\cD}$ is the UMTC with the reverse braiding of $\cD$.
} 
i.e.~a gapped boundary between $\cC\boxtimes \overline{\cD}$ and the vaccuum \cite{PhysRevX.3.021009,MR3246855}.
Another useful perspective is to study \textit{particle mobility} across domain walls.
 This has been explored extensively for  \textit{invertible} domain walls $\cM$ \cite{PhysRevA.71.022316,1204.5479,PhysRevX.2.041002,PhysRevB.86.195126,PhysRevX.4.041035,PhysRevB.87.045130,1504.02476,1410.4540,PhysRevX.2.031013,PhysRevB.86.161107}, in which a quasiparticle  with topological charge $a$ entering the domain wall exits on the other side with topological charge $\Phi_{\cM}(a)$, where  $\Phi_{\cM}$ is a braided equivalence between the UMTCs on either side.

However, the characterizations described above leave unaddressed an important question: what happens when we ``compose" parallel domain walls by horizontally stacking them?
This has been studied extensively in the case of non-chiral theories \cite{MR3975865,MR4043811,1907.06692}, where it has been shown that a composite of two parallel indecomposable domain walls can decompose as the direct sum of multiple superselection sectors, which need not be equivalent to one another. In particular, particle mobility need not be the same in different superselection sectors. In this setting, extensive use is made of the higher 3-categorical structure of fusion categories, which classify topological defects according to the cobordism hypothesis \cite{MR1355899,MR2555928}.

In this paper, we develop the tools necessary to extend the study of composite domain walls to the chiral setting.  
We describe \textit{tunneling operators}, which bring anyons from one side of a domain wall to the other, 
and explain how the structure of the \textit{space of tunneling operators} gives a natural description of a general domain wall from the perspective of anyon mobility.
We show how composites of certain tunneling operators across a composite domain wall 
can be used to determine the decomposition of the composite wall into distinct superselection sectors.
We then describe how to identify the indecomposable domain walls in each superselection sector by computing sets of tunneling operators for the various anyon types, and we carry out the computations in several examples.
Parallel results in the context of conformal field theory have been previously obtained in \cite{MR2740251}.

In order to study these questions, we adopt a new mathematical perspective on (2+1)D topological orders.
The standard characterization of a (2+1)D topological order is by its UMTC of localized excitations, resulting in the correspondence between defects and mathematical data in Figure \ref{fig:standardDesc}.

\begin{figure}[!ht]
\begin{tabular}{|c|c|}
\hline
2D bulk
&  
UMTC 
\\
1D topological domain wall
& 
Witt equivalence
\\\hline
\end{tabular}
\caption{
\label{fig:standardDesc}
Standard description of topological order in terms of localized excitations, cf.~\cite{MR2942952,MR3246855}; contrast with Figures \ref{fig:standardDescAnomalyFree} and \ref{fig:enrichedDesc} below.
A Witt equivalence \cite{MR3039775} $\cC\to\cD$ is a unitary multifusion category $\cX$ with a choice of braided equivalence $Z(\cX)\cong\cC\boxtimes\overline{\cD}$; see \S~\ref{ssec:EnrichedBimodules}.
}
\end{figure}

In other words, one can describe (2+1)D topological order using a truncation of the Morita $4$-category $\UBFC$ of unitary braided fusion categories.\footnote{To be more precise, we take the 1-truncation of the 4-subcatgory $\UMTC$ of $\UBFC$ whose objects are UMTCs and whose higher morphisms are all invertible.}
The main problem with this perspective for our purposes is that the anyons themselves, which are objects in individual UMTCs, do not appear as higher morphisms in the $4$-category $\UBFC$.
Consequently, details such as how one can concatenate tunneling operators across parallel domain walls or use local operators to distinguish superselection sectors of the composite of two walls cannot be explained naturally from this perspective (e.g.~via the graphical calculus of $\UBFC$).
In particular, since the composite of two Witt equivalences is again a Witt equivalence, the decomposition of a composite domain wall into superselection sectors is \textit{not} a direct sum decomposition of $1$-morphisms in $\UBFC$.

These difficulties illustrate the necessity of placing the tensor categories of excitations listed in Figure \ref{fig:standardDesc} into the context of a $3$-category of (2+1)D topological orders.
In the anomaly-free\footnote{Here, the anomaly refers to an obstruction to being realizable with a commuting projector local Hamiltonian, or equivalently, an obstruction to the low energy effective topological quantum field theory (TQFT) being fully extended.
We refer the reader to \S~\ref{ssec:Enriched} for a further discussion of the anomaly.}
setting, this context is well-understood: a unitary fusion category (UFC) $\cX$ can be used to construct a Levin-Wen string-net model \cite{PhysRevB.71.045110} with (2+1)D topological order, where the localized excitations are given by $\End_{\cX-\cX}(\cX)\cong Z(\cX)$ \cite{PhysRevB.71.045110,MR2726654,1106.6033}.
Moreover, unitary fusion categories form a 3-category $\UFC$ describing all levels of anomaly-free (2+1)D topological order, as summarized in Figure \ref{fig:standardDescAnomalyFree}.

\begin{figure}[!ht]
\begin{tabular}{|c|c|}
\hline
2D bulk
&  
UFC
\\
1D domain wall
& 
bimodule category
\\
0D point defect
&
bimodule functor
\\
local operators\textsuperscript{\ref{foot:localOps}} 
&
bimodule natural transformations
\\\hline
\end{tabular}
\caption{
\label{fig:standardDescAnomalyFree}
Description of anomaly free topological order in terms of ingredients for commuting projector model, cf.~\cite{MR2942952,1912.01760}; compare with Figure \ref{fig:enrichedDesc} below.
}
\end{figure}
\footnotetext{\label{foot:localOps}In this paper, by a local operator in a topologically ordered system, we mean a topological local operator, which in general is only a quasilocal operator (i.e., can be approximated by local operators) which corresponds to an intertwining operator between superselection sectors of the low energy effective quantum field theory describing the emergent topological order.
In the commuting projector lattice models we will describe, such operators will actually be local.}
From this perspective, it is clear how to decompose parallel domain walls into superselection sectors by decomposing the relative tensor product of bimdoule categories into indecomposable summands \cite{MR3975865,MR4043811}.
Moreover, since 0D point defects between a domain wall and itself are wall excitations, and 0D point defects between the trivial domain wall and itself are localized excitations in the 2D bulk, this perspective naturally produces the tensor categories of localized excitations in Figure \ref{fig:standardDesc} as endomorphisms of $1$-morphisms \cite{MR2942952}.

We adapt these techniques to the anomalous setting by introducing a new perspective on (2+1)D topological order afforded by \emph{enriched fusion categories}.
(2+1)D topologically ordered phases typically carry an anomaly described by an invertible (3+1)D topological quantum field theory \cite{2003.06663}.
Such anomalies correspond to Witt classes of UMTCs \cite{MR4302495}. 

To describe a given topological order, we therefore first choose a representative UMTC $\cA$ of the Witt class of the anomaly.  
An $\cA$-enriched UFC is a UFC $\cX$ equipped with a (fully faithful) unitary braided tensor functor $F:\cA\to Z(\cX)$,\footnote{Such pairs $(\cX,F)$ were called \emph{module tensor categories} for $\cA$ in \cite{MR3578212}, and the later articles \cite{MR3961709,1809.09782,1905.04924,1912.01760,1910.03178} motivate the name $\cA$-enriched fusion category.}
the Drinfeld center of $\cX$.\footnote{The Drinfeld center $Z(\cX)$ of a UFC $\cX$ is constructed by looking at objects equipped with half-braidings.  
See \cite{MR1966525} or \cite[\S 4]{PhysRevB.103.195155} for an introductory discussion of such UMTCs.}
In fact, $\cA$-enriched fusion categories form a linear 3-category denoted $\UFC^\cA$.

As for the string net models previously described, from this perspective, the UMTC of bulk excitations in Figure \ref{fig:standardDesc} arises as $\End_{\cX-\cX}^{\cA}(\cX)$; i.e.~by taking the enriched center/M\"uger centralizer $Z^\cA(\cX)$ \cite{MR1990929,MR3725882}.
More generally, the tensor category of excitations localized to a domain wall is similarly a category of endomorphisms of the corresponding $\cA$-enriched bimodule category.
We explore this in detail in Construction \ref{const:EnrichedBimodToWittEq} and Example \ref{ex:EndOfIdentityBimodule} below.
Since the $\cA$-enriched center functor is fully faithful on the 1-truncation of $\UFC^{\cA}$ \cite{MR3866911} (see also \S~\ref{ssec:compositionMath} below),
we can use both perspectives on topological order side-by-side, and describe phenomena such as anyon condensation using the usual language of condensable algebras in a UMTC (see Appendix \ref{appendix:condensable} for more details).

Instead of putting these bulk excitations front and center, however, our perspective should be viewed as describing topological orders in terms of the data which can be used to write down a (3+1)D commuting projector lattice model in which the desired topological order appears on the boundary.  
The role of the bulk is to trivialize the anomaly relative to $\cA$, thereby enabling such a commuting projector realization.
In \S~\ref{ssec:chiralLatticeModel}, we show by explicit construction how an $\cA$-enriched fusion category $(\cX,F)$ is exactly the necessary data to write down a commuting projector boundary of the Walker-Wang model \cite{1104.2632} with bulk $\cA$.
This is parallel to how an honest fusion category is the necessary data to construct a Levin-Wen string net model \cite{PhysRevB.71.045110}; in fact, string-net models occur as the special case $\cA=\Hilb$.\footnote{
In this paper, $\Hilb$ refers to the symmetric monoidal category of \textit{finite dimensional} Hilbert spaces.
}

In this setting, we get the categorical description of topological order in Figure \ref{fig:enrichedDesc}.
\begin{figure}[!ht]
\begin{tabular}{|c|c|}
\hline
2D bulk
&  
$\cA$-enriched fusion category
\\
1D domain wall
& 
$\cA$-enriched bimodule category
\\
0D point defect
&
$\cA$-centered bimodule functor
\\
local operators
&
bimodule natural transformations
\\\hline
\end{tabular}
\caption{\label{fig:enrichedDesc}
Description of topological order in terms of ingredients for commuting projector model afforded by $\cA$-enriched fusion categories.
}
\end{figure}
In this framework, (2+1)D topological orders form a linear 3-category, where local operators at the top level can be used to describe tunneling operators, the decomposition of composite domain walls, and the spaces of ground states when a domain wall is placed along the equator of a sphere.

Applying our anyon mobility perspective on domain walls, we discuss at some length a particularly interesting class of composite domain walls, obtained by beginning with a $\cC$ bulk region and condensing a condensate $A\in\cC$ in the complement of a strip.\footnote{
Anyon condensation involves a choice of condensate $A$, which is identified with the vacuum where $A$ is condensed.
Anyons in $A$ are then precisely those which can become condensed at the domain wall, or equivalently, those which can pass \textit{across} the domain wall to become the vacuum. 
}
In other words, by composing two condensation boundaries between a $\cC$ bulk and the $\cC_A^{\loc}$ bulk obtained when $A$ is condensed, we obtain a domain wall between two $\cC_A^{\loc}$ bulk regions.
We will see that the superselection sectors of the composite domain wall are related to the topological ground state degeneracy within the strip of $\cC$ bulk, when appropriate boundary conditions are imposed.

In particular, when the condensing anyons form a copy of the regular representation $\bbC^G$ of $G$ for a finite group $G$, then in the absence of excitations in the strip, different topological ground state sectors are associated with invertible boundaries carrying out different symmetry actions on the anyons in question.  
In this case, the boundaries represent $G$-crossed braided defects, the tunneling operators describe the corresponding braiding operation in the $G$-crossed braided category, and anyon condensation is associated with de-equivariantization of the categorical symmetry $G$. 
We emphasize that this choice of condensate is very special: for more general condensates, the superselection sectors of the composite boundary need not be invertible, meaning that some anyons cannot cross between the two $\mathcal{C}_A^{\text{loc}}$ regions.
 
\subsection{Outline}
The structure of our paper is as follows.
In \S~\ref{sec:DomainWalls}, we begin by explaining the description of (2+1)D topological orders and (1+1)D domain walls between them in terms of enriched fusion categories, including the passage back and forth between enriched fusion categories and the usual description in terms of categories of localized excitations.
In \S~\ref{sec:wallsFromAlgebras}, we review the description of domain walls in terms of condensable algebras, including a detailed descriptions of how an arbitrary indecomposable domain wall factorizes as the composite of parallel invertible and condensation domain walls, as well as the mathematical operations involved in composing domain walls.
We then build a description of how the composition of two parallel indecomposable domain walls splits into superselection sectors under the action of local operators.
In \S~\ref{sec:Tunneling}, we introduce sets of tunneling operators, and explain how indecomposable domain walls can be characterized by their tunneling operators.
We then investigate the relationship between tunneling operators and composition of parallel domain walls, revealing how tunneling operators for a composite domain wall split up across the superselection sectors, allowing one to identify the resulting domain wall in each sector.
Finally, in \S~\ref{sec:examples}, we work out the decompositions of several composite domain walls into superselection sectors, including non-Abelian examples and an example with nontrivial anomaly.
 

We include several appendices which contain well understood mathematical background material.
Appendix \ref{appendix:fusionBasics} explains the basics of fusion categories and UMTCs, and Appendix \ref{appendix:condensable} gives a review of condensable algebras.
Appendix \ref{appendixSec:D(G)} discusses $\cD(G):=Z(\Hilb(G))$ in detail, and \S~\ref{appendixSec:dihedral} does the explicit example of the dihedral group $G=D_{2n}$.

\subsection{Glossary}
We end this introduction with a brief dictionary summarizing the correspondence between mathematical terminology and notation and the physical concepts related to topological order.
The descriptions here are abbreviated, and this table should be interpreted as an expansion of \cite[Table 1]{MR2942952}.
Note that the operations $\xBv$ and $\xF$ are all usually denoted by $\boxtimes$ in the literature.

\begin{widetext}
\begin{center}
\begin{figure}[!ht]
\begin{tabular}{|l|l|}
\hline
\textbf{Details of boundary for Walker-Wang model}
&
\textbf{Algebraic structure}
\\\hline\hline
(3+1)D bulk invertible TQFT anomaly & UMTC $\cA$
\\
edge labels of bulk & simple objects in $\cA$
\\\hline
(2+1)D boundary theory & $\cA$-enriched UFC $\cX$, which is an object in the 3-category
\\&
\,\,$\UFC^\cA\subset \UmFC^\cA:=\UBFC(\cA\to \Hilb)$
\\
edge labels of boundary & simple objects in $\cX$
\\
anyonic boundary excitations & enriched center/M\"uger centralizer $Z^\cA(\cX)=\cA'\subset Z(\cX)$
\\
change of representative of anomaly 
&
composition with Witt-equivalence ${}_\cB\cW_\cA$ in $\UBFC$ by
\\&
\,\,${}_\cB\cW_\cA\xBv_\cA - : \UmFC^\cA\to \UmFC^\cB$
\\\hline
(1+1)D topological domain wall & $\cA$-centered $\cX-\cY$ bimodule $\cM$
\\
edge labels of domain wall & simple objects in $\cM$
\\
excitations on domain wall & objects in the category $\End^\cA_{\cX-\cY}(\cM)$ of 
\\&\,\,$\cA$-centered $\cX-\cY$ bimodule functors
\\\hline
condensate & condensable algebra $A\in Z^{\cA}(\cX)$
\\
anyons in condensed region & simple objects in $Z^{\cA}(\cX)_A^{\loc}$
\\
excitations at boundary of condensed region & simple objects in $Z^{\cA}(\cX)_A$
\\
edge labels for condensed region & simple objects in $\cX_A$
\\
classification of topological domain walls 
&
Lagrangian algebras $L(A,\overline{B},\Phi)\in Z^\cA(\cX)\boxtimes \overline{Z^\cA(\cY)}$
\\\hline
fusion of domain walls & relative Deligne product ${}_\cX\cM\xF_\cY \cN_\cZ$
\\
summands of composite domain wall 
& 
minimal projections in $Z^\cA(\cY)(B_1\to B_2)$
where 
\\&
\,\,${}_\cX\cM_\cY\leftrightarrow L(A,\overline{B_1},\Phi)\in Z^\cA(\cX)\boxtimes \overline{Z^\cA(\cY)}$
and
\\&
\,\,${}_\cY\cN_\cZ\leftrightarrow L(B_2,\overline{C},\Psi)\in Z^\cA(\cY)\boxtimes \overline{Z^\cA(\cZ)}$
\\\hline
point defect & $\cA$-centered $\cX-\cY$ bimodule functor $F: \cM\to \cN$
\\\hline
local operator & $\cX-\cY$ bimodule natural transformation
\\
\parbox{5cm}{fusion channel to transport anyon $c$\\through domain wall to become $d$}
&
tunneling operator 
in
$
\Hom_{\UFC^\cA}
\left(
\tikzmath{
\pgfmathsetmacro{\radius}{.45};
\fill[red!30] (0,-\radius) arc (270:90:\radius);
\fill[blue!30] (0,-\radius) arc (-90:90:\radius);
\draw[thick] (0,-\radius) -- node[right]{$\scriptstyle \cM$} (0,\radius);
\draw[dashed] (0,0) circle (\radius);
\filldraw[red] ($ -.5*(\radius,0) $) node[below]{$\scriptstyle c$} circle (.05cm);
}
\longrightarrow
\tikzmath{
\pgfmathsetmacro{\radius}{.4};
\fill[red!30] (0,-\radius) arc (270:90:\radius);
\fill[blue!30] ($ -1*(0,\radius) $) arc (-90:90:\radius);
\draw[thick] ($ -1*(0,\radius) $) -- node[left]{$\scriptstyle \cM$} (0,\radius);
\draw[dashed] (0,0) circle (\radius);
\filldraw[blue] ($ .5*(\radius,0) $) node[below]{$\scriptstyle d$} circle (.05cm);
}
\right)
$
\\\hline
\end{tabular}
\caption{\label{fig:Glossary}Glossary for details of boundary for Walker-Wang models and algebraic higher categorical structure from $\UFC^\cA$, expanding on \cite[Fig.~1]{MR2942952}}
\end{figure}  
\end{center}
\end{widetext}

\paragraph*{Acknowledgements.}
This project began at the 2020 MSRI semester on Quantum Symmetries supported by NSF grant DMS 1440140.
The authors would like to thank David Aasen, Jacob Bridgeman, and Dominic Williamson for helpful conversations.
We especially thank David Reutter for many conversations and ideas from the initial phases of this article.
PH and DP were supported by NSF grants DMS 1654159, 1927098, and 2051170.
CJ was supported by NSF grant DMS 1901082/2100531.
FJB is supported by NSF grant DMR 1928166.  

\section{Enriched UFCs and domain walls between (2+1)D topologically ordered phases}
\label{sec:DomainWalls}

In this section, we extend the 3-categorical description of anomaly-free topological order from \cite{MR2942952} to the anomalous setting using enriched UFCs.
That is, we replace the $3$-category $\UFC$ of unitary fusion categories with the $3$-category $\UFC^{\cA}$ of UFCs enriched over a fixed UMTC $\cA$ representing the anomaly;
in the case $\cA\cong\Hilb$, we recover $\UFC$.
We also explain how taking the enriched center can be used to 
translate between the enriched setting and the description of bulk topological orders and domain walls via UMTCs and Witt-equivalences.
In this way, our description contains all the information present in the UMTC/Witt-equivalence picture, but we show that it also contains additional structure which sheds light on domain wall composition.

In \S~\ref{ssec:Enriched} and \ref{ssec:ChangeOfEnrichment}, we explain enriched UFCs in further detail, as well as how modular categories describe topological order from the viewpoint of enriched UFCs.
In \S~\ref{ssec:chiralLatticeModel}, we show how an enriched UFC gives rise to a lattice model for a chiral (2+1)D topological order on the boundary of a Walker-Wang model, including a concrete example.
In \S~\ref{ssec:EnrichedBimodules}, we introduce enriched bimodules between enriched UFCs as the data which determine a (1+1)D domain wall.
We also describe how to go back and forth between enriched UFCs and enriched bimodules and the UMTCs and Witt equivalences which describe bulk and wall excitations.

\subsection{Topological orders and enriched fusion categories}
\label{ssec:Enriched}

In \cite{MR2942952}, the authors explain how each level of morphism in $\UFC$ labels an aspect of anomaly-free (2+1)D topological order, a correspondence which is summarized in Figure \ref{fig:standardDescAnomalyFree}, and more extensively in \cite[Table 1]{MR2942952}.
A unitary fusion category $\cX$ is the input for the well-known Levin-Wen string net model \cite{PhysRevB.71.045110}, which is a commuting projector model of $Z(\cX)$ topological order.
The $1$-morphisms in $\UFC(\cX\to\cY)$ are unitary $\cX-\cY$ bimodule categories ${}_\cX\cM_\cY$, which determine commuting projector models for (1+1)D topological domain walls between the Levin-Wen models determined by $\cX$ and $\cY$.
(This is in contrast to the Witt equivalence bimodule categories mentioned previously, which are bimodules between the two UMTCs representing the topological orders.)

In general, (2+1)D topologically ordered phases carry an \emph{anomaly} \cite{MR3702386}.
This can be seen in the fact that the underlying mapping class group representations of surfaces appearing in the associated topological quantum field theory are projective, rather than honest. 
The anomaly is characterized by an invertible (3+1)D  topological quantum field theory (TQFT) such that the original (2+1)D theory can be realized as a topological boundary \cite[\S~\RN{3}.B]{2003.06663}. 
A concrete realization of this is the Walker-Wang construction \cite{1104.2632}, which takes as input a UMTC;
the corresponding model has an invertible bulk, and can be cut off to realize the topological order associated with the corresponding UMTC on its boundary \cite{PhysRevB.87.045107}.

From a physical perspective, it may seem odd that we are claiming that a (3+1)D theory with boundary is describing a (2+1)D universality class.
However, this can be understood conjecturally via \cite[\S~4]{2202.05442}.
The idea is that (3+1)D Walker-Wang models built from UMTCs can conjecturally be disentangled to a trivial phase by a quantum cellular automata (QCA).
In this sense, we can consider the (3+1)D phase to be trivial, and a topological boundary of a (UMTC) Walker-Wang model is in the same universality class of a purely (2+1)D theory.

\begin{ansatz}
The topological order of a (2+1)D topologically ordered system with anomaly described by the UMTC $\cA$ is described by an $\cA$-enriched UFC $(\cX,F)$.
The low energy excitations of this system are described by the enriched center $Z^\cA(\cX)$.
\end{ansatz}

Here, an $\cA$-\emph{enriched unitary (multi)fusion category} consists of a pair $(\cX,F)$, where $\cX$ is a unitary (multi)fusion category and $F: \cA \to Z(\cX)$ is a braided unitary tensor functor that takes anyon types (or more generally, objects) in $\cA$ to anyon types (objects) in the Drinfeld center of $\cX$.
In what follows, we will frequently suppress $F$.
The enriched center of $(\cX,F)$ is the 
\emph{M\"uger centralizer} $Z^\cA(\cX):=F(\cA)'\subset Z(\cX)$ \cite{MR1990929,MR3725882}.
That is, $Z^\cA(\cX)$ are those anyons in the usual Drinfeld center $Z(\cX)$ that braid trivially with (are centralized by) the image of $\cA$.

Since $\cA$ is nondegenerate and $F$ is fully faithful, the Drinfeld center $Z(\cX)$ can be factored: $Z(\cX) \cong \cA \boxtimes Z^\cA(\cX)$ \cite{MR1990929}.  
Physically, this means that $Z(\cX)$ describes two decoupled (2+1)D layers, one with $\cA$ topological order and one with $Z^{\cA}(\cX)$ topological order.  
\begin{figure}[!ht]
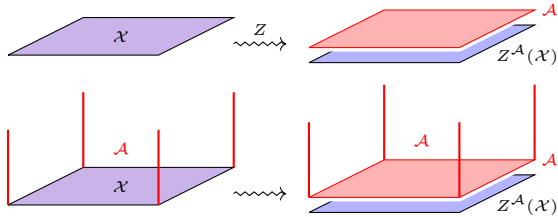

\begin{align*}
\tikzmath{
\filldraw[fill=red!30!blue!30] (0,0) -- (1,.5) -- (3,.5) -- (2,0) -- (0,0);
\node at (1.5,.25) {$\scriptstyle \cX$};
\draw[->, decorate, decoration={zigzag,amplitude=0.7pt,segment length=1.2mm}] (3,.15) --node[above]{$\scriptstyle Z$} (3.7,.15);
\filldraw[fill=blue!30] (4,-.1) -- (5,.4) -- (7,.4) -- (6,-.1) -- (4,-.1);
\filldraw[knot, red, fill=red!30] (4,.1) -- (5,.6) -- (7,.6) -- (6,.1) -- (4,.1);
\node[red] at (7.2,.6) {$\scriptstyle \cA$};
\node at (6.9,0) {$\scriptstyle Z^\cA(\cX)$};
}
\\
\tikzmath{
\filldraw[fill=red!30!blue!30] (0,0) -- (1,.5) -- (3,.5) -- (2,0) -- (0,0);
\draw[thick, red] (0,0) -- (0,1);
\draw[thick, red] (1,.5) -- (1,1.5);
\draw[thick, red] (2,0) -- (2,1);
\draw[thick, red] (3,.5) -- (3,1.5);
\node at (1.5,.25) {$\scriptstyle \cX$};
\node[red] at (1.5,.75) {$\scriptstyle \cA$};
\draw[->, decorate, decoration={zigzag,amplitude=0.7pt,segment length=1.2mm}] (3,.15) -- (3.7,.15);
\filldraw[fill=blue!30] (4,-.1) -- (5,.4) -- (7,.4) -- (6,-.1) -- (4,-.1);
\filldraw[knot, red, fill=red!30] (4,.1) -- (5,.6) -- (7,.6) -- (6,.1) -- (4,.1);
\draw[thick, red] (4,.1) -- (4,1.1);
\draw[thick, red] (5,.6) -- (5,1.6);
\draw[thick, red] (6,.1) -- (6,1.1);
\draw[thick, red] (7,.6) -- (7,1.6);
\node[red] at (5.5,.85) {$\scriptstyle \cA$};
\node[red] at (7.2,.6) {$\scriptstyle \cA$};
\node at (6.9,0) {$\scriptstyle Z^\cA(\cX)$};
}
\end{align*}
    \caption{Since $Z(\cX)\cong Z^\cA(\cX)\boxtimes \cA$, attaching an $\cA$-Walker-Wang bulk to $\cX$ trivializes the $\cA$-layer of topological order, leaving only $Z^\cA(\cX)$.}
    \label{fig:Bilayer}
\end{figure}
By attaching an invertible (3+1)D topological order to the $\cA$-layer, we can trivialize its (2+1)D topological order, leaving only the $Z^{\cA}(\cX)$ topological order of interest (Fig. \ref{fig:Bilayer}).

As we will see in \S~\ref{ssec:EnrichedBimodules}, the UMTC $Z^{\cA}(\cX)$ of local excitations completely determines (up to Morita equivalence) the $\cA$-enriched fusion category $\cX$, so one can equivalently use the UMTC $Z^{\cA}(\cX)$ to specify the topological order.
However, Remark~\ref{rem:notAFunctor} will show that considering $\cA$-enriched fusion categories (and bimodules between them) helps in developing a more complete understanding of defects between regions with (2+1)D topological order.

\begin{example}
In the case of trivial anomaly, i.e. $\cA=\Hilb$, the enriched center is the ordinary Drinfeld center, and the above discussion agrees with the Levin-Wen description of localized excitations in string-net models \cite{PhysRevB.71.045110}.
\end{example}

\begin{example}
\label{egs:selfEnrichment}
In the usual UMTC description of topological order, the UMTC $\cC$ describing the (2+1)D topological order can be viewed as \emph{self-enriched}, since $Z(\cC)\cong \cC\boxtimes \overline{\cC}$.
In this case, the UMTC representing the anomaly is $\cA=\overline{\cC}$, since $Z^{\overline{\cC}}(\cC)=\cC$.
We will see in the next section that the Walker-Wang model arises from this perspective \cite{1104.2632,PhysRevB.87.045107}.
\end{example}

\begin{rem}
\label{rem:enriched}
A subset of the authors had long been troubled by the following `off-by-one' inconsistency.
It is expected (mostly from the sorts of pictures drawn in physical arguments) that (2+1)D topological orders together with topological domain walls and point defects should form a 3-category, possibly with some kind of symmetric monoidal product corresponding to stacking of phases.

However, it is generally agreed that UMTCs are the correct object to describe (2+1)D topological orders.
These are naturally objects of the 4-category $\UBFC$ of unitary braided fusion categories \cite{MR3650080,MR3590516,MR4228258} \cite[\S2.3]{1910.03178},
whose 1-morphisms are bimodule multifusion categories, 
2-morphisms are compatible bimodule categories,
3-morphisms are compatible bimodule functors, 
and 4-morphisms are natural transformations.
In addition, equivalence between objects in this 4-category is Witt equivalence by \cite[Thm.~2.18]{1910.03178}, which is clearly the wrong equivalence relation for topological orders.

These inconsistencies are fixed by using enriched fusion categories to describe topological orders.
Indeed, $\cA$-enriched fusion categories form a 3-category $\UFC^\cA$ which arises as the 3-subcategory of the $\Hom$ 3-category\footnote{In this manuscript, if $\cC$ is an $n$-category, then $\cC(x\to y)$ denotes the $(n-1)$-category of morphisms from $x$ to $y$.
For example, if $H$ and $K$ are finite dimensional Hilbert spaces, then $\Hilb(H\to K)$ is the linear $0$-category, i.e.~vector space, of linear operators from $H$ to $K$.}
$\UBFC(\cA\to \Hilb)$ whose objects are $\cA$-enriched UFCs (as opposed to $\cA$-enriched unitary multifusion categories (UmFCs)), where the UMTC $\cA$ representing the anomaly is fixed. 
$$
\UFC^\cA\subset \UmFC^\cA=\UBFC(\cA\to \Hilb).
$$
The 3-category $\UFC^\cA$, however, is not symmetric monoidal, as anomalies multiply.
That is, given topological orders $(\cX,F): \cA\to \Hilb$ and $(\cY,G):\cB\to \Hilb$, stacking (which is the natural tensor product in this category) gives us a topological order $(\cX\boxtimes\cY, F\boxtimes G): \cA\boxtimes \cB\to \Hilb$.

Note that $\UBFC$ includes into the Morita 4-category of fusion 2-categories via $\cC\mapsto\Mod(\cC)$, and objects in this latter 4-category describe fully extended (3+1)D commuting projector lattice models of topological order \cite{1812.11933}.
The dimensional reduction appears because a (2+1)D topologically ordered phase occurs on the boundary of a (3+1)D invertible TQFT.
While the natural pictures for these models are (3+1)D, since there are no bulk excitations in the Walker-Wang models associated to UMTCs \cite{PhysRevB.87.045107}, and we do not allow defects that extend into the bulk, we can just draw (2+1)D pictures of the boundary, which corresponds to looking at the $3$-category $\UFC^\cA$ and forgetting its origin as the hom-category $\UBFC(\cA\to\Hilb)$.
\end{rem}

\subsection{Walker-Wang type model for an enriched fusion category}
\label{ssec:chiralLatticeModel}

To understand how the $\cA$-enriched fusion category $(\cX, F)$ realizes a given topological order, it is enlightening to examine the commuting projector models realizing our construction in more detail.  
Morally, our construction can be viewed as follows.
Any Drinfeld center can be realized by a (2+1)D string net model, which is constructed from a UFC \cite{PhysRevB.71.045110,MR2942952,PhysRevB.103.195155}
The string net  model associated to $\cX$ (thought of as an ordinary fusion category) is a commuting projector model with anyons described by the UMTC $Z(\cX)\cong Z^{\cA}(\cX)\boxtimes\cA$. 
To trivialize the $\cA$ layer, we attach this string net to the Walker-Wang model associated to $\cA$ (see Fig. \ref{fig:Bilayer}).
This generalizes the topological boundary conditions for the Walker-Wang models considered in  \cite{1104.2632,PhysRevB.87.045107}, by gluing a suitable (2+1)D string net to the boundary; the resulting surface theory has the topological order $Z^{\cA}(\cX)$.  

To make this construction explicit, we must specify how the bulk and boundary layers are attached.  
Here we describe in detail the simplest subset of these models, for which  the fusion rules of $\cX$ and $\cA$ are multiplicity free, and that the composite $\cA \to Z(\cX) \to \cX$ is fully faithful.
This latter assumption means we can identify the anyons $a,b,c,\dots$ in $\cA$ with simple objects in $\cX$, so that $\Irr(\cX)$ can be written as a disjoint union $\{a,b,c,\dots\} \amalg \{x,y,z,\dots\}$ where $x,y,z,\dots$ are the remaining simples in $\cX$ not coming from $\cA$.
Moreover, given two anyons $a,b\in\Irr(\cA)$, the $\Omega$ tensor to describe the half-braiding for $F(a)$ with the image of $b\in \cX$ is given by the $R$-matrix in $\cA$.
From these simplifications, the $R$-matrix for the $\cA$-bulk and the $\Omega$-tensor for the $\cX$-boundary string net, together with the choice of which subset of anyons in $Z(\cX)$ to identify with $\cA$, is sufficient to fully describe the Hamiltonian.
For the general case, we can add degrees of freedom to vertices as in \cite{MR3204497}, and the description of half-braidings requires more indices for the $\Omega$ tensors as in \cite[(42,43)]{PhysRevB.103.195155}.

We begin with the usual brick-layer lattice for the Walker-Wang model, where red edges carry $\bbC^{\Irr(\cA)}$ spins labelled by anyons $a,b,c,\dots$ in $\cA$ and black edges carry $\bbC^{\Irr(\cX)}$ spins labelled by simple objects $\{a,b,c,\dots,\}\amalg \{x,y,z,\dots\}$ of $\cX$.

\newcommand{\topBricks}[3]{
\draw[thick, #1, #2] ($ #3 + (.5,0) $) -- ($ #3 + (.5,1) $);
\draw[thick, #1, #2] ($ #3 + (2,0) $) -- ($ #3 + (2,1) $);
\draw[thick, #1, #2] ($ #3 + (3.5,0) $) -- ($ #3 + (3.5,1) $);
\draw[thick, #1, #2] ($ #3 + (1.25,-.5) $) -- ($ #3 + (1.25,0) $);
\draw[thick, #1, #2] ($ #3 + (2.75,-.5) $) -- ($ #3 + (2.75,0) $);
\draw[thick, #1, #2] ($ #3 + (1.25,1) $) -- ($ #3 + (1.25,2) $);
\draw[thick, #1, #2] ($ #3 + (2.75,1) $) -- ($ #3 + (2.75,2) $);
\draw[thick, #1, #2] ($ #3 + (.5,2) $) -- ($ #3 + (.5,2.5) $);
\draw[thick, #1, #2] ($ #3 + (2,2) $) -- ($ #3 + (2,2.5) $);
\draw[thick, #1, #2] ($ #3 + (3.5,2) $) -- ($ #3 + (3.5,2.5) $);
\foreach \y in {0,1,2}{
\draw[thick, #1, #2] ($ #3 + (0,\y) $) -- ($ #3 + (4,\y) $);
\foreach \x in {.5,1.25,2,2.75,3.5}{
\filldraw[#1] ($ #3 + (\x,\y) $) circle (.03cm);
}}
\foreach \y in {0,2}{
\foreach \x in {.5,2}{
\draw[thick, red] ($ #3 + (\x,\y) + (.25,0) $) -- ($ #3 + (\x,\y) + (.25,0) + (0,-.2) $);
\filldraw[#1] ($ #3 + (\x,\y) + (.25,0) $) circle (.03cm);
}}
\foreach \x in {1.25,2.75}{
\draw[thick, red] ($ #3 + (\x,1) + (.25,0) $) -- ($ #3 + (\x,1) + (.25,0) + (0,-.2) $);
\filldraw[#1] ($ #3 + (\x,1) + (.25,0) $) circle (.03cm);
}
}
\newcommand{\bricks}[3]{
\draw[thick, #1, #2] ($ #3 + (.5,0) $) -- ($ #3 + (.5,1) $);
\draw[thick, #1, #2] ($ #3 + (2,0) $) -- ($ #3 + (2,1) $);
\draw[thick, #1, #2] ($ #3 + (3.5,0) $) -- ($ #3 + (3.5,1) $);
\draw[thick, #1, #2] ($ #3 + (1.25,-.5) $) -- ($ #3 + (1.25,0) $);
\draw[thick, #1, #2] ($ #3 + (2.75,-.5) $) -- ($ #3 + (2.75,0) $);
\draw[thick, #1, #2] ($ #3 + (1.25,1) $) -- ($ #3 + (1.25,2) $);
\draw[thick, #1, #2] ($ #3 + (2.75,1) $) -- ($ #3 + (2.75,2) $);
\draw[thick, #1, #2] ($ #3 + (.5,2) $) -- ($ #3 + (.5,2.5) $);
\draw[thick, #1, #2] ($ #3 + (2,2) $) -- ($ #3 + (2,2.5) $);
\draw[thick, #1, #2] ($ #3 + (3.5,2) $) -- ($ #3 + (3.5,2.5) $);
\foreach \y in {0,1,2}{
\draw[thick, #1, #2] ($ #3 + (0,\y) $) -- ($ #3 + (4,\y) $);
\foreach \x in {.5,1.25,2,2.75,3.5}{
\filldraw[#1] ($ #3 + (\x,\y) $) circle (.03cm);
}}
\foreach \y in {0,2}{
\foreach \x in {.5,2}{
\draw[thick, red] ($ #3 + (\x,\y) + (.25,0) $) -- ($ #3 + (\x,\y) + (.25,0) + (0,-.2) $);
\filldraw[#1] ($ #3 + (\x,\y) + (.25,0) $) circle (.03cm);
\filldraw[#1] ($ #3 + (\x,\y) + (.45,0) $) circle (.03cm);
}}
\foreach \x in {1.25,2.75}{
\draw[thick, red] ($ #3 + (\x,1) + (.25,0) $) -- ($ #3 + (\x,1) + (.25,0) + (0,-.2) $);
\filldraw[#1] ($ #3 + (\x,1) + (.25,0) $) circle (.03cm);
\filldraw[#1] ($ #3 + (\x,1) + (.45,0) $) circle (.03cm);
}
}

\begin{equation}
\label{eq:BricklayerLattice}
\tikzmath[scale=1.4]{
\bricks{red}{}{(-.4,-.4)}
\bricks{red}{knot}{(-.2,-.2)}
\topBricks{black}{knot}{(0,0)}
}
\end{equation}

The Hamiltonian in the red $\cA$-bulk is identical to the Walker-Wang Hamiltonian.
There are vertex terms projecting to the subspace of admissible triples at that vertex, and the plaquette term uses the braiding of $\cA$ to resolve the crossing.
\[\resizebox{.47\textwidth}{!}{$
\tikzmath{
\draw[thick, red] (1.5,0) -- (1.5,-.3);
\draw[thick, red] (2.5,1) -- (2.5,.6);
\foreach \y in {0,1}{
\draw[thick, red] (.2,\y) -- (3.8,\y);
\foreach \x in {.5,1.5,2.5,3.5}{
\filldraw[red] (\x,\y) circle (.05cm);
}}
\foreach \x in {.5,3.5}{
\draw[thick, red] (\x,0) -- (\x,1);
}
\draw[thick, red] (2.5,-.3) -- (2.5,0);
\draw[thick, red] (1.5,1.3) -- (1.5,1);
\draw[knot, thick, orange, rounded corners=5pt] (.7,.2) rectangle (3.3,.8);
\node[orange] at (.9,.5) {$\scriptstyle a$};
}
\qquad
\rightsquigarrow
\qquad
\tikzmath{
\draw[thick, red] (1.5,0) -- (1.5,-.3);
\draw[thick, red] (2.5,1) -- (2.5,.6);
\foreach \y in {0,1}{
\draw[thick, red] (.2,\y) -- (3.8,\y);
\foreach \x in {.5,1.5,2.5,3.5}{
\filldraw[red] (\x,\y) circle (.05cm);
}}
\foreach \x in {.5,3.5}{
\draw[thick, red] (\x,0) -- (\x,1);
}
\draw[thick, red] (2.5,-.3) -- (2.5,0);
\draw[thick, red] (1.5,1.3) -- (1.5,1);
\draw[thick, orange] (.83,0) -- (.5,.33);
\draw[thick, orange] (.5,.67) -- (.83,1);
\draw[thick, orange] (1.17,1) arc (-180:0:.33 and .2) (1.83,1);
\draw[thick, orange, knot] (2.17,1) arc (-180:0:.33 and .2) (2.83,1);
\draw[thick, orange] (3.17,0) -- (3.5,.33);
\draw[thick, orange] (3.5,.67) -- (3.17,1);
\draw[thick, orange] (1.17,0) arc (180:0:.33 and .2) (1.83,0);
\draw[thick, orange] (2.17,0) arc (180:0:.33 and .2) (2.83,0);
\filldraw[fill=black] (.5,.33) node[left]{$\scriptstyle \alpha$} circle (.05cm);
\filldraw[fill=black] (.5,.67) node[left]{$\scriptstyle \alpha^\dag$} circle (.05cm);
\filldraw[fill=orange] (.83,1) circle (.05cm);
\filldraw[fill=orange] (1.17,1) circle (.05cm);
\filldraw[fill=yellow] (1.83,1) circle (.05cm);
\filldraw[fill=yellow] (2.17,1) circle (.05cm);
\filldraw[fill=green] (2.83,1) circle (.05cm);
\filldraw[fill=green] (3.17,1) circle (.05cm);
\filldraw[fill=cyan] (3.5,.33) node[right]{$\scriptstyle \beta$} circle (.05cm);
\filldraw[fill=cyan] (3.5,.67) node[right]{$\scriptstyle \beta^\dag$} circle (.05cm);
\filldraw[fill=purple] (.83,0) circle (.05cm);
\filldraw[fill=purple] (1.17,0) circle (.05cm);
\filldraw[fill=violet] (1.83,0) circle (.05cm);
\filldraw[fill=violet] (2.17,0) circle (.05cm);
\filldraw[fill=blue] (2.83,0) circle (.05cm);
\filldraw[fill=blue] (3.17,0) circle (.05cm);
}
 $}\]
Here, we resolve the crossing by the formula
\begin{equation}
\label{eq:ResolveACrossing}
\left\langle
\tikzmath{
\draw[thick, red] (.3,-.5) .. controls +(90:.4) and +(270:.4) .. (-.3,.5);
\draw[thick, orange, knot] (-.3,-.5) .. controls +(90:.4) and +(270:.4) .. (.3,.5);
\node[orange] at (.4,.3) {$\scriptstyle a$};
\node[red] at (-.4,.3) {$\scriptstyle b$};
}
\right|
=
\sum_{y}
\overline{R}_c^{ba}
\sqrt{\frac{d_c}{d_a d_b}}
\left\langle
\tikzmath{
\draw[thick, red] (.3,-.5) -- (0,-.2);
\draw[thick, red] (0,.2) -- (-.3,.5);
\draw[thick, orange] (-.3,-.5) -- (0,-.2);
\draw[thick, orange] (0,.2) -- (.3,.5);
\draw[thick, red] (0,-.2) -- node[right]{$\scriptstyle c$} (0,.2);
\node[orange] at (.4,.3) {$\scriptstyle a$};
\node[orange] at (-.4,-.3) {$\scriptstyle a$};
\node[red] at (-.4,.3) {$\scriptstyle b$};
\node[red] at (.4,-.3) {$\scriptstyle b$};
\filldraw (0,-.2) circle (.05cm);
\filldraw (0,.2) circle (.05cm);
}
\right|
\end{equation}
We then use the $F$-symbols to resolve the diagram on the right hand side back to the original lattice, to obtain the matrix elements of the Hamiltonian.

The Hamiltonian on the black boundary has vertex and plaquette terms similar to the Levin-Wen string net model for $\cX$, with two important changes.
The vertices which have a red edge where the $\cA$-bulk meets the $\cX$ boundary must have $\bbC^{\Irr(\cA)}$ spins on the red edge and $\bbC^{\Irr(\cX)}$ on the black edges.
By assumption, we can identify $\Irr(\cA)$ as a subset of $\Irr(\cX)$, so we use the usual Levin-Wen vertex term for $\cX$ at these vertices.
The plaquette term for the $\cX$-boundary uses the half-braiding for the $\cA$-anyons with $\cX$ afforded by the (fully faithful) central action $F:\cA\to Z(\cX)$. 
\[\resizebox{.47\textwidth}{!}{$
\tikzmath{
\draw[thick, red] (1.5,0) -- (1.5,-.3);
\draw[thick, red] (2.5,1) -- (2.5,.6);
\foreach \y in {0,1}{
\draw[thick] (.2,\y) -- (3.8,\y);
\foreach \x in {.5,1.5,2.5,3.5}{
\filldraw (\x,\y) circle (.05cm);
}}
\foreach \x in {.5,3.5}{
\draw[thick] (\x,0) -- (\x,1);
}
\draw[thick] (2.5,-.3) -- (2.5,0);
\draw[thick] (1.5,1.3) -- (1.5,1);
\draw[knot, thick, blue, rounded corners=5pt] (.7,.2) rectangle (3.3,.8);
\node[blue] at (.9,.5) {$\scriptstyle x$};
}
\quad
\rightsquigarrow
\quad
\tikzmath{
\draw[thick, red] (1.5,0) -- (1.5,-.3);
\draw[thick, red] (2.5,1) -- (2.5,.6);
\foreach \y in {0,1}{
\draw[thick] (.2,\y) -- (3.8,\y);
\foreach \x in {.5,1.5,2.5,3.5}{
\filldraw (\x,\y) circle (.05cm);
}}
\foreach \x in {.5,3.5}{
\draw[thick] (\x,0) -- (\x,1);
}
\draw[thick] (2.5,-.3) -- (2.5,0);
\draw[thick] (1.5,1.3) -- (1.5,1);
\draw[thick, blue] (.83,0) -- (.5,.33);
\draw[thick, blue] (.5,.67) -- (.83,1);
\draw[thick, blue] (1.17,1) arc (-180:0:.33 and .2) (1.83,1);
\draw[thick, blue, knot] (2.17,1) arc (-180:0:.33 and .2) (2.83,1);
\draw[thick, blue] (3.17,0) -- (3.5,.33);
\draw[thick, blue] (3.5,.67) -- (3.17,1);
\draw[thick, blue] (1.17,0) arc (180:0:.33 and .2) (1.83,0);
\draw[thick, blue] (2.17,0) arc (180:0:.33 and .2) (2.83,0);
\filldraw[fill=red] (.5,.33) node[left]{$\scriptstyle \alpha$} circle (.05cm);
\filldraw[fill=red] (.5,.67) node[left]{$\scriptstyle \alpha^\dag$} circle (.05cm);
\filldraw[fill=orange] (.83,1) circle (.05cm);
\filldraw[fill=orange] (1.17,1) circle (.05cm);
\filldraw[fill=yellow] (1.83,1) circle (.05cm);
\filldraw[fill=yellow] (2.17,1) circle (.05cm);
\filldraw[fill=green] (2.83,1) circle (.05cm);
\filldraw[fill=green] (3.17,1) circle (.05cm);
\filldraw[fill=cyan] (3.5,.33) node[right]{$\scriptstyle \beta$} circle (.05cm);
\filldraw[fill=cyan] (3.5,.67) node[right]{$\scriptstyle \beta^\dag$} circle (.05cm);
\filldraw[fill=purple] (.83,0) circle (.05cm);
\filldraw[fill=purple] (1.17,0) circle (.05cm);
\filldraw[fill=violet] (1.83,0) circle (.05cm);
\filldraw[fill=violet] (2.17,0) circle (.05cm);
\filldraw[fill=blue] (2.83,0) circle (.05cm);
\filldraw[fill=blue] (3.17,0) circle (.05cm);
}
$}\]
By our simplifying assumption, the anyons in $\cA$ stay simple when we forget them down to $\cX$, and so the $\Omega$-tensor describing the half-braiding as in \cite[(42,43)]{PhysRevB.103.195155}\footnote{
\label{footnote:EasierOmegaTensor}
We choose here a different convention for indices for our $\Omega$-tensors than in \cite{PhysRevB.103.195155}, as our assumptions of multiplicity free and the composite $\cA\to Z(\cX)\to \cX$ being fully faithful let us use fewer indices.
In particular, our index convention here is chosen to be as close as possible to the convention for the $R$-matrix.
The two conventions are related by
$\Omega_{y}^{a,x} = \Omega_a^{x,aay}$ from \cite[(42)]{PhysRevB.103.195155}
where $a\in \Irr(\cA)$ and $x,y\in \Irr(\cX)$.
} 
can be substantially simplified.
We resolve the crossing of the blue $x$-string from $\cX$ with the red $a$-string from $\cA\subset Z(\cX)$ by
\begin{equation}
\label{eq:OmegaTensor}
\left\langle
\tikzmath{
\draw[thick, red] (.3,-.5) .. controls +(90:.4) and +(270:.4) .. (-.3,.5);
\draw[thick, blue, knot] (-.3,-.5) .. controls +(90:.4) and +(270:.4) .. (.3,.5);
\node[blue] at (.4,.3) {$\scriptstyle x$};
\node[red] at (-.4,.3) {$\scriptstyle a$};
}
\right|
=
\sum_{y} 
\overline{\Omega}_{y}^{a,x}
\sqrt{\frac{d_y}{d_x d_a}}
\left\langle
\tikzmath{
\draw[thick, red] (.3,-.5) -- (0,-.2);
\draw[thick, red] (0,.2) -- (-.3,.5);
\draw[thick, blue] (-.3,-.5) -- (0,-.2);
\draw[thick, blue] (0,.2) -- (.3,.5);
\draw[thick] (0,-.2) -- node[right]{$\scriptstyle y$} (0,.2);
\node[blue] at (.4,.3) {$\scriptstyle x$};
\node[blue] at (-.4,-.3) {$\scriptstyle x$};
\node[red] at (-.4,.3) {$\scriptstyle a$};
\node[red] at (.4,-.3) {$\scriptstyle a$};
\filldraw (0,-.2) circle (.05cm);
\filldraw (0,.2) circle (.05cm);
}
\right|
\end{equation}

To see the boundary excitations are indeed $Z^\cA(\cX)$, we consider the category of excitations $Z(\cX)$ in the boundary string net model.  Pairs of such excitations are created by quasiparticle string operators, as described in \cite{PhysRevB.103.195155}.  
In order for the corresponding anyon to represent a point-like, deconfined excitation at the surface of our 3D model, we must be able to change the path of the anyon string operator arbitrarily away from its endpoints without creating additional excitations. Thus we must be able to move this path past the red links extending from the boundary to the bulk. 
This is exactly the condition that the anyon is centralized by $F(\cA)$, i.e., the excitations are given by $Z^\cA(\cX)$.

\begin{rem}
\label{rem:ConventionsForST}
In this section, our conventions for the $S,T$ matrices follow \cite[(35)~and~(37)]{1410.4540}, which do not agree with those in \cite[(58)~and~(62)]{PhysRevB.103.195155}.
The $S$-matrix in a UMTC $\cA$ is given by
$$
S_{a,b}:=
\frac{1}{D_\cC}
\cdot
\hspace{-.1cm}
\tikzmath{
\draw[thick, red] (-1,0) node[left]{$\scriptstyle a$} arc (180:0:.5cm);
\draw[knot, thick, blue] (-.5,0) node[left]{$\scriptstyle b$} arc (180:0:.5cm);
\draw[thick, blue] (.5,0) arc (0:-180:.5cm);
\draw[knot, thick, red] (0,0) arc (0:-180:.5cm);
}
$$
where $D_\cC$ is the square root of the global dimension of $\cC$, 
and the $T$-matrix $T=\operatorname{diag}(\theta_a)_{a\in \Irr(\cA)}$ where
$$
\theta_a :=
\frac{1}{d_a}\cdot
\tikzmath{
\draw[thick, red] (-.5,.5) to[out=0,in=180] (.5,-.5) arc (-90:90:.5cm);
\draw[knot, thick, red] (.5,.5) to[out=180,in=0] (-.5,-.5) arc (270:90:.5cm);
\node[red] at (1.25,0) {$\scriptstyle \overline{a}$};
\node[red] at (-1.25,0) {$\scriptstyle \overline{a}$};
\node[red] at (.4,.25) {$\scriptstyle \overline{\overline{a}}$};
\node[red] at (-.3,.25) {$\scriptstyle a$};
}\,.
$$
\end{rem}

\subsubsection{Chiral example: \texorpdfstring{$SU(2)_4$}{SU(2)4}}
\label{egs:chiralExample}

To illustrate this construction in more detail, we show how to realize $SU(2)_4$ at the boundary of a Walker-Wang model with $SU(3)_1$-bulk.  
We reverse engineer this model by starting with $SU(2)_4$ and considering the domain wall (see \S~\ref{ssec:EnrichedBimodules} below) coming from the conformal inclusion $SU(2)_4\subset SU(3)_1$, which can be obtained by condensing the $\mathbb{Z}_2$ boson in $SU(2)_4$  to obtain $SU(3)_1$.
$$
\tikzmath{
\begin{scope}
\clip[rounded corners=5pt] (-1,0) rectangle (1,1);
\fill[blue!30] (-1,0) rectangle (0,1);
\fill[red!30] (0,0) rectangle (1,1);
\end{scope}
\node at (-.5,.5) {$\scriptstyle SU(2)_4$};
\node at (.5,.5) {$\scriptstyle SU(3)_1$};
\draw[thick] (0,0) node[below]{$\scriptstyle \cT\cY_{3,-}$} -- (0,1);
}
$$

The UMTC $SU(2)_4$ can be defined as 
the semsimple part of $\Rep(\cU_q(\mathfrak{su}_2))$ at $q=\exp(2\pi i/ 12)$ \cite{MR1328736,MR1797619,MR2286123}, \cite[\S~6]{MR1936496},
or as the semisimple quotient by negligibles of 
the Temperley-Lieb-Jones category $\cT\cL\cJ(s)$ with $s=\exp(\pi i/12)$.
This latter skein-theoretic description has loop parameter
$$
\tikzmath{\draw (0,0) circle (.4cm);}
=
-s^2-s^{-2}
=
-\sqrt{3}
$$ 
and braiding
\begin{equation}
\label{eq:TLJBraidingSU2_4}
\tikzmath{
\draw (.3,-.5) .. controls +(90:.4) and +(270:.4) .. (-.3,.5);
\draw[knot] (-.3,-.5) .. controls +(90:.4) and +(270:.4) .. (.3,.5);
}
:=
s\,
\tikzmath{
\draw (-.2,-.5) -- (-.2,.5);
\draw (.2,-.5) -- (.2,.5);
}
+
s^{-1}\,
\tikzmath{
\draw (.2,.5) arc (0:-180:.2cm);
\draw (.2,-.5) arc (0:180:.2cm);
}
\end{equation}
\cite{MR1239440}, 
\cite[\S~9]{MR1280463},
\cite[\S1.2]{MR2640343},
\cite[\S2]{1710.07362}.
Since the loop parameter for the strand is negative, the pivotal structure\footnote{A \emph{pivotal structure} on a fusion category is a trivialization of the double-dual functor, which consists of a scalar for each simple satisfying a coherence condition \cite[\S~4.7]{MR3242743}.} 
on this braided fusion category is given by $\varphi_n:=(-1)^n$ on the anyons $\{f_n\}$,
which endows $SU(2)_4$ with the structure of a MTC.
Under the dagger structure given by the conjugate linear extension of 
\begin{equation}
\label{eq:NegativeDagger}
\left(\,\tikzmath{\draw (0,0) arc (0:180:.3cm);}\,\right)^\dag
:=
-\,\tikzmath{\draw (0,0) arc (0:-180:.3cm);}
\,,
\end{equation}
$SU(2)_4$ is a UMTC.\footnote{\label{footnote:TLJDagger}
It was determined in \cite{MR1239440} when $\cT\cL\cJ(s)$ is unitary.
As stated in \cite[\S~1.4]{MR2640343}, $\cT\cL\cJ(s)$ is unitary when $s=\pm ie^{\pm \frac{2\pi i}{24}}$.
These 4 choices of $s$ give the 4 unitary braidings on the unitary Temperley-Lieb-Jones category which arises from subfactor theory \cite{MR0696688} with dagger structure
$$
\left(\,\tikzmath{\draw (0,0) arc (0:180:.3cm);}\,\right)^\dag
:=
\,\tikzmath{\draw (0,0) arc (0:-180:.3cm);}
\,.
$$
We also have that $\cT\cL\cJ(s)$ is unitary whenever $s=\pm e^{\pm \frac{2\pi i}{24}}$ with the dagger structure \eqref{eq:NegativeDagger};
these 4 choices of $s$ give the 4 unitary braidings on the underlying UFC of $SU(2)_4$.
}

The fusion and modular data of $SU(2)_4$ is as follows:
\begin{itemize}
\item 
anyons: $\{f_0=1,f_1,f_2,f_3,f_4=g\}$
\item
fusion rules:
$$
\begin{array}{c|c|c|c|c}
     \otimes & f_1 & f_2 & f_3 & g  
     \\\hline
     f_1 & 1+ f_2 & f_1+f_3 & f_2 + f_4 & f_3
     \\\hline
     f_2 & f_1+f_3 & 1+f_2+g & f_1+f_3 & f_2
     \\\hline
     f_3 & f_2+f_4 & f_1+f_3 & 1+f_2 & f_1
     \\\hline
     g & f_3 & f_2 & f_1 & 1
\end{array}
$$
\item
quantum dimensions: $d_{1}=d_g=1$, $d_{f_1}=d_{f_3}=\sqrt{3}$, and $d_{f_2}=2$
\item
associator/F-symbols: see \cite[\S~9.12]{MR1280463} or \cite[Appendix E]{1004.5456}
\item
braiding/R-symbols:
$
\tikzmath{
\draw (0,0) -- (0,-.4) node[below]{$\scriptstyle c$};
\draw (-.2,.4) node[above]{$\scriptstyle a$} .. controls +(-90:.2) and +(45:.4) .. (0,0);
\draw[knot] (.2,.4) node[above]{$\scriptstyle b$} .. controls +(-90:.2) and +(135:.4) .. (0,0);
\filldraw (0,0) circle (.05cm);
}
=
R_{c}^{ab}
\tikzmath{
\draw (0,0) -- (0,-.4) node[below]{$\scriptstyle c$};
\filldraw (0,0) circle (.05cm);
\draw (-.2,.4) node[above]{$\scriptstyle a$} -- (0,0) -- (.2,.4) node[above]{$\scriptstyle b$};
}
$
where
$$
R_c^{ab}=(-1)^{\frac{a+b+c}{2}}s^{\frac{c(c+2)-a(a+2)-b(b+2)}{2}}
$$
\item
S-matrix:\footnote{\label{footnote:SU(2)4-SMatrix}
The $S$-matrix above was computed with the formula
$S_{i,j}=[(i+1)(j+1)]_s$ where $[n]_s=\frac{s^{2n}-s^{-2n}}{s^2-s^{-2}}$ for $s=\exp(\pi i/12)$.
This formula is obtained from \cite[p.~15]{MR2640343} by including the pivotal structures $\varphi_i=(-1)^i$ for $f_i$ and $\varphi_j=(-1)^j$ for $f_j$.
}
$\displaystyle
\frac{1}{2\sqrt{3}}
\begin{pmatrix}
1 & \sqrt{3} & 2 & \sqrt{3} & 1 \\
 \sqrt{3} & \sqrt{3} & 0 & -\sqrt{3} & -\sqrt{3} \\
 2 & 0 & -2 & 0 & 2 \\
 \sqrt{3} & -\sqrt{3} & 0 & \sqrt{3} & -\sqrt{3} \\
 1 & -\sqrt{3} & 2 & -\sqrt{3} & 1
\end{pmatrix}$
\item
twists:\footnote{\label{footnote:SU(2)4-TMatrix}
The twists above were computed with the formula
$\theta_n = s^{n(n+2)}$ with $s=\exp(\pi i/12)$, which agrees with the formula from \cite[\S6]{MR1936496}, and can be proven by induction using the balance axiom and \eqref{eq:TLJBraidingSU2_4}.
Our formula differs from \cite[\S~9.7]{MR1280463} by including the pivotal structure $\varphi_n=(-1)^n$ for $f_n$;
see \cite[(32)~from~Appendix~A.2]{MR3578212}.
} 
$1, e^{\pi i/4}, e^{2\pi i/3}, e^{-3 \pi i/4}, 1$
\end{itemize}
It is helpful to use shorthand notation for the boson $f_4=:g$ and the condensable algebra $1+g=:A$. 
We refer the reader to Appendix \ref{appendix:condensable} for background on condensable algebras and anyon condensation.
The category of right $A$-modules in $SU(2)_4$,
which describes excitations on the domain wall between $SU(2)_4$ and $SU(3)_1$ (see Example \ref{egs:condensationBoundary}), is a $\bbZ/3$ Tambara-Yamagami unitary fusion category $\cT\cY_{3,-}$,\footnote{
\label{footnote:4TY3s}
There are four $\bbZ/3$ Tambara-Yamagami UFCs corresponding to a choice of bicharacter $\langle a,b\rangle= \zeta^{\pm ab}$ and a choice of sign $\pm$ corresponding to the Frobenius-Schur indicator of $\sigma$.
For $SU(2)_4$, this sign must be $-1$, and the two UFCs 
corresponding to the $\pm$ bicharacters give monoidally opposite UFCs.
Their centers differ by reversing the braiding; 
the UFC with $\zeta^{ab}$ bicharacter has center $\overline{SU(2)_4}\boxtimes SU(3)_1$, and the UFC with $\zeta^{-ab}$ bicharacter has center $SU(2)_4\boxtimes \overline{SU(3)_1}$ as desired.
} 
and can be described as follows \cite{MR1659954},  \cite[\S~VII~E]{PhysRevB.103.195155}:
\begin{itemize}
\item 
simple objects: $\{0,1,2,\sigma\}$
\item
fusion rules: $\bbZ/3$ for $\{0,1,2\}$ and $\sigma^2 = 0+1+2$
\item
quantum dimensions: $d_0=d_1=d_2=1$ and $d_\sigma=\sqrt{3}$
\item
associator/F-symbols: determined by the bicharacter $\langle a, b\rangle := \zeta^{-ab}$ where $\zeta:=\exp(2\pi i /3)$ and a choice of sign:
\begin{align*}
F^{a\sigma b}_{\sigma\sigma\sigma} 
&=
F^{\sigma a\sigma}_{b\sigma\sigma} 
=
\zeta^{-ab}
\\
F^{\sigma\sigma\sigma}_{\sigma a b} 
&=
\frac{-1}{\sqrt{3}} \zeta^{ab}.
\end{align*}
\end{itemize}
Since the generator $f_1$ of $SU(2)_4$ is pseudo-real, so is the non-invertible object $\sigma\in \cT\cY_{3,-}$, which is reflected in the F-symbol $F^{\sigma\sigma\sigma}_{\sigma a b}$ above.
(Observe $\sigma = f_1+f_3$ in the category of right $A$-modules.)

The category of {\it local} right $A$-modules in $SU(2)_4$, corresponding to those wall excitation types which braid trivially with the condensate and thus remain deconfined, is $SU(3)_1$.  This category is described by the following data, where again $\zeta= \exp(2\pi i/3)$ \cite[5.3.3]{MR2544735}:
\begin{itemize}
\item 
anyons: $\{0,1,2\}$ 
\item
fusion rules: $\bbZ/3$
\item
quantum dimensions: $d_0=d_1=d_2=1$
\item
associator/F-symbols: trivial
\item
braiding/R-symbols:
\begin{equation}
\label{eq:RMatrixForTY}
\begin{aligned}
R_0^{1,2}
&=
R_0^{2,1}
=
\zeta^{-1}
\\
R_{2}^{1,1}
&=
R_{1}^{2,2}
=
\zeta
\end{aligned}
\end{equation}
\item
S-matrix:
$
\displaystyle
\frac{1}{\sqrt{3}}
\begin{pmatrix}
1 & 1 & 1
\\
1 & \zeta & \zeta^2
\\
1 & \zeta^2 & \zeta
\end{pmatrix}
$
\item
twists: $1,\zeta,\zeta$
\end{itemize}

By \cite[Cor.~3.30]{MR3039775}, 
\begin{equation}
\label{eq:DeligneProductTY3-}
Z(\cT\cY_{3,-}) \cong SU(2)_4 \boxtimes \overline{SU(3)_1},
\end{equation}
so setting $\cA:= \overline{SU(3)_1}$ and $\cX := \cT\cY_{3,-}$,
we have $Z^\cA(\cX) = SU(2)_4$.
In the lattice model for this example, in \eqref{eq:BricklayerLattice}, every red edge has $\bbC^3$ spins labelled by $\bbZ/3=\{0,1,2\}$, and every black edge has $\bbC^4$ spins corresponding to $\Irr(\cX)=\{0,1,2,\sigma\}$.  Vertices at the surface impose the fusion rules of $\cT\cY_{3,-}$, with the labels $0,1,2$ in the bulk being treated as equivalent to labels $0,1,2$ in the boundary under fusion.

As $\cA$ has only Abelian anyons, $d_0=d_1=d_2=1$, resolving the crossing \eqref{eq:ResolveACrossing} is as easy as $y=x+z$:
$$
\left\langle
\tikzmath{
\draw[thick, red] (.3,-.5) .. controls +(90:.4) and +(270:.4) .. (-.3,.5);
\draw[thick, orange, knot] (-.3,-.5) .. controls +(90:.4) and +(270:.4) .. (.3,.5);
\node[orange] at (.4,.3) {$\scriptstyle a$};
\node[red] at (-.4,.3) {$\scriptstyle b$};
}
\right|
=
\overline{R}_{(a+b)}^{b,a}
\left\langle
\tikzmath{
\draw[thick, red] (.3,-.5) -- (0,-.2);
\draw[thick, red] (0,.2) -- (-.3,.5);
\draw[thick, orange] (-.3,-.5) -- (0,-.2);
\draw[thick, orange] (0,.2) -- (.3,.5);
\draw[thick, red] (0,-.2) -- node[right]{$\scriptstyle a+b$} (0,.2);
\node[orange] at (.4,.3) {$\scriptstyle a$};
\node[orange] at (-.4,-.3) {$\scriptstyle a$};
\node[red] at (-.4,.3) {$\scriptstyle b$};
\node[red] at (.4,-.3) {$\scriptstyle b$};
\filldraw (0,-.2) circle (.05cm);
\filldraw (0,.2) circle (.05cm);
}
\right|
$$
where $\overline{R}$ is the reverse $R$-matrix of \eqref{eq:RMatrixForTY}.
Resolving \eqref{eq:OmegaTensor} to describe the boundary plaquette terms is also easier since for any bulk edge label $a$, $d_a=1$, and $d_x=d_y$ as $xa=y=ax$.
In particular, $\Omega_{y}^{a,x}=R_{y}^{a,x}$ whenever $x,y\in \{0,1,2\}$,
so the only extra data needed are the $1\times 1$ unitary matrices $\Omega_{a}^{\sigma,\sigma}$:
$$
\left\langle
\tikzmath{
\draw[thick, red] (.3,-.5) .. controls +(90:.4) and +(270:.4) .. (-.3,.5);
\draw[thick, blue, knot] (-.3,-.5) .. controls +(90:.4) and +(270:.4) .. (.3,.5);
\node[blue] at (.4,.3) {$\scriptstyle \sigma$};
\node[red] at (-.4,.3) {$\scriptstyle a$};
}
\right|
=
\overline{\Omega}_{\sigma}^{a,\sigma}
\left\langle
\tikzmath{
\draw[thick, red] (.3,-.5) -- (0,-.2);
\draw[thick, red] (0,.2) -- (-.3,.5);
\draw[thick, blue] (-.3,-.5) -- (0,-.2);
\draw[thick, blue] (0,.2) -- (.3,.5);
\draw[thick,blue] (0,-.2) -- node[right]{$\scriptstyle \sigma$} (0,.2);
\node[blue] at (.4,.3) {$\scriptstyle \sigma$};
\node[blue] at (-.4,-.3) {$\scriptstyle \sigma$};
\node[red] at (-.4,.3) {$\scriptstyle a$};
\node[red] at (.4,-.3) {$\scriptstyle a$};
\filldraw (0,-.2) circle (.05cm);
\filldraw (0,.2) circle (.05cm);
}
\right|.
$$
These $\Omega$-symbols are given by 
$$
\Omega_\sigma^{0,\sigma}
=1
,
\quad
\Omega_\sigma^{1,\sigma}
=\zeta
,
\quad
\Omega_\sigma^{2,\sigma}
=\zeta
.
$$

\subsubsection{The Drinfeld center of \texorpdfstring{$\cT\cY_{3,-}$}{TY3-}}

The centers of the Tambara-Yamagami UFCs were first computed in \cite[\S~3]{MR1832764}.
For completeness, we list here the 15 simple objects in $Z(\cT\cY_{3,-})$ and the data of the $\Omega$-tensor using the conventions of \cite{PhysRevB.103.195155};\footnote{
The $\Omega$-tensor for the $\bbZ/3$ Tambara-Yamagami UFCs with bicharacter $\zeta^{ab}$ were computed by Chien-Hung Lin.
The case with $+1$ Frobenius-Schur indicator appears in \cite{PhysRevB.103.195155}, 
and the case with $-1$ Frobenius-Schur indicator is commented out in the {\texttt{arXiv}} source of \cite{PhysRevB.103.195155}.
Taking complex conjugate gives the $\Omega$ tensors for the other two $\bbZ/3$ Tambara-Yamagami UFCs (see Footnote \ref{footnote:4TY3s}).
Indeed, the complex conjugate UFC is equivalent to the opposite UFC by taking dagger, and the opposite is equivalent to the monoidal opposite by taking duals.
We present here the $\Omega$-tensor for the bicharacter $\zeta^{-ab}$ and sign $-1$ by taking the complex conjugate of this commented out data with Chien-Hung Lin's permission.} 
the $\overline{\Omega}$-tensor is determined by the $\Omega$-tensor by \cite[(48b)~and~(48c)]{PhysRevB.103.195155}.
Here, we use the original notation of \cite{PhysRevB.103.195155} (see Footnote \ref{footnote:EasierOmegaTensor}) as some of the objects in the center are direct sums of simples in $\cT\cY_{3,-}$.
\begin{itemize}
    \item 
    invertibles $\alpha_{g,j}$ for each element $g\in \bbZ/3$ and $j=0,1$, for a total of 6 abelian anyons.
    The underlying object of $\alpha_{g,j}$ is $g\in \cT\cY_{3,-}$, so $\dim(\alpha_{g,j})=1$.
    \begin{enumerate}
        \item[\underline{$\alpha=\alpha_{0,j}$:}]
        $\Omega^{1,001}_{\alpha}=\Omega^{2,002}_{\alpha}=1$
        \\
        $\Omega^{\sigma,00\sigma}_{\alpha}=(-1)^j$
        \item[\underline{$\alpha=\alpha_{1,j}$:}]
        $\Omega^{1,112}_{\alpha}=\zeta^{-1}$
        \\
        $\Omega^{2,110}_{\alpha}=\zeta$
        \\
		$\Omega^{\sigma,11\sigma}_{\alpha}=(-1)^j \zeta^{-1}$
        \item[\underline{$\alpha=\alpha_{2,j}$:}]
        $\Omega^{1,220}_{\alpha}=\zeta$
        \\
        $\Omega^{2,221}_{\alpha}=\zeta^{-1}$
        \\
		$\Omega^{\sigma,22\sigma}_{\alpha}=(-1)^j \zeta^{-1}$
    \end{enumerate}
    \item 1 simple $\gamma_{g,h}$ for each distinct pair of elements $g,h\in\bbZ/3$ whose underlying object is $g\oplus h$ in $\cT\cY_{3,-}$, so $\dim(\gamma_{g,h})=2$.
    The $\Omega$-tensors are determined up to three $U(1)$ gauge phases $\phi_1,\phi_2,\phi_3$:
    \begin{enumerate}
        \item[\underline{$\gamma=\gamma_{0,1}$:}]
        $\Omega^{1,001}_{\gamma}=\zeta^{-1}$
        \\
        $\Omega^{2,002}_{\gamma}=\zeta$
        \\
		$\Omega^{\sigma,00\sigma}_{\gamma}
		=\Omega^{\sigma,11\sigma}_{\gamma}=0$
		\\
		$\Omega^{1,112}_{\gamma}
		=\Omega^{2,110}_{\gamma}=1$
		\\
		$\Omega^{\sigma,01\sigma}_{\gamma}=e^{i\phi_1}$
		\\
		$\Omega^{\sigma,10\sigma}_{\gamma}=e^{-i\phi_1}$
        \item[\underline{$\gamma=\gamma_{0,2}$:}]
        $\Omega^{1,001}_{\gamma}=\zeta$
        \\
        $\Omega^{2,002}_{\gamma}=\zeta^{-1}$
        \\
        $\Omega^{3,00\sigma}_{\gamma}
        =\Omega^{3,22\sigma}_{\gamma}=0$
        \\
        $\Omega^{1,220}_{\gamma}
        =\Omega^{2,221}_{\gamma}=1$
		\\
		$\Omega^{\sigma,02\sigma}_{\gamma}=e^{i\phi_2}$
		\\
		$\Omega^{\sigma,20\sigma}_{\gamma}=e^{-i\phi_2}$
        \item[\underline{$\gamma=\gamma_{1,2}$:}]
        $\Omega^{1,112}_{\gamma}
        =\Omega^{2,221}_{\gamma}=\zeta$
        \\
        $\Omega^{2,110}_{\gamma}
        =\Omega^{1,220}_{\gamma}=\zeta^{-1}$
		\\
		$\Omega^{\sigma,11\sigma}_{\gamma}
		=\Omega^{\sigma,22\sigma}_{\gamma}=0$
		\\
		$\Omega^{\sigma,12\sigma}_{\gamma}=\zeta^{-1}e^{i\phi_3}$
		\\
		$\Omega^{\sigma,21\sigma}_{\gamma}=e^{-i\phi_3}$
    \end{enumerate}
    \item 
    2 simples $\delta_{g,j}$ for each $g\in \bbZ/3$ and $j=0,1$ whose underlying object is $\sigma\in \cT\cY_{3,-}$, so $\dim(\delta_{g,j})=1$.
    \begin{enumerate}
        \item[\underline{$\delta=\delta_{0,j}$:}]
        $\Omega^{1,\sigma\sigma\sigma}_{\delta}= \Omega^{2,\sigma\sigma\sigma}_{\delta}=\zeta$
        \\
        $\Omega^{\sigma,\sigma\sigma0}_{\delta}=(-1)^j e^{-\pi i/4}$
        \\
		$\Omega^{\sigma,\sigma\sigma1}_{\delta}= \Omega^{\sigma,\sigma\sigma2}_{\delta}=-(-1)^j e^{-11\pi i/12}$
        \item[\underline{$\delta=\delta_{1,j}$:}]
        $\Omega^{1,\sigma\sigma\sigma}_{\delta}=1$
        \\
        $\Omega^{2,\sigma\sigma\sigma}_{\delta}=\zeta^{-1}$
        \\
        $\Omega^{\sigma,\sigma\sigma0}_{\delta}= \Omega^{\sigma,\sigma\sigma2}_{\delta}=-(-1)^{j} e^{-7\pi i/12}$
        \\
        $\Omega^{\sigma,\sigma\sigma1}_{\delta}=(-1)^j e^{-11\pi i/12}$
        \item[\underline{$\delta=\delta_{2,j}$:}]
        $\Omega^{1,\sigma\sigma\sigma}_{\delta}=\zeta^{-1}$
        \\
        $\Omega^{2,\sigma\sigma\sigma}_{\delta}=1$
        \\
		$\Omega^{\sigma,\sigma\sigma0}_{\delta}
		=\Omega^{\sigma,\sigma\sigma1}_{\delta}=-(-1)^{j} e^{-7\pi i/12}$
		\\
		$\Omega^{\sigma,\sigma\sigma2}_{\delta}=(-1)^j e^{-11\pi i/12}$
    \end{enumerate}
\end{itemize}

The $S,T$-matrices for $Z(\cT\cY_{3,-})$ (see Remark \ref{rem:ConventionsForST}) are given by \cite[Thm.~3.6]{MR1832764}.
First, for each $g\in \bbZ/3$, we let $\omega_g$ be a square root of $(-1)^g\cdot i\cdot \exp(-g^2 \pi i/3)$.
We choose 
\begin{equation*}
\omega_0 = e^{\pi i/4}, 
\qquad 
\omega_1 = \omega_2 =
e^{7 \pi i/12}.
\end{equation*}
The twists are given by:
\begin{itemize}
    \item 
    $T(\alpha_{g,j})=\langle g,g\rangle = \zeta^{-g^2}$
    which is 1 if $g=0$ and 
    $\zeta^{-1}$ otherwise.
    \item
    $T(\gamma_{g,h})=\langle g,h\rangle = \zeta^{-gh}$
    which is 1 if $g=0$ and $\zeta$ if $g=1$ and $h=2$.
    \item
    $
    T(\delta_{g,j})
    =
    (-1)^j \omega_g
    $,
    giving the following twists:
    $$
    e^{\pi i /4}, 
    e^{5\pi i /4},
    e^{7\pi i/12}, 
    e^{19 i\pi/12},
    e^{7\pi i/12}, 
    e^{19 \pi i/12}.
    $$
\end{itemize}
We give a table of the twists of all anyons in $Z(\cT\cY_{3,-})$ in \eqref{eq:OrderOfTY3-SimplesForTensorProduct} below.
The block $S$-matrix is given by:
\begin{widetext}
$$
\begin{array}{c|c|c|c}
& \alpha_{k,j} & \gamma_{k,\ell} & \delta_{k,j}
\\\hline
\rule{0pt}{.6cm}
\alpha_{g,i}
& 
\displaystyle\frac{1}{6}\,\overline{\langle g, k\rangle}^2
&
\displaystyle\frac{1}{3}\,\overline{\langle g, k+\ell\rangle}
&
\displaystyle\frac{(-1)^i}{2\sqrt{3}}\,\overline{\langle g, k\rangle}
\\\hline
\rule{0pt}{.6cm}
\gamma_{g,h}
&
\displaystyle\frac{1}{3}\,\overline{\langle g+h,k\rangle}
&
\displaystyle\frac{1}{3}\,\overline{\langle g, \ell\rangle\langle h,k\rangle+\langle g,k\rangle\langle h,\ell\rangle}
&
\phantom{\displaystyle\frac{1}{X_X}}
0
\phantom{\displaystyle\frac{1}{X_X}}
\\\hline
\rule{0pt}{.6cm}
\delta_{g,i}
&
\displaystyle\frac{(-1)^j}{2\sqrt{3}}\,\overline{\langle k, g\rangle}
&
0
&
\displaystyle\frac{(-1)^{i+j}\omega_g\omega_k}{6}\sum_\ell \langle \ell-(g+k),\ell\rangle
\end{array}
$$
Ordering the anyons of $Z(\cT\cY_{3,-})$ as follows (the $\phi$ such that the twist $\theta=e^{\phi \pi i}$ is matched for convenience)
\begin{equation}
\label{eq:OrderOfTY3-SimplesForTensorProduct}
\begin{array}{|c|c|c|c|c|c|c|c|c|c|c|c|c|c|c|c|}
\hline
&
\textcolor{violet}{\alpha_{0,0}=1} & \textcolor{red}{\alpha_{1,0}} & \textcolor{red}{\alpha_{2,0}} &
\textcolor{blue}{\delta_{0,0}} & \delta_{2,1} & \delta_{1,1} &
\textcolor{blue}{\gamma_{1,2}} & \gamma_{0,2} & \gamma_{0,1} &
\textcolor{blue}{\delta_{0,1}} & \delta_{2,0} & \delta_{1,0} &
\textcolor{blue}{\alpha_{0,1}} & \alpha_{1,1} & \alpha_{2,1}
\\\hline
\rule{0pt}{.6cm}
\phi & 
\phantom{\displaystyle\frac{1}{X_X}}
\textcolor{violet}{0}
\phantom{\displaystyle\frac{1}{X_X}}
& 
\textcolor{red}{\displaystyle\frac{-2}{3}} & \textcolor{red}{\displaystyle\frac{-2}{3}} &
\textcolor{blue}{\displaystyle\frac{1}{4}} & \displaystyle\frac{19}{12} & \displaystyle\frac{19}{12} &
\textcolor{blue}{\displaystyle\frac{2}{3}} & 0 & 0 &
\textcolor{blue}{\displaystyle\frac{5}{4}} & \displaystyle\frac{7}{12} & \displaystyle\frac{7}{12} &
\textcolor{blue}{0} & \displaystyle\frac{-2}{3} & \displaystyle\frac{-2}{3}
\\\hline
\end{array}
\end{equation}
the $S,T$-matrices of $Z(\cT\cY_{3,-})$ are exactly the tensor product of the $S,T$-matrices of $SU(2)_4$ and $\overline{SU(3)_1}$ respectively.
Here, the anyons in violet and red generate the copy of $\overline{SU(3)_1}$ in $Z(\cT\cY_{3,-})$, and the anyons in violet and blue generate the copy of $SU(2)_4$.
\end{widetext}

\subsection{Change of enrichment and 1-composition in \texorpdfstring{$\UBFC$}{UBFC}}
\label{ssec:ChangeOfEnrichment}

Comparing the model described above to the original construction of \cite{1104.2632}, it is evident that the choice of bulk is not unique.  
This reflects the fact that, by the cobordism hypothesis \cite{MR1355899,MR2555928}, anomalies are  characterized by a {\it Witt class} of UMTCs \cite{MR4302495}.
Here, the Witt class of a UMTC $\cA$ is all UMTCs $\cB$ such that $\cA\boxtimes\overline{\cB}$ is braided equivalent to the Drinfeld center \cite{MR3039775} $Z(\cW)$ of a UFC $\cW$; when $\cW$ has such a braided tensor equivalence $\cA\boxtimes\overline{\cB}\to Z(\cW)$, we call it a \textit{Witt equivalence} from $\cA$ to $\cB$.\footnote{
Note that here, the bimodule tensor category $\cW$ is not describing the excitations on a (1+1)D domain wall between (2+1)D bulks, but the data of a commuting projector model of a (2+1)D domain wall between (3+1)D bulks.
One can, however, interpret this as a mapping from edge labels in $\cA$ and $\cB$ to anyons in the string-net model associated to $\cW$.}
In the previous section, for example, we saw that $\cT\cY_{3,-}$ is a Witt equivalence between $SU(3)_1$ and $SU(2)_4$.

Physically, this means that Witt equivalent UMTC's determine invertible bulks which can realize the same set of boundary topological orders, and also that Witt equivalent UMTCs can appear as surface topological orders of the same invertible bulk.
In the previous subsection, for example, the standard boundary conditions of \cite{1104.2632}  lead to $SU(3)_1$ surface topological order, and we could have obtained the same $SU(2)_4$ surface topological order from a bulk theory $\cA = SU(2)_4$.   
This suggests that these two Walker-Wang models should, in some sense, be equivalent, since they can realize the same set of topological orders at their boundaries.
Indeed, it is widely believed that two Walker-Wang models are related by a finite depth quantum circuit if, and only if, they are Witt equivalent \cite{2202.05442}.
This leads to the conjecture in \cite[\S~4]{2202.05442} that the group of QCA modulo finite depth quantum circuits is isomorphic to the Witt group of UMTCs.

Explicitly, we can change the choice of which bulk theory describes the anomaly by composing with an appropriate invertible 1-morphism $\cB\to\cA$ in the 4-category $\UBFC$, giving an invertible 3-functor $\UmFC^\cA\to \UmFC^\cB$.
This composition should be viewed as stacking a Witt-equivalence $\cW$ in the 3D bulk on top of the 2D boundary:
$$
\tikzmath{
\filldraw[fill=gray!20] (0,0) -- (1,.5) -- (3,.5) -- (2,0) -- (0,0);
\draw[thick, red] (0,0) -- (0,1);
\draw[thick, red] (1,.5) -- (1,1.5);
\draw[thick, red] (2,0) -- (2,1);
\draw[thick, red] (3,.5) -- (3,1.5);
\draw[thick,draw=red!60!blue!60, fill=red!30!blue!30] (0,1) -- (1,1.5) -- (3,1.5) -- (2,1) -- (0,1);
\draw[thick, blue] (0,1) -- (0,2);
\draw[thick, blue] (1,1.5) -- (1,2.5);
\draw[thick, blue] (2,1) -- (2,2);
\draw[thick, blue] (3,1.5) -- (3,2.5);
\node at (1.5,.25) {$\scriptstyle \cX$};
\node[red] at (1.5,.75) {$\scriptstyle \cA$};
\node[red!60!blue!60] at (1.5,1.25) {$\scriptstyle \cW$};
\node[blue] at (1.5,1.75) {$\scriptstyle \cB$};
}
$$
The $\cX$-labelled boundary hosts the topological order $Z^{\cA}(\cX)$.  
The surface labeled by $\cW$ represents a bulk defect implementing the Witt equivalence between $\cA$ and $\cB$.  
If we collapse the $\cA$-bulk region to the boundary, we can think of the parallel $\cX$ and $\cW$ sheets as a single (2+1)D topological boundary from the $\cB$-bulk to vacuum, which supports the same $\UBFC$ of localized excitations as the original boundary labelled by $\cX$.
This boundary will be equivalent to the boundary determined by the $\cB$-enriched multifusion category $\cW\xBv_\cA\cX$, which we will define below.

In order to understand this stacking operation, let us discuss the bulk defect labelled by $\cW$ in more detail.
 In the Walker-Wang model, such a defect is obtained by inserting a layer of the string net constructed from the fusion category $\cW$.  To attach this layer to the Walker-Wang bulk, we adopt the same strategy as in \S~\ref{ssec:chiralLatticeModel}, using the functor $\overline{\cA}\to Z(\cW)$ to attach the $\cA$ Walker-Wang bulk from below.
If $\cW$ is a Witt equivalence (i.e., if $F$ is a braided equivalence), then this  defect  will be invertible by \cite[Thm.~2.18]{1910.03178}. 
Invertibility means that there is another boundary, namely $\cW^{\rm mp}$,\footnote{
Given a (multi)fusion category $\cX$, $\cX^{\rm mp}$ is the (multi)fusion category obtained from $\cX$ by reversing the monoidal product.
} 
which can be stacked with $\cW$ to give the trivial boundary.  
This implies that in the absence of further defects, an invertible boundary cannot be detected using operators localized near the 2D defect; in particular, the defect plane does not have any topological order or anyons.

We now describe how to stack defects corresponding to Witt equivalences. 
As noted above,  stacking with the invertible boundary $\cW$ 
corresponds to composing with the 1-morphism $\cW$ in $\UBFC$.
Mathematically, this stacking operation is given by the relative Deligne product $\xBv_\cA$.
\begin{warn}
 \label{warn:xBv}
 The operation $\xBv_{\cA}$ is generally written $\boxtimes_{\cA}$, since it is related to the (non-relative) Deligne product $\boxtimes$.
 However, in this paper, several operations which have different mathematical definitions and/or physical meanings, but are all conventionally denoted by $\boxtimes_{\cA}$, will appear.
 We therefore introduce this unconventional notation as a disambiguation.
\end{warn}

\begin{defn}
Given $\cA\in \UBFC$, we define the
\emph{canonical Lagrangian algebra}
\begin{equation}
\label{eq:kannonicalLagrangian}
K_{\cA}
:=
\bigoplus_{a\in \Irr(\cA)} a\boxtimes \overline{a}
\in \cA\boxtimes \overline{\cA}.
\end{equation}
Here, the term Lagrangian algebra refers to the fact that the category $(\cA\boxtimes\overline{\cA})_{K_{\cA}}^{\loc}$ of local modules is just $\fdHilb$ -- in other words, condensing the anyons in $K_{\cA}$ leads to a trivial topological order. See Appendix \ref{appendix:condensable} for details.
Given an $\cA-\cB$ bimodule fusion category $\cU$ and a $\cB-\cC$ bimodule fusion category $\cV$
where $\cA,\cB,\cC\in \UBFC$, 
following \cite[\S2.3]{1910.03178} based on \cite{MR4228258},
we define the 1-composite
\begin{equation}
\label{eq:RelativeDeligne}
\cU \xBv_{\cB} \cV
:=
(\cU \boxtimes \cV)_{K_\cB} \ .
\end{equation}
In other words, $\cU \xBv_{\cB} \cV$ is given by the category of $K_\cB$-modules in $\cU\boxtimes\cV$ (see Appendix~\ref{appendix:condensable}).
Here, $K_\cB\in\overline{\cB}\boxtimes\cB\subset Z(\cU\boxtimes\cV)$ is given by the analog of \eqref{eq:kannonicalLagrangian} for $\overline{\cB}$. 
The composite defect $\cU \xBv_{\cB} \cV$ is an $\cA-\cC$ bimodule multifusion\footnote{
Given a connected separable algebra $A$ in a fusion category $\cX$ which lifts to a commutative (and thus condensable) algebra in $Z(\cX)$, $\cX_A$ is again a fusion category \cite{MR3242743}.
However, the image of $K_\cB$ in $\cU\boxtimes\cV$ is usually not connected.
Precisely, $K_\cB$ will only be sent to a connected algebra in $\cU\boxtimes\cV$ when no anyon in the $\cB$ bulk can condense on both domain walls.
}
category, describing a Witt equivalence between $\cA$ and $\cC$.
See \S~\ref{ssec:EnrichedBimodules} below for a definition of bimodule multifusion category (which appears with a different physical interpretation!).
\end{defn}

This composition has a concrete realization in terms of Walker-Wang models.
Consider a stack of three initially decoupled Walker-Wang layers, with bulks  constructed from three Witt equivalent UMTC's $\cA,\cB,$ and $\cC\in\UBFC$ respectively.
We first attach a string-net associated with the fusion category $\cU$ to the bottom of the $\cA$ layer, as described in Sec. \ref{ssec:chiralLatticeModel}. 
Using an reflected procedure, we then attach a $\cV$ string net to the top of the $\cC$ layer.
In the case that $\cU$ and $\cV$ are Witt equivalences, we now have
a Walker-Wang model with a $\cA$ bulk and surface $\overline{\cB}$ topological order at the boundary on the bottom, and a Walker-Wang model with a $\cC$ bulk and surface $\cB$ topological order at the boundary on the top.
Finally, we  must connect  the $\cU$ and $\cV$ string nets in such a way that their topological order is trivialized, by taking modules over $K_\cB$ as in \eqref{eq:RelativeDeligne}.  
Physically, there are two ways to view this process.  On the one hand, we can bring together the $2+1$ d $\cB$ boundary of the bottom half of our system  and the $\overline{\cB}$ boundary from the top half, and annihilate the resulting topological order by condensing the anyons in $K_\cB$.  On the other hand, we can also leave these boundaries spatially separated, and
insert a slab of the Walker-Wang bulk ground state with edges are labeled by objects in $\cB$ to connect   
the $\cU$ and $\cV$ defects.  Both of these are valid physical interpretations of the mathematical process of taking modules over $K_\cB$.
The $\cA-\cC$ bimodule tensor category $\cU\xBv_\cB\cV$ describes the remaining surface topological order near the $\cB$-slab.

In the case that $\cU$ and $\cV$ are Witt equivalences, this trivializes both surface topological orders, confining all the surface anyons, as seen by the fact that $\cU\xBv_\cB\cV$ is again a Witt equivalence between $\cA$ and $\cC$.
Indeed, by \cite[Thm.~3.20]{MR3039775},
\begin{align*}
 Z(\cU\xBv_\cB\cV) &\cong Z((\cU\boxtimes\cV)_{K_\cB})\\
 &\cong Z(\cU\boxtimes\cV)_{K_\cB}^{\loc}\\
 &\cong (\overline{\cA}\boxtimes\cB\boxtimes\overline{\cB}\boxtimes\cC)_{K_\cB}^{\loc}\\
 &\cong \overline{\cA}\boxtimes\cC.
\end{align*}
Mathematically, the point is that Witt equivalences are invertible $1$-morphisms in $\UBFC$, and the composition of two invertible morphisms will be invertible.

This also gives a simple picture of the case we are interested in: composing a (2+1)D topological order, which we describe as a boundary between the Walker-Wang model associated to the UMTC $\cA$ and the vacuum, with a Witt equivalence defect between the Walker-Wang models associated to the UTMCs $\cA$ and $\cB$.  
We begin with a bulk $\cA$ region, with surface topological order described by an $\cA$-enriched fusion category $\cX$.   We introduce an invertible 1-morphism $\cW\in \UBFC(\cB\to \cA)$, corresponding to a Witt-equivalence $\cB\simeq \cA$, into the bulk, and push this defect to the boundary.  
The composite boundary is described by the $\cB$-enriched multifusion category
$\cW\xBv_\cA\cX$.
 Just as in the case of composing two Witt equivalences, we have
\begin{align*}
\cB\boxtimes
Z^{\cB}(\cW\xBv_{\cA}\cX)
&\cong 
Z(\cW\xBv_\cA \cX)
\\&\cong
(Z(\cW)\boxtimes Z(\cX))_{K_\cA}^{\loc}
\\&\cong
(\cB\boxtimes \overline{\cA}\boxtimes \cA \boxtimes Z^\cA(\cX))_{K_\cA}^{\loc}
\\&\cong
\cB \boxtimes Z^\cA(\cX),
\end{align*}
so $Z^\cB(\cW\xBv_\cA\cX)\cong Z^\cA(\cX)$.
In other words, the stacking operation depicted above indeed results in a model with a different, Witt equivalent bulk theory $\cB$, but the same boundary topological order.
Evidently, both bulk enrichments are equally valid from the point of view of the boundary theory.

\subsection{Topological domain walls and enriched bimodule categories}
\label{ssec:EnrichedBimodules}

We begin by sketching the mathematical data and physical implications of a topological domain wall separating two regions with (2+1)D topological order, with anyons in the bulks described by the UMTCs $\mathcal{C}$ and $\mathcal{D}$ respectively.
Such domain walls have been studied in \cite{1008.0654,PhysRevX.3.021009,MR2942952,MR3063919,PhysRevResearch.2.023331}.

One way to describe a domain wall is to characterize the point-like excitations localized on the wall.
These correspond to simple objects of a UmFC $\cM$, similar to how anyons correspond to simple objects in a UMTC. 
The category $\cW$ is equipped with additional structure, describing how wall excitations interact with bulk excitations.
Mathematically, these take the form of two monoidal functors $F:\cC\to\cW$ and $G:\overline{\cD}\to\cW$, dictating which wall excitation is obtained from fusing an anyon from the $\cC$ and $\cD$ bulk regions, respectively, onto the domain wall.
These functors lift to give braided monoidal functors $F,G:\cC,\overline{\cD}\to Z(\cW)$,
because when fusing a bulk excitation $c$ with a wall excitation $w$, we can bring $c$ to the wall on either side of $w$, relating the two products $F(c)w$ and $wF(c)$ by a half-braiding on $F(c)$.
This half-braiding is exactly the data needed to construct the Drinfeld center $Z(\cW)$ \cite[\S~3]{MR3063919}.
Particles from $\cC$ are transparent to particles from $\overline{\cD}$ inside $Z(\cW)$, because they approach the domain wall from opposite sides, meaning that the two action functors assemble into a single braided tensor functor $\cC\boxtimes\overline{\cD}\to Z(\cW)$.
The resulting mathematical structure is a unitary multifusion category $\mathcal{W}$ equipped with actions of $\mathcal{C}$ and $\mathcal{D}$ that are central, in the sense that the images of anyons in $\cC$ and $\cD$ can braid past wall excitations in $\cW$.
Equipped with these actions, $\mathcal{W}$ is called a $\cC-\cD$ \emph{bimodule multifusion category} \cite{MR3578212}, and written ${}_\mathcal{C}\mathcal{W}_\mathcal{D}$.
In other words, the data that describes point-like defects on $\cW$, together with exchange processes consistent with the bulk braiding, is a bimodule UmFC.

In the case that the domain wall hosts a single superselection sector of topologically trivial states, $\cW$ will be a UFC; this is often assumed in the literature.
However, as we will see, two parallel domain walls, viewed as a single domain wall between the outer bulk regions, can decompose into multiple superselection sectors, even if each individual domain wall was indecomposable.
Consequently,
there are wall excitations in each superselection sector of ground states, as well as point defects between different sectors of the domain wall.
The resulting structure is an indecomposable UmFC category, with one simple summand of the tensor unit associated to each superselection sector.

In fact, not all bimodule UmFCs arise as the categories of wall excitations on a topological domain wall.
The bimodule UmFCs $\cW$ that appear as categories of wall excitations are precisely those that induce Witt equivalences on the UMTCs $\cC$ and $\cD$ of anyons \cite[\S~5]{MR3246855}\cite[\S~4]{MR3063919}, meaning that the braided tensor functor $\cC\boxtimes\overline{\cD}\to Z(\cW)$ is a unitary braided equivalence.
Thus, an alternative but equivalent description of topological domain walls is to specify a Witt equivalence between $\cC$ and $\cD$ \cite{MR3039775}.

Domain walls between anomaly-free topological orders have been thoroughly studied from the perspective of boundaries between Levin-Wen models in \cite{MR2942952}.
There, it was shown that a topological boundary between the string-net models associated to the fusion categories $\cX$ and $\cY$ comes from the data of an indecomposable $\cX-\cY$ bimodule category, i.e.~a finitely semisimple category $\cM$ equipped with a tensor functor $\cX\boxtimes\cY^{\rm mp}\to\End(\cM)$.
It was also shown that point-like excitations localized on such a boundary form a fusion category, namely the category $\End_{\cX-\cY}(\cM)$ where objects are endofunctors of $\cM$ (i.e. of functors from $\cM$ to itself) which commute with the $\cX$ and $\cY$ actions (which is extra data on such functors), and morphisms are natural transformations. 
Specifically, this means that pointlike excitations on the domain wall correspond to  functors in $\End(\cM)$.  However, these functors must satisfy certain conditions to ensure that they can be coupled consistently to the bulk.  
The $\cX$ and $\cY$ actions are part of the data describing this coupling;  the condition of being in $\End_{\cX-\cY}(\cM)\subseteq\End(\cM)$ allows an excitation to be exchanged with other domain wall excitations in a manner consistent with the bulk braiding.
The tensor product on $\End_{\cX-\cY}(\cM)$, dictating the fusion rules of excitations on the boundary, is just functor composition.  

A special case of the construction of \cite{MR2942952} recovers the fact that localized excitations in the $\cX$ bulk region are described by the UMTC $Z(\cX)$, since $\End_{\cX-\cX}(\cX)\cong Z(\cX)$ (where the $\cX-\cX$ bimodule structure on $\cX$ is just $\otimes$ in $\cX$).
Here, $\cX$ is the identity $\cX-\cX$ bimodule, which describes the trivial domain wall, i.e.~no domain wall, in the (2+1)D string-net bulk determined by $\cX$.
Since $\End_{\cX-\cY}(\cM)$ is canonically a Witt equivalence between $\End_{\cX-\cX}(\cX)$ and $\End_{\cY-\cY}(\cY)$, and as we have seen $\End_{\cX-\cX}(\cX)\cong Z(\cX)$ and $\End_{\cY-\cY}(\cY)\cong Z(\cY)$,
it follows
that 
$\End_{\cX-\cY}(\cM)$ is a Witt equivalence $Z(\cX)\to Z(\cY)$. 
As argued in \cite[\S~5]{MR3246855}\cite[\S~4]{MR3063919}, a domain wall between (2+1)D topological orders is topological precisely when the category of wall excitations is a Witt equivalence.

We will now show that the picture in the $\cA$-enriched case is analogous.
Topological domain walls between (2+1)D TOs 
$\cX,\cY\in \UFC^\cA$
correspond to $\cA$-enriched bimodule categories, 
i.e.~an $\cX-\cY$ bimodule category $\cM$ such that the two actions 
\begin{align*}
\cA&\to Z(\cX)\to \cX\to\End(\cM)
\\
\overline{\cA}&\to \overline{Z(\cY)}=Z(\cY^{\rm mp})\to \cY^{\rm mp}\to\End(\cM)
\end{align*}
agree on the underlying fusion category of $\cA$ (which is the data of a particular natural isomorphism); see \cite{MR4228258} or \cite[\S2.3]{1910.03178} for the precise definition.\footnote{
When a (multi)fusion category $\cY$ is $\cA$-enriched, then $\cY^{\rm mp}$ is $\overline{A}$-enriched and $\overline{Z^\cA(\cY)}\cong Z^{\overline{\cA}}(\cY^{\rm mp})$.
}
Concretely, given the data of an $\cA$-enriched bimodule category, it is straightforward to combine the lattice model in \S~\ref{ssec:chiralLatticeModel} and the lattice model for a domain wall in \cite{MR2942952} to produce a lattice model for the (1+1)D boundary separating the topological orders $Z^A(\cX)$ and $Z^A(\cY)$ on the surface of the 3D bulk constructed from $\cA$; this shows that the resulting domain wall is topological.
In the resulting lattice model, the first two arrows above collectively indicate how an edge label $a \in \text{Irr}(\cA)$ in the Walker-Wang bulk can be identified with an edge label $x^a \in \text{Irr}(\cX)$ of the string net constructed from $\cX$.  
The last arrow tells us that a point defect arises when a string labeled by $x^a \in \text{Irr}(\cX)$ terminates on the domain wall $\cM$.  
The condition that the $\cA$-actions on $\cM$ agree simply ensures that the association between bulk and boundary string labels is consistent on both sides of the domain wall, such that the resulting defect resides entirely in the (2+1)D $\cX$-boundary, and does not extend into the (3+1)D $\cA$-bulk.

To understand why this data describes topological domain walls,
recall that $Z(\cX)\cong Z^{\cA}(\cX)\boxtimes\cA$, and the string-net model associated to $\cX$ describes a system with decoupled layers of $Z^{\cA}(\cX)$ and $\cA$ topological order (see Fig.~\ref{fig:Bilayer}).
As discussed in \cite{MR2942952}, a topological domain wall between string net models for $\cX$ and $\cY$ is described by 
an $\cX-\cY$ bimodule $\cM$.
 By Lemma \ref{lem:constLem} below, the condition that $\cM$ is $\cA$-enriched is equivalent to requiring that for any anyon $a\in \text{Irr}(\cA)$, $a$ and $\overline{a}$ will annihilate when fused on the domain wall, even when they are brought in from opposite sides. 
 In other words, the domain wall factors as a domain wall connecting the $Z^{\cA}(\cX)$ and $Z^{\cA}(\cY)$ layers and a domain wall connecting the two $\cA$-layers, and the boundary in the $\cA$-layer is the trivial $\cA-\cA$ domain wall, i.e. no domain wall.  
 Thus, the choice of an $\cA$-enriched $\cX-\cY$ bimodule $\cM$ is just the choice of a domain wall between $Z^{\cA}(\cX)$ and $Z^{\cA}(\cY)$ topological orders.

To turn the domain wall between $Z(\cX)$ and $Z(\cY)$ described above into a domain wall between $Z^{\cA}(\cX)$ and $Z^{\cA}(\cY)$, we simply attach the $\cA$-Walker Wang bulk to the $\cA$-layer on both sides of the domain wall, using the prescription of \S~\ref{ssec:chiralLatticeModel}.  Since the domain wall is trivial in this layer, this can be done without adding any bulk codimension $1$ defects.    
Moreover, any domain wall that is extended into the bulk via a Witt equivalence can be folded into the surface; from our discussion in \S~\ref{ssec:ChangeOfEnrichment}, it follows that such domain walls are topologically equivalent to the ones that we describe here.

We have just explained that topological domain walls admit two equivalent characterizations.
The first involves a Witt equivalence (which may be multifusion) between the two bulk topological orders, describing point-like defects localized to the domain wall, and the second involves fixing an $\cA$-enriched bimodule between the $\cA$-enriched multifusion categories which determine commuting projector models for the bulk topological orders, which is the data required to construct a commuting projector Hamiltonian for the domain wall.

By identifying domain walls between $Z^{\cA}(\cX)$ and $Z^{\cA}(\cY)$ with a subset of domain walls between $Z(\cX)$ and $Z(\cY)$, we obtain both of these pictures from our construction.  
We have already described how the data of an $\cA$-enriched bimodule category determines a topological domain wall between (2+1)D bulk topological orders with anomaly described by $\cA$, and a corresponding commuting projector Hamiltonian.  Moreover, we can use the methods of  \cite{MR2942952} to show that point-like defects on the domain wall are described by the UmFC  $\End_{\cX-\cY}^{\cA}(\cM)$.  

In the remainder of this section, we explore the mathematical relationship between these two descriptions.  Specifically, we give two fundamental constructions which relate the Witt equivalence describing wall excitations to the enriched bimodule categories that define our lattice model.
Specifically, Construction \ref{const:EnrichedBimodToWittEq} turns the data of an $\cA$-enriched bimodule category 
into the Witt equivalence of wall-excitations.
Construction \ref{const:WittEqToEnrichedBimod} shows that one can also go the other way: given the desired category of wall excitations, by including the action functors which describe bringing bulk excitations to the domain wall, one can produce the data necessary for defining a commuting projector lattice model of a (1+1)D domain wall that hosts those wall excitations and actions.
Taken together, these constructions show that the perspective of understanding (2+1)D topological order in terms of enriched fusion categories subsumes the perspective of looking solely at UMTCs, because all possible UMTCs of bulk excitations and Witt equivalences of wall excitations arise from enriched fusion categories and enriched bimodules.

In the unenriched case, i.e.~string-net models, Witt equivalences $\cW$ between $Z(\cX)$ and $Z(\cY)$ correspond to Lagrangian algebras in $Z(\cX)\boxtimes\overline{Z(\cY)}\cong Z(\cX\boxtimes\cY^{\rm mp})$ (as we recall in \S~\ref{ssec:wallsMath} below), which in turn correspond to indecomposable $\cX\boxtimes\cY^{\rm mp}$ module categories \cite[Def.~3.3]{MR3406516} (see Remark \ref{rem:ModuleCatLagrangianCorrespondence} below).
Construction \ref{const:WittEqToEnrichedBimod} generalizes this to the enriched setting, taking in a Witt equivalence between enriched centers and producing an indecomposable enriched bimodule category which would be mapped to that Witt equivalence under Construction \ref{const:EnrichedBimodToWittEq}.

\begin{construction}
\label{const:EnrichedBimodToWittEq}
Just as we take the enriched centers $Z^\cA(\cX)$ and $Z^\cA(\cY)$ to obtain excitations in the (2+1)D bulk regions, wall excitations on the domain wall whose commuting projector model is obtained from the $\cA$-enriched $\cX-\cY$ bimodule category $\cM$
are described by the multifusion category
$$
\End^\cA_{\cX-\cY}(\cM)
:=
\End_{\UBFC}(\cM)
=
\End_{\cX\xBv_\cA \cY^{\rm mp}}(\cM)
$$
where $\cX\xBv_\cA \cY^{\rm mp}=(\cX\boxtimes \cY^{\rm mp})_{K_A}$ by \eqref{eq:RelativeDeligne}.
Since
\begin{align*}
Z(\End_{\cX-\cY}(\cM)) 
&\cong 
Z(\cX\boxtimes \cY^{\rm mp}) 
\cong
Z(\cX)\boxtimes \overline{Z(\cY)} 
\\&\cong 
Z^\cA(\cX)\boxtimes \cA \boxtimes \overline{\cA}\boxtimes \overline{Z^\cA(\cY)},
\end{align*}
by \cite[Thm.~3.20]{MR3039775}, we have
\begin{align*}
Z(\End_{\cX-\cY}^\cA(\cM))
&\cong
Z(\cX\xBv_\cA\cY^{\rm mp})
\notag
\\&=
Z((\cX\boxtimes\cY^{\rm mp})_{K_\cA})
\notag
\\&\cong 
Z(\cX\boxtimes\cY^{\rm mp})_{K_A}^{\loc}
\\&\cong 
Z^\cA(\cX)\boxtimes \overline{Z^\cA(\cY)}.
\notag
\end{align*}
Thus the bimodule tensor category $\End^\cA_{\cX-\cY}(\cM)$, which is the category of excitations on the (1+1)D domain wall whose commuting projector model is described by $\cM$, is indeed a Witt equivalence between the enriched centers $Z^\cA(\cX)$ and $Z^\cA(\cY)$, verifying that the domain wall is topological \cite[\S~4]{MR3063919}.
\end{construction}

\begin{ex}
\label{ex:EndOfIdentityBimodule}
For the trivial $\cA$-enriched $\cX-\cX$ bimodule ${}_\cX\cX_\cX$, 
manifestly we have $\End^\cA_{\cX-\cX}(\cX) = Z^\cA(\cX)$,
as forgetting the enrichment, $\End_{\cX-\cX}(\cX)= Z(\cX)\cong Z^{\cA}(\cX)\boxtimes\cA$.
We provide further discussion in \S~\ref{ssec:compositionMath} below.
\end{ex}

\begin{rem}
\label{rem:ModuleCatLagrangianCorrespondence}
Before presenting Construction \ref{const:WittEqToEnrichedBimod},
we recall the correspondence between indecomposable $\cX$-module categories $\cM$, describing gapped boundaries from $Z(\cX)$ to the vacuum, and Lagrangian algebras in $Z(\cX)$ from \cite[Def.~3.3]{MR3406516}.
Given an $\cX$-module category, observe that $\End_\cX(\cM)$ is a fusion category Morita equivalent to $\cX$, and thus $Z(\End_\cX(\cM))\cong Z(\cX)$.
This means there is a Lagrangian algebra $A\in Z(\cX)$ such that $\End_\cX(\cM) \cong Z(\cX)_A$.
Indeed, $A=\bigoplus_{m\in \Irr(\cM)} \underline{\End}(m)\in \cX$, which lifts to a Lagrangian algebra in $Z(\cX)$.

Conversely, given a Lagrangian algebra $A\in Z(\cX)$, the forgetful image of $A$ in $\cX$ is a direct sum of Morita equivalent indecomposable algebra objects $A=\bigoplus A_i$.
Then $\cM=\Mod_\cX(A_i)$ is an indecomposable $\cX$-module category, and choosing a different $A_j$ gives an equivalent $\cM$ by Morita equivalence of $A_i$ and $A_j$.

This correspondence is one-to-one between Lagrangian algebras in $Z(\cX)$ up to isomorphism and $\cX$-module categories (i.e. gapped boundaries to the vacuum) up to equivalence.
In \S~\ref{ssec:wallsMath}, we will show how this implies that topological boundaries between (2+1)D topological orders all arise from anyon condensation, up to the folding trick.
\end{rem}

\begin{construction}
\label{const:WittEqToEnrichedBimod}
Given a Witt-equivalence $\cW$ between $Z^\cA(\cX)$ and $Z^\cA(\cY)$, i.e., a fusion category $\cW$ such that $Z(\cW) \cong Z^\cA(\cX)\boxtimes \overline{Z^\cA(\cY)}$,
we can promote it to a canonical $\cA$-enriched $\cX-\cY$ bimodule category $\cM$ which recovers $\cW$ as $\End^\cA_{\cX-\cY}(\cM)$.

Indeed, 
observe that as braided fusion categories,
\begin{align*}
Z(\cW\boxtimes\cA) 
&\cong 
Z(\cW)\boxtimes Z(\cA) 
\\&\cong 
Z^\cA(\cX)\boxtimes\overline{Z^\cA(\cY)}\boxtimes \cA \boxtimes \overline{\cA} 
\\&\cong
Z(\cX) \boxtimes \overline{Z(\cY)}
\cong
Z(\cX\boxtimes \cY^{\rm mp}).
\end{align*}
Let $L$ in $Z^\cA(\cX)\boxtimes\overline{Z^\cA(\cY)}$ be the Lagrangian algebra corresponding to $\cW$, and let $K_{\cA}\in \cA\boxtimes \overline{\cA}$ be the canonical Lagrangian from \eqref{eq:kannonicalLagrangian}.
Then $L\boxtimes K_{\cA}$ gives a Lagrangian algebra in $Z(\cX\boxtimes \cY^{\rm mp})$, corresponding to a $\cX\boxtimes \cY^{\rm mp}$-module category $\cM$ under the correspondence \cite[Def.~3.3]{MR3406516} (see Remark \ref{rem:ModuleCatLagrangianCorrespondence}).
Unfolding, the $\cX-\cY$ bimodule category $\cM$ has an $\cA$-enrichment by Lemma \ref{lem:constLem} below.
Also by construction, $\End^{\cA}_{\cX-\cY}(\cM)\cong(Z^{\cA}(\cX)\boxtimes\overline{Z^{\cA}(\cY)})_L$.
Since both $\End^{\cA}_{\cX-\cY}(\cM)$ and $\cW$ correspond to the Lagrangian algebra $L$, they are equivalent as $Z^{\cA}(\cX)-Z^{\cA}(\cY)$ bimodule tensor categories. 

Physically, we should think of 
this construction as follows.
The Lagrangian algebra $L\boxtimes K_\cA$ specifies a domain wall between $\cX$ and $\cY$ that factors into a domain wall between $Z^{\cA}(X)$ and $Z^{\cA}(Y)$ in one layer, and a trivial domain wall in the $\overline{\cA}$ layer.
This follows from the fact that $L$ is transparent to anyons in $\cA$, and that $K_\cA$ is the Lagrangian algebra in $\cA\boxtimes\overline{\cA}$ corresponding to the trivial domain wall; see Remark \ref{rem:trivialWallLA}.
Lemma \ref{lem:constLem} does the mathematical work of verifying that an $\cA$-enriched bimodule is the same thing as one which is transparent to anyons from $\cA$, and is therefore the correct data to describe the resulting domain wall.
\end{construction}

\begin{lem}
\label{lem:constLem}
Suppose $\cX,\cY$ are $\cA$-enriched fusion categories and ${}_\cX\cM_\cY$ is an $\cX-\cY$ bimodule category.
Let $L$ be the Lagrangian algebra in 
$$
Z(\cX\boxtimes\cY^{\rm mp})
\cong Z^\cA(\cX)\boxtimes \overline{Z^\cA(\cY)}\boxtimes\cA\boxtimes \overline{\cA}
$$ 
corresponding to $\cM$.
The following are equivalent:
\begin{enumerate}[(1)]
\item 
There is a compatible $\cA$-enrichment on $\cM$ 
\item
There is an algebra homomorphism $K_\cA \to L$, where $K_\cA$ is in the copy of $\cA\boxtimes \overline{\cA}\subset Z(\cX\boxtimes\cY^{\rm mp})$.
\item
We can factorize $L=L'\boxtimes K_\cA$ 
for some Lagrangian algebra $L'\in Z^\cA(\cX)\boxtimes \overline{Z^\cA(\cY)}$.
\end{enumerate}
\end{lem}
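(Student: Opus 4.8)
The plan is to prove $(1)\Leftrightarrow(2)$ using the dictionary of Remark~\ref{rem:ModuleCatLagrangianCorrespondence} between module categories and Lagrangian algebras, and $(2)\Leftrightarrow(3)$ using the algebra of anyon condensation collected in Appendix~\ref{appendix:condensable}. Throughout I abbreviate $\cD:=Z(\cX\boxtimes\cY^{\rm mp})$, so that by hypothesis $\cD\cong\cD_1\boxtimes\cD_2$ with $\cD_1:=Z^\cA(\cX)\boxtimes\overline{Z^\cA(\cY)}$ and $\cD_2:=\cA\boxtimes\overline{\cA}$, the algebra $K_\cA$ lying in the tensor factor $\cD_2$. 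I would first record that $K_\cA$ is a connected \'etale algebra (its unit component $1\boxtimes 1$ is simple, and $K_\cA$ is separable commutative) which is moreover Lagrangian in $\cD_2$, so that $(\cD_2)_{K_\cA}^{\loc}\cong\fdHilb$ by the remark following \eqref{eq:kannonicalLagrangian}; since $K_\cA$ lives in a single Deligne factor, $\cD_{K_\cA}\cong\cD_1\boxtimes(\cD_2)_{K_\cA}$ and $\cD_{K_\cA}^{\loc}\cong\cD_1=Z^\cA(\cX)\boxtimes\overline{Z^\cA(\cY)}$.

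For $(2)\Leftrightarrow(3)$: the direction $(3)\Rightarrow(2)$ is immediate, since for $L=L'\boxtimes K_\cA$ the composite $K_\cA\cong 1\boxtimes K_\cA\xrightarrow{\,\eta_{L'}\boxtimes\id\,}L'\boxtimes K_\cA=L$ is an algebra homomorphism, where $\eta_{L'}$ is the unit of $L'$. For $(2)\Rightarrow(3)$, I would use that an algebra homomorphism out of a connected \'etale algebra is a split monomorphism, so $K_\cA\to L$ realizes $K_\cA$ as a subalgebra of the Lagrangian algebra $L$; since $L$ is commutative it is, as a $K_\cA$-module, local, hence descends to a Lagrangian algebra $L'$ of $\cD_{K_\cA}^{\loc}\cong\cD_1$. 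The point is then that ``condensing a subalgebra of a Lagrangian algebra'' is a bijection between Lagrangian algebras of $\cD$ admitting an algebra map from $K_\cA$ and Lagrangian algebras of $\cD_{K_\cA}^{\loc}$ (cf.\ \cite[Thm.~3.20]{MR3039775} and Appendix~\ref{appendix:condensable}). Since $L'\boxtimes K_\cA$ is also a Lagrangian algebra of $\cD$ admitting an algebra map from $K_\cA$, and its condensation along $K_\cA$ is $L'\boxtimes(K_\cA/K_\cA)\cong L'\boxtimes\fdHilb\cong L'$, injectivity of this correspondence forces $L\cong L'\boxtimes K_\cA$.

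For $(1)\Leftrightarrow(2)$: I would recall from \cite{MR4228258} and \cite[\S2.3]{1910.03178} (cf.\ Construction~\ref{const:EnrichedBimodToWittEq}) that a compatible $\cA$-enrichment on the $\cX$--$\cY$ bimodule $\cM$ is precisely the data upgrading $\cM$ to a module category over $\cX\xBv_\cA\cY^{\rm mp}=(\cX\boxtimes\cY^{\rm mp})_{K_\cA}$. Applying Remark~\ref{rem:ModuleCatLagrangianCorrespondence} to the fusion category $(\cX\boxtimes\cY^{\rm mp})_{K_\cA}$, together with $Z\big((\cX\boxtimes\cY^{\rm mp})_{K_\cA}\big)\cong\cD_{K_\cA}^{\loc}$ from \cite[Thm.~3.20]{MR3039775}, such module categories correspond to Lagrangian algebras of $\cD_{K_\cA}^{\loc}$; and restriction along the condensation functor $\cX\boxtimes\cY^{\rm mp}\to(\cX\boxtimes\cY^{\rm mp})_{K_\cA}$ corresponds, on Lagrangian algebras, to de-condensation along $K_\cA$, whose image consists exactly of the Lagrangian algebras of $\cD$ receiving an algebra map from $K_\cA$. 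Thus $\cM$ admits a compatible $\cA$-enrichment (i.e.\ lifts to a $(\cX\boxtimes\cY^{\rm mp})_{K_\cA}$-module) if and only if the Lagrangian algebra $L$ it determines receives an algebra homomorphism from $K_\cA$, which is $(2)$.

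The hard part will not be conceptual but bookkeeping: both steps rest on the precise form of the anyon-condensation correspondence --- in Step $(2)\Leftrightarrow(3)$ that condensing a subalgebra of a Lagrangian algebra is a \emph{bijection} onto the Lagrangian algebras of the category of local modules, and (crucially for the clean factorization) that it carries $L'$ back to $L'\boxtimes K_\cA$ when $K_\cA$ is already Lagrangian in a Deligne tensor factor; and in Step $(1)\Leftrightarrow(2)$ that the module-category/Lagrangian-algebra dictionary is compatible with the condensation functor $\cX\boxtimes\cY^{\rm mp}\to(\cX\boxtimes\cY^{\rm mp})_{K_\cA}$. These are standard facts underlying the theory of gapped boundaries, reviewed in Appendix~\ref{appendix:condensable}, but one must check carefully that ``receives an algebra homomorphism from $K_\cA$'' is genuinely equivalent to the factorization asserted in $(3)$ rather than merely ``contains a copy of $K_\cA$''.
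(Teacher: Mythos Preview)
Your proposal is correct and, for $(2)\Leftrightarrow(3)$, essentially identical to the paper's argument: both use that Lagrangian algebras of $\cD$ containing $K_\cA$ are in bijection with Lagrangian algebras of $\cD_{K_\cA}^{\loc}\cong Z^\cA(\cX)\boxtimes\overline{Z^\cA(\cY)}$, and that under this bijection $L'$ corresponds to $L'\boxtimes K_\cA$.

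For $(1)\Leftrightarrow(2)$ the paper takes a slightly more direct route. Rather than invoking the module-category/Lagrangian-algebra dictionary twice and checking compatibility with the condensation functor, the paper unpacks the definition of an $\cA$-enrichment on $\cM$ as a monoidal natural isomorphism filling the triangle
\[
\begin{tikzcd}
\cA\boxtimes\overline{\cA}\arrow[rr,"\otimes"]\arrow[dr,"\rhd\boxtimes\lhd"'] & & \cA\arrow[dl,"\rhd"]\\
& \End_{\cX-\cY}(\cM) &
\end{tikzcd}
\]
and observes that both diagonal arrows are free-module functors: $\otimes$ is the free-module functor for $K_\cA$, while $\rhd\boxtimes\lhd$ is the free-module functor for $L\cap(\cA\boxtimes\overline{\cA})$. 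Since $K_\cA$ is Lagrangian in $\cA\boxtimes\overline{\cA}$, such a factorization exists iff $K_\cA\cong L\cap(\cA\boxtimes\overline{\cA})$, which is equivalent to (2). This avoids the bookkeeping you flag in your last paragraph about compatibility of the dictionary with condensation; on the other hand, your version makes transparent that the equivalence $(1)\Leftrightarrow(2)$ is really the same anyon-condensation correspondence used in $(2)\Leftrightarrow(3)$, just applied one categorical level up.
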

\begin{proof}
By definition, $\cA$-enrichments on ${}_\cX\cM_\cY$ correspond to monoidal natural isomorphisms
$$
\begin{tikzcd}
\cA\boxtimes \overline{\cA}
\arrow[dr,"\rhd \boxtimes \lhd"']
\arrow[rr, "\otimes"]
&
\arrow[d, phantom, "\Rightarrow"]
&
\cA
\arrow[dl,"\rhd"]
\\
&
\End_{\cX-\cY}(\cM)
\end{tikzcd}
$$
where
$\rhd \boxtimes \lhd : \cA\boxtimes \overline{\cA}\to \End_{\cX-\cY}(\cM)$
is the free module functor for $L\cap \cA\boxtimes \overline{\cA}$,
and
$\otimes:\cA\boxtimes \overline{\cA}\rightarrow \cA$ 
is the free module functor for $K_{\cA}$.
Since $K_{\cA}$ is Lagrangian, such a factorization will exist if and only if $K_\cA$ is isomorphic as an algebra to $L\cap \cA\boxtimes \overline{\cA}$.

Clearly (3) implies (2).
To show that (2) implies $L=L'\boxtimes K_{\cA}$ for some Lagrangian algebra $L'\in Z^{\cA}(\cX)\boxtimes\overline{Z^{\cA}(\cY)}$, we simply observe that $L$ is a condensable algebra with $1\boxtimes K_\cA$ as a subalgebra, and therefore, possible choices of $L$ are in bijection with Lagrangian algebras in
\[(Z(\cX)\boxtimes\overline{Z(\cY)})_{K_\cA}^{\loc}\cong Z^{\cA}(\cX)\boxtimes\overline{Z^{\cA}(\cY)}.\]
Local $K_\cA$ modules are all of the form $c\boxtimes K_{\cA}$ for $c\in Z^{\cA}(\cX)\boxtimes\overline{Z^{\cA}(\cY)}$, so the Lagrangian algebra $L'\in Z^{\cA}(\cX)\boxtimes\overline{Z^{\cA}(\cY)}$ corresponds to the algebra $L'\boxtimes K_\cA\supseteq 1\boxtimes K_\cA$, as claimed.
\end{proof}

In the preceding constructions, we saw that an $\cA$-enriched bimodule category ${}_\cX\cM_\cY$ between $\cA$-enriched fusion categories $\cX,\cY\in \UFC^\cA$ determines a Witt equivalence $\End^\cA_{\cX-\cY}(\cM):Z^\cA(\cX)\to Z^\cA(\cY)$, and that all Witt equivalences between enriched centers arise in this way.
By Example \ref{egs:selfEnrichment}, any UMTC $\cA$ arises as an enriched center, so all Witt equivalences between UMTCs can be obtained from Construction \ref{const:EnrichedBimodToWittEq}.
In other words, the perspective of Remark \ref{rem:enriched} that (2+1)D topological orders should be described by enriched fusion categories, with domain walls described by enriched bimodule categories, is in harmony with the expectations that bulk excitations are described by UMTCs, wall excitations are described by Witt equivalences between them, and that categories of excitations completely determine the topological order of the bulks and domain walls.

We remark that from the usual viewpoint, if a topological order is described by a UMTC $\cC$, then the topological domain walls from $\cC$ to itself are described by the fusion 2-category $\Mod(\cC)$. 
Indeed, given such a domain wall $\cM$, we take the semi-simple category of so-called twist defects, i.e.~point defects between the trivial domain wall and $\cM$. 
Bringing in excitations from the bulk makes this into a $\cC-\cC$ bimodule category, but moving excitations around the bottom of a twist defect allows us to equip this category with a $\overline{\cC}$-enriched structure. 
The fusion 2-category of $\overline{\cC}$-enriched $\cC-\cC$ bimodules is equivalent to $\Mod(\cC)$ \cite{2006.08022}. 
By \cite{2003.06663} this yields an equivalence between the fusion 2-category of topological domain walls from $\cC$ to itself and $\Mod(\cC)$.

On the other hand, using $\UFC^\cA$ as a $3$-category of topological orders leads us to expect that topological domain walls from $Z^{\cA}(\cX)$ topological order to itself are described by $\UFC^{\cA}(\cX\to\cX)$, which is the fusion $2$-category $\Bim^{\cA}(\cX)$.
To demonstrate the consistency of our framework with the usual point of view, we have the following mathematical result.
\begin{prop}
There is an equivalence of fusion 2-categories $\Bim^\cA(\cX)\cong \Mod(Z^\cA(\cX))$. 
\end{prop}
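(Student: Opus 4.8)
The plan is to establish the equivalence $\Bim^\cA(\cX)\cong \Mod(Z^\cA(\cX))$ by reducing to the known unenriched statement and then identifying the enriched side as the appropriate ``transparent-to-$\cA$'' sub-2-category, exactly as Lemma \ref{lem:constLem} does at the level of 1-morphisms. First I would recall the relevant known result: for an ordinary UFC $\cW$, the fusion 2-category $\Bim(\cW)$ of $\cW$-$\cW$ bimodule categories is equivalent to $\Mod(Z(\cW))$; this is the 2-categorical refinement of the Morita-theoretic statement and appears (in the unitary setting) via the center functor being fully faithful on 1-truncations, cf.~the references cited for \cite{2006.08022} and \cite{MR3406516}. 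Applying this with $\cW = \cX$ gives $\Bim(\cX)\cong \Mod(Z(\cX))$, and then the factorization $Z(\cX)\cong Z^\cA(\cX)\boxtimes \cA$ (which holds since $\cA$ is nondegenerate and $F$ is fully faithful) gives $\Mod(Z(\cX))\cong \Mod(Z^\cA(\cX)\boxtimes\cA)\cong \Mod(Z^\cA(\cX))\boxtimes\Mod(\cA)$.

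Next I would pin down $\Bim^\cA(\cX)$ inside $\Bim(\cX)$. By definition, an object of $\Bim^\cA(\cX)=\UFC^\cA(\cX\to\cX)$ is an $\cX$-$\cX$ bimodule category $\cM$ together with a compatibility datum making the two induced $\cA$-actions (through the left $\cX$ and right $\cX^{\rm mp}$ factors) agree; the 1- and 2-morphisms are bimodule functors and natural transformations compatible with this enrichment. The key structural input is Lemma \ref{lem:constLem}: under the Morita/Lagrangian correspondence of Remark \ref{rem:ModuleCatLagrangianCorrespondence}, $\cA$-enrichments on ${}_\cX\cM_\cX$ correspond exactly to factorizations $L = L'\boxtimes K_\cA$ of the associated Lagrangian algebra in $Z(\cX\boxtimes\cX^{\rm mp})\cong Z^\cA(\cX)\boxtimes\overline{Z^\cA(\cX)}\boxtimes\cA\boxtimes\overline{\cA}$. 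Geometrically, an $\cA$-enriched bimodule is one whose corresponding domain wall is the trivial wall in the $\cA$-$\overline{\cA}$ layer; the $\cA$-bulk simply ``passes through,'' so the wall data is entirely carried by the $Z^\cA(\cX)$ layer, i.e.~by an object of $\Mod(Z^\cA(\cX))$. Thus on objects the equivalence sends $\cM\leftrightarrow L'\boxtimes K_\cA$ to the $Z^\cA(\cX)$-module category associated to $L'$; conversely, given $L'$ Lagrangian in $Z^\cA(\cX)\boxtimes\overline{Z^\cA(\cX)}$, form $L'\boxtimes K_\cA$ and apply Construction \ref{const:WittEqToEnrichedBimod}. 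I would then promote this to a 2-functor by checking it respects composition (relative Deligne product on the $\Bim$ side corresponds to relative Deligne product of module categories, both governed by the $K_\cA$-condensation formula \eqref{eq:RelativeDeligne}), is compatible with the fusion (monoidal) structure, and is fully faithful on Hom-categories — the latter because bimodule functors and natural transformations between $\cA$-enriched bimodules are precisely the ones landing in $\End^\cA_{\cX-\cX}$, which by Construction \ref{const:EnrichedBimodToWittEq} matches the Hom-categories in $\Mod(Z^\cA(\cX))$.

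The main obstacle will be the coherence bookkeeping at the 2-categorical level: it is routine that $\cA$-enrichments of bimodule categories are in bijection with the relevant Lagrangian-algebra factorizations (Lemma \ref{lem:constLem}), but upgrading this bijection-on-objects to a genuine equivalence of fusion 2-categories requires checking that the assignment is functorial for the monoidal product (stacking), that the associators and unitors of the two fusion 2-categories are carried to one another, and that no new 1-morphisms or 2-morphisms appear on the enriched side beyond those detected by the center. My strategy to control this is to avoid constructing the 2-functor by hand from scratch: instead, I would realize $\Bim^\cA(\cX)$ as a full sub-2-category of $\Bim(\cX)\cong\Mod(Z(\cX))\cong\Mod(Z^\cA(\cX))\boxtimes\Mod(\cA)$ — namely the full sub-2-category on objects of the form $\cN\boxtimes \mathbf{1}_{\Mod(\cA)}$, where $\mathbf{1}_{\Mod(\cA)}=\cA$ is the unit (trivial-wall) object — and observe that this sub-2-category is manifestly equivalent to $\Mod(Z^\cA(\cX))$ and is manifestly monoidal. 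Lemma \ref{lem:constLem} is then exactly what identifies this full sub-2-category with the enriched bimodules. A secondary subtlety is unitarity: one should check that the equivalences used (the factorization of $Z(\cX)$, the center-functor fully-faithfulness, and the Lagrangian correspondence) can all be taken unitary, which is available from the cited references \cite{MR3866911,MR3246855} and the general theory of unitary fusion categories, so no essentially new work is needed there.
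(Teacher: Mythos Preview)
Your approach is genuinely different from the paper's, and it has a real gap at the step where you ``realize $\Bim^\cA(\cX)$ as a full sub-2-category of $\Bim(\cX)\cong\Mod(Z^\cA(\cX))\boxtimes\Mod(\cA)$.'' The paper itself computes (immediately preceding this proposition) that for an $\cA$-enriched bimodule $\cM$ one has $\End_{\cX-\cY}(\cM)\cong\End^\cA_{\cX-\cY}(\cM)\boxtimes\cA$. Hence the forgetful 2-functor $\Bim^\cA(\cX)\to\Bim(\cX)$ is \emph{not} fully faithful on Hom-categories: the unenriched Homs carry an extra $\boxtimes\,\cA$ factor. Equivalently, on the other side, the full sub-2-category of $\Mod(Z^\cA(\cX))\boxtimes\Mod(\cA)$ on objects $\cN\boxtimes\mathbf{1}$ has Hom-categories $\Hom(\cN,\cN')\boxtimes\End_{\Mod(\cA)}(\mathbf{1})\cong\Hom(\cN,\cN')\boxtimes\cA$, not $\Hom(\cN,\cN')$, so it is not ``manifestly equivalent to $\Mod(Z^\cA(\cX))$.'' You could try to repair this by taking the non-full sub-2-category with 1-morphisms of the form $F\boxtimes\id_\cA$, but then you are back to checking all the coherences you were hoping to avoid, and Lemma~\ref{lem:constLem} only controls objects, not the higher morphisms.

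The paper sidesteps this entirely with a one-line Morita argument in the 4-category $\UBFC$: by definition $\Bim^\cA(\cX)=\End_{\UBFC}({}_\cA\cX_{\Hilb})$; precomposing the 1-morphism ${}_\cA\cX_{\Hilb}$ with the invertible 1-morphism (Witt equivalence) ${}_{\overline{Z^\cA(\cX)}}(\cX^{\rm mp})_\cA$ gives a 1-morphism $\overline{Z^\cA(\cX)}\to\Hilb$ Morita equivalent to $Z^\cA(\cX)$, and composing with an invertible 1-morphism induces an equivalence of endomorphism 2-categories. The right-hand side $\End_{\UBFC}({}_{\overline{Z^\cA(\cX)}}Z^\cA(\cX)_{\Hilb})$ is then identified with $\Mod(Z^\cA(\cX))$ by the cited result \cite{2006.08022}. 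This avoids ever having to split off the $\cA$-factor by hand.
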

\begin{proof}
Observe that $\Bim^\cA(\cX)=\End_{\UBFC}({}_\cA\cX_{\Hilb})$.
Now consider $\cX^{\rm mp}$ as a $\overline{Z^\cA(\cX)}-\cA$ Witt-equivalence, i.e., invertible 1-morphism in $\UBFC$.
Composing these two 1-morphisms, we get
$\cX^{\rm mp}\xBv_\cA \cX$
as a 1-morphsim in
$\UBFC(\overline{Z^\cA(\cX)}\to \Hilb)$.
Since $\cX^{\rm mp}\xBv_\cA \cX$ is Morita equivalent to $Z^\cA(\cX)$ via the $\cA$-enriched bimodule $\cX$, we see
$$
\End_{\UBFC}({}_{\overline{Z^\cA(\cX)}}Z^\cA(\cX)_{\Hilb}) 
\cong
\End_{\UBFC}({}_\cA\cX_{\Hilb}).
$$
The left hand side is exactly $\Mod(Z^\cA(\cX))$ by \cite{2006.08022}.
A graphical representation of this map in a 2D projection of the 4D graphical calculus of $\UBFC$\footnote{
Hom 2-categories in $\UBFC$ are 2-categories themselves, and have a well-defined 2D graphical calculus.
} 
is as follows:
\[
\resizebox{.44\textwidth}{!}{$
\Bim^\cA(\cX)
\ni
\tikzmath{
\begin{scope}
\clip[rounded corners=5pt] (-.6,-.5) rectangle (.2,.5);
\fill[red!30] (-.6,-.5) rectangle (0,.5);
\end{scope}
\draw (0,-.5) node[below]{$\scriptstyle \cX$} -- (0,.5) node[above]{$\scriptstyle \cX$};
\filldraw (0,0) node[right]{$\scriptstyle \cM$} circle (.05cm);
\node[red] at (-.4,0) {$\scriptstyle \cA$};
}
\mapsto
\tikzmath{
\begin{scope}
\clip[rounded corners=5pt] (-1.6,-.7) rectangle (.2,.7);
\fill[blue!30] (-.3,-.7) -- (-.3,-.3) arc (270:90:.3cm) -- (-.3,.7) --  (-1.6,.7) -- (-1.6,-.7);
\end{scope}
\filldraw[fill=red!30] (-.3,0) circle (.3cm);
\filldraw (-.3,.3) node[right, yshift=.1cm]{$\scriptstyle \cX$} circle (.05cm);
\filldraw (-.3,-.3) node[right, yshift=-.1cm]{$\scriptstyle \cX$} circle (.05cm);
\draw[blue] (-.3,-.7) node[below]{$\scriptstyle Z^\cA(\cX)$} -- (-.3,-.3);
\draw[blue] (-.3,.7) node[above]{$\scriptstyle Z^\cA(\cX)$} -- (-.3,.3);
\filldraw (0,0) node[right]{$\scriptstyle \cM$} circle (.05cm);
\node at (-.9,0) {$\scriptstyle \cX^{\rm mp}$};
\node[red] at (-.3,0) {$\scriptstyle \cA$};
\node[blue] at (-1.1,-.5) {$\scriptstyle \overline{Z^\cA(\cX)}$};
}
\in
\Mod(Z^\cA(\cX)).
$}
\qedhere
\]
\end{proof}

\section{Decomposing composite domain walls}
\label{sec:wallsFromAlgebras}

Having outlined our perspective on topological boundaries between (2+1)D topological orders, we now turn to the question of central interest: namely, what can happen when we compose two or more domain walls by stacking them?  This question is very important, since any domain wall can be obtained by such compositions.  

In \S~\ref{ssec:wallsMath}, we review how domain walls are related to condensable and Lagrangian algebras in UMTCs, as well as how this relation can be used to describe any domain wall as a composition of elementary domain wall types. 
In \S~\ref{ssec:compositionMath}, we introduce the mathematical operations corresponding to composing domain walls, i.e.~treating parallel domain walls as a single domain wall. 
Finally, in \S~\ref{ssec:decompositionResult}, we explain the mathematics of decomposing parallel domain walls into irreducible summands, including the calculation of the GSD which occurs given a composite of domain walls.
\subsection{Elementary topological domain walls}
\label{ssec:wallsMath}

To set the stage for the remainder of this work, we now review the results of \cite{MR3022755}, who describe how every (indecomposable) topological domain wall can be viewed as a composite of several \emph{elementary domain walls} (see Fig.~\ref{fig:factoredWall}).
There are two classes of elementary domain walls:
invertible domain walls (Example \ref{egs:invertibleBoundary}), which implement a braided tensor equivalence between the UMTCs of excitations on each side, and domain walls where an algebra is condensed on one side (Example \ref{egs:condensationBoundary}).

\begin{example}
\label{egs:invertibleBoundary}
Suppose $\cX$ is an $\cA$-enriched fusion category.
Invertible defects between regions with $Z^{\cA}(\cX)$ bulk topological order correspond to invertible enriched $\cX-\cX$ bimodule categories (i.e. Morita equivalences),
which correspond to braided autoequivalences $\Phi \in \Aut_\otimes^{\rm br}(Z^{\cA}(\cX))$ by \cite{MR2677836, 1910.03178}.
Anyons which cross the boundary are permuted by the symmetry $\Phi$, i.e.~an anyon $c\in Z^{\cA}(\cX)$ crosses the domain wall to become $\Phi(c)$.
Invertible defects have been discussed extensively in the literature \cite{PhysRevA.71.022316,1204.5479,PhysRevX.2.041002,PhysRevB.86.195126,PhysRevX.4.041035,PhysRevB.87.045130,1504.02476,1410.4540,PhysRevX.2.031013,PhysRevB.86.161107}.
Locally, such invertible defect lines are simply a benign relabelling of anyon types on one side of the defect relative to the other.
However, they can have striking physical consequences, such as altering the ground state degeneracy, when the defect lines terminate.   
\end{example}

Another important class of domain walls comes from condensable algebras, a.k.a.~unitarily separable \'etale algebras $A$ in a UMTC $\cC$.
We include a review of condensable algebras in Appendix \S~\ref{appendix:condensable} below.
As the name suggests, a condensable algebra $A$ is a collection of anyons which can be condensed, leading to a distinct phase with a different topological order
\cite{PhysRevB.79.045316,MR3246855,PhysRevB.84.125434,1706.04940}.
When the anyons before condensation are described by the UMTC $\mathcal{C}$, the condensed phase is described by the UMTC $\mathcal{C}_A^{\text{loc}}$, which we will describe presently (see Appendix \S~\ref{appendix:condensable} for details).
By Constructions \ref{const:EnrichedBimodToWittEq} and \ref{const:WittEqToEnrichedBimod},
the discussion of condensation in \cite{MR3246855} applies equally well, regardless of whether $\cC$ is a Drinfeld center or an enriched center.

\begin{example}
\label{egs:condensationBoundary}
The article \cite{MR3246855} (see also \cite{MR2516228}) explains how anyon condensation can be carried out on one side of a domain wall, giving a topological boundary between the uncondensed phase with topological order $\cC$ and the condensed phase with topological order $\cC_A^{\loc}$, the category of local right $A$-modules in $\cC$.
Excitations on the boundary are described by $\cC_A$, which is the fusion category obtained from $\cC$ by condensing $A$, but not demanding locality, which physically corresponds to requiring that the excitation braid trivially with $A$.

The data necessary to perform anyon condensation is a condensable algebra $A$, which is a collection of anyons (described mathematically as a direct sum) in $\cC$ together with a collection of fusion channels satisfying consistency conditions \cite{MR3246855}. 
These conditions guarantee that $A$-lines can be treated as equivalent to vacuum lines.
Similarly, the data of an excitation on the domain wall created by condensation is a direct sum of anyons $M\in\cC$ together with a choice of how the condensate $A$ can be absorbed by $M$, which is mathematically the choice of an $A$-action morphism in $\cC(MA\to M)$.\footnote{
One can view $M$ as another collection of anyons and consistent fusion channels where the fusion channels have one input leg from $M$ and one leg from $A$, and the output leg is again in $M$.
The consistency conditions are entirely similar to those of a condensable algebra.
}
This is the mathematical data of an $A$-module, i.e. an object $M_A$ in $\cC_A$.
Here, the letter $M$ refers to the direct sum of anyons of the uncondensed phase, and the subscript $\cdot_A$ reminds us of the $A$-action.
We emphasize that, although we usually denote simple objects by lowercase letters, we will usually denote simple objects in $\cC_A$ by uppercase letters, because the underlying object in $\cC$ may not be simple.
Physically, this action allows $M$ to absorb excitations from the condensate $A$ through fusion as if they were the vacuum.

The UMTC $\cC_A^{\loc}$ which describes bulk excitations in the 
condensed phase is a subcategory of $\cC_A$, consisting of those particles which braid trivially with the condensate $A$.
Thus, we typically denote anyons in the condensed phase with the same notation as wall excitations, although we add the superscript $M_A^\circ$ to point out that an object lives in $\cC_A^{\loc}$. 

Bringing an anyon $M_A^\circ\in\Irr(\mathcal{C}_A^{\loc})$ out of the condensate corresponds to applying the forgetful functor $\cC_A^{\loc}\to\cC_A\to\cC:M_A^\circ\mapsto M_A\mapsto M$; this functor forgets the data of the $A$-action on $M_A$, leaving only the underlying direct sum of anyons $M$.
The resulting bulk excitation in the uncondensed region can in general be viewed as a direct sum of different anyon types in $\cC$.
We provide further explanation of $\cC_A$ and $\cC_A^{\loc}$, including their physical interpretation, in Appendix~\ref{appendix:condensable}.

Now suppose that $\cC=Z(\cX)$, where $\cX\in\UFC$.
As we show in Appendix \ref{appendix:condensable}, the condensed phase $\cC_A^{\loc}$ is then the center of the fusion category $\cX_A$ of right $A$-modules,\footnote{Here we suppress some of the data associated with $\cX_A$; for an explicit discussion of the full data see e.g. \cite{MR1882479,MR1940282}} $Z(\cX_A)\cong Z(\cX)_A^{\loc}$.  The fusion category $\cX_A$ can be obtained from $\cX$ in the following way.
Objects of $\cX_A$ are objects $Y$ of $\cX$ equipped with a module action $YA\to Y$ satisfying associativity and unitality conditions, which we depict in the diagrammatic calculus as a trivalent vertex.  The tensor product on $\cX_A$ is the relative tensor product $(Y_A,Z_A)\mapsto Y_A\otimes_AZ_A$ over $A$, which is the image of the following projection.
\[
\tikzmath[scale=.666]{
\draw (0.5,-.4) -- (0.5,-.1);
\draw (0.75,.15) arc (0:-180:.25cm);
\filldraw (.5,-.1) circle (.075cm);
\filldraw (.5,-.4) circle (.075cm);
\draw (0.25,.15) arc (0:90:.25cm);
\filldraw[DarkGreen] (0,.4) circle (.075cm);
\draw[knot] (.75,.15) .. controls ++(90:.25cm) and ++(270:.25cm) .. (1.25,.65) arc (0:90:.25cm);
\draw[thick,DarkGreen] (0,-.6) node[below]{$\scriptstyle Y$} -- (0,1.2);
\draw[thick,orange,knot] (1,-.6) node[below]{$\scriptstyle Z$} -- (1,1.2);
\filldraw[orange] (1,.9) circle (.075cm);
}
\in\cX(YZ\to YZ),
\]
The $Y$ and $Z$ strings in the diagram above represent worldlines of excitations $Y$ and $Z$ confined to the domain wall at the boundary of the region where $A$ is condensed.
(We use capital letters for $Y$, and $Z$ because they in general they correspond to direct sums of bulk anyons.)
The unlabelled black string is the worldline of the condensable algebra $A\in Z(\cX)$.
The orange and green trivalent vertices describe processes where the anyons $A$ are brought to this gapped boundary, and fused with a $Y$ or $Z$ boundary excitation.

 There are several ways to interpret the category $\cX_A$.  First, as we have just alluded to, objects in 
 $\cX_A$ correspond to excitations on the gapped boundary between $Z(\cX_A)$ and $ Z(\cX)_A^{\loc}$.  
 Second, because $Z(\cX_A)\cong Z(\cX)_A^{\loc}$, $\cX_A$ fixes the data for a commuting projector model for the condensed region.
 Finally, $\cX_A$ can be viewed as a $\cX-\cX_A$ bimodule category, describing the states of the resulting string-net near the domain wall.
The action of $\cX_A$ on $\cX_A$ is just the tensor product $\otimes_A$, while the action of $\cX$ is the free module functor $\cX\to\cX_A:x\mapsto xA_A$.

To see how this construction generalizes to the enriched case, let $\cX$ be an $\cA$-enriched fusion category.  
If $A\in \cC:=Z^{\cA}(\cX)$ is a condensable algebra, then $\cX_A$ is again an $\cA$-enriched fusion category,\footnote{By \cite[Thm.~3.20]{MR3039775}, $Z(\cX_A)\cong Z(\cX)_A^{\loc}$. Since $Z(\cX)\cong Z^{\cA}(\cX)\boxtimes\cA$ and $A\in Z^{\cA}(\cX)$, we have $Z(\cX)_A^{\loc}\cong Z^{\cA}(\cX)_A^{\loc}\boxtimes\cA$. Thus, $\cX_A$ is $\cA$-enriched, with $Z^{\cA}(\cX_A)\cong Z^{\cA}(\cX)_A^{\loc}$.} describing the topological order to the right of the domain wall.  
Since the free module functor $\cX\to\cX_A$ is monoidal (i.e.~it respects the tensor product), we can also view $\cX_A$ as an $\cA$-enriched $\cX-\cX_A$ bimodule.
Construction \ref{const:EnrichedBimodToWittEq} then shows how to obtain the Witt equivalence $\End^\cA_{\cX-\cX_A}(\cX_A)\cong Z^{\cA}(\cX)_A$ describing point-like excitations on the domain wall from this enriched bimodule.

We sometimes wish to  condense the algebra $A$ to the left of the domain wall, instead of to the right.
In this case, we use the similarly defined $\cA$-enriched fusion category ${}_A\cX$ of left $A$-modules in $\cX$ for the condensed bulk region, and the domain wall is given by the $\cA$-enriched ${}_A\cX-\cX$ bimodule category ${}_A\cX$.
We do so because the fusion category $\cX$ most obviously acts on the opposite side from where $A$ acts.
However, the fusion categories $\cX_A$ and ${}_A\cX$ are canonically monoidally equivalent: given a right $A$-module $M_A$, we obtain a left action of $A$ on $M$ by composing the right action with the half-braiding on $A$.
Similarly, $Z^{\cA}({}_A\cX)\cong{}_A^{\loc}Z^{\cA}(\cX)\cong Z^{\cA}(\cX)_A^{\loc}$, with the canonical braided monoidal equivalence $Z^{\cA}(\cX)_A\to{}_AZ^{\cA}(\cX)$ preserving the subcategories of local modules.
Consequently, we will sometimes tacitly identify ${}_A\cX$ and $\cX_A$, e.g. in \eqref{eq:allPointDefects}.
\end{example}

\begin{example}
\label{egs:boundaryToVacuum}
Condensing a condensable algebra on one side of a domain wall as in Example \ref{egs:condensationBoundary} results in a topological boundary to vacuum precisely when the algebra is Lagrangian \cite[Rem.~5.4]{MR3246855} \cite{PhysRevX.3.021009}.
If $A$ is a condensable algebra in a UMTC $\cC$,
the Drinfeld center of $\cC_A$ satisfies
\begin{equation}
\label{eq:CenterOfCA}
Z(\cC_A) \cong \cC\boxtimes \overline{\cC_A^{\loc}}
\end{equation}
by \cite[Cor.~3.30]{MR3039775}.
(Recall that $\overline{\mathcal{D}}$ is the same unitary fusion category as $\cD$, but equipped with the reverse braiding.) 
Hence $A$ is a Lagrangian algebra 
if and only if 
$\cC_A^{\loc}\cong \Hilb$.  Eq. (\ref{eq:CenterOfCA}) implies that this can occur
if and only if
$\cC$ is the Drinfeld center of some unitary fusion category.
\end{example}

A priori, Example \ref{egs:boundaryToVacuum} is a special case of Example \ref{egs:condensationBoundary}.
On the other hand, by the so-called folding trick (see Fig.~\ref{fig:FoldingTrick}),  
\begin{figure}[!ht]
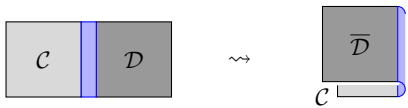

    \centering
    $$
    \tikzmath{
    \draw[fill=gray!30] (0,0) rectangle (1,1);
    \draw[fill=gray!80] (1.2,0) rectangle (2.2,1);
    \draw[draw=blue, fill=blue!30] (1,0) rectangle (1.2,1);
    \node at (.5,.5) {$\cC$};
    \node at (1.7,.5) {$\cD$};
    }
    \qquad
    \rightsquigarrow
    \qquad
    \tikzmath{
    \draw[fill=gray!80] (1,1) -- (0,1) -- (0,0) -- (1,0);
    \fill[fill=gray!30] (.2,-.2) rectangle (1,-.05);
    \draw (1,-.2) -- (.2,-.2) -- (.2,-.05);
    \fill[blue!30] (1.1,.9) arc (0:90:.1cm) (1,1) -- (1,0) arc (90:0:.1cm) -- (1.1,.9);
    \fill[blue!30] (1,0) arc (90:-90:.1cm) -- (1,-.05) arc (-90:90:.025cm);
    \draw[blue] (1.1,.9) arc (0:90:.1cm) (1,1) -- (1,0) arc (90:-90:.1cm) -- (1,-.05);
    \node at (.5,.5) {$\overline{\cD}$};
    \node at (0,-.2) {$\cC$};
    }
    $$
    \caption{By folding, a topological boundary between $\mathcal{C}$ and $\mathcal{D}$ is equivalent to a topological boundary between $\mathcal{C} \boxtimes \overline{\mathcal{D}}$ and the vacuum.  
    Here, gray indicates regions of the phases $\mathcal{C}$ and $\mathcal{D}$ (or $\overline{\mathcal{D}}$), and light blue indicates that the line defect can be thickened to a region with an intermediate topological order.}
    \label{fig:FoldingTrick}
\end{figure}
any gapped boundaries between the topological orders $\mathcal{C}$ and $\mathcal{D}$ are equivalent to gapped boundaries between $\mathcal{C}\boxtimes\overline{\mathcal{D}}$ and the vacuum.
Consequently, topological domain walls between $\cC$ and $\cD$ correspond to Lagrangian algebras in $\cC\boxtimes\overline{\cD}$, which correspond to fusion categories $\cX$ with a choice of equivalence $Z(\cX) \cong \cC\boxtimes \overline{\cD}$ by \eqref{eq:CenterOfCA}.

\begin{rem}
 \label{rem:enrichedFolding}
 In the $\cA$-enriched setting, when performing the folding trick in $\UBFC$, we have an 
 $\cA$-region between the two sheets, which corresponds to taking a relative Deligne product over $\cA$.  
 That is, gapped $\cA$-enriched domain walls between $\cX$ and $\cY$ correspond to gapped boundaries for the ordinary $\Hilb$-enriched multifusion category $\cX\xBv_\cA \cY^{\rm mp}$.
\end{rem}

By \cite[Thm.~3.6]{MR3022755}, Lagrangian algebras $L=L(A, B, \Phi) \in \cC\boxtimes \overline{\cD}$ are determined by: 
\begin{itemize}
\item condensable algebras $A\in\mathcal{C}$ and $B\in\overline{\mathcal{D}}$, where
\[A\boxtimes 1:=L\cap \cC\boxtimes 1_{\overline{\cD}}
\quad\text{and}\quad
1\boxtimes B:=L\cap 1_\cC\boxtimes\overline{\cD},
\]
and
\item
a unitary braided equivalence $\Phi:\cC_A^{\loc} \to \cD_B^{\loc}$.
\end{itemize}
To understand this, consider first when we have an equivalence $\Phi:\cC\to\cD$.  
In this case, there is a canonical Lagrangian algebra 
\begin{equation} 
\label{eq:CanonicalLagrangianFromFoldingTrick}
L\cong\bigoplus_{X\in\Irr(\cC)}X\boxtimes\overline{\Phi(X)}\text{.}
\end{equation}
In general, $\cC$ and $\cD$ are not equivalent, but if a Lagrangian algebra exists in $\cC\boxtimes \overline{\cD}$,
we can find algebras $A\in \cC$ and $B\in \cD$ that can be condensed such that there is an equivalence $\cC_A^{\loc} \to \cD_B^{\loc}$.
We then get a corresponding Lagrangian algebra in 
\[
(\mathcal{C}\boxtimes\overline{\mathcal{D}})_{A\boxtimes B}^{\loc}\cong\mathcal{C}_A^{\loc}\boxtimes\overline{\mathcal{D}_B^{\loc}},
\]
which is necessarily of the form \eqref{eq:CanonicalLagrangianFromFoldingTrick}.

To illustrate the meaning of $L(A,B,\Phi)$, we describe several special cases. 
In the case where $A=1_{\mathcal{C}}$ and $B=1_{\mathcal{D}}$, the Lagrangian algebra must be $L(1,1,\Phi)$ for some equivalence $\Phi\in\Fun_{\otimes}^{\br}(\mathcal{C}\to\mathcal{D})$, and we obtain an invertible domain wall, as in Example \ref{egs:invertibleBoundary}.
This wall simply applies the relabeling $\Phi$ to anyons crossing the wall, and when $\cC=\cD$, can be thought of as applying a symmetry action \cite{1410.4540}.
The inverse domain wall corresponds to $L(1_{\mathcal{D}},1_{\mathcal{C}},\Phi^{-1})$.

The condensation boundaries of Example \ref{egs:condensationBoundary} correspond to Lagrangian algebras of the form $L(A,1_{\mathcal{C}_A^{\loc}},\id_{\mathcal{C}_A^{\loc}})$ or $L(1_{\cC_A^{\loc}},A,\id_{\cC_A^{\loc}})$, depending on which side of the wall $A$ is condensed on.
These walls are thoroughly analyzed in \cite{MR3246855}.

\begin{rem}
\label{rem:trivialWallLA}
We note that the trivial $\cC-\cC$ topological boundary corresponds to the canonical Lagrangian algebra of $\cC$ given by $K_\mathcal{C}=L(1,1,\id_{\cC})\in\cC\boxtimes \overline{\cC}$
from \eqref{eq:kannonicalLagrangian}.
Physically, condensing $c\boxtimes\overline{c}$ means that a $c$ particle from the left $\cC$-bulk and a $\overline{c}$ particle from the right $\cC$-bulk annihilate on the boundary, or equivalently, that a $c$ particle can pass freely through the wall.
Thus, the domain wall is completely transparent to all anyons in $\cC$, and must be trivial.

More generally, the Lagrangian algebra $L(A,A,\id_{\cC_A^{\loc}})$ contains $(A\boxtimes A)\in\cC\boxtimes\overline{\cC}$ as a subalgebra, and hence corresponds to a Lagrangian algebra in $\cC_A^{\loc}\boxtimes\overline{\cC_A^{\loc}}$; this Lagrangian algebra is precisely $K_{\cC_A^{\loc}}$.
\end{rem}

\begin{rem}
 \label{rem:LIsoClasses}
 An isomorphism $\psi:A\to A'$ between condensable algebras induces a braided tensor equivalence $\widetilde{\psi}:\cC_{A'}^{\loc}\to\cC_A^{\loc}$.
 Thus, the Lagrangian algebra $L(A,B,\Phi)$ depends on the isomorphism classes of $A$ and $B$, as well as on the choice of $\Phi$ up to natural isomorphism and composition with elements of $\Aut(A)$ and $\Aut(B)$.
\end{rem}

The physical content of \cite[Thm.~3.6]{MR3022755} is that by \emph{unfolding} the classification, \textit{every} topological domain wall between {($2+1$)D} topological orders can be obtained by juxtaposing walls of the two types in Examples \ref{egs:invertibleBoundary} and \ref{egs:condensationBoundary}, as in Figure \ref{fig:factoredWall}.\footnote{The article \cite{MR3246855} gives another method to decompose topological domain walls into two condensations, but in the reverse order of Figure \ref{fig:factoredWall}.}
\begin{figure}[!ht]
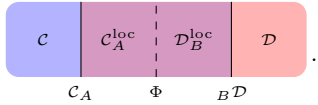

$$
\tikzmath{
\begin{scope}
\clip[rounded corners=5pt] (-1,0) rectangle (3,1);
\fill[blue!30] (-1,0) rectangle (0,1);
\fill[blue!40!red!40] (0,0) rectangle (2,1);
\fill[red!30] (2,0) rectangle (3,1);
\end{scope}
\node at (-.5,.5) {$\scriptstyle\cC$};
\draw (0,0) node[below]{$\scriptstyle \cC_A$} -- (0,1);
\node at (.5,.5) {$\scriptstyle\cC_A^{\loc}$};
\draw[dashed] (1,0) node[below]{$\scriptstyle \Phi$} -- (1,1);
\node at (1.5,.5) {$\scriptstyle\cD_B^{\loc}$};
\draw (2,0) node[below]{$\scriptstyle {}_B\cD$} -- (2,1);
\node at (2.5,.5) {$\scriptstyle\cD$};
}\,.
$$
\caption{
\label{fig:factoredWall}
Every topological boundary between topological orders $\cC,\cD$ can be obtained by juxtaposing condensation boundaries and invertible boundaries cf.~\cite[\S 3]{MR3022755}.
}
\end{figure}
For this reason, we refer to the domain walls described in Examples \ref{egs:invertibleBoundary} and \ref{egs:condensationBoundary} as elementary domain walls.

A dictionary between the mathematical formalism and the physical interpretation for domain walls, including additional details which we will discuss below, appeared in Figure~\ref{fig:Glossary}.

\subsection{Composing domain walls}
\label{ssec:compositionMath}

In this section, we review the mathematical operation on bimodule categories for fusion categories which corresponds to composition of domain walls between anomaly-free topological orders.
We then argue that the concepts generalize in a straight forward way to the $\cA$-enriched case.

\begin{nota}
Given UFCs $\cX, \cY, \cZ$, an $\cX-\cY$ bimodule category $\cM$, and a $\cY-\cZ$ bimodule category $\cN$, we can treat parallel domain walls defined by $\cM$ and $\cN$, as in \eqref{eq:DomainWallRegions} below, as a single domain wall between the bulk regions determined by $\cX$ and $\cZ$.
As described in \cite[\S~6]{MR2942952}, the $\cX-\cZ$ bimodule category which determines the composite string-net Hilbert space associated with the composite domain wall is the relative Deligne product $\cM\xF_{\cY}\cN$ \cite{MR2677836,MR3975865}.
This bimodule category is defined as
\begin{equation}
\label{eq:DomainWallRegions}
\tikzmath[scale=.85]{ 
\begin{scope}
\clip[rounded corners=5pt] (-1,0) rectangle (2,1);
\fill[red!30] (-1,0) rectangle (0,1);
\fill[blue!30] (0,0) rectangle (1,1);
\fill[orange!30] (1,0) rectangle (2,1);
\end{scope}
\node at (-.5,.5) {$\scriptstyle \cX$};
\draw (0,0) node[below]{$\scriptstyle \cM$} -- (0,1);
\node at (.5,.5) {$\scriptstyle\cY$};
\draw (1,0) node[below]{$\scriptstyle \cN$} -- (1,1);
\node at (1.5,.5) {$\scriptstyle\cZ$};
}
\cong {}_\cX\cM\xF_\cY \cN_\cZ\cong(\cM\boxtimes\cN)_{S_\cY}
\end{equation}
where the algebra $S_\cY\in\cY^{\rm mp}\boxtimes\cY$ is given by
\[
S_\cY\cong\bigoplus_{y\in\Irr(\cY)}\overline{y}\boxtimes y\text{,}
\]
with the multiplication  
\[
\bigoplus_{x,y,z}\sum_{\gamma\in B_{x,y}^z}\overline{\gamma}^\dag\boxtimes\gamma
:S_\cY \otimes S_\cY \to S_\cY
\text{.}
\]
Here $x,y$ run over the Irr$(\cY)$, and  $B_{x,y}^z$ is an orthonormal basis of $\cY(xy\to z)$, i.e.~a set of fusion channels $xy\to z$.  
One can check that the multiplication is basis independent.
In fact, $\xF_\bullet$ is the definition of 1-composition for $1$-morphisms in $\UFC$.  

In the diagram in equation \eqref{eq:DomainWallRegions}, as in many diagrams throughout the remainder of this paper, the labels $\cX, \cY, \cZ$ of spatial regions denote ($\cA$-enriched) UFCs which determine a commuting projector model, rather than the corresponding UMTCs of local excitations.
Such diagrams naturally live in the graphical calculus of $\UFC$.  
However, they can also be interpreted physically as representing $2$-dimensional spatial regions labelled with the categories necessary to define string-net models, using the constructions of \cite{MR2942952} to describe the corresponding gapped domain walls.
The corresponding UMTCs can be obtained by taking the (enriched) center.
In the string-net picture, there is an equivalent definition of the relative Deligne product as a ladder category \cite{MR3975865}.
In this alternative definition, a $y$ string emanating from the left ($\cX, \cY$) boundary cannot end in the bulk, and must cross to the right $(\cY, \cZ)$ boundary.
\end{nota}

 \begin{warn}
 \label{warn:xF}
 As with $\xBv$ (see Warning~\ref{warn:xBv}), the operation which we have denoted by $\xF_{\cY}$ is also conventionally denoted by $\boxtimes_{\cY}$, despite the apparently different definition.
 We have again introduced the notation $\xF_{\cY}$ as a disambiguation.
\end{warn}
\begin{rem}
 \label{rem:envelopingIsK}
 One can check that, as an algebra in $\cY^{\rm  mp}\boxtimes\cY$, the image of the canonical Lagrangian algebra \eqref{eq:kannonicalLagrangian} $K_{Z(\cY)}\in\overline{Z(\cY)}\boxtimes Z(\cY)\cong Z(\cY^{\rm mp}\boxtimes\cY)$ 
 is Morita equivalent to $S_{\cY}$, i.e. $(\cY^{\rm mp}\boxtimes\cY)_{S_\cY}$ and $(\cY^{\rm mp}\boxtimes\cY)_{K_{Z(\cY)}}$ are equivalent as $\cY^{\rm mp}\boxtimes\cY$ module categories.
 In other words, $\xBv$ and $\xF$ have equivalent definitions; the difference lies in interpretation, and in the fact that in the (3+1)D case where $\xBv$ is used, the additional structure of a braiding on the categories describing bulks and a tensor product on the categories describing the boundaries produces a tensor product on $\cU\xBv_{\cA}\cV$.
\end{rem}

\begin{nota}
Just as we can compose bimodule categories for parallel domain walls to obtain a bimodule category which dictates the data for the composite boundary in a string net model, we can also compose Witt equivalences to obtain the bimodule (multi)tensor category of wall excitations on the composite domain wall.
In the situation of \eqref{eq:DomainWallRegions}, the category of excitations on the composite domain wall is $\End_{\cX-\cZ}(\cM\xF_{\cY}\cN)$.
For ordinary (i.e. unenriched) bimodule categories
${}_\cZ\cM_\cY$ and ${}_\cY\cN_\cZ$, by \cite[Thm.~3.1.7 and 3.1.8]{MR3866911}, 
we have
\begin{equation}
\label{eq:CenterFunctorial}
\hspace*{-.3cm}
\resizebox{.43\textwidth}{!}{$
\End_{\cX-\cZ}(\cM\underset{\cY}{\xF}\cN)
\cong
\End_{\cX-\cY}(\cM)\underset{Z(\cY)}{\xBh}\End_{\cY-\cZ}(\cN).
$}
\end{equation}
as $Z(\cX)-Z(\cZ)$ bimodule multifusion categories,
where $\xBh$ is defined identically to $\xBv$ (see \eqref{eq:RelativeDeligne}).
The product $\underset{Z(\cY)}{\xBh}$ reflects the fact that when the two domain walls are composed, a pair of wall excitations associated with bringing $a$ to $\cM$ and $\overline{a}$ to $\cN$ is a trivial excitation, since the two anyons can be locally annihilated in the bulk.
\end{nota}

\begin{warn}
 \label{warn:xBh}
 Although the operations $\xBh_{\cC}$ and $\xBv_{\cC}$ have the same mathematical definition, and are both conventionally denoted by $\boxtimes_{\cC}$, we use a different symbol to emphasize the different physical interpretations.
 Namely, the operation $\xBv$ describes how the data which defines the ground state Hilbert space of a commuting projector model for (2+1)D defects between (3+1)D Walker-Wang models behaves under stacking, while $\xBh$ describes the category of wall excitations on a composite (1+1)D domain wall between (2+1)D bulks.
\end{warn}

We now describe how the above story generalizes to the case of nontrivial anomaly.
Mathematically, the picture in \eqref{eq:DomainWallRegions} can be thought of as a picture in the diagrammatic calculus of $\UFC^{\cA}$, where $\cX$, $\cY$, $\cZ$, $\cM$, and $\cN$ are now $\cA$ enriched.  In this case, the figure depicts a 2D region on the boundary of a (3+1)D Walker-Wang bulk, which we have not drawn.
The $\cX-\cZ$ bimodule category $\cM\xF_{\cY}\cN$ will be $\cA$-enriched by Lemma \ref{lem:constLem}, Remark \ref{rem:envelopingIsK}, and the fact that, since $Z(\cY)\cong\cA\boxtimes Z^{\cA}(\cY)$, we have $K_{Z(\cY)}\cong K_{\cA}\boxtimes K_{Z^{\cA}(\cY)}$ as well.

In the $\cA$-enriched setting, it is claimed in \cite[Thm.~4.15]{1912.01760} that \eqref{eq:CenterFunctorial} implies
\begin{equation}
\label{eq:EnrichedCenterFunctorial}
\hspace*{-.3cm}
\resizebox{.43\textwidth}{!}{$
\End_{\cX-\cZ}^{\cA}(\cM\underset{\cY}{\xF}\cN)
\cong
\End_{\cX-\cY}^{\cA}(\cM)\underset{Z^\cA(\cY)}{\xBh}\End_{\cY-\cZ}^{\cA}(\cN)
$}
\end{equation}
as $Z^\cA(\cX)-Z^\cA(\cZ)$ bimodule multifusion categories.\footnote{The symmetry enriched version has also recently appeared in \cite{2201.00192}.}
This shows that $\xBh$ is still the correct operation to compose categories of excitations.

This mathematical argument has the following physical interpretation.
As discussed in \S~\ref{ssec:EnrichedBimodules}, by Lemma \ref{lem:constLem}, in the $\cA$-enriched setting, each domain wall factors into a domain wall between the two enriched centers, and a trivial domain wall between $\cA$ and itself; the latter is compatible with attaching the system to the boundary of a Walker-Wang model, leaving only a topological domain wall between the enriched centers.
From this perspective, composing the domain walls corresponding to $\cA$-enriched bimodule categories $\cM$ and $\cN$ amounts to composing domain walls $Z^{\cA}(\cX)-Z^{\cA}(\cY)-Z^{\cA}(\cZ)$ in one layer, and trivial domain walls in the $\cA$ layer.

In fact, we can turn this physical picture into a proof of \eqref{eq:EnrichedCenterFunctorial}.
By (the correspondence outlined in) Lemma \ref{lem:constLem}, for an arbitrary indecomposable $\cA$-enriched $\cX-\cY$ bimodule $\cM$, 
$\End_{\cX-\cY}(\cM)$ is a Witt equivalence of the form $(Z(\cX)\boxtimes Z(\cY))_{L\boxtimes K_{\cA}}$, where $L$ is a Lagrangian algebra in $Z^{\cA}(\cX)\boxtimes\overline{Z^{\cA}(\cY)}$.
Consequently, the point defects/wall excitations which live on this domain wall factor according to $\End_{\cX-\cY}(\cM)\cong\End_{\cX-\cY}^{\cA}(\cM)\boxtimes\cA$, where $\cA$ is the identity $\cA-\cA$ bimodule.

This insight allows us to obtain  \eqref{eq:EnrichedCenterFunctorial} from  \eqref{eq:CenterFunctorial}, by exploiting the following sequence of equivalences of Witt equivalences $Z(\cX)\to Z(\cZ)$.
\[
\begin{aligned}
\End^{\cA}&(\cM\underset{\cY}{\xF}\cN)\boxtimes\cA
 \\&\cong
 \End(\cM\underset{\cY}{\xF}\cN) 
 \\&\cong 
 \End(\cM)\underset{Z(\cY)}{\xBh}\End(\cN)
 \\&\cong 
 (\End^{\cA}(\cM)\boxtimes\cA)\underset{Z^\cA(\cY)\boxtimes \cA}{\xBh}(\End^{\cA}(\cN)\boxtimes\cA)
 \\&\cong 
 (\End^{\cA}(\cM)\underset{Z^\cA(\cY)}{\xBh}\End^{\cA}(\cN))\boxtimes\cA
\end{aligned}
\]
The two sides of \eqref{eq:EnrichedCenterFunctorial} are now just the $(Z^{\cA}(\cX)\boxtimes 1)-(Z^{\cA}(\cZ)\boxtimes 1)$ bimodule tensor subcategories generated (as bimodule categories) by the tensor unit -- in other words, the set of domain wall excitations corresponding to the identity particle in the $\cA$ layer.
These are precisely the domain wall excitations that are not confined by attaching the Walker-Wang bulk.

We now turn to the key issue which makes understanding the composition of parallel domain walls so difficult.
The composition $\cM\xF_{\cY}\cN$ of two indecomposable $\cA$-enriched bimodule categories need not be indecomposable, and so the Witt equivalence 
$\End_{\cX-\cZ}^{\cA}(\cM\xF_{\cY}\cN)$
of wall excitations on the composite wall is in general a \emph{multifusion} category.
In other words, it does not describe a single topological domain wall, but rather a direct sum of multiple distinct domain wall types.  
Physically, this means that stacking two domain walls $\cM$ and $\cN$ in general can decompose into a direct sum of distinct indecomposable superselection sectors.
Each superselection sector corresponds to a type of gapped boundary that is invariant under the action of all local operators.

It is instructive to consider the following simple example.
\begin{example}
 \label{egs:tcSandwich}
 Consider a vertical strip of $\bbZ/2$ toric code, sandwiched between two smooth gapped boundaries to the vacuum, where the $m$-particle becomes condensed.
 Mathematically, we realize the toric code bulk $\cD(\bbZ/2):=Z(\Hilb[\bbZ/2])$ from the fusion category $\cY=\Hilb[\bbZ/2]$, and the vacuum from the trivial fusion category $\cX=\cZ=\Hilb$.
 The smooth domain walls are obtained from the bimodule categories $\cM=\cN=\Hilb[\bbZ/2]$, with the action
of $\Hilb[\bbZ/2]$ on $\Hilb[\bbZ/2]$ given by the tensor product.

 As for the corresponding Witt equivalence between $\cD(\bbZ/2)$ and $Z(\Hilb)\cong\Hilb$, the action
 $D(\bbZ/2)\to\End_{\Hilb[\bbZ/2]}(\Hilb[\bbZ/2])\cong\Hilb[\bbZ/2]$ is given by the forgetful functor, which forgets the half-braiding.
 That is, if $\Irr(\cM)=\Irr(\cN)=\bbZ/2=\{1,g\}$, where $g\in\bbZ/2$ is the nonidentity element, then the anyons $1$ and $m$ forget to the tensor unit $1\in\Hilb[\bbZ/2]$, i.e.~the vacuum, while $e$ and $\epsilon$ forget to the nontrivial object in $\Hilb[\bbZ/2]$, and become nontrivial wall excitations.
 This gives the left (right) boundary of our Toric code strip the structure of  a $\Hilb-\Hilb[\bbZ/2]$ ($\Hilb[\bbZ/2]-\Hilb$) bimodule tensor category.

 In this case, the composite bimodule category $\cM\xF_{\cY}\cN$ is just $\Hilb[\bbZ/2]\cong\Hilb\oplus\Hilb$ as a $\Hilb-\Hilb$ bimodule -- i.e.~the direct sum of two copies of the trivial domain wall from the trivial topological order to itself.  
 The fact that there are two copies arises because, with appropriate boundary conditions, the toric code strip has a GSD associated with the number of $m$-lines running between the two boundaries.
\end{example}
 
Example \ref{egs:tcSandwich} illustrates a general property composing domain walls, described in detail in and below Theorem \ref{thm:DecomposeCompositeDomainWall}: in general, we can 
project onto an individual summand (or domain wall type) in the (decomposable) composite domain wall using a linear combinations of certain short string operators connecting the two domain walls.   The short strings in question do not create wall excitations, because the associated anyons - in this case, the $m$ particles- are condensed at each wall.
The resulting projectors are local, as they need not be separated by more than the width of the central $\cY$ strip.
To go between different summands, however, requires a non-local operator, such as a string operator which is extended parallel to the domain walls.
In our Example \ref{egs:tcSandwich} above, the relevant operator is an $e$-string extending across the $\cY$ strip parallel to $\cM$ and $\cN$.  

In contrast, the Witt equivalence of wall excitations on a composite domain wall, which is the composition of two Witt equivalences in $\UBFC$, is not decomposable, since $\Mod(Z^{\cA}(\cX)\boxtimes\overline{Z^{\cA}(\cZ)})$ is a connected fusion 2-category \cite[Remark~2.1.22]{1812.11933}.
For example, in Example \ref{egs:tcSandwich}, the composition of the two Witt equivalences is $\Hilb[\bbZ/2]\xBh_{\cD(\bbZ/2)}\Hilb[\bbZ/2]\cong M_2(\Hilb)$,  an indecomposable unitary \textit{multifusion} category (and hence an indecomposable $\Hilb-\Hilb$ bimodule tensor category). 
The underlying reason for this indecomposability is that the multifusion category $\End_{\cX-\cZ}^{\cA}(\cM\xF_\cY\cN)$ of wall excitations on the composite domain wall consists both of localized excitations in the individual summands, and of point defects that connect different summands (i.e. different types of domain walls).
We will further explore the details of this in \S~\ref{ssec:tunnelingOpAction}, once we have developed the necessary tools.

\begin{rem}
 \label{rem:notAFunctor}
 We take a moment to discuss the issue of finding a correct 3-category of (2+1)D topological orders.
 It is natural that (2+1)D topological orders should form a 3-category, where objects are (2+1)D bulk topological orders, 1-morphisms are codimension 1 defects (i.e.~domain walls), 2-morphisms are codimension 2 defects (i.e.~point defects), and 3-morphisms are topological local operators.
 We have proposed $\UFC^{\cA}$ as a 3-category of topological orders.
 However, (2+1)D topological orders are frequently understood in terms of the anyons, which form a UMTC.
 It is therefore natural to wonder if the 4-category $\UBFC$ is somehow related to a 3-category of (2+1)D topological orders.
 
In \S~\ref{ssec:Enriched} and \eqref{eq:EnrichedCenterFunctorial}, we saw that the categorical data which determine commuting projector models for (2+1)D bulk topological orders and (1+1)D domain walls between them, i.e.~data in $\UFC^{\cA}$, are mathematically interchangeable with the tensor categories describing bulk and wall excitations, which are their counterparts in $\UBFC$.

A precise formulation of this statement is as follows.
Consider the two `truncated'\footnote{Given an $n$-category $\cC$, for $0\le k<n$, we can truncate to obtain a $k$-category $\cC_{\le k}$, where $0$-morphisms up through $(k-1)$-morphisms are the same as in $\cC$, and $k$-morphisms in $\cC_{\le k}$ are $k$-morphisms in $\cC$ up to equivalence.} 
1-categories
\begin{itemize}
\item $\UFC^{\cA}_{\le 1}$,
where objects are $\cA$-enriched fusion categories and morphisms are equivalence classes of $\cA$-enriched bimodule categories
\item
$\UMTC_{\leq 1}$,
where objects are UMTCs and morphisms are Morita equivalence classes of Witt equivalences.\footnote{Here, $\UMTC$ is the 4-subcategory of $\UBFC$ whose objects are UMTCs and whose higher morphisms are all invertible.
For mathematical experts, $\UMTC=\Omega(\operatorname{core}(\rmB(\UBFC)))$, the loop space of the core of the delooping of $\UBFC$.
}
\end{itemize}
The enriched center construction gives a functor $\UFC^{\cA}_{\le 1}\to\UMTC_{\le 1}$ which is an equivalence onto its image, and by Example \ref{egs:selfEnrichment}, everything in $\UMTC_{\le 1}$ appears in the image of $\UFC^{\cA}_{\le 1}$ for some $\cA$.
Thus, these candidate 1-categories of (2+1)D topological order, where objects are (2+1)D bulk theories and morphisms are domain walls up to invertible point defect, agree.

The only question is whether this agreement can be extended to include point defects and topological local operators, which should respectively be $2$-morphisms and $3$-morphisms in a $3$-category of (2+1)D topological orders.
That is, is there an also embedding $\UFC^\cA_{\le 2}\to\UMTC_{\le 2}$ or $\UFC^\cA\to\UMTC_{\le 3}$?
If that were the case, one could analyze (2+1)D topological order by only considering the categories of excitations (bulk excitations, wall excitations, and superselection sectors at point defects),
and the machinery of enriched fusion categories which we have introduced would be superfluous.
However, the equivalence outlined in \S~\ref{ssec:Enriched} fails to extend even one category level higher, because the composition of $1$-morphisms in $\UFC^{\cA}$ can be decomposable, while the composition of the corresponding $1$-morphisms in $\UBFC$ is indecomposable.
This can already be seen in studies of domain wall decomposition in the non-anomalous case \cite{MR3975865,MR4043811}.

Decomposability of the composition of $1$-morphisms in $\UFC^\cA$ means that there are $2$-morphisms, i.e.~$\cA$-balanced bimodule functors, which project onto individual summands.
On the other hand, $2$-morphisms in $\UBFC$ are bimodule categories between Witt equivalences, and there is no bimodule category which can project onto a corner of an indecomposable multifusion category.
Thus, a hypothetical embedding $\UFC^{\cA}_{\le 2}\to\UBFC_{\le 2}$ has nowhere to map the bimodule functors which project onto summands.

This can be compared to the basic fact that (for $n\ge 2$) the vector space $\bbC^n$ decomposes as a direct sum $\oplus_n\bbC$, while the algebra $\End(\bbC^n)=M_n(\bbC)$ is simple.
There are matrices in $M_n(\bbC)$ projecting onto individual components of a vector in $\bbC^n$, i.e.~diagonal matrices with a single nonzero entry, but $M_n(\bbC)$ does not act on any of these $1$-dimensional subspaces.
We saw something extremely similar in Example \ref{egs:tcSandwich}: the space of ground states of the string-net model for the composite domain wall was described by $\Hilb\oplus\Hilb$, with two summands to project onto, but the multifusion category $M_2(\Hilb)$ of wall excitations contains point defects between these two summands, and hence does not decompose as a direct sum of the (trivial) categories of wall excitations for each summand.

Thus, as candidate $3$-categories of (2+1)D topological order, truncations of $\UMTC$ do not agree with \cite{MR2942952} as to the possible point defects between domain walls, leading us to adopt $\UFC$, and more generally $\UFC^{\cA}$, as $3$-categories of (2+1)D topological orders.
\end{rem}
 
\subsection{Decomposing composite domain walls}
\label{ssec:decompositionResult}

Consider the horizontal composition of two indecomposable domain walls, as in \eqref{eq:DomainWallRegions}, which we reproduce here.
\begin{equation*}
{}_\cX\cM\xF_\cY \cN_\cZ\cong
\tikzmath{
\begin{scope}
\clip[rounded corners=5pt] (-1,0) rectangle (2,1);
\fill[red!30] (-1,0) rectangle (0,1);
\fill[blue!30] (0,0) rectangle (1,1);
\fill[orange!30] (1,0) rectangle (2,1);
\end{scope}
\node at (-.5,.5) {$\scriptstyle \cX$};
\draw (0,0) node[below]{$\scriptstyle \cM$} -- (0,1);
\node at (.5,.5) {$\scriptstyle\cY$};
\draw (1,0) node[below]{$\scriptstyle \cN$} -- (1,1);
\node at (1.5,.5) {$\scriptstyle\cZ$};
}
\tag{\ref{eq:DomainWallRegions}}
\end{equation*}
As we have seen in Example \ref{egs:tcSandwich}, such a composite domain wall can be decomposable, meaning that there are multiple superselection sectors.  A superselection sector is characterized by the fact that it remains unchanged when any local operator acts on or near the composite domain wall, and that it cannot be further decomposed into components invariant under such local operators.
Each superselection sector is therefore an indecomposable domain wall between $Z^{\cA}(\cX)$ and $Z^{\cA}(\cZ)$ bulk topological orders. 

Mathematically, these superselection sectors are the summands of the $\cA$-enriched $\cX-\cZ$ bimodule category $\cM\xF_{\cY}\cN$.
In this section, we provide the mathematical tools needed to obtain them, and to describe the local operators which project into each superselection sector.
We will show how to use these tools in many explicit examples in \S~\ref{sec:examples} below.

Before giving the full mathematical arguments, we present a physical outline of our results.
To understand the superselection sectors of composite domain walls, we must first identify the (semisimple) algebra $S$ of topological local operators which act on a composite domain wall and preserve the space of ground states.
Because these operators are local, by definition superselection sectors must be fixed under their action.
It follows that superselection sectors must be the images of minimal central projections of $S$, i.e.~minimal projections in $Z(S)$.\footnote{\label{footnote:MinimalCentralProjections}
The Abelian algebra $Z(S)$
is isomorphic to $\mathbb{C}^n$ for some $n$, and the minimal projections are the $e_i\in \bbC^n$ which are all zeroes except for a 1 in the $i$-th slot.
}

One class of  local operator which acts non-trivially on the space of ground states can be obtained by creating a particle-antiparticle pair $c \boxtimes \overline{c}$ in the $Z^{\cA}(\cY)$ bulk region, and bringing one to each domain wall.  
If $\overline{c}$ ($c$) is condensed at the left (right) wall, as depicted in \eqref{eq:shortString}, then the resulting local operator creates no topological excitations, i.e. the trivial domain wall exctiations $1_\cM$ and $1_\cN$, and hence can return the system to its ground state.
\begin{equation}
 \label{eq:shortString}
 \tikzmath{
 \begin{scope}
 \clip[rounded corners=5pt] (-1,0) rectangle (2,1);
 \fill[red!30] (-1,0) rectangle (0,1);
 \fill[blue!30] (0,0) rectangle (1,1);
 \fill[orange!30] (1,0) rectangle (2,1);
 \end{scope}
 \node at (-.5,.5) {$\scriptstyle \cX$};
 \draw (0,0) node[below]{$\scriptstyle \cM$} -- (0,1);
 \node at (.5,.7) {$\scriptstyle\cY$};
 \draw (1,0) node[below]{$\scriptstyle \cN$} -- (1,1);
 \node at (1.5,.5) {$\scriptstyle\cZ$};
 \draw[thick,purple] (0,.3) circle (.07cm);
 \draw[thick,purple,mid>] (.07,.3) -- node[above]{$\scriptstyle c$} (.93,.3);
 \draw[thick,purple] (1,.3) circle (.07cm);
 }
\end{equation}
As we will later show in the proof of Theorem \ref{thm:DecomposeCompositeDomainWall}, modulo operators which act {\it trivially} on the space of ground states (such as creating a particle-antiparticle pair, moving them onto the same boundary, and then annihilating them),
linear combinations of such operators make up all of $S$.

What is the vector space of linear combinations of operators of the form \eqref{eq:shortString}?
Suppose the Lagrangian algebras corresponding to the $\cA$-enriched bimodule categories $\cM$ and $\cN$ are $L_1=L(A,\overline{B_1},\Phi)$ and $L_2=L(B_2,\overline{C},\Psi)$ respectively, so that $B_1$ and $B_2$ are the algebras in $Z^{\cA}(\cY)$ which condense at each domain wall.
Then the data of an operator of the form \eqref{eq:shortString} consists of three choices: an anyon type $c\in\Irr(Z^{\cA}(\cY))$, and morphisms $\alpha \in Z^{\cA}(\cY)(\overline{c}\to B_1)\cong Z^{\cA}(\cY)(B_1\to c)$
and $\beta \in Z^{\cA}(\cY)(c\to B_2)$.  
which live in the multiplicity spaces of the anyons $\overline{c},c$ in $B_1,B_2$ respectively.
These morphisms determine how $c$ is condensed at each wall.
The space of linear combinations of such operators is thus
\begin{equation}
 \label{eq:convAlgDerivation}
\hspace*{-.3cm}
\resizebox{.43\textwidth}{!}{$
\bigoplus_{c\in\Irr(Z^{\cA}(\cY))} Z^{\cA}(\cY)(B_1\to c)\otimes Z^{\cA}(\cY)(c\to B_2)
$}
\end{equation}
and composing the two tensor factors gives an isomorphism to the hom space\textsuperscript{\ref{foot:homSpaces}} $Z^{\cA}(\cY)(B_1\to B_2)$, which describes the set of possible short string operators in \eqref{eq:shortString}.

Now that we have identified the space of local operators as $Z^{\cA}(\cY)(B_1\to B_2)$, we need to be able to multiply two such operators.
To do so, we can apply them in parallel.
The strings passing through the $Z^{\cA}(\cY)$ can be topologically deformed, and therefore fused:
\[\tikzmath{
\begin{scope}
\clip[rounded corners=5pt] (-1,-.5) rectangle (2,1);
\fill[red!30] (-1,-.5) rectangle (0,1);
\fill[blue!30] (0,-.5) rectangle (1,1);
\fill[orange!30] (1,-.5) rectangle (2,1);
\end{scope}
\node at (-.5,.7) {$\scriptstyle\cX$};
\draw (0,-.5) node[below]{$\scriptstyle\cM$} -- (0,1);
\node at (.5,.7) {$\scriptstyle\cY$};
\draw (1,-.5) node[below]{$\scriptstyle\cN$} -- (1,1);
\node at (1.5,.7) {$\scriptstyle\cZ$};
\draw[thick,purple] (1,.3) circle (.07cm);
\draw[thick,purple,mid>] (.07,.3) -- (.93,.3);
\node[purple] at  (.6,.45) {$\scriptstyle c$};
\draw[thick,purple] (0,.3) circle (.07cm);
\draw[thick,purple] (0,-.2) circle (.07cm);
\draw[thick,purple,mid>] (.07,-.2) -- (.93,-.2);
\node[purple] at  (.625,0) {$\scriptstyle d$};
\draw[thick,purple] (1,-.2) circle (.07cm);
}=\sum\tikzmath{
\begin{scope}
\clip[rounded corners=5pt] (-1,-.5) rectangle (2.25,1);
\fill[red!30] (-1,-.5) rectangle (0,1);
\fill[blue!30] (0,-.5) rectangle (1.25,1);
\fill[orange!30] (1.25,-.5) rectangle (2.25,1);
\end{scope}
\node at (-.5,.7) {$\scriptstyle\cX$};
\draw (0,-.5) node[below]{$\scriptstyle\cM$} -- (0,1);
\node at (.625,.7) {$\scriptstyle\cY$};
\draw (1.25,-.5) node[below]{$\scriptstyle\cN$} -- (1.25,1);
\node at (1.75,.7) {$\scriptstyle\cZ$};
\draw[thick,purple] (1.25,.3) circle (.07cm);
\draw[thick,purple] (.07,.3) arc (90:-90:.25);
\draw[thick,purple,mid>] (.32,.1) -- (.93,.1);
\draw[thick,purple] (1.18,.3) arc (90:270:.25);
\node[purple] at  (.2,.4) {$\scriptstyle c$};
\draw[thick,purple] (0,.3) circle (.07cm);
\draw[thick,purple] (0,-.2) circle (.07cm);
\node[purple] at  (.2,-.3) {$\scriptstyle d$};
\node[purple] at (.7475,.25) {$\scriptstyle f$};
\draw[thick,purple] (1.25,-.2) circle (.07cm);
}\]
Diagrams of the form
\begin{equation}
 \label{eq:condensationSnap}
\tikzmath{
\clip (.35,-1) rectangle (2.25,1); 
\begin{scope}
\clip[rounded corners=5pt] (-1,-.5) rectangle (2.25,1);
\fill[red!30] (-1,-.5) rectangle (0,1);
\fill[blue!30] (0,-.5) rectangle (1.25,1);
\fill[orange!30] (1.25,-.5) rectangle (2.25,1);
\end{scope}
\node at (-.5,.7) {$\scriptstyle\cX$};
\draw (0,-.5) node[below]{$\scriptstyle\cM$} -- (0,1);
\node at (.625,.7) {$\scriptstyle\cY$};
\draw (1.25,-.5) node[below]{$\scriptstyle\cN$} -- (1.25,1);
\node at (1.75,.7) {$\scriptstyle\cZ$};
\draw[thick,purple] (1.25,.3) circle (.07cm);
\draw[thick,purple] (.07,.3) arc (90:-90:.25);
\draw[thick,purple,mid>] (.32,.1) -- (.93,.1);
\draw[thick,purple] (1.18,.3) arc (90:270:.25);
\node[purple] at  (.2,.4) {$\scriptstyle c$};
\draw[thick,purple] (0,.3) circle (.07cm);
\draw[thick,purple] (0,-.2) circle (.07cm);
\node[purple] at  (.2,-.3) {$\scriptstyle d$};
\node[purple] at (.7475,.25) {$\scriptstyle f$};
\draw[thick,purple] (1.25,-.2) circle (.07cm);
}
\end{equation}
are local operators which take an $f$-particle in the $Z^{\cA}(\cY)$ bulk and condense it on the $\cN$ wall, so they should be elements of $Z^{\cA}(\cY)(f\to B_2)$.
Resolving them as such, however, is not trivial.  For one thing, if anyons $c$ and $d$ appear in the condensate $B_2$, it is still possible that some fusion products $f$ of $c$ and $d$ do not appear in $B_2$, so that the morphism \eqref{eq:condensationSnap} may be $0$.

The data required to resolve such vertices comes from the multiplication $m:B_2B_2\to B_2$ on $B_2$: if $x:c\to B_2$ and $y:d\to B_2$ are the data used to condense $c$ and $d$ at the wall, then the morphism in \eqref{eq:condensationSnap} is precisely $m\circ (x\otimes y)$.
Applying this at both domain walls, we find that $Z^{\cA}(\cY)(B_1\to B_2)$ carries a multiplication $\star$ which we now define.

\begin{defn}
\label{defn:convolutionMult}
Suppose $\cC$ is a UMTC and $A,B\in \cC$ are condensable algebras.
The \emph{convolution multiplication} $\star$ on the hom space\footnote{
\label{foot:homSpaces}
Suppose the simple objects of a UFC $\cX$ are $\Irr(\cX)=\{x_1,\dots,x_n\}$.
Given objects $a,b\in \cX$, we can write $a=\bigoplus_{i=1}^n n_i x_i$ and $b=\bigoplus m_i x_i$.
The hom space $\cX(a\to b) = \bigoplus_{i=1}^n M_{m_i\times n_i}(\bbC)$.
Matrix units $E_{k\ell}$ for $M_{m_i\times n_i}(\bbC)$ are given by the rank one operators $|x_{i,k}\rangle\langle x_{i,\ell}|$, where $x_{i,\ell}$ denotes the $\ell$-th copy of $x_i$ in $a$ and
$x_{i,k}$ denotes the $k$-th copy of $x_i$ in $b$.
} 
$\cC(A\to B)$ is given in the graphical calculus for $\cC$ by
$$
\tikzmath{
\draw[thick, red] (0,-.7) --node[right]{$\scriptstyle A$} (0,-.3);
\draw[thick, blue] (0,.7) --node[right]{$\scriptstyle B$} (0,.3);
\roundNbox{fill=white}{(0,0)}{.3}{0}{0}{$x$}
}
\star\,
\tikzmath{
\draw[thick, red] (0,-.7) --node[right]{$\scriptstyle A$} (0,-.3);
\draw[thick, blue] (0,.7) --node[right]{$\scriptstyle B$} (0,.3);
\roundNbox{fill=white}{(0,0)}{.3}{0}{0}{$y$}
}
:=
\tikzmath{
\draw[thick, red] (.4,-.7) --node[right]{$\scriptstyle A$} (.4,-1.1);
\draw[thick, blue] (.4,.7) --node[right]{$\scriptstyle B$} (.4,1.1);
\draw[thick, red] (0,-.3) arc (-180:0:.4cm);
\draw[thick, blue] (0,.3) arc (180:0:.4cm);
\roundNbox{fill=white}{(0,0)}{.3}{0}{0}{$x$}
\roundNbox{fill=white}{(.8,0)}{.3}{0}{0}{$y$}
}\,.
$$
This product gives a way of composing the two maps $x$ and $y$, both of which take objects in $A$ to objects in $B$, into a single map from $A$ to $B$.
If $u:1\to A$ and $v:1\to B$ are the units of each algebra, then $vu^\dag\in\cX(A\to B)$ is the identity for $\star$.
The involution\footnote{
One can define this $*$ in the case $A,B\in \cX$ are Frobenius, i.e., the multiplication $m_A^\dag$ is an $A-A$ bimodule map, and similarly for $B$.
} 
is given by
$$
\left(
\tikzmath{
\draw[thick, red] (0,-.7) --node[right]{$\scriptstyle A$} (0,-.3);
\draw[thick, blue] (0,.7) --node[right]{$\scriptstyle B$} (0,.3);
\roundNbox{fill=white}{(0,0)}{.3}{0}{0}{$x$}
}
\right)^*
:=
\tikzmath{
\draw[thick, red] (-.3,.6) -- (-.3,.8);
\filldraw[red] (-.3,.8) circle (.05cm);
\draw[thick, red] (0,.3) arc (0:180:.3cm) -- node[left]{$\scriptstyle A$} (-.6,-.9);
\draw[thick, blue] (.3,-.6) -- (.3,-.8);
\filldraw[blue] (.3,-.8) circle (.05cm);
\draw[thick, blue] (0,-.3) arc (-180:0:.3cm) -- node[right]{$\scriptstyle B$} (.6,.9);
\roundNbox{fill=white}{(0,0)}{.3}{0}{0}{$x^\dag$}
}\,.
$$
Here, the univalent vertices are the units of the algebras (see Appendix \ref{appendix:condensable} on condensable algberas).
One can view this cap and cup (with the univalent vertices) as standard duality pairings of $A$ and $B$ respectively, which is similar to an $(a,\overline{a},1)$ vertex for an anyon $a$ in a UMTC.
The hom space $\cC(A\to B)$ with this convolution multiplication and involution is a finite dimensional $\Cstar$ algebra.
Moreover, the multiplication is commutative because
\[x\star y=
\tikzmath{
\draw[thick, red] (.4,-.7) --node[right]{$\scriptstyle A$} (.4,-1.1);
\draw[thick, blue] (.4,.7) --node[right]{$\scriptstyle B$} (.4,1.1);
\draw[thick, red] (0,-.3) arc (-180:0:.4cm);
\draw[thick, blue] (0,.3) arc (180:0:.4cm);
\roundNbox{fill=white}{(0,0)}{.3}{0}{0}{$x$}
\roundNbox{fill=white}{(.8,0)}{.3}{0}{0}{$y$}
}
=
\tikzmath{
\draw[thick, red] (.8,-.9) .. controls +(90:.4) and +(270:.4) .. (.0,-.3);
\draw[thick, red, knot] (0,-.9) .. controls +(90:.4) and +(270:.4) .. (.8,-.3);
\draw[thick, red] (0,-.9) arc (-180:0:.4cm);
\draw[thick, red] (.4,-1.3) --node[right]{$\scriptstyle A$} (.4,-1.7);
\draw[thick, blue] (0,.3) .. controls +(90:.4) and +(270:.4) .. (.8,.9);
\draw[thick, blue, knot] (.8,.3) .. controls +(90:.4) and +(270:.4) .. (0,.9);
\draw[thick, blue] (0,.9) arc (180:0:.4cm);
\draw[thick, blue] (.4,1.3) --node[right]{$\scriptstyle B$} (.4,1.7);
\roundNbox{fill=white}{(0,0)}{.3}{0}{0}{$x$}
\roundNbox{fill=white}{(.8,0)}{.3}{0}{0}{$y$}
}
=
y\star x.
\]
\end{defn}

The commutativity of $(Z^{\cA}(\cY)(B_1\to B_2),\star)$ also makes sense in terms of the string operators across the $Z^{\cA}(\cY)$ bulk going between domain walls.
Since the strings can be deformed topologically, the middle parts of two parallel strings can slide past one another, and commutativity of $B_1$ and $B_2$ means that the endpoints on the domain walls can also change places.

Thus, to find the superselection sectors of the composite domain wall $\cM\xF_\cY\cN$, we need only diagonalize the finite dimensional commutative $\Cstar$ algebra $Z^{\cA}(B_1\to B_2)$.\footnote{\label{footnote:ZA(B1 to B2)MinProj}
Since $\rm C^*$-algebras a semisimple, $Z^{\cA}(B_1\to B_2)$ is a finite dimensional Abelian semisimple algebra, i.e. it is isomorphic to $\oplus_n\bbC^n$ for some $n$.
The minimal projections are the $e_i$ as in Footnote \ref{footnote:MinimalCentralProjections}.
} 
We carry this computation out in each of our examples in \S~\ref{sec:examples}.

The remainder of this section is devoted to a rigorous justification of the results we have just outlined.
The main detail we did not explain above is why $Z^{\cA}(B_1\to B_2)$ is the correct algebra of topological local operators.

\subsubsection{Dualizability}
\label{sssec:3DgraphicalCalculus}
The calculations in the remainder of this section utilize the graphical calculus for 3-categories from \cite{1211.0529}, which applies to weak 3-categories by \cite{1903.05777};
in particular, we work in the 3-category $\UFC^\cA$, which satisfies important finiteness, semisimplicity, and dualizability conditions as it is a full 3-subcategory of a hom 3-category in $\UBFC$.\footnote{
Here, we assume the $3$-category $\UFC^\cA$ has sufficient additional structure to permit the use of the 3D graphical calculus, as in \cite{1211.0529,1812.11933}.
Physically, the use of this graphical calculus can be justified via the notion of \emph{topological Wick rotation} \cite{1905.04924,1912.01760}.
}

There are 2D and 3D diagrams drawn in this section.
3D diagrams represent the top level of 3-morphisms in $\UFC^\cA$, i.e., topological local operators (see Figure \ref{fig:enrichedDesc}).
These include 1D world-lines of anyons in the bulk regions or point defects on domain walls, 2D surfaces corresponding to world-sheets of domain walls, 3D world-volumes of bulk topological order, and 0D point-like operators.
The horizontal 2D slices are spatial, and the third vertical dimension represents time.
Reading our diagram from bottom to top represents the time-ordering of applying topological local operators.

More specifically, a bulk 3D space-time region is labelled by an $\cA$-enriched fusion category $\cX,\cY,\cZ$ which determines a (2+1)D topological order with anomaly $\cA$.   
A codimension 1 surface separating a pair of 3D space-time regions is labelled by $\cA$-enriched bimodule category $\cM, \cN$, which specifies a (1+1)-dimensional topological domain wall.
A codimension 2 line is labelled by a $\cA$-centered bimodule functor, which specifies a point defect. 
When such a line lives in a 3D bulk region, it depicts an anyon world-line.  When it is localized to a domain wall world-sheet, however, it corresponds to the world-line of a point-like excitation on the domain wall, or of a point defect separating two different domain walls.   0D point defects in our 3D space-time diagrams are labelled by natural transformations, which correspond to completely local, topologically trivial, operators.

The 2D diagrams can be thought of as spatial slices of the 3D diagrams at a fixed time.
As in previous sections of this paper, each bulk 2-dimensional region can therefore be thought of as a topologically ordered ground state, described by a $\cA$-enriched fusion category $\cX$.
2D diagrams can also contain domain walls $\cM, \cN, ... $ separating different $\cA$-enriched fusion categories, as well as point defects -- i.e. with $\cA$-centered bimodule functors, which are  2-morphisms in $\UFC^\cA$ one level down.   
Finally, 2D diagrams which involve arrows connecting point defects, which we have previously used (e.g. \eqref{eq:shortString}) to represent short string operators, should be interpreted as $3$-morphisms (or types of $3$-morphisms), where the string with the arrow shows the passage of time.
In other words, the 2D diagram simultaneously depicts the source and target of a $3$-morphism.

In the formulae below, we will associate a 2D diagram with each possible type of point or line defect.  
Thus, a 2D region labelled by a single $\cA$-enriched fusion category $\cX$, containing no visible line or point defects, depicts the trivial/identity point defect.  This corresponds to the identity endofunctor of $\cX$ (viewed as an $\cA$-enriched $\cX-\cX$ bimodule category), namely $1\in\Irr(Z^{\cA}(\cX))$, as explained in Example \ref{ex:EndOfIdentityBimodule}.
Similarly, a 2D diagram with only a single 0D point defect in the bulk represents an $\cA$-centered endofunctor of ${}_\cX\cX_\cX$, which corresponds to an object of $Z^\cA(\cX)$, i.e.~an anyon or direct sum of anyons.
A 2D diagram with a single 1D defect between 2D regions labelled by an $\cA$-enriched $\cX-\cY$ bimodule ${}_\cX\cM_\cY$ with no point defect denotes the identity endofunctor of $\cM$ in $\End^\cA_{\cX-\cY}(\cM)$, i.e.~the trivial wall excitation.
Throughout this section, we will consistently use pink to denote $\cX$-labeled regions, and light blue to denote $\cY$-labeled regions, unless otherwise stated. 
$$
\tikzmath{
\fill[red!30, rounded corners=5pt] (-.5,-.5) rectangle (.5,.5);
\node at (0,0) {$\scriptstyle \cX $};
}
=
1_{Z^\cA(\cX)}
\qquad
\tikzmath{
\begin{scope}
\clip[rounded corners=5pt] (-.5,-.5) rectangle (.5,.5);
\fill[red!30] (-.5,-.5) rectangle (0,.5);
\fill[blue!30] (0,-.5) rectangle (.5,.5);
\end{scope}
\draw (0,-.5) -- (0,.5);
\node at (-.2,.25) {$\scriptstyle \cM $};
\node at (-.35,-.35) {$\scriptstyle \cX $};
\node at (.35,-.35) {$\scriptstyle \cY $};
}
=
\id_{\cM}
$$

Dualizability in $\UFC^\cA$ means we can topologically deform world-sheets of domain walls in space-time.
We use dualizability in this section in several important ways.

\begin{example}
\label{example:EtaleLoopAlgebra}
Dualizability allows us to close a line corresponding to an (irreducible) $\cA$-enriched bimodule ${}_\cX\cM_\cY$ into a closed loop.
Zooming out, we may view this loop as a point defect, labeled by direct sum of those anyons in $Z^\cA(\cX)$ that can dissappear at the domain wall $\cM$ with no energy cost.
Dualizability also means we can form a `pair of pants' multiplication for this object, under which it becomes a condensable algebra in $Z^\cA(\cX)$ (up to scaling).
\begin{equation}
 \label{eq:pants}
\tikzmath{
\fill[red!30, rounded corners=5pt] (-.7,-.7) rectangle (.7,.7);
\filldraw[fill=blue!30] (0,0) circle (.3cm);
\node at (-.5,0) {$\scriptstyle \cM$};
\node at (.5,-.5) {$\scriptstyle \cX$};
\node at (0,0) {$\scriptstyle \cY$};
}
\qquad\qquad
\tikzmath{
    \coordinate (a) at (0,0);
    \fill[rounded corners = 5pt, red!30] ($ (a) + (-.6,0) $) -- ($ (a) + (-.3,.3) $) -- ($ (a) + (2.9,.3) $) -- ($ (a) + (2.3,-.3) $) -- ($ (a) + (-.9,-.3) $) -- ($ (a) + (-.6,0) $);
    \fill[blue!30] (a) .. controls ++(90:.8cm) and ++(270:.8cm) .. ($ (a) + (.7,1.5) $) arc (-180:0:.3 and .1) .. controls ++(270:.8cm) and ++(90:.8cm) .. ($ (a) + (2,0) $) arc (0:-180:.3 and .1) .. controls ++(90:.8cm) and ++(90:.8cm) .. ($ (a) + (.6,0) $) arc (0:-180:.3 and .1);
	\draw[thick] (a) .. controls ++(90:.8cm) and ++(270:.8cm) .. ($ (a) + (.7,1.5) $);
	\draw[thick] ($ (a) + (2,0) $) .. controls ++(90:.8cm) and ++(270:.8cm) .. ($ (a) + (2,0) + (-.7,1.5) $);
	\draw[thick] ($ (a) + (.6,0) $).. controls ++(90:.8cm) and ++(90:.8cm) .. ($ (a) + (1.4,0) $); 
	\draw[thick] (a) arc (-180:0:.3 and .1);
	\draw[thick, dotted] (a) arc (180:0:.3 and .1);
	\draw[thick] ($ (a) + (1.4,0) $) arc (-180:0:.3 and .1);
	\draw[thick, dotted] ($ (a) + (1.4,0) $) arc (180:0:.3 and .1);
    \fill[rounded corners = 5pt, red!60, opacity=.5] ($ (a) + (-.6,0) + (0,1.5) $) -- ($ (a) + (-.3,.3) + (0,1.5) $) -- ($ (a) + (2.9,.3) + (0,1.5) $) -- ($ (a) + (2.3,-.3) + (0,1.5) $) -- ($ (a) + (-.9,-.3) + (0,1.5) $) -- ($ (a) + (-.6,0) + (0,1.5) $);
    \fill[blue!30] ($ (a) + (1,1.5) $) ellipse (.3 and .1);
	\draw[thick] ($ (a) + (.7,1.5) $) arc (-180:0:.3 and .1);
	\draw[thick] ($ (a) + (.7,1.5) $) arc (180:0:.3 and .1);
}
\end{equation}
\end{example}

\begin{sub-ex}
\label{sub-ex:LagrangianLoop}
When $\cA=\fdHilb$ is the trivial enrichment, it is well known \cite{MR3246855,PhysRevLett.114.076402,1502.02026,1706.03329} that, if a module category ${}_{\cX}\cM$ labels a topological boundary to vacuum, and $L\in Z(\cX)$ is the Lagrangian algebra associated to ${}_{\cX}\cM$, i.e. $\End_{\cX}(\cM)\cong Z(\cX)_L$, then we have
\[\tikzmath{
\fill[red!30, rounded corners=5pt] (-.6,-.6) rectangle (.6,.6);
\filldraw[fill=white] (0,0) circle (.2cm);
\node at (-.35,0) {$\scriptstyle\cM$};
\node at (.45,-.45) {$\scriptstyle\cX$};
}
=
L.
\]
Physically, this means that the point defect can contain any anyon in $L$, since these are condensed in the white region and therefore cost no energy.  
\end{sub-ex}

\begin{sub-ex}
More generally, we can consider the case when $\cM$ is a condensation boundary.
If the blue region is obtained from the pink one via condensing $A$, i.e. $Z^{\cA}(\cY)=Z^{\cA}(\cX)_A^{\loc}$, then the circle bounded by $\cM$ is a droplet of condensate.
As a defect in the $Z^{\cA}(\cX)$ bulk, it is equivalent to $A$, the the direct sum (with multiplicity) of all the anyons which have condensed.

On the other hand, if the pink region is obtained from the blue one via condensing $B$, i.e. $Z^{\cA}(\cX)=Z^{\cA}(\cY)_B^{\loc}$, then the circle bounded by $\cM$ is a small region where $B$ is not condensed.
Because this region is surrounded by condensate, and does not contain an excitation, it is equivalent to the vacuum of the $Z^{\cA}(\cY)_B^{\loc}$-bulk; the circle is the same as a pink sheet with nothing in it.

These assertions are justified in Lemma \ref{lem:IdentifyLoopAlgebra} and Proposition \ref{prop:IdentifyEnrichedLoopAlgebra} below.
\end{sub-ex}

\begin{example}
\label{example:3DualizableMSheet}
Just as we have the isomorphism 
$\End_\cZ(z)\cong \Hom_\cZ(1_\cZ\to z\overline{z})$ in a fusion category $\cZ$, 
given an $\cA$-enriched bimodule ${}_\cX\cM_\cY$, we have an isomorphism
\[\resizebox{.47\textwidth}{!}{$
\End^\cA\left(
\tikzmath{
\begin{scope}
\clip[rounded corners=5pt] (-.5,-.5) rectangle (.5,.5);
\fill[red!30] (-.5,-.5) rectangle (0,.5);
\fill[blue!30] (0,-.5) rectangle (.5,.5);
\end{scope}
\draw (0,-.5) -- (0,.5);
\node at (-.2,.25) {$\scriptstyle \cM $};
\node at (-.35,-.35) {$\scriptstyle \cX $};
\node at (.35,-.35) {$\scriptstyle \cY $};
}
\right)
\cong
\Hom_{Z^\cA(\cX)}
\left(
\tikzmath{
\fill[red!30, rounded corners=5pt] (-.5,-.5) rectangle (.5,.5);
\node at (0,0) {$\scriptstyle 1_{Z^\cA(\cX)} $};
}
\to 
\tikzmath{
\fill[red!30, rounded corners=5pt] (-.5,-.5) rectangle (.5,.5);
\filldraw[fill=blue!30] (0,0) circle (.2cm);
\node at (-.35,0) {$\scriptstyle \cM$};
\node at (.35,-.35) {$\scriptstyle \cX $};
\node at (0,0) {$\scriptstyle \cY $};
}
\right)
$}\]
where the maps both ways are given by
\begin{align*}
\tikzmath{
\draw (.5,.1) -- (-.5,.7) -- (-.5,-.1) -- (.5,-.7) -- (.5,.1);
\draw[thick, rounded corners=5pt, fill=white] (.2,.1) -- (-.2,.3) -- (-.2,-.1) -- (.2,-.3) -- cycle;
\node at (0,0) {$\scriptstyle f$};
\node[red] at (-1,0) {$\scriptstyle\cX$};
\node at (.7,-.5) {$\scriptstyle\cM$};
\node[blue] at (1,0) {$\scriptstyle\cY$};
}
\qquad&\mapsto\qquad
\tikzmath{
\draw (0,0) ellipse (.5 and .2);
\roundNbox{fill=white}{(0,-.5)}{.2}{0}{0}{$\scriptstyle f$}
\draw (-.5,0) -- (-.5,-.4) arc (-180:0:.5cm) -- (.5,0);
\node[red] at (-.8,-.8) {$\scriptstyle\cX$};
\node at (-.7,-.3) {$\scriptstyle\cM$};
\node[blue] at (0,0) {$\scriptstyle\cY$};
}
\\
\tikzmath{
\draw[dotted] (-.5,0) arc (180:0:.5 and .2);
\draw (-.5,0) arc (-180:0:.5 and .2);
\draw (-.5,0)  .. controls +(270:.15) and +(135:.15) ..  (0,-.5) .. controls +(45:.15) and +(270:.15) .. (.5,0);
\draw (-.5,0) arc (180:90:1.3cm) arc (-90:0:.3cm);
\draw (1.3,1.4) -- (1.3,2) .. controls +(180:.1) and +(90:.2) .. (1.1,1.6) .. controls +(270:.2) and +(180:.15) .. (1.8,1.2);
\draw[dotted] (1.3,1.35) -- (1.3,-.4) .. controls +(180:.1) and +(90:.2) .. (1.1,-.8);
\draw (1.1,-.8) .. controls +(270:.2) and +(180:.15) .. (1.8,-1.2) -- (1.8,1.2);
\draw (.5,0) arc (180:0:.3cm) -- (1.1,-.8);
\filldraw (0,-.5) circle (.05cm) node[below]{$\scriptstyle g$};
\node[red] at (-.8,0) {$\scriptstyle\cX$};
\node at (-.5,-.3) {$\scriptstyle\cM$};
\node[blue] at (2.1,0) {$\scriptstyle\cY$};
}
\qquad&\mapsfrom\qquad
\tikzmath{
\draw (0,0) ellipse (.5 and .2);
\draw (-.5,0)  .. controls +(270:.15) and +(135:.15) ..  (0,-.5) .. controls +(45:.15) and +(270:.15) .. (.5,0);
\filldraw (0,-.5) circle (.05cm) node[below]{$\scriptstyle g$};
\node[red] at (-.8,0) {$\scriptstyle\cX$};
\node at (-.5,-.3) {$\scriptstyle\cM$};
\node[blue] at (0,0) {$\scriptstyle\cY$};
}
\end{align*}
\end{example}

\subsubsection{Decomposing a domain wall}
\label{sssec:abstractDecomp}

Suppose $\cX,\cY\in \UFC^\cA$ are $\cA$-enriched fusion categories, and $\cM$ is an $\cA$-enriched $\cX-\cY$ bimodule category, which may be decomposable.
We now see how the condensable algebra of an $\cM$-loop can be used to decompose $\cM$ into irreducible $\cA$-enriched $\cX-\cY$ bimodule summands.

First, we note that the hom 2-category $\UFC^\cA(\cX\to\cY)$ of $\cA$-enriched $\cX-\cY$ bimodule categories is finitely semisimple in the sense of \cite[Definition 1.4.2]{1812.11933}, as it is a hom 2-category in $\UBFC$.
By \cite[Prop.~1.3.16]{1812.11933}, indecomposable $\cA$-enriched $\cX-\cY$ bimodule summands of ${}_\cX\cM_\cY$ correspond to minimal projections in $\End^\cA_{\cX-\cY}(\id_\cM)$.
Now, using Example \ref{example:3DualizableMSheet},
these minimal projections correspond to copies of the unit $1_{Z^\cA(\cX)}$ in the condensable algebra corresponding to the $\cM$-loop from Example \ref{example:EtaleLoopAlgebra}.
This algebra is the image of $1_{Z^\cA(\cY)}$ under the lax monoidal functor $\overline{Z^\cA(\cY)} \to Z^\cA(\cX)$ given by
$$
\overline{Z^\cA(\cY)} \to \End^\cA_{\cX-\cY}(\cM) \xrightarrow{I} Z^\cA(\cX),
$$
where $I$ is the adjoint of the tensor functor $Z^\cA(\cX) \to \End^\cA_{\cX-\cY}(\cM)$.

In the case of trivial anomaly, we identify the condensable algebra in question in the following lemma.
We will then generalize to the case of arbitrary $\cA$ in Proposition \ref{prop:IdentifyEnrichedLoopAlgebra}.

\begin{lem}
\label{lem:IdentifyLoopAlgebra}
When $\cA=\Hilb$ is the trivial enrichment,
\begin{equation}
\label{eq:IdentifyLoopAlgebra}
\tikzmath{
\fill[red!30, rounded corners=5pt] (-.5,-.5) rectangle (.5,.5);
\filldraw[fill=blue!30] (0,0) circle (.2cm);
\node at (-.35,0) {$\scriptstyle \cM$};
\node at (.35,-.35) {$\scriptstyle \cX $};
\node at (0,0) {$\scriptstyle \cY $};
}
=
L \cap Z(\cX)\boxtimes 1
\end{equation}
where $L\in Z(\cX)\boxtimes\overline Z(\cY)$ is the Lagrangian algebra corresponding to the $\cX\boxtimes \cY^{\rm mp}$-module category $\cM$.
\end{lem}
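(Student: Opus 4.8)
The plan is to unwind the abstract description of the $\cM$-loop algebra recorded just before the statement and to evaluate it through the Lagrangian-algebra/module-category dictionary of Remark~\ref{rem:ModuleCatLagrangianCorrespondence}. First I would reduce to the case that $\cM$ is indecomposable: if $\cM=\bigoplus_i\cM_i$ with each $\cM_i$ indecomposable, then closing the disjoint union of lines shows the $\cM$-loop algebra is $\bigoplus_i(\text{$\cM_i$-loop algebra})$, while the algebra attached to $\cM$ is $L=\bigoplus_i L_i$ with $L_i$ attached to $\cM_i$, so $L\cap Z(\cX)\boxtimes 1=\bigoplus_i(L_i\cap Z(\cX)\boxtimes 1)$; hence it suffices to treat indecomposable $\cM$. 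Now set $\cW:=\cX\boxtimes\cY^{\rm mp}$, so that $\End_{\cX-\cY}(\cM)=\End_{\cW}(\cM)$, and apply Remark~\ref{rem:ModuleCatLagrangianCorrespondence} to $\cW$ and $L\in Z(\cW)\cong Z(\cX)\boxtimes\overline{Z(\cY)}$ to get $\End_{\cW}(\cM)\cong Z(\cW)_L$, with the tensor unit corresponding to the regular module $L=F_L(1_{Z(\cW)})$, where $F_L(-)=(-)\otimes L$ is the free-module functor.

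Recall (with $\cA=\Hilb$) that the $\cM$-loop condensable algebra is $I(1)$, the image of the tensor unit under the right adjoint $I$ of the tensor functor $T\colon Z(\cX)\to\End_{\cX-\cY}(\cM)$ that brings bulk $\cX$-anyons to the wall, equipped with the lax monoidal structure on $I$. Under the identification above, $T$ is the composite $Z(\cX)\xrightarrow{\iota}Z(\cX)\boxtimes\overline{Z(\cY)}=Z(\cW)\xrightarrow{F_L}Z(\cW)_L$, where $\iota(c)=c\boxtimes 1$: by Construction~\ref{const:EnrichedBimodToWittEq} the transparent wall excitations form $Z(\End_{\cW}(\cM))\cong Z(\cW)$ with underlying central functor $F_L$, and the $\cX$-bulk anyons are the restriction of this to the subcategory $Z(\cX)\boxtimes 1\subset Z(\cW)$. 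Then for $c\in\Irr(Z(\cX))$ the free--forgetful adjunction gives
\[
\Hom_{Z(\cX)}\!\big(c,\,I(1)\big)\;\cong\;\Hom_{Z(\cW)_L}\!\big(F_L(c\boxtimes 1),\,L\big)\;\cong\;\Hom_{Z(\cW)}\!\big(c\boxtimes 1,\,L\big),
\]
which is the multiplicity of $c\boxtimes 1$ in the object $L\in Z(\cX)\boxtimes\overline{Z(\cY)}$, i.e.\ $\Hom_{Z(\cX)}(c,\,L\cap Z(\cX)\boxtimes 1)$; by Yoneda, $I(1)\cong L\cap Z(\cX)\boxtimes 1$ as objects of $Z(\cX)$. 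For the algebra structure I would write $I=\iota^{R}\circ U$, with $U\colon Z(\cW)_L\to Z(\cW)$ the forgetful functor and $\iota^{R}=\id_{Z(\cX)}\boxtimes\Hom_{\overline{Z(\cY)}}(1,-)$; the lax structure on $U$ is the canonical quotient $U(M)\otimes U(N)\twoheadrightarrow U(M\otimes_L N)$, which on $M=N=L$ is the multiplication $m_L$, while $\iota^{R}$ merely restricts to the $\boxtimes 1$ summands, so the induced multiplication on $I(1)=L\cap Z(\cX)\boxtimes 1$ is the restriction of $m_L$ (well-defined since $Z(\cX)\boxtimes 1\subset Z(\cW)$ is a fusion subcategory). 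This is precisely the condensable algebra ``$A\boxtimes 1$'' appearing in the factorization of $L$ in \cite[Thm.~3.6]{MR3022755}.

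I expect the genuinely delicate point to be the identification $T\cong F_L\circ\iota$ under the Morita equivalence $\End_{\cW}(\cM)\cong Z(\cW)_L$ — that is, verifying that ``bring an $\cX$-bulk anyon to $\cM$'', a priori a half-braided (central) functor, really corresponds to the strong-monoidal free-module functor restricted to $Z(\cX)\boxtimes 1$, compatibly with half-braidings rather than up to some twist. Everything downstream — computing the right adjoint, evaluating on the unit via the free--forgetful adjunction, and matching the lax monoidal structures to $m_L$ — is then routine bookkeeping. Finally, the extension to arbitrary $\cA$ recorded in Proposition~\ref{prop:IdentifyEnrichedLoopAlgebra} should follow by running the same argument with $\cW$ replaced by $\cX\xBv_\cA\cY^{\rm mp}$ and $Z$ by $Z^\cA$ throughout.
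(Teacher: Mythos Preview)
Your argument is correct but proceeds quite differently from the paper. The paper gives a short geometric proof in the 3D graphical calculus: it folds $\cM$ to a boundary-to-vacuum for $\cX\boxtimes\cY^{\rm mp}$, invokes Sub-Example~\ref{sub-ex:LagrangianLoop} to identify the folded loop directly with the Lagrangian $L$, and then caps the $\cY^{\rm mp}$ sheet with a hemisphere; since a contractible $\cY^{\rm mp}$-region supports only the vacuum, this geometric capping projects $L$ onto its $Z(\cX)\boxtimes 1$ part. Your approach instead stays in the 2-categorical layer, unpacking the description of the loop algebra as $I(1)$ and computing it through the Morita dictionary $\End_\cW(\cM)\cong Z(\cW)_L$ and the free--forgetful adjunction. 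What you gain is an explicit handle on the algebra structure via the lax monoidal structure on $I=\iota^R\circ U$, which the paper's graphical proof leaves implicit (it only needs the multiplication to \emph{exist}, via the pair-of-pants in \eqref{eq:pants}). What the paper gains is brevity and the avoidance of the point you correctly flag as delicate---matching $T$ with $F_L\circ\iota$ compatibly with half-braidings---since the folding trick and Sub-Example~\ref{sub-ex:LagrangianLoop} absorb that identification into the graphical calculus. Your final remark about extending to general $\cA$ by replacing $\cW$ with $\cX\xBv_\cA\cY^{\rm mp}$ is also on target; the paper's Proposition~\ref{prop:IdentifyEnrichedLoopAlgebra} does essentially this, but again via the geometric route (enriched folding plus condensing $K_\cA$).
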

\begin{proof}
By the folding trick, we can view $\cM$ as a $\cX\boxtimes\cY^{\rm mp}$-$\Hilb$ module category, corresponding to a Lagrangian algebra $L\in Z(\cX)\boxtimes\overline{Z(\cY)}$ by \cite[Def.~3.3]{MR3406516}.
As in Sub-Example \ref{sub-ex:LagrangianLoop},
\begin{equation}
\label{eq:LagrangianInFoldedPhase}
\tikzmath{
\fill[red!40!blue!40, rounded corners=5pt] (-.5,-.5) rectangle (.5,.5);
\filldraw[fill=white] (0,0) circle (.2cm);
\node at (-.35,0) {$\scriptstyle \cM$};
}
=
\tikzmath{
\fill[red!50, rounded corners=5pt, opacity=.5] (-.5,-.8) rectangle (1,.3);
\fill[blue!50, rounded corners=5pt, opacity=.5] (-.2,-.5) rectangle (1.3,.6);
\fill[white] (.1,-.13) arc (180:0:.3cm and .1cm) arc (0:-180:.3cm and .065cm);
\draw (0,0) arc (-180:0:.4cm and .2cm);
\draw (.1,-.13) arc (180:0:.3cm and .1cm);
\node at (-.35,-.65) {$\scriptstyle \cX$};
\node at (1,.45) {$\scriptstyle \cY^{\rm mp}$};
}
=
L
\end{equation}
where the purple color denotes the stacking of the blue and pink sheets.
We then obtain the left-hand side of Equation \eqref{eq:IdentifyLoopAlgebra} by closing up the $\cY^{\rm mp}$ sheet to a hemisphere.
$$
\tikzmath{
\draw (-.3,0) arc (180:0:.3);
\fill[red!60, rounded corners=5pt, opacity=.5] (-.6,-.3) rectangle (.6,.5);
\fill[white] (0,0) ellipse (.3cm and .1cm);
\draw (-.3,0) arc (-180:0:.3 and .1);
\draw[densely dotted] (-.3,0) arc (180:0:.3 and .1);
\fill[fill=blue!60, opacity=.5] (-.3,0) arc (180:0:.3cm) arc (0:-180:.3cm and .1cm);
}
\qquad
\rightsquigarrow
\qquad
\tikzmath{
\fill[red!30, rounded corners=5pt] (-.5,-.5) rectangle (.5,.5);
\filldraw[fill=blue!30] (0,0) circle (.2cm);
\node at (-.35,0) {$\scriptstyle \cM$};
\node at (.35,-.35) {$\scriptstyle \cX $};
\node at (0,0) {$\scriptstyle \cY $};
}
$$
Since the only excitation supported in a contractible region of $Z(\cY^{\rm mp})$-bulk is the vacuum, this reduces $L$ to the right-hand side of Equation \eqref{eq:IdentifyLoopAlgebra}.
\end{proof}

\begin{prop}
\label{prop:IdentifyEnrichedLoopAlgebra}
For an arbitrary $\cA$-enrichment,
\[
\tikzmath{
\fill[red!30, rounded corners=5pt] (-.5,-.5) rectangle (.5,.5);
\filldraw[fill=blue!30] (0,0) circle (.2cm);
\node at (-.35,0) {$\scriptstyle \cM$};
\node at (.35,-.35) {$\scriptstyle \cX $};
\node at (0,0) {$\scriptstyle \cY $};
}
=
A
\in
Z^\cA(\cX)
\]
where $L(A,\overline{B},\Phi)\in Z^\cA(\cX)\boxtimes\overline{Z^\cA(\cY)}$ is the Lagrangian algebra corresponding to the Witt-equivalence
$\End^\cA_{\cX-\cY}(\cM)$ from Construction \ref{const:EnrichedBimodToWittEq}.
\end{prop}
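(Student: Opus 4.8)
The plan is to reduce to the trivial-anomaly case already settled in Lemma~\ref{lem:IdentifyLoopAlgebra}. First I would forget the $\cA$-enrichment: the forgetful $3$-functor $\UFC^\cA\to\UFC$ realizes on top-level hom categories the inclusion $Z^\cA(\cX)=\End^\cA_{\cX-\cX}(\cX)\hookrightarrow\End_{\cX-\cX}(\cX)=Z(\cX)$ of Example~\ref{ex:EndOfIdentityBimodule}, and it sends the enriched $\cM$-loop of Example~\ref{example:EtaleLoopAlgebra} to the ordinary (unenriched) $\cM$-loop, since the loop is built purely from dualizability data and the relevant diagrams are literally the same pictures, only interpreted in a hom $3$-category of $\UBFC$ rather than in $\UFC$ itself. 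Thus it is enough to identify the unenriched $\cM$-loop and check that it lands in the full subcategory $Z^\cA(\cX)\subseteq Z(\cX)$.

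Second, I would invoke Lemma~\ref{lem:IdentifyLoopAlgebra}, which gives the unenriched $\cM$-loop as $\widetilde L\cap(Z(\cX)\boxtimes 1)$, where $\widetilde L\in Z(\cX)\boxtimes\overline{Z(\cY)}$ is the Lagrangian algebra attached to $\cM$ viewed as an $\cX\boxtimes\cY^{\rm mp}$-module category. Using $Z(\cX)\cong Z^\cA(\cX)\boxtimes\cA$ and $\overline{Z(\cY)}\cong\overline{Z^\cA(\cY)}\boxtimes\overline\cA$, I would rewrite the ambient center as $Z^\cA(\cX)\boxtimes\overline{Z^\cA(\cY)}\boxtimes\cA\boxtimes\overline\cA$; since $\cM$ is $\cA$-enriched, Lemma~\ref{lem:constLem} supplies a factorization $\widetilde L\cong L(A,\overline B,\Phi)\boxtimes K_\cA$ compatible with this splitting, in which $K_\cA=\bigoplus_{a\in\Irr(\cA)}a\boxtimes\overline a$ is the canonical Lagrangian of \eqref{eq:kannonicalLagrangian} and $L(A,\overline B,\Phi)\in Z^\cA(\cX)\boxtimes\overline{Z^\cA(\cY)}$ is exactly the Lagrangian algebra of the Witt equivalence $\End^\cA_{\cX-\cY}(\cM)$ from Construction~\ref{const:EnrichedBimodToWittEq}.

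Third, the intersection is a short computation: because the splitting is a Deligne product, $\widetilde L\cap(Z(\cX)\boxtimes 1)$ factors as $\bigl(L(A,\overline B,\Phi)\cap(Z^\cA(\cX)\boxtimes 1)\bigr)\boxtimes\bigl(K_\cA\cap(\cA\boxtimes 1)\bigr)$. The first factor is $A$ by the very meaning of the notation $L(A,\overline B,\Phi)$ in \cite[Thm.~3.6]{MR3022755}, and the second factor is the tensor unit, since the summand $a\boxtimes\overline a$ of $K_\cA$ lies in $\cA\boxtimes 1$ only when $a\cong 1$. This yields $A\in Z^\cA(\cX)\subseteq Z(\cX)$, as claimed.

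I expect the only genuine obstacle to be the first step, namely making precise that forgetting the enrichment carries the enriched $\cM$-loop to the unenriched one---that is, that the $3$-dimensional graphical calculus of $\UFC^\cA$ is compatible with that of $\UFC$ under the forgetful functor. If one prefers to sidestep this point, one can instead argue entirely inside $\UFC^\cA$ by copying the proof of Lemma~\ref{lem:IdentifyLoopAlgebra} with the enriched folding trick of Remark~\ref{rem:enrichedFolding} in place of ordinary folding: $\cM$ then corresponds to the Lagrangian $L(A,\overline B,\Phi)\in Z(\cX\xBv_\cA\cY^{\rm mp})\cong Z^\cA(\cX)\boxtimes\overline{Z^\cA(\cY)}$, and capping off the $\cY^{\rm mp}$ sheet together with the contractible $\cA$-slab produced by $\xBv_\cA$ (a contractible region of $\cA$-bulk, hence supporting only the vacuum) reduces $L(A,\overline B,\Phi)$ to $L(A,\overline B,\Phi)\cap(Z^\cA(\cX)\boxtimes 1)=A$.
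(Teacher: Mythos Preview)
Your proposal is correct and follows essentially the same route as the paper: both begin by forgetting the enrichment, invoking Lemma~\ref{lem:IdentifyLoopAlgebra} and Lemma~\ref{lem:constLem} to factor the unenriched Lagrangian as $L(A,\overline B,\Phi)\boxtimes K_\cA$, and then cut this down to $A$. Your main route performs this last step by a direct intersection with $Z(\cX)\boxtimes 1$ inside the Deligne product, whereas the paper phrases it as condensing $K_\cA$ via the enriched folding trick of Remark~\ref{rem:enrichedFolding} and then capping off the $\cY$-hemisphere; these are the same computation in different language, and your alternative paragraph is precisely the paper's argument.
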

\begin{proof}
By Lemma \ref{lem:IdentifyLoopAlgebra}, ignoring the $\cA$-enrichment, i.e.~forgetting the $\cA$-centered structure and considering $\cM$ as an $\cX-\cY$ bimodule category, we know that a closed $\cM$-loop with external $\cX\boxtimes \cY^{\rm mp}$ shading corresponds to a Lagrangian algebra $L$.
But since $\cX,\cY$ are $\cA$-enriched and ${}_\cX\cM_\cY$ is an $\cA$-enriched bimodule, by Lemma \ref{lem:constLem}, $L$ contains the canonical Lagrangian $K_\cA$ as a subalgebra where
\[\resizebox{.47\textwidth}{!}{$
Z(\cX\boxtimes \cY^{\rm mp})
\cong
Z(\cX)\boxtimes \overline{Z(\cY)}
\cong
Z^\cA(\cX)\boxtimes\overline{Z^\cA(\cY)}
\boxtimes
\cA\boxtimes \overline{\cA}.
$}\]

Now instead of ignoring the $\cA$-enrichment at the beginning, we can perform the $\cA$-enriched folding trick (see Remark \ref{rem:enrichedFolding}) for $\cA$-enriched fusion categories, which also requires we perform a relative tensor product over $\cA$, leading to an anomaly cancellation.
This relative tensor product over $\cA$ is accomplished by condensing $K_\cA$.
So the enriched Lagrangian algebra we obtain from the $\cA$-enriched folding trick is isomorphic to the image of $L$ after condensing $K_\cA$ in
$$
Z^\cA(\cX)\boxtimes \overline{Z^\cA(\cY)} 
\cong
Z(\cX\boxtimes \cY)_{K_\cA}^{\loc}.
$$
This yields exactly $L(A,B,\Phi)$, by Construction \ref{const:WittEqToEnrichedBimod}.
The result now follows by gluing in a hemisphere (with attached $\cA$-bulk) corresponding to $\cY$, as in Lemma \ref{lem:IdentifyLoopAlgebra}.
\end{proof}

Combining Example \ref{example:3DualizableMSheet}, \cite[Prop.~1.3.16]{1812.11933}, and Proposition \ref{prop:IdentifyEnrichedLoopAlgebra} above, we get the following result.

\begin{cor}
Indecomposable $\cX-\cY$ summands of an $\cA$-enriched $\cX-\cY$ bimodule $\cM$ correspond to copies of $1_{Z^\cA(\cX)}$ in the condensable algebra $A\in Z^\cA(\cX)$, where $L(A,\overline{B},\Phi)\in Z^\cA(\cX)\boxtimes\overline{Z^\cA(\cY)}$ is the Lagrangian algebra corresponding to the Witt-equivalence
$\End^\cA_{\cX-\cY}(\cM)$.
\end{cor}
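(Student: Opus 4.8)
The plan is to combine the abstract classification of direct summands in a finitely semisimple $2$-category with the two identifications already established in this subsection, namely the duality isomorphism of Example~\ref{example:3DualizableMSheet} and the computation of the $\cM$-loop algebra in Proposition~\ref{prop:IdentifyEnrichedLoopAlgebra}. Concretely, I would chain together three bijections: indecomposable summands of $\cM$ $\leftrightarrow$ minimal projections of $\End^\cA_{\cX-\cY}(\id_\cM)$ $\leftrightarrow$ minimal idempotents of $\Hom_{Z^\cA(\cX)}(1_{Z^\cA(\cX)}\to A)$ $\leftrightarrow$ copies of $1_{Z^\cA(\cX)}$ in the condensable algebra $A$ of Proposition~\ref{prop:IdentifyEnrichedLoopAlgebra}.

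First I would invoke the fact that the hom $2$-category $\UFC^\cA(\cX\to\cY)$ is finitely semisimple, being a full sub-$2$-category of a hom $2$-category in $\UBFC$, so that by \cite[Prop.~1.3.16]{1812.11933} the indecomposable $\cA$-enriched $\cX-\cY$ bimodule summands of $\cM$ are in bijection with the minimal projections of the finite-dimensional $\Cstar$-algebra $\End^\cA_{\cX-\cY}(\id_\cM)$. Next I would apply Example~\ref{example:3DualizableMSheet} to rewrite this algebra as $\Hom_{Z^\cA(\cX)}(1_{Z^\cA(\cX)}\to A)$, where $A\in Z^\cA(\cX)$ is the condensable algebra attached to the closed $\cM$-loop as in Example~\ref{example:EtaleLoopAlgebra}. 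The crucial point is that this is not merely a vector-space isomorphism: under the bending of the $\cM$-world-sheet, composition of endomorphisms of $\id_\cM$ is carried to the ``pair of pants'' multiplication of $A$ restricted to $\Hom_{Z^\cA(\cX)}(1\to A)$, and the identity $\id_{\id_\cM}$ is carried to the unit of $A$ post-composed with $u^\dag$. Granting this, minimal projections of $\End^\cA_{\cX-\cY}(\id_\cM)$ correspond to minimal idempotents of $\Hom_{Z^\cA(\cX)}(1\to A)$. Since $A$ is condensable, hence commutative, $\Hom_{Z^\cA(\cX)}(1\to A)$ is a commutative finite-dimensional $\Cstar$-algebra, so it is isomorphic to $\bbC^m$ where $m$ is the multiplicity of $1_{Z^\cA(\cX)}$ in $A$; its minimal idempotents are exactly the $m$ coordinate projections, i.e.\ the $m$ distinct copies of $1_{Z^\cA(\cX)}$ appearing in $A$. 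Equivalently, $A$ splits as a direct sum of connected \'etale subalgebras, one for each copy of the unit, and each connected piece is the loop algebra of the corresponding indecomposable summand. Finally, Proposition~\ref{prop:IdentifyEnrichedLoopAlgebra} identifies this $A$ with the condensable algebra for which $L(A,\overline{B},\Phi)\in Z^\cA(\cX)\boxtimes\overline{Z^\cA(\cY)}$ is the Lagrangian algebra corresponding to the Witt-equivalence $\End^\cA_{\cX-\cY}(\cM)$ of Construction~\ref{const:EnrichedBimodToWittEq}, and concatenating the three bijections yields the corollary.

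The hard part will be the algebra-isomorphism claim in the second step: verifying that the two ways of composing endomorphisms of $\id_\cM$ --- stacking them in $\End^\cA_{\cX-\cY}(\cM)$ versus bending the $\cM$-sheet and fusing the resulting $\cY$-droplets using the multiplication of $A$ --- actually agree, including that units match up. This is a zigzag/snake computation in the $3$D graphical calculus of \cite{1211.0529}, forced once one fixes the cup/cap data coming from dualizability of ${}_\cX\cM_\cY$, but writing it out carefully (and checking the $\Cstar$-structures are compatible) is where the real work lies. A secondary, more routine point is the bookkeeping that ``copies of $1_{Z^\cA(\cX)}$ in $A$'' is genuinely the right invariant: one must check that distinct copies of the unit are in bijection with the connected-algebra summands of $A$, and that these in turn match the indecomposable summands $\cM_i$ of $\cM$ under Construction~\ref{const:EnrichedBimodToWittEq}, which follows since a direct sum decomposition of $\cM$ induces one of $\End^\cA_{\cX-\cY}(\cM)$ and hence of its Lagrangian algebra.
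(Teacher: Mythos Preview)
Your proposal is correct and follows exactly the approach the paper takes: the paper states the corollary as an immediate consequence of combining Example~\ref{example:3DualizableMSheet}, \cite[Prop.~1.3.16]{1812.11933}, and Proposition~\ref{prop:IdentifyEnrichedLoopAlgebra}, which is precisely your chain of three bijections. Your additional care in verifying that the dualizability isomorphism of Example~\ref{example:3DualizableMSheet} intertwines composition with the pair-of-pants multiplication is a detail the paper leaves implicit in the graphical calculus, but it is not a different argument.
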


\subsubsection{Decomposing a composite domain wall}
\label{sssec:mainThm}
Now, we are finally ready to mathematically justify our identification of the algebra of local operators which preserve the ground state of the composite domain wall $\cM\xF_\cY\cN$ with $Z^{\cA}(\cY)(B_1\to B_2)$.
As just explained in \S~\ref{sssec:abstractDecomp}, this algebra is $\End_{\cX-\cZ}^{\cA}(\id_{\cM\xF_\cY\cN})$, by definition.
By applying the results of the previous subsection, we will verify that $\End_{\cX-\cZ}^{\cA}(\id_{\cM\xF_\cY\cN})\cong(Z^{\cA}(\cY)(B_1\to B_2),\star)$, as promised.

\begin{thm}
\label{thm:DecomposeCompositeDomainWall}
Suppose ${}_\cX \cM_\cY$ and ${}_\cY \cN_\cZ$ are two 
$\cA$-enriched bimodules between the $\cA$-enriched fusion categories $\cX,\cY,\cZ$.
Let 
\begin{align*}
L_1&=L(A,\overline{B_1},\Phi)\in Z^\cA(\cX)\boxtimes \overline{Z^\cA(\cY)}
\\
L_2&=L(B_2,\overline{C},\Psi)\in Z^\cA(\cY)\boxtimes \overline{Z^\cA(\cZ)}
\end{align*}
be the Lagrangian algebras corresponding to the Witt-equivalences
$\End^\cA_{\cX-\cY}(\cM)$ and $\End^\cA_{\cY-\cZ}(\cN)$ respectively from Construction \ref{const:EnrichedBimodToWittEq}.
Indecomposable $\cA$-enriched $\cX-\cZ$ bimodule summands of ${}_\cX\cM\xF_\cY \cN_\cZ$
correspond to minimal projections\textsuperscript{\textup{\ref{footnote:ZA(B1 to B2)MinProj}}}
in the Abelian algebra $(Z^\cA(\cY)(B_1\to B_2),\star)$, as defined in Definition \ref{defn:convolutionMult}.
\end{thm}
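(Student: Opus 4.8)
The plan is to reduce the statement to a single algebra isomorphism and then invoke the dualizability/loop-algebra machinery of \S\ref{sssec:abstractDecomp}. First I would recall that $\UFC^\cA(\cX\to\cZ)$ is a finitely semisimple $2$-category (being a hom $2$-category in $\UBFC$), so by \cite[Prop.~1.3.16]{1812.11933} the indecomposable $\cA$-enriched $\cX-\cZ$ bimodule summands of $\cM\xF_\cY\cN$ are in bijection with minimal projections in the finite-dimensional $\Cstar$-algebra $S:=\End^\cA_{\cX-\cZ}(\id_{\cM\xF_\cY\cN})$. Since $(Z^\cA(\cY)(B_1\to B_2),\star)$ is, by Definition \ref{defn:convolutionMult}, a commutative finite-dimensional $\Cstar$-algebra, any $\Cstar$-algebra isomorphism $S\cong(Z^\cA(\cY)(B_1\to B_2),\star)$ carries minimal projections to minimal projections. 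Thus the entire content of the theorem is the claim that $S\cong(Z^\cA(\cY)(B_1\to B_2),\star)$ as $\Cstar$-algebras.

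\textbf{Identifying the underlying vector space.} By Example \ref{example:3DualizableMSheet} applied to the $\cX-\cZ$ bimodule $\cM\xF_\cY\cN$, we have $S\cong\Hom_{Z^\cA(\cX)}(1_{Z^\cA(\cX)}\to\ell)$, where $\ell$ is the condensable algebra of a closed $\cM\xF_\cY\cN$-loop in an ambient $\cX$-phase. The key geometric step is to unpack $\ell$ using the ladder-category description of the relative Deligne product (\S\ref{ssec:compositionMath}): closing $\cM\xF_\cY\cN$ into a loop yields a ``target'' configuration consisting of a disk of $\cY$-phase bounded by an $\cM$-circle, inside which sits a smaller disk of $\cZ$-phase bounded by an $\cN$-circle. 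Collapsing the contractible $\cZ$-disk, which supports only the vacuum exactly as in the proof of Lemma \ref{lem:IdentifyLoopAlgebra} and Proposition \ref{prop:IdentifyEnrichedLoopAlgebra}, converts the $\cN$-circle into the point defect $B_2\in Z^\cA(\cY)$ (here $\cN$ is a $\cY-\cZ$ bimodule with $L_2=L(B_2,\overline C,\Psi)$, so $B_2$ is precisely the algebra condensed at $\cN$ from the $\cY$-side). We are left with a closed $\cM$-loop decorated by the anyon $B_2$ in its interior. Bringing $B_2$ onto the wall from the $\cY$-side produces a wall excitation $G(B_2)\in\End^\cA_{\cX-\cY}(\cM)$, where $G$ is the corresponding ``fuse from the right bulk'' functor, and a decorated version of Example \ref{example:3DualizableMSheet} gives $\Hom_{Z^\cA(\cX)}(1\to\ell)\cong\Hom_{\End^\cA_{\cX-\cY}(\cM)}(1\to G(B_2))$. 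Identifying $\End^\cA_{\cX-\cY}(\cM)\cong Z^\cA(\cY)_{B_1}$ via Construction \ref{const:EnrichedBimodToWittEq} (the $\cY$-side presentation of this Witt equivalence, where $L_1=L(A,\overline{B_1},\Phi)$), the unit becomes the free module $(B_1)_{B_1}$ and $G$ becomes the free-module functor $B_2\mapsto (B_2\otimes B_1)_{B_1}$, so that the free/forgetful adjunction together with self-duality of the Frobenius algebra $B_1$ yields
\[\Hom_{Z^\cA(\cY)_{B_1}}\big((B_1)_{B_1}\to(B_2\otimes B_1)_{B_1}\big)\cong\Hom_{Z^\cA(\cY)}(1\to B_2\otimes B_1)\cong\Hom_{Z^\cA(\cY)}(B_1\to B_2).\]
Note that the data $A,\Phi$ of $L_1$ and $C,\Psi$ of $L_2$ have all disappeared; this is precisely Proposition \ref{prop:IdentifyEnrichedLoopAlgebra}, namely that closing a domain wall against a contractible region remembers only the algebra condensed there, not the braided equivalence glued on.

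\textbf{Matching the algebra structure.} The product on $S$ is vertical composition of $3$-morphisms, i.e.\ stacking the short-string configurations of \eqref{eq:shortString} in the time direction. Under the identifications above this becomes: run two parallel $\cY$-strands across the $\cM$--$\cN$ sandwich, use dualizability to fuse them, and resolve the two trivalent vertices that land on $\cM$ and on $\cN$ via the multiplications of $B_1$ and $B_2$ respectively, exactly as in \eqref{eq:condensationSnap} and the discussion preceding Definition \ref{defn:convolutionMult}. This reproduces $\star$, and the unit $vu^\dagger$, the involution, and the commutativity of $\star$ (already seen by sliding parallel strands past one another) all match accordingly, giving the desired $\Cstar$-algebra isomorphism.

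\textbf{Main obstacle.} The hard part is carrying out the graphical manipulations of the previous two paragraphs rigorously inside the weak $3$-category $\UFC^\cA$ using the $3$D graphical calculus of \cite{1211.0529,1903.05777}: one must check that collapsing the inner $\cZ$-disk and fusing the $\cY$-strands introduces no stray associators, pivotal coefficients, or half-braiding phases, and that the enriched picture is compatible with the unenriched one via the factorization $K_{Z(\cY)}\cong K_\cA\boxtimes K_{Z^\cA(\cY)}$ of \S\ref{ssec:compositionMath} and Remark \ref{rem:envelopingIsK}. An alternative I would keep in reserve is to first prove the composition formula for Lagrangian algebras under $\xBh$ — that $\End^\cA_{\cX-\cZ}(\cM\xF_\cY\cN)$ corresponds to a Lagrangian $L(\widetilde A,\overline{\widetilde C},\widetilde\Phi)$ with $\widetilde A$ built from $L_1,L_2$ — and then apply the corollary preceding this theorem; but proving that formula needs essentially the same graphical input.
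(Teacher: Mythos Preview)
Your proposal is correct but takes a longer route than the paper. The paper's proof also begins with \cite[Prop.~1.3.16]{1812.11933} to reduce to computing $\End^\cA_{\cX-\cZ}(\id_{\cM\xF_\cY\cN})$, but then uses dualizability in a more symmetric way: rather than bending $\cM\xF_\cY\cN$ into a single concentric loop with $\cX$ on the outside, it bends $\cM$ to one side and $\cN$ to the other, so that both become closed loops sitting in the \emph{middle} $\cY$ region --- the $\cM$-loop with a disk of $\cX$ inside and the $\cN$-loop with a disk of $\cZ$ inside. This gives directly
\[\End^\cA_{\cX-\cZ}(\id_{\cM\xF_\cY\cN})\;\cong\;\Hom_{Z^\cA(\cY)}\bigl(\text{$\cM$-loop}\to\text{$\cN$-loop}\bigr),\]
and two applications of Proposition~\ref{prop:IdentifyEnrichedLoopAlgebra} identify the $\cM$-loop with $B_1$ (from $\overline{L_1}\cap 1\boxtimes Z^\cA(\cY)$) and the $\cN$-loop with $B_2$ (from $L_2\cap Z^\cA(\cY)\boxtimes 1$), finishing immediately. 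Your route --- bending to the $\cX$ side, collapsing the inner $\cZ$-disk to $B_2$, then invoking the identification $\End^\cA_{\cX-\cY}(\cM)\cong Z^\cA(\cY)_{B_1}$ as right $Z^\cA(\cY)$-module categories together with the free/forgetful adjunction --- reaches the same hom-space, but requires you to justify that module-category identification and track the tensor unit through it. The paper sidesteps all of this by working in $Z^\cA(\cY)$ from the outset, where the two loop algebras are $B_1$ and $B_2$ on the nose. What your approach buys is a more explicit link to the Witt equivalence $\End^\cA_{\cX-\cY}(\cM)$ itself; what the paper's buys is a one-line endgame with no adjunctions and no auxiliary equivalences to verify.
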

For the graphical proof, we use the same region shadings from \eqref{eq:DomainWallRegions}.
\begin{proof}
Since the $\cA$-enriched $\cX-\cZ$ bimodules form a finitely semisimple 2-category, indecomposable summands of
${}_\cX\cM\xF_\cY \cN_\cZ$
correspond to minimal projections in
$\End_{\cX-\cZ}^\cA(\id_{\cM\xF_\cY \cN})$ \cite[Prop.~1.3.16]{1812.11933}.

In more physical terms, the $\cA$-enriched bimodule (a $1$-morphism in $\UFC^\cA$) ${}_{\cX}\cM\xF_{\cY}\cN_{\cZ}$ describes the system depicted in \eqref{eq:DomainWallRegions}, the endofunctor $\id_{{}_{\cX}\cM\xF_{\cY}\cN_{\cZ}}$ (a $2$-morphism) corresponds to the vacuum states of this system, i.e.~when there are no pointlike localized excitations, and $3$-morphisms in $\id_{{}_{\cX}\cM\xF_{\cY}\cN_{\cZ}}$ are topological local operators that preserve the ground states.
Thus, minimal projections in $\End_{\cX-\cZ}^\cA(\id_{\cM\xF_\cY \cN})$ are topological local operators which project onto a single summand of the composite domain wall.

Therefore, we need only check that $\End_{\cX-\cZ}^\cA(\id_{\cM\xF_\cY \cN})\cong(Z^{\cA}(\cY)(B_1\to B_2),\star)$.
By dualizability, we have
\[\resizebox{.47\textwidth}{!}{$
\End_{\cX-\cZ}^\cA\left(
\tikzmath{
\begin{scope}
\clip[rounded corners=5pt] (-1,0) rectangle (2,1);
\fill[red!30] (-1,0) rectangle (0,1);
\fill[blue!30] (0,0) rectangle (1,1);
\fill[orange!30] (1,0) rectangle (2,1);
\end{scope}
\node at (-.5,.5) {$\scriptstyle \cX$};
\draw (0,0) -- (0,1);
\node at (-.2,.2) {$\scriptstyle \cM$};
\node at (.5,.5) {$\scriptstyle\cY$};
\draw (1,0) -- (1,1);
\node at (1.2,.2) {$\scriptstyle \cN$};
\node at (1.5,.5) {$\scriptstyle\cZ$};
}
\right)
\cong
\Hom_{Z^\cA(\cY)}\left(
\tikzmath{
\fill[blue!30, rounded corners=5pt] (-.55,-.55) rectangle (.55,.55);
\filldraw[fill=red!30] (0,0) circle (.2cm);
\node at (.35,0) {$\scriptstyle \cM$};
\node at (-.35,-.35) {$\scriptstyle \cY$};
\node at (0,0) {$\scriptstyle \cX$};
}
\to
\tikzmath{
\fill[blue!30, rounded corners=5pt] (-.55,-.55) rectangle (.55,.55);
\filldraw[fill=orange!30] (0,0) circle (.2cm);
\node at (-.35,0) {$\scriptstyle \cN$};
\node at (.35,-.35) {$\scriptstyle \cY$};
\node at (0,0) {$\scriptstyle \cZ$};
}
\right).
$}\]
By Proposition \ref{prop:IdentifyEnrichedLoopAlgebra}, we have
\begin{align*}
\tikzmath{
\fill[blue!30, rounded corners=5pt] (-.55,-.55) rectangle (.55,.55);
\filldraw[fill=red!30] (0,0) circle (.2cm);
\node at (.35,0) {$\scriptstyle \cM$};
\node at (-.35,-.35) {$\scriptstyle \cY$};
\node at (0,0) {$\scriptstyle \cX$};
}
&=
\overline{L_1}\cap 1\boxtimes Z^\cA(\cY)= B_1
\\
\tikzmath{
\fill[blue!30, rounded corners=5pt] (-.55,-.55) rectangle (.55,.55);
\filldraw[fill=orange!30] (0,0) circle (.2cm);
\node at (-.35,0) {$\scriptstyle \cN$};
\node at (.35,-.35) {$\scriptstyle \cY$};
\node at (0,0) {$\scriptstyle \cZ$};
}
&=
L_2 \cap Z^\cA(Y)\boxtimes 1=B_2,
\end{align*}
so
$
\End_{\cX-\cZ}^\cA(\id_{\cM\xF_\cY \cN}) 
\cong 
(Z^\cA(\cY)(B_1\to B_2),\star)
$
as $*$-algebras, and the result follows.
\end{proof}

We will decompose $\cM\xF_{\cY}\cN$ using Theorem \ref{thm:DecomposeCompositeDomainWall} for many explicit examples in \S~\ref{sec:examples} below.

\begin{rem}
\label{rem:sphereGSD}
One reason that we investigated the algebra of local operators abstractly, rather than as operators on the Hilbert space of ground states, is that the space of ground states depends on our choice of manifold.
We point out one particular case.
Observe that, if we place the parallel domain walls in the statement of Theorem \ref{thm:DecomposeCompositeDomainWall} on a sphere, as in
$$
\tikzmath{
\fill[red!30] (-105:1) arc (209:151:2cm) arc (105:255:1cm);
\fill[orange!30] (-75:1) arc (-29:29:2cm) arc (75:-75:1cm);
\fill[blue!30] (-105:1) arc (209:151:2cm) arc (105:75:1cm) arc (29:-29:2cm) arc (-75:-105:1cm);
\draw[thick,red] (-105:1) arc (209:151:2cm);
\draw[thick,orange] (-75:1) arc (-29:29:2cm);
\draw[thick] (0,0) circle (1cm);
\draw[thick] (-1,0) arc (-120:-60:2cm);
\draw[thick, dotted] (-1,0) arc (120:60:2cm);
\node at (-.75,-.35) {$\scriptstyle \cX$};
\node at (.75,-.35) {$\scriptstyle \cZ$};
\node at (0,-.55) {$\scriptstyle \cY$};
}
$$
then the space of ground states is exactly $Z^{\cA}(\cY)(B_1\to B_2)$.
Thus, we see that in the right circumstances, there is a factor of topological ground state degeneracy local to the strip between the two walls which is isomorphic to $Z^{\cA}(\cY)(B_1\to B_2)$ as a representation of $Z^{\cA}(\cY)(B_1\to B_2)$.
We expand on this idea in \S~\ref{sec:Tunneling} and \S~\ref{sec:examples}, where we will see that, when the strip of $Z^{\cA}(\cY)$ bulk is closed up to a tube, noncontractible (and thus nonlocal) Wilson loop operators in the $Z^{\cA}(\cY)$ strip which fail to commute with $Z^{\cA}(\cY)(B_1\to B_2)$ act transitively on the superselection sectors of the composite domain wall.
\end{rem}

So far, we have identified the algebra of local operators which can be analyzed to determine superselection sectors of the composite domain wall.
However, we have not yet explained how to characterize the indecomposable domain wall in each sector, or computed any concrete examples; that will be the work of the following sections.
In the next section, we will flesh out our particle mobility perspective on domain walls, and outline how it can be used in conjunction with Theorem \ref{thm:DecomposeCompositeDomainWall} to characterize the summands of the composite wall.
Concrete examples where the minimal projections in $Z^{\cA}(\cY)(A\to B)$ are computed, along with the Witt equivalences of wall excitations in each summand, will appear in \S~\ref{sec:examples}.

\section{Anyon mobility and tunneling operators}
\label{sec:Tunneling}

An important property of domain walls between regions with (2+1)D topological order is anyon mobility: which anyons from each bulk are confined to one side of the wall, which can pass through the wall, and the different possibilities for what an anyon can become when crossing the wall.
In this section, we will explain how to turn the data of a Witt equivalence of wall excitations into a set of \emph{tunneling channels}, which move an anyon from one side of a domain wall to another.
In a (2+1)D bulk region, we have local operators which bring a pair of anyons together to produce a single anyon, which one might well call fusion operators; these all arise as linear combinations of a finite set of operators satisfying an orthogonality condition, which select distinct ``fusion channels.''
We will see below that, in an analogous way, possible tunneling operators are also generated as linear combinations of operators which select ``tunneling channels.''
Indeed we can think of tunneling channels as a special kind of fusion channel, in which two anyons from opposite sides of a domain wall fuse to the vacuum on the wall.
We will then apply the results of the previous section to describe sets of tunneling operators through the composition of two domain walls in terms of tunneling operators through the individual domain walls, so that Theorem \ref{thm:DecomposeCompositeDomainWall} can be used to identify the domain wall present in each superselection sector.

We define tunneling channels as follows.
\begin{defn}
\label{defn:tunnelingOperator}
Let ${}_\cX\cM_{\cY}$ be an $\cA$-enriched bimodule category, and let $c\in\Irr(Z^{\cA}(\cX))$ and $d\in\Irr(Z^{\cA}(\cY))$ be anyons.
A \emph{set of tunneling channels} $\cT_{c\to d}=\{T_i\}$ from $c$ to $d$ through the domain wall corresponding to ${}_\cX\cM_{\cY}$ is a maximal set of orthogonal partial isometries. 
In other words, a tunneling channel is an operator local to the domain wall and adjacent bulk regions which acts as a partial isometry between the space of states containing a $c$ anyon at a given location in the $Z^{\cA}(\cX)$-bulk, and the space of states containing a $d$ anyon at a given location in the $Z^{\cA}(\cY)$-bulk, and distinct tunneling channels $T_i$ and $T_j$ satisfy the condition that whenever $j\neq i$, $T_j^\dag T_i=0$.
\end{defn}

We now unpack Definition \ref{defn:tunnelingOperator}.
We begin by defining the space of {\it tunneling operators} as the space of morphisms
$$
\Hom_{\UFC^\cA}
\left(
\tikzmath{
\pgfmathsetmacro{\radius}{.8};
\fill[red!30] (0,-\radius) arc (270:90:\radius);
\fill[blue!30] (0,-\radius) arc (-90:90:\radius);
\draw[thick] (0,-\radius) -- node[right]{$\scriptstyle \cM$} (0,\radius);
\draw[dashed] (0,0) circle (\radius);
\filldraw[red] ($ -.5*(\radius,0) $) node[below]{$\scriptstyle c$} circle (.05cm);
}
\longrightarrow
\tikzmath{
\pgfmathsetmacro{\radius}{.8};
\fill[red!30] (0,-\radius) arc (270:90:\radius);
\fill[blue!30] ($ -1*(0,\radius) $) arc (-90:90:\radius);
\draw[thick] ($ -1*(0,\radius) $) -- node[left]{$\scriptstyle \cM$} (0,\radius);
\draw[dashed] (0,0) circle (\radius);
\filldraw[blue] ($ .5*(\radius,0) $) node[below]{$\scriptstyle d$} circle (.05cm);
}
\right).
$$
Here $c$ and $d$ denote anyon types that are fixed, and we are interested in the space of operators that bring $c$ across the domain wall to give $d$.  In fact, this is equivalent to the space of operators which take a $c$ particle from the left bulk, and a $\overline{d}$ particle from the right bulk, annihilating them  on the domain wall.
However, because $c$ and $\overline{d}$ can each correspond to direct sums of different domain wall excitations, there can be multiple distinct ways in which this annihilation can occur; in this case there are multiple distinct tunneling channels associated with this annihilation.\footnote{It is also possible that $c$ or $\overline{d}$ could decompose into wall excitations with non-trivial multiplicities, introducing another source for multiple tunneling channels.}

A set $\{T_i\}$ of tunneling channels $c\to d$ is a basis for this space of tunneling operators satisfying additional conditions, just as fusion channels are special elements of the space of operators which fuse two anyons.
By semisimplicity, any tunneling operator factors as a linear combination of operators which first bring $c$ to the domain wall as a simple wall excitation $m$, and then bring $m$ off the wall as a $d$ particle, as shown below.
\[\tikzmath{
  \begin{scope}
  \clip[rounded corners=5pt] (-.8,-.8) rectangle (.8,.8); 
  \fill[red!30] (-.8,-.8) rectangle ($ .4*(0,0) + (0,.8) $);
  \fill[blue!30] ($ .4*(0,0) + (0,-.8) $) rectangle (.8,.8);
  \end{scope}
  \draw[thick] ($ .4*(0,0) + (0,-.8) $) -- node[left, yshift=.4cm]{$\scriptstyle\mathcal{M}$} ($ .4*(0,0) + (0,.8) $);
  \draw[dashed, rounded corners=5pt] (-.8,-.8) rectangle (.8,.8);
  \filldraw[red] ($ -.7*(.8,0) $) node[below]{$\scriptstyle c$} circle (.05cm);
 }
 \mapsto
 \tikzmath{
  \begin{scope}
  \clip[rounded corners=5pt] (-.8,-.8) rectangle (.8,.8); 
  \fill[red!30] (-.8,-.8) rectangle ($ .4*(0,0) + (0,.8) $);
  \fill[blue!30] ($ .4*(0,0) + (0,-.8) $) rectangle (.8,.8);
  \end{scope}
  \draw[thick] ($ .4*(0,0) + (0,-.8) $) -- node[left, yshift=.4cm]{$\scriptstyle\mathcal{M}$} ($ .4*(0,0) + (0,.8) $);
  \draw[dashed, rounded corners=5pt] (-.8,-.8) rectangle (.8,.8);
  \filldraw[black] ($ .7*(0,0) $) node[below,right]{$\scriptstyle m$} circle (.05cm);
 }
 \mapsto
 \tikzmath{
  \begin{scope}
  \clip[rounded corners=5pt] (-.8,-.8) rectangle (.8,.8); 
  \fill[red!30] (-.8,-.8) rectangle ($ .4*(0,0) + (0,.8) $);
  \fill[blue!30] ($ .4*(0,0) + (0,-.8) $) rectangle (.8,.8);
  \end{scope}
  \draw[thick] ($ .4*(0,0) + (0,-.8) $) -- node[left, yshift=.4cm]{$\scriptstyle\mathcal{M}$} ($ .4*(0,0) + (0,.8) $);
  \draw[dashed, rounded corners=5pt] (-.8,-.8) rectangle (.8,.8);
  \filldraw[blue] ($ .7*(.8,0) $) node[below]{$\scriptstyle d$} circle (.05cm);
 }
 \]
 Thus, the space of tunneling operators $c\to d$ is of the form
 \begin{align*}
 \bigoplus_{m\in\Irr(\cW)}&\cW(c\rhd 1_\cW\to m)\otimes\cW(m\to 1_\cW\lhd d)
 \\&
 \cong\bigoplus_{m\in\Irr(\cW)}M_{n_m^d\times n_m^c}(\bbC)\text{.}
  \end{align*}
Here we adopt the notation $\cW:=\End_{\cX-\cY}^{\cA}(\cM)$ for brevity,
$c\rhd 1_\cW$ and $1_\cW\lhd d$ denote the direct sums of wall excitations obtained by bringing the anyon $c$ or $d$ to the domain wall, and
$n_m^c$ and $n_m^d$ are the multiplicities of $m$ as a summand of $c\rhd 1_\cW$ and $1_\cW\lhd d$.  
 The maximal set of partial isometries in $\bigoplus_{m\in\Irr(\cW)}M_{n_m^d\times n_m^c}(\bbC)$ is, up to a unitary change of basis, $\{e^m_{i,j}\}$, where all entries of $e_{i,j}^m$ are $0$ except the $i-j$ entry of the $m$ summand.
 Thus, by specifying that tunneling channels must be partial isometries, definition \ref{defn:tunnelingOperator} yields tunneling channels which factor through a single  $m\in\Irr(\cW)$, rather than a direct sum of multiple wall excitation types $m$.
 Evidently, a set of tunneling channels $c\to d$ is mapped to a set of tunneling channels $d\to c$ by $\dag$.
 
\begin{rem}
\label{rem:droplet}
By dualizability, the space of tunneling operators above is isomorphic to
$$
\Hom_{\UFC^\cA}
\left(
\tikzmath{
\pgfmathsetmacro{\radius}{.8};
\filldraw[fill=blue!30,dashed] (0,0) circle (\radius);
\filldraw[fill=red!30,thick] (0,0) circle (.5*\radius);
\filldraw[red] (0,0) node[below]{$\scriptstyle c$} circle (.05cm);
\node at ($ .75*(\radius,0) $) {$\scriptstyle \cM$};
}
\longrightarrow
\tikzmath{
\pgfmathsetmacro{\radius}{.8};
\filldraw[fill=blue!30,dashed] (0,0) circle (\radius);
\filldraw[blue] (0,0) node[below]{$\scriptstyle d$} circle (.05cm);
}
\right).
$$
This second process is directly analogous to the process which `injects a droplet' and then selects the $d$ anyon
\cite[\S 5]{MR2942952}.
\end{rem}

Tunneling channels to the vacuum have a special interpretation: a particle which has a tunneling channel to the vacuum can condense on the domain wall.
Moreover, in the setting of Theorem \ref{thm:DecomposeCompositeDomainWall}, the set of projections in $(Z^{\cA}(\cY)(B_1\to B_2),\star)$ onto superselection sectors of the composition of two domain walls is exactly the set of tunneling channels from the vacuum to the vacuum across the composite domain wall!

As we will illustrate below, the set of possible tunneling channels  is uniquely fixed (up to a unitary change of basis for each type of simple wall excitation) by the choice of bulk topological orders and the topological domain wall, and does not depend on non-universal details of the boundary conditions.
In other words, different choices of $\cA$-enriched fusion categories, enriched bimodules, and anomaly $\cA$ which produce equivalent UMTCs of bulk exictations and Witt equivalences of wall excitations will also give rise to equivalent sets of tunneling channels.
To see this, observe that because a set of tunneling channels through $\cM$ is a set of morphisms in the Witt equivalence $\End_{Z^{\cA}(\cX)-Z^{\cA}(\cY)}^{\cA}(\cM)$, defined in terms of the actions of $Z^{\cA}(\cX)$ and $Z^{\cA}(\cY)$, the possible sets of tunneling channels depend only on the Witt equivalence $\End_{Z^{\cA}(\cX)-Z^{\cA}(\cY)}^{\cA}(\cM)$.

Next, we turn to the question of how to compose tunneling operators across parallel domain walls.
If ${}_{\cX}\cM_{\cY}$ and ${}_{\cY}\cN_{\cZ}$ are two $\cA$-enriched bimodule categories, with $c\in\Irr(Z^{\cA}(\cX))$, $d\in\Irr(Z^{\cA}(\cY))$, and $e\in\Irr(Z^{\cA}(\cZ))$, tunneling operators $T$ from $c$ to $d$ and $S$ from $d$ to $e$ can be concatenated to obtain tunneling operators $c\to e$:
\begin{equation}
 \label{eq:ComposeTunnelingVisual}
\tikzmath{
\pgfmathsetmacro{\radius}{.8};
\begin{scope}
\clip[rounded corners=5pt] (-\radius,-\radius) rectangle (\radius,\radius); 
\fill[red!30] (-\radius,-\radius) rectangle ($ .4*(-\radius,0) + (0,\radius) $);
\fill[blue!30] ($ .4*(-\radius,0) + (0,\radius) $) rectangle ($ .4*(\radius,0) + (0,-\radius) $);
\fill[orange!30] ($ .4*(\radius,0) + (0,-\radius) $) rectangle (\radius,\radius);
\end{scope}
\draw[thick] ($ .4*(-\radius,0) + (0,-\radius) $) -- node[left, yshift=.4cm]{$\scriptstyle \cM$} ($ .4*(-\radius,0) + (0,\radius) $);
\draw[thick] ($ .4*(\radius,0) + (0,-\radius) $) -- node[right,yshift=-.4cm]{$\scriptstyle \cN$} ($ .4*(\radius,0) + (0,\radius) $);
\draw[dashed, rounded corners=5pt] (-\radius,-\radius) rectangle (\radius,\radius);
\filldraw[red] ($ -.7*(\radius,0) $) node[below]{$\scriptstyle c$} circle (.05cm);
}
\xrightarrow{T\boxtimes 1}
\tikzmath{
\pgfmathsetmacro{\radius}{.8};
\begin{scope}
\clip[rounded corners=5pt] (-\radius,-\radius) rectangle (\radius,\radius); 
\fill[red!30] (-\radius,-\radius) rectangle ($ .4*(-\radius,0) + (0,\radius) $);
\fill[blue!30] ($ .4*(-\radius,0) + (0,\radius) $) rectangle ($ .4*(\radius,0) + (0,-\radius) $);
\fill[orange!30] ($ .4*(\radius,0) + (0,-\radius) $) rectangle (\radius,\radius);
\end{scope}
\draw[thick] ($ .4*(-\radius,0) + (0,-\radius) $) -- node[left, yshift=.4cm]{$\scriptstyle \cM$} ($ .4*(-\radius,0) + (0,\radius) $);
\draw[thick] ($ .4*(\radius,0) + (0,-\radius) $) -- node[right,yshift=-.4cm]{$\scriptstyle \cN$} ($ .4*(\radius,0) + (0,\radius) $);
\draw[dashed, rounded corners=5pt] (-\radius,-\radius) rectangle (\radius,\radius);
\filldraw[blue] (0,0) node[below]{$\scriptstyle d$} circle (.05cm);
}
\xrightarrow{1\boxtimes S}
\tikzmath{
\pgfmathsetmacro{\radius}{.8};
\begin{scope}
\clip[rounded corners=5pt] (-\radius,-\radius) rectangle (\radius,\radius); 
\fill[red!30] (-\radius,-\radius) rectangle ($ .4*(-\radius,0) + (0,\radius) $);
\fill[blue!30] ($ .4*(-\radius,0) + (0,\radius) $) rectangle ($ .4*(\radius,0) + (0,-\radius) $);
\fill[orange!30] ($ .4*(\radius,0) + (0,-\radius) $) rectangle (\radius,\radius);
\end{scope}
\draw[thick] ($ .4*(-\radius,0) + (0,-\radius) $) -- node[left, yshift=.4cm]{$\scriptstyle \cM$} ($ .4*(-\radius,0) + (0,\radius) $);
\draw[thick] ($ .4*(\radius,0) + (0,-\radius) $) -- node[right,yshift=-.4cm]{$\scriptstyle \cN$} ($ .4*(\radius,0) + (0,\radius) $);
\draw[dashed, rounded corners=5pt] (-\radius,-\radius) rectangle (\radius,\radius);
\filldraw[orange] ($ .7*(\radius,0) $) node[below]{$\scriptstyle e$} circle (.05cm);
}
\end{equation}
Notice that this operation involves all levels of our $3$-category of topological order $\UFC^{\cA}$.
We will discuss this composition of tunneling operators further in \S~\ref{sec:CompositeTunneling}.

In \S~\ref{sec:ElementaryTunneling} we identify \emph{elementary} tunneling channels, i.e. the tunneling channels through elementary domain walls.
We will see in \S~\ref{sec:CompositeTunneling} that sets of tunneling channels through multiple parallel domain walls can be obtained by composing sets of tunneling channels through the individual domain walls.
That is, all tunneling operators through a composite domain wall can be obtained as compositions, as in \eqref{eq:ComposeTunnelingVisual}, and there is no redundancy among tunneling operators obtained in this way which is not implied by linearity.
In particular, this will allow us to characterize tunneling channels through any indecomposable domain wall, using the decomposition into elementary domain walls shown in Figure \ref{fig:factoredWall}.
Finally, in \S~\ref{ssec:tunnelingOpAction}, we will describe the interaction between sets of tunneling channels through the composition of two domain walls and the decomposition of two parallel walls into indecomposable summands obtained in Theorem \ref{thm:DecomposeCompositeDomainWall}.
This will allow us to understand each summand of a composite domain wall in terms of the fates of anyons near the domain wall.
The examples in \S~\ref{sec:examples} will also contain computations of sets of tunneling channels.

\subsection{A first example of tunneling channels}
\label{sec:TunnelingExamples}

We begin by discussing a simple example, in which we compare sets of tunneling channels in two models for the same topological domain wall between Abelian topologically ordered phases.
Our example illustrates the following important phenomenon:
two gapped boundaries between a pair of phases with the same topological order, but different microscopic realizations, give different, but equivalent, sets of tunneling channels.

\begin{example}
\label{egs:tunnelingToric}
For a minimal example, we choose a bulk topological order with anyons described by $\mathcal{C}\cong \cD(\Z/4)$, the $\Z/4$-toric code.
This is realized by a lattice model for the fusion category $\cX\cong\fdHilb[\mathbb{Z}/4]$, with $\mathbb{C}^4$ spins on each edge of an oriented square lattice, 
where links in the lattice are oriented upwards and to the right, while those in the dual lattice are oriented downwards and to the right.

The usual lattice model for $\mathbb{Z}/{n}$-toric code \cite{MR1951039} (see also \cite{quant-ph/0609070,PhysRevB.71.045110}) on a 2D square lattice with $n$-state spin degrees of freedom on each edge is described by the Hamiltonian:
\begin{equation} \label{eq:HSN}
    H=-\sum_vA_v-\sum_pB_p  \text{,}
\end{equation}
where
\begin{equation} \label{eq:AvBp}
\resizebox{.47\textwidth}{!}{$
A_v=\sum_{k=1}^n
\tikzmath{
\draw[thick] (0,.5) node[left]{$\scriptstyle Z_n^k$} -- (1,.5) node[right]{$\scriptstyle (Z_n^k)^\dag$};
\draw[thick] (.5,0) node[below]{$\scriptstyle Z_n^k$} -- (.5,1) node[above]{$\scriptstyle (Z_n^k)^\dag$};
\node at (.65,.65) {$\scriptstyle v$};
}
\text{,}\qquad B_p=\sum_{j=1}^n
\tikzmath{
\draw[thick] (1,1) -- node[left] {$\scriptstyle X_n^j$} (1,2) -- node[above] {$\scriptstyle X_n^j$} (2,2) -- node[right] {$\scriptstyle (X_n^j)^\dag$} (2,1) -- node[below] {$\scriptstyle (X_n^j)^\dag$} (1,1);
\node at (1.5,1.5) {$\scriptstyle p$};
}
\text{.}
$}
\end{equation}
Here $X_n$ and $Z_n$ are $n\times n$ clock matrices which satisfy the relations $X_nZ_n=\omega_nZ_nX_n$, where $\omega_n:=e^{2\pi i/n}$ is a primitive $n$-th root of unity.  For the case $n=4$, these have the form:
\[
X_4 = \begin{pmatrix} 0 & 1 & 0 & 0 \\ 0 & 0 & 1 & 0 \\ 0 & 0 & 0 & 1 \\
1 & 0 & 0 & 0 \\ \end{pmatrix}
\qquad
Z_4 = \begin{pmatrix} i & 0 & 0 & 0 \\ 0 & -1 & 0 & 0 \\ 0 & 0 & -i & 0 \\
0 & 0 & 0 & 1 \\ \end{pmatrix}.
\]

For general $n$, these models have two types of excitations.  
The first, which we denote  $e^k, k=1,..., n-1$,
are pair-created from the ground state by the string operator $S_e^{k}(\pi)$.  
Here $\pi$ is an oriented path on the lattice running from an initial vertex $v_i$ to a final vertex $v_f$. $S_e^{k}(\pi)$ applies $X^k$ ($X^{-k}=(X^k)^\dag$) along upward and rightward- (downward and leftward)-oriented edges, creating $e^k$ ($e^{-k}$) at $v_f$ ($v_i$).
The resulting state $|\phi(k, v_i, v_f)\rangle$ hosts an $e^k$ particle (anti-particle) at the vertex $v_f$ ($v_i$),  with $A_{v_f}|\phi\rangle=\omega_n^k|\phi\rangle, A_{v_i}|\phi\rangle=\omega_n^{n-k}|\phi\rangle$, with $A_{v} |\phi \rangle = B_p |\phi \rangle = \phi \rangle$ for all $p$, and $v \neq v_i, v_f$.

The second type of excitation, which we denote $m^j$, are pair-created from the ground state by the string operator $S_m^{k}(\widetilde{\pi})$, where $\widetilde{\pi}$ is an oriented path in the dual lattice, running from an initial plaquette $p_i$ to a final plaquette $p_f$.
The operator $S_m^{k}(\widetilde{\pi})$ applies $Z^{j}$ ($Z^{-j})$ along downward and rightward- (upward and leftward)-oriented edges, creating $m^k$ ($m^{-k}$) at $p_f$ ($p_i)$.   The resulting  state $|\phi(j, p_i, p_f)\rangle$ obeys $B_{p_f|}\phi\rangle=\omega_n^j|\phi\rangle, \ B_{p_i|}\phi\rangle=\omega_n^{n-j}|\phi\rangle$, and $A_{v} |\phi \rangle = B_p |\phi \rangle = \phi \rangle$ for all $v$, and $p \neq p_i, p_f$.  
Sample string operators are shown in Figure \ref{fig:tcStrings}.

\begin{figure}[!ht]
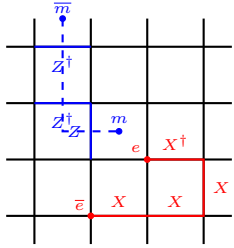

    \centering
    \[\tikzmath[scale=0.75]{
        \draw[thick] (-.5,0) -- (3.5,0);
        \draw[thick] (-.5,1) -- (3.5,1);
        \draw[thick] (-.5,2) -- (3.5,2);
        \draw[thick] (-.5,3) -- (3.5,3);
        \draw[thick] (0,-.5) -- (0,3.5);
        \draw[thick] (1,-.5) -- (1,3.5);
        \draw[thick] (2,-.5) -- (2,3.5);
        \draw[thick] (3,-.5) -- (3,3.5);
        \draw[thick,red] (1,0) -- node[above,red]{$\scriptstyle X$} (2,0) -- node[above,red]{$\scriptstyle X$} (3,0) -- node[right,red]{$\scriptstyle X$} (3,1) -- node[above,red]{$\scriptstyle X^\dag$} (2,1);
        \filldraw[red] (1,0) circle (.05cm);
        \node[red] at (.8,0.2) {$\scriptstyle\overline{e}$};
        \filldraw[red] (2,1) circle (.05cm);
        \node[red] at (1.8,1.2) {$\scriptstyle e$};
        \draw[thick,dashed,blue] (0.5,3.5) -- (0.5,1.5) -- (1.5,1.5);
        \draw[thick,blue] (0,3) -- node[below,blue]{$\scriptstyle Z^\dag$} (1,3);
        \draw[thick,blue] (0,2) -- node[below,blue]{$\scriptstyle Z^\dag$} (1,2);
        \draw[thick,blue] (1,1) -- node[left,blue]{$\scriptstyle Z$} (1,2);
        \filldraw[blue] (0.5,3.5) circle (.05cm);
        \node[blue] at (0.5,3.7) {$\scriptstyle\overline{m}$};
        \filldraw[blue] (1.5,1.5) circle (.05cm);
        \node[blue] at (1.5,1.7) {$\scriptstyle m$};
    }\]
    \caption{String operators for the particles $e$ (red) and $m$ (blue).}
    \label{fig:tcStrings}
\end{figure}

We now consider a lattice with a boundary between $\bbZ/4$ and $\bbZ/2$ toric code.
\[\tikzmath[scale=0.5]{
\draw[thick] (-1.5,0) -- (1,0);
\draw[thick] (-1.5,1) -- (1,1);
\draw[thick] (-1.5,2) -- (1,2);
\draw[thick] (-1,-.5) -- (-1,2.5);
\draw[thick] (0,-.5) -- (0,2.5);
\draw[thick] (1,-.5) -- (1,2.5);
\draw[thick,DarkGreen] (2,-.5) -- (2,2.5);
\draw[thick,DarkGreen] (3,-.5) -- (3,2.5);
\draw[thick,DarkGreen] (1,0) -- (3.5,0);
\draw[thick,DarkGreen] (1,1) -- (3.5,1);
\draw[thick,DarkGreen] (1,2) -- (3.5,2);
}\]
Above, all edges carry $\bbC^4$ spins, and we change the Hamiltonian to condense $m^2$ on the right hand side, i.e. creating the condensation boundary corresponding to $A=1\oplus m^2$.
There are multiple different ways to do this.  
For example, we may redefine the $B_p$ term in the condensed (green) region as
\begin{equation}
\label{eq:CondensedToricCodePlaquette}
B'_{\textcolor{DarkGreen}{q}}
=
\frac{1}{2}\left(1+
\tikzmath{
    \draw[DarkGreen, thick] (0,0) -- node[below]{$\scriptstyle X^2$} (.5,0) -- node[right]{$\scriptstyle X^2$} (.5,.5) -- node[above]{$\scriptstyle X^2$} (0,.5);
    \draw[thick] (0,.5) -- node[left,DarkGreen]{$\scriptstyle X^2$} (0,0);
    \node at (0.25,0.25) {$\scriptstyle{q}$};
}\right)
\end{equation}
on any plaquette $q$ containing at least one green edge, 
and choose the Hamiltonian to be
\[H=-\sum_vA_v-\sum_pB_p-\sum_{\textcolor{DarkGreen}{q}}B'_{\textcolor{DarkGreen}{q}}-K\sum_{\textcolor{DarkGreen}{\ell}}C_{\textcolor{DarkGreen}{\ell}}\text{,}\]
with 
\[C_{\textcolor{DarkGreen}{\ell}}=\frac{1}{2}\left(1+Z_{\textcolor{DarkGreen}{\ell}}^2\right) \]
on any green edge.  Notice that $[C_{\textcolor{DarkGreen}{\ell}}, B_p ] = 0$, so this Hamiltonian is frustration-free; for $K>0$ its ground state is a simultaneous eigenstate of all $A_v, B_p$ and $C_{\textcolor{DarkGreen}{\ell}}$ operators with eigenvalue $1$.  
Since $Z_{\textcolor{DarkGreen}{\ell}}^2$ is the operator which creates a pair of $m^2$ particles on the two plaquettes adjacent to $\textcolor{DarkGreen}{\ell}$, in this ground state, $m^2$ is condensed on any plaquettes with at least one green edge.
Moreover, $C_{\textcolor{DarkGreen}{\ell}}$ has the effect of confining $e$ and $e^3$ particles: since $[Z^2,X]\neq 0$, the corresponding string operators incur a finite energy cost per unit length. (All other quasiparticle string operators commute with $C_{\textcolor{DarkGreen}{\ell}}$.) 
Here we consider the limit $K\gg 1$, where the confinement scale is very short, and such $C_{\textcolor{DarkGreen}{\ell}}$- violating terms do not enter into the low energy physics.

In this case, we can tunnel an $m$ particle into the green region using the usual string operator $S_m^{k}(\tilde{\pi})$.
On the green links, $(1 +Z_{\textcolor{DarkGreen}{\ell}}^2)$ acts as the identity on states in the low-energy Hilbert space, and this string operator becomes equivalent to the string operator creating $mA\cong m\oplus m^3$.
More generally, since $Z-Z^3=Z(1-Z^2)$ and $(1-Z^2)C_{\textcolor{DarkGreen}{\ell}}=0$, any operator $T = \alpha(Z-Z^3) + \beta(Z+Z^3)$ in the linear span of $\{Z,Z^3\}$ acts in this way on the low-energy Hilbert space.
Thus $T=Z+Z^3$ is the unique choice of string operator on the green links, up to a scalar.
In other words, the set of tunneling channels $m\to mA$ contains only one element, $T$.
\[
\tikzmath{
\draw[thick] (-.5,-.25) -- (-.5,1.25);
\draw[thick] (0,-.25) -- (0,0) -- node[right, xshift=-.08cm]{$\scriptstyle Z$} (0,.5) -- (0,1.25);
\draw[thick] (.5,-.25) -- (.5,0) -- node[right, xshift=-.08cm]{$\scriptstyle Z$} (.5,.5) -- (.5,1.25);
\draw[thick,DarkGreen] (1,-.25) -- (1,0) -- node[right, xshift=-.08cm]{$\scriptstyle T$} (1,.5) -- (1,1.25);
\draw[thick,DarkGreen] (1.5,-.25) -- (1.5,0) -- node[right, xshift=-.08cm]{$\scriptstyle T$} (1.5,.5) -- (1.5,1.25);
\draw[thick] (-.75,0) -- (.5,0);
\draw[thick] (-.75,.5) -- (.5,.5);
\draw[thick] (-.75,1) -- (.5,1);
\draw[thick,DarkGreen] (.5,0) -- (1.75,0);
\draw[thick,DarkGreen] (.5,.5) -- (1.75,.5);
\draw[thick,DarkGreen] (.5,1) -- (1.75,1);
\draw[thick,dashed,blue] (-.75,.25) -- (-0.25,.25);
\filldraw[blue] (-0.25,.25) node[above]{$\scriptstyle m$} circle (0.05cm);
}
\quad
\mapsto
\quad
\tikzmath{
\draw[thick] (-.5,-.25) -- (-.5,1.25);
\draw[thick] (0,-.25) -- (0,1.25);
\draw[thick] (.5,-.25) -- (.5,1.25);
\draw[thick,DarkGreen] (1,-.25) -- (1,1.25);
\draw[thick,DarkGreen] (1.5,-.25) -- (1.5,1.25);
\draw[thick] (-.75,0) -- (.5,0);
\draw[thick] (-.75,.5) -- (.5,.5);
\draw[thick] (-.75,1) -- (.5,1);
\draw[thick,DarkGreen] (.5,0) -- (1.75,0);
\draw[thick,DarkGreen] (.5,.5) -- (1.75,.5);
\draw[thick,DarkGreen] (.5,1) -- (1.75,1);
\draw[thick,dashed,blue] (-.75,.25) -- (1.75,.25);
\filldraw[blue] (1.75,.25) node[right]{$\scriptstyle mA$} circle (0.05cm);
}
\]

To tunnel from the condensed (green) region to the left hand side, we must bring the particle $mA\cong m\oplus m^3$ across a vertex from a green link to a black one using an operator that commutes with all terms in the Hamiltonian. 
Since applying $Z^2$ to any green edge commutes with the Hamiltonian, there are two choices for the resulting anyon, corresponding to tunneling operators from $mA\cong m\oplus m^3\to m$ and from $mA\to m^3$, which differ by an application of $Z^2$ on all links the particles crosses after exiting the green region.
The resulting operators are: 
$$
\tikzmath{
\draw[thick] (-.5,-.25) -- (-.5,1.25);
\draw[thick] (0,-.25) -- (0,1.25);
\draw[thick] (.5,-.25) -- (.5,1.25);
\draw[thick,DarkGreen] (1,-.25) -- (1,1.25);
\draw[thick,DarkGreen] (1.5,-.25) -- (1.5,1.25);
\draw[thick] (-.75,0) -- (.5,0);
\draw[thick] (-.75,.5) -- (.5,.5);
\draw[thick] (-.75,1) -- (.5,1);
\draw[thick,DarkGreen] (.5,0) -- (1.75,0);
\draw[thick,DarkGreen] (.5,.5) -- (1.75,.5);
\draw[thick,DarkGreen] (.5,1) -- (1.75,1);
\draw[thick,dashed,blue] (-.25,.25) -- (2.25,.25);
\draw[thick,dashed,blue] (-.25,.75) -- (2.25,.75);
\filldraw[blue] (-.25,.25) node[above]{$\scriptstyle m$} circle (0.05cm);
\filldraw[blue] (-.25,.75) node[above]{$\scriptstyle m^3$} circle (0.05cm);
}
\quad\mapsfrom\quad
\tikzmath{
\draw[thick] (-.5,-.25) -- (-.5,1.25);
\draw[thick] (0,-.25) -- (0,0) -- node[left, xshift=.08cm]{$\scriptstyle Z$} (0,.5) -- node[left, xshift=.08cm]{$\scriptstyle Z^3$} (0,1) -- (0,1.25);
\draw[thick] (.5,-.25) -- (.5,0) -- node[left, xshift=.08cm]{$\scriptstyle Z$} (.5,.5) -- node[left, xshift=.08cm]{$\scriptstyle Z^3$} (.5,1) -- (.5,1.25);
\draw[thick,DarkGreen] (1,-.25) -- (1,0) -- node[left, xshift=.08cm]{$\scriptstyle T$} (1,.5) -- node[left, xshift=.08cm]{$\scriptstyle T$} (1,1) -- (1,1.25);
\draw[thick,DarkGreen] (1.5,-.25) -- (1.5,0) -- node[left, xshift=.08cm]{$\scriptstyle T$} (1.5,.5) -- node[left, xshift=.08cm]{$\scriptstyle T$} (1.5,1) -- (1.5,1.25);
\draw[thick] (-.75,0) -- (.5,0);
\draw[thick] (-.75,.5) -- (.5,.5);
\draw[thick] (-.75,1) -- (.5,1);
\draw[thick,DarkGreen] (.5,0) -- (1.75,0);
\draw[thick,DarkGreen] (.5,.5) -- (1.75,.5);
\draw[thick,DarkGreen] (.5,1) -- (1.75,1);
\draw[thick,dashed,blue] (1.75,.25) -- (2.25,.25);
\draw[thick,dashed,blue] (1.75,.75) -- (2.25,.75);
\filldraw[blue] (1.75,.25) node[above, xshift=.3cm]{$\scriptstyle mA$} circle (0.05cm);
\filldraw[blue] (1.75,.75) node[above, xshift=.3cm]{$\scriptstyle mA$} circle (0.05cm);
}
$$
Since the space of tunneling operators in each case is 1-dimensional, each of these choices is unique up to a scalar.
In other words, there are two $1$-element sets of tunneling channels: one from $mA\to m$,  and one from $mA\to m^3$.

One may also describe an antiferromagnetic condensed region, by instead choosing\footnote{A more general discussion of antiferromagnetic couplings, including the relationship to $Z(\Ising)$ topological order, appears in \cite{PhysRevB.94.165110}.}
\[C_{\textcolor{DarkGreen}{\ell}}'=\frac{1}{2}\left(1-Z_{\textcolor{DarkGreen}{\ell}}^2\right).\]
The two choices of $C_{\textcolor{DarkGreen}{\ell}}$ above are orthogonal projections, with $C_{\textcolor{DarkGreen}{\ell}}+C_{\textcolor{DarkGreen}{\ell}}'=1$.
The eigenspaces of each $C_{\textcolor{DarkGreen}{\ell}}$ can exchanged by applying $X$ or $X^3$ at ${\textcolor{DarkGreen}{\ell}}$.
Since $(1+Z^2)C_{\textcolor{DarkGreen}{\ell}}=0$, the string operator acting on green edges is now $T=Z-Z^3$.  Again, we obtain two single-element sets of tunneling channels, one from $m\to mA$, and one from $mA\to m$, and $mA\to m^3$, each unique up to phase.

The analysis of tunneling operators and channels for other anyons is completely analogous.
For each anyon $c\in\Irr(\cD(\mathbb{Z}/4))$, there are a unique (up to phase) tunneling channels $c\to cA\to c$, and $c\to cA\to cm^2$.
Tunneling operators $cA\to c$ and $cA\to cm^2$ again differ by applying the operator $Z^2$ on a string of black edges in the dual lattice.

In string-net models for (2+1)D topological orders \cite{PhysRevB.71.045110,MR2942952,PhysRevB.90.115119,PhysRevB.103.195155}, 
\emph{string operators} create pairs of excitations,
while \emph{hopping operators} \cite{PhysRevB.97.195154} move anyons adiabatically around the system;
in Abelian models, these operators coincide.
The present discussion illustrates the general fact that in string net models, tunneling operators are closely related to the hopping operators which move an existing excitation from one place to another.
Indeed, tunneling channels can be viewed as hopping operators which end on different sides of a domain wall;  similarly the set of tunneling channels $c\to c$ through the trivial domain wall contains a single element, the hopping operator for $c$.
In the example at hand, the operator $T$ applied at the boundary of the green and black regions is also simply the hopping operator for an $mA$ excitation in the green bulk region; this coincidence is an artifact of working with Abelian anyons.
\end{example}

\subsection{Elementary tunneling channels}
\label{sec:ElementaryTunneling}

In this section, we will work out sets of elementary tunneling channels, i.e. sets of tunneling channels through elementary domain walls.
Tunneling operators through an invertible domain wall are simple to understand, because locally, an invertible domain wall does nothing more than relabel anyons.
By Remark \ref{rem:droplet}, we see that tunneling operators $c\to d$ through an invertible domain wall $\cM$ which applies the braided tensor autoequivalence $\Phi:\mathcal{C}\to\mathcal{D}$ live in a space isomorphic to
the hom space (see Footnote \ref{foot:homSpaces})
$\cD(\Phi(c)\to d)$, which is either $0$ or $1$-dimensional, depending on whether $\Phi(c)$ and $d$ are the same anyon type.
In other words, surrounding an anyon by $\cM$ to create a `droplet' amounts to applying the functor $\Phi$, and droplets can be freely attached to and detached from the domain wall.
Thus,
\[
\tikzmath{
\pgfmathsetmacro{\radius}{.8};
\fill[red!30] (0,-\radius) arc (270:90:\radius);
\fill[blue!30] (0,-\radius) arc (-90:90:\radius);
\draw[thick] (0,-\radius) -- node[right]{$\scriptstyle \cM$} (0,\radius);
\draw[dashed] (0,0) circle (\radius);
\filldraw[red] ($ -.5*(\radius,0) $) node[below]{$\scriptstyle c$} circle (.05cm);
}
\cong
\tikzmath{
\pgfmathsetmacro{\radius}{1};
\fill[red!30] (0,-\radius) arc (270:90:\radius);
\fill[blue!30] ($ -1*(0,\radius) $) arc (-90:90:\radius);
\draw[thick] ($ -1*(0,\radius) $) -- node[left]{$\scriptstyle \cM$} (0,\radius);
\draw[dashed] (0,0) circle (\radius);
\draw[black,thick, fill=red!30] ($.5*(\radius,0)$) circle (.3cm);
\filldraw[red] ($ .5*(\radius,0) $) node[below]{$\scriptstyle c$} circle (.05cm);
}
\cong
\tikzmath{
\pgfmathsetmacro{\radius}{.8};
\fill[red!30] (0,-\radius) arc (270:90:\radius);
\fill[blue!30] ($ -1*(0,\radius) $) arc (-90:90:\radius);
\draw[thick] ($ -1*(0,\radius) $) -- node[left]{$\scriptstyle \cM$} (0,\radius);
\draw[dashed] (0,0) circle (\radius);
\filldraw[blue] ($ .5*(\radius,0) $) node[below]{$\scriptstyle \Phi(c)$} circle (.05cm);
}\,,
\]
and there is a unique choice of tunneling channel $c\to\Phi(c)$ up to a phase.

Next, we will consider the elementary tunneling channels through a domain wall corresponding to a condensation.
We begin with domain walls obtained by condensation, as in Example \ref{egs:condensationBoundary}. 
Suppose we start with $\cC$ topological order and condense the algebra $A\in\cC$, so that wall excitations are described by the Witt equivalence $\cC_A$ from $\cC$ to $\cC_A^{\loc}$.
On the domain wall, particles can fuse freely with the $A$ condensate by local operators.
If $M_A^\circ\in\Irr(\mathcal{C}_A^{\loc})$ is an anyon in the condensed phase,\footnote{Here, the anyon $M$ is given the subscript $A$ to remind the reader that $M$ has a right $A$-action.
Similarly, the unit of $\mathcal{C}_A^{\loc}$ is denoted by $A^\circ_A$.
Even though $M_A$ is an anyon, we denote it by an uppercase letter, to remind ourselves that the underlying collection of anyons $M$ can contain more than one summand.} 
then the elementary tunneling channels
can be chosen from the set of morphisms which take $c$ through the wall to become $M_A^\circ$: 
\begin{equation}
\label{eq:TunnelingOpCtoCA}
\set{
v\in\mathcal{C}_A(cA_A\to M_A^\circ)
}{
vv^\dag=\id_{M_A^\circ}
}.
\end{equation}
Here, $\cC_A(cA_A\to M_A^\circ)$ is a hom-space\textsuperscript{\ref{foot:homSpaces}} in the fusion category $\cC_A$.
This set spans the space of tunneling operators, but is overcomplete as a basis, because not all the elements are orthogonal.
Thus, a set of tunneling channels is just a maximal subset $\{v_i\}$ of the set \eqref{eq:TunnelingOpCtoCA} satisfying $v_j^\dag v_i=0$ for $j\neq i$.

Similarly, bringing a particle $M_A^\circ\in\Irr(\mathcal{C}_A^{\loc})$ out of the condensate corresponds to applying the functor $\cC_A^{\loc}\to\cC:M_A^\circ\mapsto M$ which forgets the $A$ action.
The resulting bulk excitation in the uncondensed region can in general be viewed as a direct sum of different anyon types in $\cC$; because this direct sum is not a simple object, we denote it with a capital letter.
An elementary tunneling channel which brings $M_A^\circ$ across the domain wall to become a particular anyon $c$ in this direct sum corresponds to a choice  
\begin{equation}
\label{eq:TunnelingOpCAtoC}
\set{w\in\mathcal{C}(M\to c)}{ww^\dag = \id_c}.
\end{equation}
Again, a set of tunneling channels is just a maximal orthogonal set of members of \eqref{eq:TunnelingOpCAtoC}.

In order to see how the choices of operators $v$ and $w$ in equations \eqref{eq:TunnelingOpCtoCA} and \eqref{eq:TunnelingOpCAtoC} are related explicitly, recall that the free module functor is adjoint to the forgetful functor $\mathcal{C}_A\to\mathcal{C}$, which forgets about $A$-module structures, taking an $A$-module $M_A$ to the object $M\in\cC$.
(Here, we drop the $^\circ$, since the fact that $M_A^\circ$ is a {\it local} $A$-module is not needed.)
This means that we have isomorphisms of vector spaces
\begin{equation}
\mathcal{C}_A(cA_A\to M_A)
\cong
\mathcal{C}(c\to M) 
\cong\mathcal{C}(M\to c)
\label{eq:KOiso}
\end{equation}
where the first isomorphism is from \cite[Fig.~4]{MR1936496} and the second isomorphism is the antilinear $\dag$ operation.

Because $c\in \Irr(\cC)$ and $M_A^\circ\in \Irr(\cC^{\loc}_A)$ are simple objects, the isomorphisms \eqref{eq:KOiso} allows us identify the possible choices of $v$ in \eqref{eq:TunnelingOpCtoCA} with those of $w$ in \eqref{eq:TunnelingOpCAtoC}, up to some strictly positive scalar depending on $c,A\in \cC$ and $M_A^\circ\in \cC^{\loc}_A$.
In other words, sets of tunneling channels from $c$ to $M_A^\circ$ or from $M_A^\circ$ to $c$ correspond to orthonormal bases of $\cC(c\to M)$ and $\cC(M\to c)$ respectively.
Notice that these general statements about tunneling channels through an elementary domain wall with condensation match with our observations in Example \ref{egs:tunnelingToric}.

\begin{rem}
\label{rem:oneWallMischief}
We have seen that if $c$ and $d$ are equivalent after fusing with the condensate $A$, then a $c$ particle can enter the condensed region, and return as a $d$ particle.  A natural related question is whether, in the case where $c$ splits to $\sum_i (M_i)_A$ in the condensed region, particles from the condensed region can enter the uncondensed region and return as a different anyon type.
In particular, if $A\in\cC$ is a condensable algebra in a UMTC, and $c\in\Irr(\cC)$ is an anyon in the uncondensed region, and $M_A$ and $M'_A$ are distinct anyon types which appear as summands of $cA_A$, one might suspect that an $M_A$-particle can be brought across the domain wall as $c$, and then back into the condensate as $M'_A$.

It turns out that this is not the case; in the absence of other excitations, $M_A$ can only tunnel through the uncondensed region and back to become $M_A$.
This is because a nonzero operator which took $M_A$ to $M'_A$ could be deformed smoothly into a local operator on the boundary, where wall excitations are given by $\cC_A$.
But there is no non-zero operator local near the wall that can turn $M_A$ into $M'_A$, since they are distinct simple objects in the Witt equivalence of wall excitations.

However, it is sometimes possible for $M_A$ to enter the uncondensed region as $c$, braid around an anyon $c'$ which is confined by the wall, and return as $M'_A$.
We will explore this in detail at the end of \S~\ref{ssec:tunnelingOpAction} below.
\end{rem}

\subsection{Composition of tunneling operators}
\label{sec:CompositeTunneling}
Having described elementary tunneling operators, we now discuss how one composes tunneling operators through a composite domain wall.
As an application, we can characterize tunneling channels through an indecomposable domain wall using its horizontal decomposition from Figure \ref{fig:factoredWall}.
The main result of this section is Proposition \ref{prop:tunnelingMultiple}, which says that a set of tunneling channels through the composition of two domain walls can be obtained by composing the members of sets of tunneling channels through each wall.

\begin{prop}
 \label{prop:tunnelingMultiple}
 Suppose ${}_{\cX}\cM_{\cY}$ and ${}_{\cY}\cN_{\cZ}$ are $\cA$-enriched bimodules.
 For each $c\in\Irr(Z^{\cA}(\cX))$, $d\in\Irr(Z^{\cA}(\cY))$, and $e\in\Irr(Z^{\cA}(\cZ))$, let $T_{c\to d}$ be a set of tunneling channels $c\to d$, and $T_{d\to e}$ be a set of tunneling channels $d\to e$.
 Then the set
 \begin{equation}
 \label{eq:ComposeTunneling}
 \hspace{-.3cm}
 \resizebox{.43\textwidth}{!}{$
 \cup_d\set{(\id_{\cM}\boxtimes u)\circ(v\boxtimes\id_{\cN})}{u\in T_{d\to e},v\in T_{c\to d}}
 $}
 \end{equation}
 of horizontal composites of tunneling channels from $T_{c\to d}$ and $T_{d\to e}$
 is a set of tunneling channels $c\to e$ across ${}_{\cX}\cM\xF_{\cY}\cN_{\cZ}$.
\end{prop}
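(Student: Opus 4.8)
The plan is to verify directly that the composites in \eqref{eq:ComposeTunneling} form a maximal set of orthogonal partial isometries in the space of tunneling operators $c\to e$ across ${}_{\cX}\cM\xF_{\cY}\cN_{\cZ}$, using the description of that space via the Witt equivalence of wall excitations. First I would recall from \S\ref{sec:Tunneling} that the space of tunneling operators $c \to e$ across the composite is $\bigoplus_{p \in \Irr(\cW)} \cW(c \rhd 1_{\cW} \to p) \otimes \cW(p \to 1_{\cW} \lhd e)$ where $\cW := \End^{\cA}_{\cX-\cZ}(\cM \xF_\cY \cN)$, and that by \eqref{eq:EnrichedCenterFunctorial} we have $\cW \cong \End^{\cA}_{\cX-\cY}(\cM) \xBh_{Z^{\cA}(\cY)} \End^{\cA}_{\cX-\cY}(\cN)$. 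The key structural fact I would extract is that, writing $\cW_{\cM} := \End^{\cA}_{\cX-\cY}(\cM)$ and $\cW_{\cN} := \End^{\cA}_{\cY-\cZ}(\cN)$, the simple objects $p$ of $\cW$ that appear in $c \rhd 1_\cW$ are built from pairs $(m, n)$ with $m \in \Irr(\cW_{\cM})$ appearing in $c \rhd 1_{\cW_\cM}$ and $n \in \Irr(\cW_{\cN})$, glued along a simple object $d \in \Irr(Z^{\cA}(\cY))$ through which $m$ and $n$ both receive an action; this is exactly the ladder/relative-Deligne-product bookkeeping. Under this identification, a composite $(\id_\cM \boxtimes u) \circ (v \boxtimes \id_\cN)$ with $v$ factoring through $m$ (over $d$) and $u$ factoring through $n$ (over $d$) becomes the elementary partial isometry supported on the single simple summand of $\cW$ determined by $(m,n,d)$.

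The main steps, in order: (1) Establish that the map sending a pair $(v, u) \in T_{c\to d} \times T_{d \to e}$ to the composite in \eqref{eq:ComposeTunneling} is well-defined and lands in the space of tunneling operators $c \to e$ — this is the graphical-calculus computation of \eqref{eq:ComposeTunnelingVisual}, where a $d$-labeled point defect in the middle $\cY$ region is created and then resolved through both walls. (2) Show orthogonality: given $(v,u)$ and $(v',u')$ with $(v,u) \neq (v',u')$, compute $\big((\id_\cM \boxtimes u') \circ (v' \boxtimes \id_\cN)\big)^\dagger \circ \big((\id_\cM \boxtimes u) \circ (v \boxtimes \id_\cN)\big)$. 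Here I would use that $\dagger$ distributes over $\xF$ and $\boxtimes$, and that within each wall $v'^\dagger v$ and $u'^\dagger u$ are proportional to identities-or-zero by the partial-isometry conditions on $T_{c\to d}$ and $T_{d\to e}$; the cross terms vanish because distinct tunneling channels through a single wall are orthogonal (including the case where $v, v'$ factor through the same $m$ but different $d$, since those land in distinct simple summands of $\cW_\cM$ after accounting for the $Z^{\cA}(\cY)$-action data). Each composite is then seen to be a partial isometry by the same bookkeeping. (3) Prove maximality/spanning: count dimensions. The space of tunneling operators $c \to e$ across the composite has dimension $\sum_{p \in \Irr(\cW)} (\dim \cW(c \rhd 1_\cW \to p))(\dim \cW(p \to 1_\cW \lhd e))$, and I would show this equals $\sum_d \#T_{c\to d} \cdot \#T_{d\to e}$ by decomposing $\Irr(\cW)$ according to the triples $(m,n,d)$ and using that $\#T_{c\to d} = \sum_m (\text{mult of } m \text{ in } c\rhd 1)(\text{mult of } m \text{ in } 1 \lhd d)$ and similarly for $\cN$, together with the relative Deligne product decomposition $\cW(c \rhd 1 \to p) \cong \bigoplus_d \cW_\cM(c\rhd 1 \to m)\otimes_{(\text{something over }d)}\cW_\cN(\cdots)$.

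The step I expect to be the main obstacle is (3), specifically getting the dimension count and the simple-object bookkeeping of $\cW \cong \cW_\cM \xBh_{Z^{\cA}(\cY)} \cW_\cN$ exactly right — one must be careful that summing over $d \in \Irr(Z^{\cA}(\cY))$ with the correct multiplicity spaces reproduces $\Irr(\cW)$ without over- or under-counting, and that the partial-isometry normalizations compose correctly (the positive scalars relating the various hom-space isomorphisms, cf.\ \eqref{eq:KOiso}, must be tracked so that the composite is genuinely a partial isometry and not merely a nonzero scalar multiple of one). A cleaner alternative I would consider for (3), avoiding explicit dimension counting, is to argue abstractly: the composition-of-tunneling map corresponds under \eqref{eq:EnrichedCenterFunctorial} to the multiplication $\cW_\cM \times \cW_\cN \to \cW_\cM \xBh_{Z^{\cA}(\cY)} \cW_\cN$ applied to morphism spaces, and since $\xBh$ over $Z^{\cA}(\cY)$ is exactly the relative Deligne product, every simple summand of $\cW$ (hence every elementary partial isometry of the composite tunneling space) arises from such a pair — this is essentially the statement that the relative Deligne product has no ``exotic'' simple objects beyond those built from pairs over $\Irr(Z^{\cA}(\cY))$, which is standard.
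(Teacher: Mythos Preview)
Your steps (1) and (2) are sound, and you correctly identify step (3) as the crux. However, both of your proposed attacks on (3) have a real gap. The description of $\Irr(\cW)$ for $\cW \cong \cW_{\cM}\xBh_{Z^{\cA}(\cY)}\cW_{\cN}$ as triples $(m,n,d)$ is not correct: simple objects of a relative Deligne product (equivalently, simple $K_{Z^{\cA}(\cY)}$-modules in $\cW_{\cM}\boxtimes\cW_{\cN}$) are \emph{not} parametrized by such data in general, and your ``abstract'' alternative---that the relative product has no exotic simples beyond pairs glued over $d$---is not a standard fact (what \emph{is} standard is dominance of the free module functor, which only says every simple is a summand of some $m\boxtimes n$, not that simples are labeled by triples). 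So the dimension count you outline does not go through as written, and the bookkeeping you worry about is not just a matter of tracking scalars but of having the wrong index set.

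The paper sidesteps this entirely by a different and much shorter route: rather than working in $\cW$, it uses 3-dualizability (two applications of Remark~\ref{rem:droplet}) to translate the tunneling hom-space $c\to e$ across the composite wall into a hom-space \emph{inside the UMTC} $Z^{\cA}(\cY)$, namely between the ``$\cM$-loop containing $c$'' object and the ``$\cN$-loop containing $e$'' object. Once you are in a semisimple category, every morphism factors through simples $d\in\Irr(Z^{\cA}(\cY))$ automatically, and by the same dualizability the $d$-components are exactly the individual tunneling spaces through $\cM$ and $\cN$. This gives both the nonvanishing of composites and the dimension identity $\dim(\text{tunneling }c\to e) = \sum_d \#T_{c\to d}\cdot\#T_{d\to e}$ in one stroke, with no need to analyze $\Irr(\cW)$ at all. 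Your approach trades a short dualizability argument in $Z^{\cA}(\cY)$ for a structural analysis of $\cW$ that is harder than the proposition itself.
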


\begin{proof}
By 3-dualizability, i.e.~by two applications of Remark~\ref{rem:droplet}, we have an isomorphism from the hom-space
$$
\Hom_{\UFC^\cA}
\left(\,
\tikzmath{
\pgfmathsetmacro{\radius}{.8};
\begin{scope}
\clip[rounded corners=5pt] (-\radius,-\radius) rectangle (\radius,\radius); 
\fill[red!30] (-\radius,-\radius) rectangle ($ .4*(-\radius,0) + (0,\radius) $);
\fill[blue!30] ($ .4*(-\radius,0) + (0,\radius) $) rectangle ($ .4*(\radius,0) + (0,-\radius) $);
\fill[orange!30] ($ .4*(\radius,0) + (0,-\radius) $) rectangle (\radius,\radius);
\end{scope}
\draw[thick] ($ .4*(-\radius,0) + (0,-\radius) $) -- node[left, yshift=.4cm]{$\scriptstyle \cM$} ($ .4*(-\radius,0) + (0,\radius) $);
\draw[thick] ($ .4*(\radius,0) + (0,-\radius) $) -- node[right,yshift=-.4cm]{$\scriptstyle \cN$} ($ .4*(\radius,0) + (0,\radius) $);
\draw[dashed, rounded corners=5pt] (-\radius,-\radius) rectangle (\radius,\radius);
\filldraw[red] ($ -.7*(\radius,0) $) node[below]{$\scriptstyle c$} circle (.05cm);
}
\longrightarrow
\tikzmath{
\pgfmathsetmacro{\radius}{.8};
\begin{scope}
\clip[rounded corners=5pt] (-\radius,-\radius) rectangle (\radius,\radius); 
\fill[red!30] (-\radius,-\radius) rectangle ($ .4*(-\radius,0) + (0,\radius) $);
\fill[blue!30] ($ .4*(-\radius,0) + (0,\radius) $) rectangle ($ .4*(\radius,0) + (0,-\radius) $);
\fill[orange!30] ($ .4*(\radius,0) + (0,-\radius) $) rectangle (\radius,\radius);
\end{scope}
\draw[thick] ($ .4*(-\radius,0) + (0,-\radius) $) -- node[left, yshift=.4cm]{$\scriptstyle \cM$} ($ .4*(-\radius,0) + (0,\radius) $);
\draw[thick] ($ .4*(\radius,0) + (0,-\radius) $) -- node[right,yshift=-.4cm]{$\scriptstyle \cN$} ($ .4*(\radius,0) + (0,\radius) $);
\draw[dashed, rounded corners=5pt] (-\radius,-\radius) rectangle (\radius,\radius);
\filldraw[orange] ($ .7*(\radius,0) $) node[below]{$\scriptstyle e$} circle (.05cm);
}
\,
\right)
$$
to the hom-space
\begin{equation}
\label{eq:DoubleCircleHomSpace}
\Hom
\left(
\tikzmath{
\pgfmathsetmacro{\radius}{.8};
\filldraw[fill=blue!30,dashed] (0,0) circle (\radius);
\filldraw[fill=red!30,thick] (0,0) circle (.5*\radius);
\filldraw[red] (0,0) node[below]{$\scriptstyle c$} circle (.05cm);
\node at ($ .75*(\radius,0) $) {$\scriptstyle \cM$};
}
\longrightarrow
\tikzmath{
\pgfmathsetmacro{\radius}{.8};
\filldraw[fill=blue!30,dashed] (0,0) circle (\radius);
\filldraw[fill=orange!30,thick] (0,0) circle (.5*\radius);
\filldraw[orange] (0,0) node[below]{$\scriptstyle e$} circle (.05cm);
\node at ($ -.75*(\radius,0) $) {$\scriptstyle \cN$};
}
\right).
\end{equation}
As this latter hom-space \eqref{eq:DoubleCircleHomSpace} is contained in the UMTC $\cZ^\cA(\cY)$ which is semisimple, any morphism in this hom-space factors through a simple object/anyon in $Z^\cA(\cY)$.
This fact has the following two important consequences:
\begin{enumerate}[(1)]
\item 
composites of nonzero morphisms in the hom-spaces
\begin{equation}
\label{eq:FoldDownHomSpace}
\Hom
\left(
\tikzmath{
\pgfmathsetmacro{\radius}{.8};
\filldraw[fill=blue!30,dashed] (0,0) circle (\radius);
\filldraw[fill=red!30,thick] (0,0) circle (.5*\radius);
\filldraw[red] (0,0) node[below]{$\scriptstyle c$} circle (.05cm);
\node at ($ .75*(\radius,0) $) {$\scriptstyle \cM$};
}
\longrightarrow
\tikzmath{
\pgfmathsetmacro{\radius}{.8};
\filldraw[fill=blue!30,dashed] (0,0) circle (\radius);
\filldraw[blue] (0,0) node[below]{$\scriptstyle d$} circle (.05cm);
}
\right)
\end{equation}
and
\begin{equation}
\label{eq:FoldUpHomSpace}
\Hom
\left(
\tikzmath{
\pgfmathsetmacro{\radius}{.8};
\filldraw[fill=blue!30,dashed] (0,0) circle (\radius);
\filldraw[blue] (0,0) node[below]{$\scriptstyle d$} circle (.05cm);
}
\longrightarrow
\tikzmath{
\pgfmathsetmacro{\radius}{.8};
\filldraw[fill=blue!30,dashed] (0,0) circle (\radius);
\filldraw[fill=orange!30,thick] (0,0) circle (.5*\radius);
\filldraw[orange] (0,0) node[below]{$\scriptstyle e$} circle (.05cm);
\node at ($ -.75*(\radius,0) $) {$\scriptstyle \cN$};
}
\right)
\end{equation}
are non-zero, and
\item
the dimension of the hom-space \eqref{eq:DoubleCircleHomSpace} is the product of the dimensions of the two hom-spaces \eqref{eq:FoldDownHomSpace} and \eqref{eq:FoldUpHomSpace}.
\end{enumerate}
By (1) above, the elements of \eqref{eq:ComposeTunneling} are non-zero partial isometries, and by (2) above, they are a complete set of tunneling operators $c\to e$ across ${}_\cX\cM\xF_\cY \cN_\cZ$, as claimed.
\end{proof}

\begin{example}
\label{egs:tunnelingCompositionIndecomp}
We now use this proposition to characterize tunneling operators through an indecomposable domain wall by considering the domain wall as a concatenation of elementary domain walls from \S~\ref{ssec:wallsMath}.

Let $\cX$ and $\cY$ be $\cA$-enriched fusion categories, with $\cM$ an $\cA$-enriched $\cX-\cY$ bimodule category corresponding to the Lagrangian algebra $L(A,B,\Phi)$.
Then $\cM$ is the composition of three boundaries: the condensation boundary which condenses $A$, i.e. the $\cX-\cX_A$ bimodule $\cX_A$, a bimodule category $\cI$ which is a Morita equivalence between $\cX_A$ and ${}_B\cY$, and the condensation boundary which condensed $B$, which is the ${}_B\cY-\cY$ bimodule ${}_B\cY$.
Categories of excitations on the three boundaries are $Z^{\cA}(\cX)_A$, $Z^{\cA}(\cX)_A^{\loc}\cong Z^{\cA}(\cY)_B^{\loc}$, and ${}_BZ^{\cA}(\cY)$, respectively.

Suppose $c\in\Irr(Z^{\cA}(\cX))$ and $d\in\Irr(Z^{\cA}(\cY))$ are two anyon types.
Tunneling operators across the three individual boundaries were explained in \S~\ref{sec:ElementaryTunneling} above.
For a local module $M_A^\circ\in\Irr(Z^{\cA}(\cX)_A^{\loc})$, tunneling operators across the condensation boundary for $A$ taking $c\to M_A^\circ$ are morphisms in $Z^{\cA}(\cX)_A(cA_A\to M_A)$; similarly, for $N_B^\circ\in\Irr(Z^{\cA}(\cY)_B^{\loc})$, tunneling operators across the condensation boundary for $B$ taking $N_B^\circ\to d$ are morphisms in $Z^{\cA}(\cY)_B(N_B\to dB_B)$.
Finally, there are unique tunneling channels, and $1$-dimensional spaces of tunneling operators, $M_A^\circ\to\Phi(M_A^\circ)$ across the invertible middle domain wall.
Thus, we see that the spaces of tunneling operators $c\to d$ across the indecomposable domain wall $\cM$ are
\[
\resizebox{.47\textwidth}{!}{$
\begin{aligned}
\bigoplus_{M_A^\circ\in\Irr(Z^{\cA}(\cX)_A^{\loc})}
Z^{\cA}(\cX)_A&(cA_A\to M_A)
\otimes Z^{\cA}(\cY)_B(\Phi(M_A)\to dB_B)
\\\cong& Z^{\cA}(\cY)_B^{\loc}(\Phi(\ell(cA_A))\to \ell(dB_B))\text{,}
\end{aligned}
$}\]
where $\ell(\cdot)$ denotes the local part of a module.

Moreover, the tunneling channels $c\to d$ have a straightforward description, as the composite of tunneling channels across all three boundaries.
The choice of a tunneling channel $c\to M_A^\circ$ corresponds to choosing an isometry $c\to M$, i.e.~including $c$ as a summand of the direct sum $M$.
There is a unique channel across the middle invertible boundary, which relabels the anyon $M_A^\circ$ as $\Phi(M_A^\circ)$.
Finally, picking a tunneling channel $\Phi(M_A^\circ)\to d$ amounts to choosing a coisometry $\Phi(M_A^\circ)\to d$, i.e.~projecting onto a particular summand of $\Phi(M_A^\circ)$ which is of type $d$.
(Here, we forget the object $\Phi(M_A^\circ)$ into $\cD$ when choosing a coisometry.)
\end{example}

\subsection{Tunneling operators and the decomposition of parallel domain walls}
\label{ssec:tunnelingOpAction}
 In addition to Example~\ref{egs:tunnelingCompositionIndecomp},
 we can also consider the case of a domain wall between two bulk topological orders, where each is obtained by condensing a third topological order.
 This is the context in which gapped boundaries were discussed in \cite[\S~3.2]{MR3246855}.
 In other words, given a bulk topological order with UMTC $\cC$ and two condensable algebras $A,B\in\cC$, one can obtain a Witt equivalence between $\cC_A^{\loc}$ and $\cC_B^{\loc}$ by composing the Witt equivalences $\cC_A^{\loc}\to\cC\to\cC_B^{\loc}$.
 In terms of enriched fusion categories, if $\cC\cong Z^{\cA}(\cX)$, then one composes the ${}_A\cX-\cX$ bimodule ${}_A\cX$ and the $\cX-\cX_B$ bimodule $\cX_B$.
 Here, we will see that methods similar to those used in proving Theorem \ref{thm:DecomposeCompositeDomainWall} give a method to explicitly compute the tunneling operators for each summand of a composite domain wall. We will carry out these computations in several examples in \S~\ref{sec:examples} below.
 
 \begin{rem}
  \label{rem:whyDoubleCondensation}
  One reason to look specifically at examples involving the horizontal composition ${}_A\cX\xF_\cX\cX_B$ is that, according to Theorem~\ref{thm:DecomposeCompositeDomainWall}, the ground state degeneracy which leads to parallel domain walls decomposing into superselection sectors depends only on the algebras of anyons from the middle bulk which condense on each domain wall.
  Therefore, this special case already captures all the possible complexity of the interaction between the decomposition of a composite domain wall into superselection sectors and the composition of tunneling operators across the individual domain walls to give tunneling operators across the composite wall.
  If we want to analyze the composition of an arbitrary pair of indecomposable domain walls, we only need to compose a wall of the form ${}_A\cX\xF_\cX\cX_B$ with other elementary domain walls in a way that does not contribute additional degeneracy.
 \end{rem}
 
 By Proposition~\ref{prop:tunnelingMultiple}, if $M_A^\circ\in\Irr(Z^{\cA}(\cX)_A^{\loc})$ and $N_B^\circ\in\Irr(Z^{\cA}(\cX)_B^{\loc})$, then tunneling channels $M_A^\circ\to N_B^\circ$ across the composite domain wall are compositions of tunneling channels $M_A^\circ\to c$ and $c\to N_B^\circ$ for $c\in\Irr(Z^{\cA}(\cX))$.
 The space of such compositions is just
 \begin{equation}
  \label{eq:TODecomp}
  \bigoplus_{c\in\Irr(Z^{\cA}(\cX))}Z^{\cA}(\cX)(M\to c)\otimes Z^{\cA}(\cX)(c\to N)
 \end{equation}
 or (by composing the two tensor factors) $Z^{\cA}(\cX)(M\to N)$.
 Note the similarity to \eqref{eq:convAlgDerivation}, which described operators local to the $\cX$ region, connecting left and right condensates.  Here, the condensates have been replaced with modules over those condensates, because we wish to consider processes that move non-trivial anyons from $_A\cX$ to $\cX_B$.
 An immediate consequence is that, if an anyon $c\in\Irr(Z^{\cA}(\cX))$ splits at one or both domain walls so that $cA_A$ or $cB_B$ has multiple local summands, then there are nonzero tunneling operators taking any local summand of $cA_A$ to any local summand of $cB_B$.
 
 On the other hand, as we saw in Theorem \ref{thm:DecomposeCompositeDomainWall}, composite domain walls of this form are in general decomposable, with superselection sectors corresponding to minimal projections in the algebra $(Z^{\cA}(\cX)(A\to B),\star)$.
 From Definition \ref{defn:tunnelingOperator}, it is clear that
 a set of tunneling channels through a direct sum of domain walls $\cW=\bigoplus_i\cW_i$ is a disjoint union of sets of tunneling channels for each summand $\cW_i$.
 We would therefore like to determine which tunneling channels correspond to each superselection sector.

In order to do so, we observe that the space of tunneling operators $Z^{\cA}(\cX)(M\to N)$ carries an action of the algebra $(Z^{\cA}(\cX)(A\to B),\star)$ (see Definition~\ref{defn:convolutionMult}), given by
\begin{equation}
\label{eq:convolutionAction}
\tikzmath{
\draw[thick, orange] (0,-.7) --node[right]{$\scriptstyle M$} (0,-.3);
\draw[thick, DarkGreen] (0,.7) --node[right]{$\scriptstyle N$} (0,.3);
\roundNbox{fill=white}{(0,0)}{.3}{0}{0}{$f$}
}
\lhd\,\,
\tikzmath{
\draw[thick, red] (0,-.7) --node[right]{$\scriptstyle A$} (0,-.3);
\draw[thick, blue] (0,.7) --node[right]{$\scriptstyle B$} (0,.3);
\roundNbox{fill=white}{(0,0)}{.3}{0}{0}{$\phi$}
}
:=
\tikzmath{
\draw[thick, orange] (0,-1) --node[left]{$\scriptstyle M$} (0,-.3);
\draw[thick, DarkGreen] (0,1) --node[left]{$\scriptstyle N$} (0,.3);
\draw[thick, red] (.8,-.3) node[right, yshift=-.2cm]{$\scriptstyle A$} arc (0:-90:.8 and .5);
\draw[thick, blue] (.8,.3) node[right, yshift=.2cm]{$\scriptstyle B$} arc (0:90:.8 and .5);
\filldraw[DarkGreen] (0,.8) circle (.05cm);
\filldraw[orange] (0,-.8) circle (.05cm);
\roundNbox{fill=white}{(0,0)}{.3}{0}{0}{$f$}
\roundNbox{fill=white}{(.8,0)}{.3}{0}{0}{$\phi$}
}\,.
\end{equation}
Here, $\lhd$ indicates that $\phi$ (an element of the algebra of short string operators) acts on the tunneling operator $f$.
Note that \eqref{eq:convolutionAction} makes sense for any $A$- and $B$-modules $M_A$ and $N_B$, not just local modules.

We now give a physical interpretation of the action \eqref{eq:convolutionAction}.
If $Z^{\cA}(\cX)(M\to N)$ is nonzero, then there is some $c\in \Irr(Z^{\cA}(\cX))$ such that $M_A$ is a summand of $cA_A$ and $N_B$ is a summand of $cB_B$.
Morphisms in $Z^{\cA}(\cX)(M\to N)$ correspond to operators of the form
\begin{equation}
\label{eq:actionShortStrings}
\tikzmath{
\begin{scope}
\clip[rounded corners=5pt] (-1,0) rectangle (2,1);
\fill[red!30] (-1,0) rectangle (0,1);
\fill[blue!30] (0,0) rectangle (1,1);
\fill[orange!30] (1,0) rectangle (2,1);
\end{scope}
\node at (-.5,.5) {$\scriptstyle{}_A\cX$};
\draw (0,0) node[below]{$\scriptstyle{}_A\cX$} -- (0,1);
\node at (.5,.7) {$\scriptstyle\cX$};
\draw (1,0) node[below]{$\scriptstyle\cX_B$} -- (1,1);
\node at (1.5,.5) {$\scriptstyle\cX_B$};
\filldraw[thick,purple] (0,.3) circle (.05cm);
\node[purple] at  (-.2,.3) {$\scriptstyle M_A$};
\draw[thick,purple,mid>] (.0,.3) -- (1,.3);
\node[purple] at  (.6,.45) {$\scriptstyle c$};
\filldraw[thick,purple] (1,.3) circle (.05cm);
\node[purple] at  (1.2,.3) {$\scriptstyle N_B$};
}
\end{equation}
If $M_A^{\circ}\in\Irr(Z^{\cA}({}_A\cX)^{\loc})$ and $N_B^\circ\in\Irr(Z^{\\cA}(\cX_B)^{\loc})$, then $Z^{\cA}(\cX)(M\to N)$ is the space of tunneling operators $M_A\to N_B$ across the composite domain wall.
Indeed, if $M_A^\circ$ is simple, then $M_A$ is simple (though in general there are multiple choices for $c$), so
we can deform \eqref{eq:actionShortStrings} to obtain the following.
\[
\tikzmath{
\begin{scope}
\clip[rounded corners=5pt] (-1,0) rectangle (2,1);
\fill[red!30] (-1,0) rectangle (0,1);
\fill[blue!30] (0,0) rectangle (1,1);
\fill[orange!30] (1,0) rectangle (2,1);
\end{scope}
\node at (-.5,.7) {$\scriptstyle{}_A\cX$};
\draw (0,0) node[below]{$\scriptstyle{}_A\cX$} -- (0,1);
\node at (.5,.7) {$\scriptstyle\cX$};
\draw (1,0) node[below]{$\scriptstyle\cX_B$} -- (1,1);
\node at (1.5,.7) {$\scriptstyle\cX_B$};
\filldraw[thick,purple] (-.4,.3) circle (.05cm);
\node[purple] at (-.8,.3) {$\scriptstyle M_A^\circ$}; 
\draw[thick,purple,mid>] (-.4,.3) -- (1.4,.3);
\node[purple] at  (.6,.45) {$\scriptstyle c$};
\filldraw[thick,purple] (1.4,.3) circle (.05cm);
\node[purple] at (1.8,.3) {$\scriptstyle N_B^\circ$};
}\]

Indeed, as in \eqref{eq:convAlgDerivation}, operators of the form \eqref{eq:actionShortStrings} are determined by a choice of anyon $c$ in the middle bulk $Z^{\cA}(\cX)$ topological order, an operator in $Z^{\cA}(\cX)(M,c)$ bringing the $M_A$ excitation out of the left region of condensate as $c$, and an operator in $Z^{\cA}(\cX)(c,N)$ bringing $c$ into the right region of condensate as $N_B^\circ$.
Thus, the overall space of such operators is
$Z^{\cA}(\cX)(M\to N)$, as it was in \eqref{eq:TODecomp}.
Recall that in \S~\ref{ssec:decompositionResult}, we saw that the algebra $(Z^{\cA}(\cX)(A\to B),\star)$ was spanned by local operators of the form
\[\tikzmath{
\begin{scope}
\clip[rounded corners=5pt] (-1,0) rectangle (2,1);
\fill[red!30] (-1,0) rectangle (0,1);
\fill[blue!30] (0,0) rectangle (1,1);
\fill[orange!30] (1,0) rectangle (2,1);
\end{scope}
\node at (-.5,.5) {$\scriptstyle{}_A\cX$};
\draw (0,0) node[below]{$\scriptstyle{}_A\cX$} -- (0,1);
\node at (.5,.7) {$\scriptstyle\cX$};
\draw (1,0) node[below]{$\scriptstyle\cX_B$} -- (1,1);
\node at (1.5,.5) {$\scriptstyle\cX_B$};
\draw[thick,purple] (0,.3) circle (.07cm);
\draw[thick,purple,mid>] (.07,.3) -- (.93,.3);
\node[purple] at  (.6,.45) {$\scriptstyle c$};
\draw[thick,purple] (1,.3) circle (.07cm);
}\]
where $c\in\Irr(Z^{\cA}(\cX))$ appears as a summand of both $A$ and $B$.
The action shown in \eqref{eq:convolutionAction} simply involves applying these two kinds of operators in parallel, as shown below.
\begin{equation}
\label{eq:actionParallelStrings}
\tikzmath{
\begin{scope}
\clip[rounded corners=5pt] (-1,-.5) rectangle (2,1);
\fill[red!30] (-1,-.5) rectangle (0,1);
\fill[blue!30] (0,-.5) rectangle (1,1);
\fill[orange!30] (1,-.5) rectangle (2,1);
\end{scope}
\node at (-.5,.7) {$\scriptstyle{}_A\cX$};
\draw (0,-.5) node[below]{$\scriptstyle{}_A\cX$} -- (0,1);
\node at (.5,.7) {$\scriptstyle\cX$};
\draw (1,-.5) node[below]{$\scriptstyle\cX_B$} -- (1,1);
\node at (1.5,.7) {$\scriptstyle\cX_B$};
\filldraw[thick,purple] (0,.3) circle (.05cm);
\node[purple] at (-.4,.3) {$\scriptstyle M_A$}; 
\draw[thick,purple,mid>] (0,.3) -- (1,.3);
\node[purple] at  (.6,.45) {$\scriptstyle c$};
\filldraw[thick,purple] (1,.3) circle (.05cm);
\node[purple] at (1.4,.3) {$\scriptstyle N_B$};
\draw[thick,purple] (0,-.2) circle (.07cm);
\draw[thick,purple,mid>] (.07,-.2) -- (.93,-.2);
\node[purple] at  (.625,0) {$\scriptstyle c'$};
\draw[thick,purple] (1,-.2) circle (.07cm);
}
\end{equation}
Since applying operators in parallel was also how we obtained the multiplication $\star$ on $Z^{\cA}(\cX)(A\to B)$, this is clearly an algebra action.

We now introduce one more computational tool.
In \S~\ref{sec:examples}, we will consider the case $A=B$, i.e. where a strip where the condensable algebra $A\in Z^{\cA}(\cX)$ is \textit{not} condensed is considered as a domain wall between two regions where $A$ is condensed, which have $Z^{\cA}(\cX)_A^{\loc}$ topological order.
We will see that the trivial domain wall between two regions with $Z^{\cA}(\cX)_A^{\loc}$ topological order always appears as a summand, corresponding to the identity morphism $\id_A\in Z^{\cA}(\cX)(A\to A)$, which is a projection (but not the identity) for the convolution product $\star$.
If $M_A^\circ,N_A^\circ\in\Irr(Z^{\cA}(\cX)_A^{\loc})$ are anyons, then tunneling operators $M_A^\circ\to N_A^\circ$ through the trivial summand corresponding to the projection $\id_A$ are just the subspace $Z^{\cA}(\cX)_A(M_A\to N_A)\subseteq Z^{\cA}(\cX)(M\to N)$, by the following lemma.
\begin{lem}
 \label{lem:modulesAndConv}
 Suppose $M_A,N_A\in Z^{\cA}(\cX)_A$.
 Then $f\in Z^{\cA}(\cX)(M\to N)$ is in $Z^{\cA}(\cX)_A(M_A\to N_A)$ if and only if
 \[f=
 \tikzmath{
\draw[thick, orange] (0,-.7) --node[right]{$\scriptstyle M$} (0,-.3);
\draw[thick, DarkGreen] (0,.7) --node[right]{$\scriptstyle N$} (0,.3);
\roundNbox{fill=white}{(0,0)}{.3}{0}{0}{$f$}
}
=
\tikzmath{
\draw[thick, orange] (0,-1) --node[left]{$\scriptstyle M$} (0,-.3);
\draw[thick, DarkGreen] (0,1) --node[left]{$\scriptstyle N$} (0,.3);
\draw[thick, red] (0,-.7) arc (-90:90:.7cm);
\node[red] at (.9,0) {$\scriptstyle A$};
\filldraw[DarkGreen] (0,.7) circle (.05cm);
\filldraw[orange] (0,-.7) circle (.05cm);
\roundNbox{fill=white}{(0,0)}{.3}{0}{0}{$f$}
}
=
f\lhd\id_A\text{.}\]
\end{lem}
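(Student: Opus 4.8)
The plan is to unwind both sides of the claimed identity in the graphical calculus for $Z^{\cA}(\cX)$ and recognize that the right-hand side is precisely the projection of a general morphism $f\in Z^{\cA}(\cX)(M\to N)$ onto the subspace of $A$-module maps. First I would recall that, by definition, $f\in Z^{\cA}(\cX)(M\to N)$ is an $A$-module map $M_A\to N_A$ if and only if $\rho_N\circ(f\otimes \id_A) = f\circ\rho_M$, where $\rho_M:MA\to M$ and $\rho_N:NA\to N$ are the module actions. Using the Frobenius structure on the condensable algebra $A$ (recall that condensable algebras in a UMTC are, in particular, special Frobenius, so $m_A^\dag$ is a bimodule map and the module actions can be ``bent'' via the unit and counit), the diagram on the right of the lemma statement is exactly $\rho_N\circ(\id_N\otimes (u^\dag u))\circ\cdots$ — more precisely, it is the standard conditional-expectation-type projector $e_A$ onto $A$-module morphisms: one caps off the $A$-loop with the unit $u:1\to A$ at bottom and the counit $u^\dag:A\to 1$ at top (these are the univalent vertices in the picture), so that the loop labeled $A$ encircling the box $f$ computes $\rho_N\circ(\id\otimes\,\text{[id on }A\text{]})\circ(f\otimes\id_A)\circ(\id\otimes u)$ composed with the counit, which by the Frobenius and separability axioms equals the averaging idempotent.

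The key steps, in order: (1) identify the right-hand side with $f\lhd\id_A$ by comparing the picture in the lemma with the definition of $\lhd$ in \eqref{eq:convolutionAction}, specializing $B=A$, $\phi=\id_A$, and $N_B = N_A$, $M_A = M_A$; here $\id_A\in Z^{\cA}(\cX)(A\to A)$ is drawn as a bare $A$-strand, and the caps/cups with univalent vertices in \eqref{eq:convolutionAction} become exactly the unit/counit decorations in the lemma's diagram. (2) Show that the operator $f\mapsto f\lhd\id_A$ is an idempotent on $Z^{\cA}(\cX)(M\to N)$: this follows from $\id_A\star\id_A = \id_A$ (Definition~\ref{defn:convolutionMult} says $vu^\dag$ is the $\star$-identity, and for $A=B$ with $u=v$ this is $uu^\dag$, but one checks $uu^\dag\star uu^\dag = uu^\dag$ using separability $u^\dag u=\id_1$ up to the normalization built into a condensable algebra) together with the fact that $\lhd$ is an algebra action, which was established just above the lemma. (3) Identify the image of this idempotent with $Z^{\cA}(\cX)_A(M_A\to N_A)$: a morphism is fixed by the averaging idempotent $e_A$ if and only if it is an $A$-module map — this is the standard fact that $e_A$ is the orthogonal projection from $\Hom_{Z^{\cA}(\cX)}(M,N)$ onto $\Hom_A(M_A,N_A)$, provable by the usual module-map diagrammatic manipulation (slide the $A$-strand along $f$, using the module-action axioms on $M$ and $N$, to show $e_A(f)$ always satisfies the module-map condition, and conversely $e_A(f)=f$ when $f$ is already a module map because then the $A$-loop can be absorbed into the actions and collapsed via separability).

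The main obstacle I anticipate is step (2)–(3): getting the normalization exactly right so that $e_A$ is genuinely idempotent and not merely idempotent up to a scalar $d_A$. This hinges on the precise conventions for the unit/counit of a condensable algebra in the paper's graphical calculus (the ``special'' or ``separable'' normalization $m_A\circ m_A^\dag = \id_A$ versus $=d_A\id_A$, and whether the univalent vertices are normalized units or not). One must track these scalars through the cap-cup bends in \eqref{eq:convolutionAction}; Appendix~\ref{appendix:condensable} fixes these conventions, and with the unitarily-separable normalization used there the idempotent comes out correctly without extra factors. Once the normalization bookkeeping is settled, the module-map characterization in step (3) is a routine application of the Frobenius relations, and the ``if and only if'' of the lemma is immediate: $f = f\lhd\id_A$ exactly says $f$ lies in the image of the projection onto $A$-module morphisms, i.e.\ $f\in Z^{\cA}(\cX)_A(M_A\to N_A)$.
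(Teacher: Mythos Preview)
The paper does not actually prove this lemma; immediately after the statement it says ``We leave the routine proof of the lemma as an exercise.'' So there is nothing to compare your argument against, and your outline is exactly the standard argument the authors have in mind: identify $f\lhd\id_A$ as the separability idempotent/conditional expectation onto $A$-module maps, and check that a morphism is fixed by it iff it intertwines the actions.

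One correction to your step (2): you conflate the $\star$-identity with $\id_A$. The $\star$-identity is $uu^\dag$ (the projection onto $1\subset A$), which is \emph{not} $\id_A$; the paper stresses this explicitly just after the lemma. What you actually need is that $\id_A$ is a $\star$-\emph{idempotent}, not the $\star$-identity, and this is shown earlier in \S\ref{sec:examples} by the separability axiom (the $A$-bubble collapses: $m_A\circ m_A^\dag=\id_A$). With that in hand, your use of the algebra action to conclude $(f\lhd\id_A)\lhd\id_A=f\lhd(\id_A\star\id_A)=f\lhd\id_A$ goes through, and step (3) is routine as you say. Your worry about normalization is well-placed but resolved by the unitarily-separable convention in Appendix~\ref{appendix:condensable}, so no stray factors of $d_A$ appear.
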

Note that the condition that $f=f\lhd\id_A$ is not trivial, since $\id_A$ is the identity for the composition $\circ$, and not the convolution multiplication $\star$.
We leave the routine proof of the lemma as an exercise.

We conclude by exploring the relationship between anyon types in the middle $Z^{\cA}(\cX)$ bulk region and summands of the composite domain wall ${}_A\cX\xF_{\cX}\cX_B$.
There is an obvious monoidal\footnote{This functor does not lift to a braided monoidal functor to $Z(\End^{\cA}_{{}_A\cX-\cX_B}({}_A\cX\xF_\cX\cX_B))$, because the right action on the left domain wall is a braided monoidal functor $\overline{Z^{\cA}(\cX)}\to\End^{\cA}_{{}_A\cX-\cX_B}$, rather than being from $Z^{\cA}(\cX)$.}
functor $Z^{\cA}(\cX_A)\boxtimes Z^{\cA}(\cX)\boxtimes Z^{\cA}(\cX_B)\to \End^{\cA}_{{}_A\cX-\cX_B}({}_A\cX\xF_\cX\cX_B)$, given by
\begin{equation}
\label{eq:tripleAction}
M_A\boxtimes c\boxtimes N_B\mapsto
\tikzmath{
\begin{scope}
\clip[rounded corners=5pt] (-1,0) rectangle (2,1);
\fill[red!30] (-1,0) rectangle (0,1);
\fill[blue!30] (0,0) rectangle (1,1);
\fill[orange!30] (1,0) rectangle (2,1);
\end{scope}
\node at (-.5,.2) {$\scriptstyle{}_A\cX$};
\draw (0,0) node[below]{$\scriptstyle{}_A\cX$} -- (0,1);
\node at (.5,.2) {$\scriptstyle\cX$};
\draw (1,0) node[below]{$\scriptstyle\cX_B$} -- (1,1);
\node at (1.5,.2) {$\scriptstyle\cX_B$};
\filldraw[red] (-.5,.6) circle (.05cm);
\node[red] at (-.5,.8) {$\scriptstyle M_A$};
\filldraw[blue] (.5,.6) circle (.05cm);
\node[blue] at (.5,.8) {$\scriptstyle c$};
\filldraw[orange] (1.5,.6) circle (.05cm);
\node[orange] at (1.5,.8) {$\scriptstyle N_B$};
}\,.
\end{equation}
In Lemma \ref{lem:middleAction} below, we will show that this functor is dominant, i.e. the image generates the whole multifusion category $\End^{\cA}_{{}_A\cX-\cX_B}({}_A\cX\xF_\cX\cX_B)$.

Before proving Lemma \ref{lem:middleAction}, we explain its physical consequences and applications.
We can think of the multifusion category $\End^{\cA}_{{}_A\cX-\cX_B}({}_A\cX\xF_\cX\cX_B)$ as a matrix of categories of bimodule functors: if
\[{}_A\cX\xF_{\cX}\cX_B\cong\bigoplus_i\cW_i\text{,}\]
then
\[
\resizebox{.47\textwidth}{!}{$
\End^{\cA}_{{}_A\cX-\cX_B}({}_A\cX\xF_{\cX}\cX_B)
\cong
\bigoplus_{i,j}\Hom^{\cA}_{{}_A\cX-\cX_B}(\cW_i\to\cW_j)\text{.}
$}
\]
The $i$-th diagonal entry is $\End^{\cA}_{{}_A\cX-\cX_B}(\cW_i)$, the category of wall excitations in the $i$-th superselection sector,
while the $i,j$ entry is $\Hom^{\cA}_{{}_A\cX-\cX_B}(\cW_i\to\cW_j)$, the category of point defects between the domain walls in the $i$-th and $j$-th sectors.
The fact that $\End^{\cA}_{{}_A\cX-\cX_B}({}_A\cX\xF_{\cX}\cX_B)$ is a Witt equivalence, and therefore indecomposable as a multifusion category, means that all of these categories of point defects are nonzero; such point defects always exist.

On the other hand, since the superselection sectors that ${}_A\cX\xF_\cX\cX_B$ decomposes into are ${}_A\cX-\cX_B$ bimodule summands, acting by an anyon from the left and right $Z^{\cA}({}_A\cX)$ and $Z^{\cA}(\cX_\cB)$ bulk regions cannot change the superselection sector; the image of these categories in $\End^{\cA}_{{}_A\cX-\cX_B}({}_A\cX\xF_{\cX}\cX_B)$ lies on the diagonal.
Consequently, all of the off-diagonal entries must come from the image of $Z^{\cA}(\cX)$, the UMTC of excitations in the middle bulk region.
In other words, all point defects between the domain walls in distinct superselection sectors are of the form
\begin{equation}
\label{eq:allPointDefects}
\tikzmath{
\begin{scope}
\clip[rounded corners=5pt] (-1,-.5) rectangle (2,1);
\fill[red!30] (-1,-.5) rectangle (0,1);
\fill[blue!30] (0,-.5) rectangle (1,1);
\fill[orange!30] (1,-.5) rectangle (2,1);
\end{scope}
\node at (-.5,.2) {$\scriptstyle{}_A\cX$};
\draw (0,-.5) node[below]{$\scriptstyle{}_A\cX$} -- (0,1);
\draw (1,-.5) node[below]{$\scriptstyle\cX_B$} -- (1,1);
\node at (1.5,.2) {$\scriptstyle\cX_B$};
\draw[thick,purple] (0,-.1) circle (.07cm);
\draw[thick,purple,mid>] (.07,-.1) -- (.93,-.1);
\node[purple] at  (.625,-.3) {$\scriptstyle w_i$};
\draw[thick,purple] (1,-.1) circle (.07cm);
\draw[thick,purple] (0,.6) circle (.07cm);
\draw[thick,purple,mid>] (.07,.6) -- (.75,.6);
\filldraw[blue!30] (.5,.6) circle (.05cm);
\draw[thick,purple] (.75,.6) -- (.93,.6);
\node[purple] at  (.7,.8) {$\scriptstyle w_j$};
\draw[thick,purple] (1,.6) circle (.07cm);
\filldraw[blue] (.5,.2) circle (.05cm);
\draw[thick,blue] (.5,.2) -- (.5,1);
\node[blue] at (.7,.2) {$\scriptstyle c$};
}
\end{equation}
Here, $w_i$ and $w_j$ refer to minimal central projections in $(Z^{\cA}(\cX)(A\to B),\star)$; when $i\neq j$, $c\in\Irr(Z^{\cA}(\cX))$ is an anyon type in the middle bulk region which becomes confined (in at least one fusion channel) by both condensates $A$ and $B$.
The red line labelled by $w_j$ braids under the $c$-string because $w_j$ is applied second, i.e. on states where the $c$ anyon already exists.  

To see that these are the only point defects that can separate distinct superselection sectors, observe that if either free module $cA_A$ or $cB_B$ happens to be local, then the point defect $c$ appearing in $\eqref{eq:allPointDefects}$ is deconfined in either the left or the right bulk region, and is therefore a direct sum of anyons in either the left $Z^{\cA}({}_A\cX)$ or right $Z^{\cA}(\cX_B)$ bulks regions.
Such point defects cannot change the domain wall type, and hence only occur when $i=j$.
This reflects the fact that only anyon types which are confined by both condensates can braid non-trivially with the short string operators which comprise the algebra algebra $Z^{\cA}(\cY)(A\to B)$ of operators that preserve the ground state.

Another interpretation of \eqref{eq:allPointDefects} is that, after applying a minimal projection $w_i\in Z^{\cA}(\cX)(A\to B)$ to select a particular superselction sector, an \textit{extended} string operator associated with a confined anyon type $c$ (which is not a local operator) can take the system into other superselection sectors.
In other words, the operator:
\begin{equation}
\label{eq:changeSSS}
\tikzmath{
\begin{scope}
\clip[rounded corners=5pt] (-1,-.5) rectangle (2,1);
\fill[red!30] (-1,-.5) rectangle (0,1);
\fill[blue!30] (0,-.5) rectangle (1,1);
\fill[orange!30] (1,-.5) rectangle (2,1);
\end{scope}
\node at (-.5,.2) {$\scriptstyle{}_A\cX$};
\draw (0,-.5) node[below]{$\scriptstyle{}_A\cX$} -- (0,1);
\draw (1,-.5) node[below]{$\scriptstyle\cX_B$} -- (1,1);
\node at (1.5,.2) {$\scriptstyle\cX_B$};

\draw[thick,purple] (0,.6) circle (.07cm);
\draw[thick,purple,mid>] (.07,.6) -- (.75,.6);
\filldraw[blue!30] (.5,.6) circle (.05cm);
\draw[thick,purple] (.75,.6) -- (.93,.6);
\node[purple] at  (.7,.8) {$\scriptstyle w_j$};
\draw[thick,purple] (1,.6) circle (.07cm);
\draw[thick,blue,mid>] (.5,-.5) -- node[right]{$\scriptstyle c$} (.5,1);
\filldraw[blue!30] (.5,-.1) circle (.05cm);
\draw[thick,purple] (0,-.1) circle (.07cm);
\draw[thick,purple,mid>] (.07,-.1) -- (.75,-.1);
\draw[thick,purple] (.74,-.1) -- (.93,-.1);
\node[purple] at  (.7,-.3) {$\scriptstyle w_i$};
\draw[thick,purple] (1,-.1) circle (.07cm);
}
\end{equation}
can be nonzero for $i \neq j$.
If we place our system on a sphere, as in Remark~\ref{rem:sphereGSD}, then the extended $c$ string operator can become a closed loop operator.
The same can be done on a torus, or any other topology where the middle $Z^{\cA}(\cX)$ strip is closed up to a tube.

We now turn to the mathematical justification for the above description of the relationship between different superselection sectors.
\begin{lem}
 \label{lem:middleAction}
 The functor \eqref{eq:tripleAction} is dominant.
\end{lem}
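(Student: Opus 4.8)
The plan is to prove the stronger statement that the restriction of the functor \eqref{eq:tripleAction} to the middle tensorand $\{1\}\boxtimes Z^\cA(\cX)\boxtimes\{1\}$ is already dominant; enlarging the domain to include the left and right factors can only help, so this suffices. Write $\Phi_{\mathrm{mid}}\colon Z^\cA(\cX)\to\End^\cA_{{}_A\cX-\cX_B}({}_A\cX\xF_\cX\cX_B)$ for this restriction. First I would rewrite the target using \eqref{eq:EnrichedCenterFunctorial} with $\cM={}_A\cX$ (as an ${}_A\cX-\cX$ bimodule) and $\cN=\cX_B$ (as an $\cX-\cX_B$ bimodule): since $\End^\cA_{{}_A\cX-\cX}({}_A\cX)\cong Z^\cA(\cX)_A$ and $\End^\cA_{\cX-\cX_B}(\cX_B)\cong Z^\cA(\cX)_B$ by Example \ref{egs:condensationBoundary}, this gives an equivalence
\[
\End^\cA_{{}_A\cX-\cX_B}({}_A\cX\xF_\cX\cX_B)\;\cong\;\bigl(Z^\cA(\cX)_A\boxtimes Z^\cA(\cX)_B\bigr)_{K},\qquad K:=K_{Z^\cA(\cX)},
\]
under which placing a wall excitation on the left, resp. right, domain wall is the monoidal functor $\Ind_K(-\boxtimes 1)$, resp. $\Ind_K(1\boxtimes -)$, and these assemble into the structure maps of $\xBh_{Z^\cA(\cX)}$. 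I would then observe, using dualizability in $\UFC^\cA$ (a bulk anyon world-line in the middle $\cX$-strip can be isotoped onto either bounding wall, and such an isotopy is an isomorphism in $\End^\cA$), that $\Phi_{\mathrm{mid}}$ factors as $Z^\cA(\cX)\xrightarrow{\text{free}}Z^\cA(\cX)_A\xrightarrow{\Ind_K(-\boxtimes 1)}\End^\cA$, and equally via the right wall, so that concretely $\Phi_{\mathrm{mid}}(c)\cong\Ind_K\bigl((cA_A)\boxtimes 1\bigr)\cong\Ind_K\bigl(1\boxtimes(cB_B)\bigr)$, where $cA_A$ and $cB_B$ denote the free modules on $c$. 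This is exactly where the middle factor is indispensable: the left and right factors of \eqref{eq:tripleAction} only see $Z^\cA(\cX_A)=Z^\cA(\cX)_A^{\loc}\subseteq Z^\cA(\cX)_A$ and $Z^\cA(\cX_B)=Z^\cA(\cX)_B^{\loc}\subseteq Z^\cA(\cX)_B$, whereas the free-module functors from the middle hit all of $Z^\cA(\cX)_A$ and $Z^\cA(\cX)_B$.

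With this reduction in hand the remaining argument is bookkeeping with free-module functors of commutative condensable algebras. I would invoke three standard facts: (a) $\Ind_K$ is dominant, since $K$ is separable and hence the counit $\Ind_K\Res_K W\to W$ splits for every $K$-module $W$; (b) $\Ind_K$ is strong monoidal, because $K$ is a (central) canonical Lagrangian, cf.\ the definition of $\xBv$ and Remark \ref{rem:envelopingIsK}, and likewise the free-module functors $Z^\cA(\cX)\to Z^\cA(\cX)_A$ and $Z^\cA(\cX)\to Z^\cA(\cX)_B$ are strong monoidal since $A,B$ lift to commutative algebras in $Z^\cA(\cX)$; (c) the free-module functor $Z^\cA(\cX)\to Z^\cA(\cX)_A$ is dominant, i.e.\ every $A$-module is a summand of some $cA_A$, and similarly for $B$. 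Then, given any simple $W$ in the target, (a) produces simples $P\in\Irr(Z^\cA(\cX)_A)$ and $Q\in\Irr(Z^\cA(\cX)_B)$ with $W$ a summand of $\Ind_K(P\boxtimes Q)$; (c) produces $c,c'\in\Irr(Z^\cA(\cX))$ with $P\mid cA_A$ and $Q\mid c'B_B$, so by exactness of $\Ind_K$ the object $W$ is a summand of $\Ind_K\bigl((cA_A)\boxtimes(c'B_B)\bigr)$; and (b) gives
\[
\Ind_K\bigl((cA_A)\boxtimes(c'B_B)\bigr)\;\cong\;\Ind_K\bigl((cA_A)\boxtimes 1\bigr)\otimes_K\Ind_K\bigl(1\boxtimes(c'B_B)\bigr)\;\cong\;\Phi_{\mathrm{mid}}(c)\otimes\Phi_{\mathrm{mid}}(c')\;\cong\;\Phi_{\mathrm{mid}}(c\otimes c'),
\]
which lies in the image of \eqref{eq:tripleAction}. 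Hence every simple object of $\End^\cA_{{}_A\cX-\cX_B}({}_A\cX\xF_\cX\cX_B)$ is a summand of an object in the image, which is what dominance asserts.

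The hard part will be the identification in the first paragraph: showing that ``bring a middle-strip anyon to a boundary wall'' is, under the isomorphism \eqref{eq:EnrichedCenterFunctorial}, literally the composite of a free-module functor with $\Ind_K(-\boxtimes 1)$. Making this precise requires unwinding the ladder/relative-Deligne-product description of $\cM\xF_\cY\cN$ and checking that the structure $2$-morphisms of $\xBh_{Z^\cA(\cX)}$ are the ones the graphical calculus predicts; within the conventions of the paper (dualizability, topological Wick rotation) this is morally automatic, but it carries essentially all of the content, the rest being formal properties of module categories over commutative condensable algebras.
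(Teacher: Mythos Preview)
Your proof is correct and in fact establishes a slightly stronger statement than the paper proves: you show that the restriction of \eqref{eq:tripleAction} to the middle factor $Z^\cA(\cX)$ alone is already dominant, whereas the paper uses all three tensorands. The two arguments share the first reduction via \eqref{eq:EnrichedCenterFunctorial}, but then diverge. The paper observes that since ${}_A\cX$ is an indecomposable bimodule, the Witt-equivalence $\End^\cA_{{}_A\cX-\cX}({}_A\cX)$ is a \emph{fusion} category, so the forgetful functor from its Drinfeld center is dominant; combined with the equivalence $Z^\cA({}_A\cX)\boxtimes\overline{Z^\cA(\cX)}\simeq Z(\End^\cA_{{}_A\cX-\cX}({}_A\cX))$ coming from the Witt structure, this yields a dominant functor from $Z^\cA({}_A\cX)\boxtimes\overline{Z^\cA(\cX)}$ (and similarly on the right). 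The paper then composes with the canonical surjection $\boxtimes\to\xBh_{Z^\cA(\cX)}$ and uses $Z^\cA(\cX)$-balancedness to collapse the four tensorands to three. Your approach instead invokes the more elementary module-theoretic fact that free-module functors over separable algebras are dominant, together with strong monoidality of $\Ind_K$; this lets you bypass the Witt/center argument entirely and shows the outer factors $Z^\cA({}_A\cX)$ and $Z^\cA(\cX_B)$ are unnecessary for dominance. The paper's route is more uniform categorically (it would apply verbatim to any indecomposable bimodules $\cM,\cN$, not just condensation walls), while yours is shorter and exploits the concrete identification $\End^\cA_{{}_A\cX-\cX}({}_A\cX)\cong Z^\cA(\cX)_A$ available in this specific setup. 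Your caveat about the ``hard part'' --- matching the graphical $\Phi_{\mathrm{mid}}$ with $\Ind_K\circ(\text{free})$ under \eqref{eq:EnrichedCenterFunctorial} --- is honest; the paper sidesteps this by working at the level of action functors rather than explicit free modules.
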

\begin{proof}
 By \eqref{eq:EnrichedCenterFunctorial}, the canonical functor 
 \[
 \resizebox{.47\textwidth}{!}{$
 \End^{\cA}_{{}_A\cX-\cX}({}_A\cX)\underset{Z^{\cA}(\cX)}{\xBh}\End^{\cA}_{\cX-\cX_B}(\cX_B)\to\End^{\cA}_{{}_A\cX-\cX_B}({}_A\cX\xF_\cX\cX_B)
 $}
 \]
 is an equivalence.
 Hence, we only need to check that the map $Z^{\cA}({}_A\cX)\boxtimes Z^{\cA}(\cX)\boxtimes Z^{\cA}(\cX_B)\to\End^{\cA}_{{}_A\cX-\cX}({}_A\cX)\xBh_{Z^{\cA}(\cX)}\End^{\cA}_{\cX-\cX_B}(\cX_B)$ is dominant.
 
 Since the bimodule category ${}_A\cX$ is indecomposable, $\End^{\cA}_{{}_A\cX-\cX}({}_A\cX)$ is a fusion category (rather than multifusion).
 This means that the forgetful functor $Z(\End^{\cA}_{{}_A\cX-\cX}({}_A\cX))\to\End^{\cA}_{{}_A\cX-\cX}({}_A\cX)$ is a dominant tensor functor.
 Since $\End^{\cA}_{{}_A\cX-\cX}({}_A\cX)$ is a Witt equivalence $Z^{\cA}({}_A\cX)\to Z^{\cA}(\cX)$, the action functor $Z^{\cA}({}_A\cX)\boxtimes\overline{Z^{\cA}(\cX)}\to Z(\End^{\cA}_{{}_A\cX-\cX}({}_A\cX))$ is an equivalence.
 Composing, we have a dominant tensor functor
 \[\begin{aligned}Z^{\cA}({}_A\cX)\boxtimes\overline{Z^{\cA}(\cX)}&\to Z(\End^{\cA}_{{}_A\cX-\cX}({}_A\cX))\\
 &\to\End^{\cA}_{{}_A\cX-\cX}({}_A\cX).\end{aligned}\]
 Similarly, the action gives a dominant tensor functor
 \[Z^{\cA}(\cX)\boxtimes\overline{Z^{\cA}(\cX_B)}\to\End^{\cA}_{\cX-\cX_B}(\cX_B).\]
 Overall, so far, we have a dominant tensor functor
 \[\begin{aligned}& Z^{\cA}({}_A\cX)\boxtimes\overline{Z^{\cA}(\cX)}\boxtimes Z^{\cA}(\cX)\boxtimes\overline{Z^{\cA}(\cX_B)}\\
 \to&\End^{\cA}_{{}_A\cX-\cX}({}_A\cX)\boxtimes\End^{\cA}_{\cX-\cX_B}(\cX_B).\end{aligned}\]
 
 Note that, as tensor categories (i.e.~forgetting the braiding), $Z^{\cA}(\cX)\cong\overline{Z^{\cA}(\cX)}$
 Composing with the canonical dominant tensor functor
 \[\begin{aligned}&\End^{\cA}_{{}_A\cX-\cX}({}_A\cX)\boxtimes\End^{\cA}_{\cX-\cX_B}(\cX_B)\\
 &\qquad\to\End^{\cA}_{{}_A\cX-\cX}({}_A\cX)\xBh_{Z^{\cA}(\cX)}\End^{\cA}_{\cX-\cX_B}(\cX_B),\end{aligned}\]
 we obtain a dominant tensor functor which is $Z^{\cA}(\cX)$-balanced, and hence factors through
 \[\begin{aligned}
  (Z^{\cA}({}_A\cX)&\boxtimes\overline{Z^{\cA}(\cX)})\xBh_{Z^{\cA}(\cX)}(Z^{\cA}(\cX)\boxtimes\overline{Z^{\cA}(\cX_B)})\\
  \cong& Z^{\cA}({}_A\cX)\boxtimes Z^{\cA}(\cX)\boxtimes Z^{\cA}(\cX_B),
 \end{aligned}\]
 completing the proof.
\end{proof}

As another application, we are now in a position to justify our the assertion in Remark \ref{rem:oneWallMischief}: that, if anyons $M_A^\circ$ and $N_A^\circ$ in $Z^{\cA}({}_A\cX)\cong Z^{\cA}(\cX)_A^{\loc}$ are both summands of the free module $cA_A$ for $c\in\Irr(Z^{\cA}(\cX))$, i.e.~they are both obtained as summands when a $c$ anyon is brought to the domain wall, then there is a nonzero local operator
\begin{equation}
\label{eq:oneWallMischief}
\tikzmath{
\begin{scope}
\clip[rounded corners=5pt] (-1,-.5) rectangle (1,1);
\fill[red!30] (-1,-.5) rectangle (0,1);
\fill[blue!30] (0,-.5) rectangle (1,1);
\end{scope}
\node at (-.5,.7) {$\scriptstyle{}_A\cX$};
\draw (0,-.5) node[below]{$\scriptstyle{}_A\cX$} -- (0,1);
\node at (.5,.7) {$\scriptstyle\cX$};
\filldraw[red] (-.5,-.2) circle (.05cm);
\draw[red,thick] (-.5,-.2) -- (-.5,-.5);
\node[red] at (-.5,0) {$\scriptstyle M_A^\circ$};
\filldraw[blue] (.25,.25) circle (.05cm);
\draw[blue,thick] (.25,.25) -- (1,.25);
\node[blue] at (.25,.45) {$\scriptstyle d$};
}
\mapsto
\tikzmath{
\begin{scope}
\clip[rounded corners=5pt] (-1,-.5) rectangle (1,1);
\fill[red!30] (-1,-.5) rectangle (0,1);
\fill[blue!30] (0,-.5) rectangle (1,1);
\end{scope}
\draw (0,-.5) node[below]{$\scriptstyle{}_A\cX$} -- (0,1);
\node at (.5,.7) {$\scriptstyle\cX$};
\draw[red,thick] (-.5,-.5) arc (180:90:.25cm) -- (0,-.25) arc (-90:90:.5cm) -- (-.5,.75);
\filldraw[red] (-.5,.75) circle (.05cm);
\node[red] at (-.5,.5) {$\scriptstyle N_A^\circ$};
\filldraw[blue!30] (.5,.25) circle (.05cm); 
\filldraw[blue] (.25,.25) circle (.05cm);
\draw[blue,thick] (.25,.25) -- (1,.25);
\node[blue] at (.25,.45) {$\scriptstyle d$};
}
\end{equation}
After some topological deformation, the above picture becomes
\[\tikzmath{
\begin{scope}
\clip[rounded corners=5pt] (-1,0) rectangle (1,1.5);
\filldraw[red!30] (-1,0) rectangle (1,1.5);
\draw[fill=blue!30] (-.4,1.5) -- (-.4,.9) arc (-180:0:.4cm) -- (.4,1.5);
\end{scope}
\node at (-.4,1.7) {$\scriptstyle{}_A\cX$};
\node at (.4,1.7) {$\scriptstyle\cX_A$};
\node at (-.7,.3) {$\scriptstyle{}_A\cX$};
\draw[dotted] (0,0) -- (0,.5);
\filldraw[red] (.7,1) circle (.05cm);
\draw[red,thick] (.7,1) -- (1,1);
\node[red] at (.7,1.2) {$\scriptstyle M_A^\circ$};
\filldraw[blue] (0,.75) circle (.05cm);
\draw[blue,thick] (0,.75) -- (0,1.5);
\node[blue] at (-.2,.8) {$\scriptstyle d$};
}
\mapsto
\tikzmath{
\begin{scope}
\clip[rounded corners=5pt] (-1,0) rectangle (1,1.5);
\filldraw[red!30] (-1,0) rectangle (1,1.5);
\draw[fill=blue!30] (-.4,1.5) -- (-.4,.9) arc (-180:0:.4cm) -- (.4,1.5);
\end{scope}
\node at (-.4,1.7) {$\scriptstyle{}_A\cX$};
\node at (.4,1.7) {$\scriptstyle\cX_A$};
\node at (-.7,.3) {$\scriptstyle{}_A\cX$};
\draw[dotted] (0,0) -- (0,.5);
\filldraw[red] (-.7,1) circle (.05cm);
\draw[red,thick] (-.7,1) -- (1,1);
\filldraw[blue!30] (0,1) circle (.05cm); 
\node[red] at (-.7,1.2) {$\scriptstyle N_A^\circ$};
\filldraw[blue] (0,.75) circle (.05cm);
\draw[blue,thick] (0,.75) -- (0,1.5);
\node[blue] at (-.2,.8) {$\scriptstyle d$};
}\]
Here, the dotted line is the identity/trivial domain wall, which is a superselection sector of the composite domain wall given by the $\cA$-enriched ${}_A{\cX}-{}_A{\cX}$ bimodule ${}_A\cX\xF_{\cX}{}_A{\cX}$.
As we will explain in detail in \S~\ref{sec:examples} below, this identity bimodule is a summand of the composite domain wall, corresponding under Theorem \ref{thm:DecomposeCompositeDomainWall} to the convolution projection $\id_A\in Z^{\cA}(\cX)(A\to A)$.
The blue strip corresponds to another summand of this same composite domain wall.   
The local operators depicted above are then just (adjoints of) tunneling operators $N_A^\circ\to M_A^\circ$ through this composite domain wall; such tunneling operators always exist by Proposition \ref{prop:tunnelingMultiple}.
A priori we might worry that some summands of the composite wall could fail to appear in the above diagram.  However, because the pair of parallel domain walls are joined by a cup, Lemma \ref{lem:middleAction} tells us that all summands will appear, for {\it some} choice of $d\in\Irr(Z^{\cA}(\cX))$.
On the other hand, when $d\cong 1$, the cup is just the inclusion of the trivial domain wall into the composite, and nonzero tunneling operators through that wall only take $M_A^\circ$ to itself.

\section{\texorpdfstring{$\cX\cY\cX$}{XYX} examples}
\label{sec:examples}

In this section, we will use the tools developed in \S~\ref{sec:wallsFromAlgebras} and \S~\ref{sec:Tunneling} to work out the decomposition of composite domain walls in several explicit examples.
We will investigate composite domain walls of the form
\[
\resizebox{.47\textwidth}{!}{$
\begin{tikzpicture}[baseline=-.1cm]
\begin{scope}
\clip[rounded corners=5pt] (0,-.5) rectangle (3,.5);
\fill[red!30] (-.2,-.5) rectangle (1,.5);
\fill[blue!30] (1,-.5) rectangle (2,.5);
\fill[red!30] (2,-.5) rectangle (3.2,.5);
\end{scope}
\node at (.5,-.25) {$\scriptstyle\cY_A$};
\node at (1.5,-.25) {$\scriptstyle \cY$};
\node at (2.5,-.25) {$\scriptstyle\cY_A$};
\draw (1,-.5) node[below]{$\scriptstyle {}_A\cY$} -- (1,.5);
\draw (2,-.5) node[below]{$\scriptstyle \cY_A$} -- (2,.5);
\end{tikzpicture}
\qquad
\cW:={}_A\cY\xF_\cY\cY_A\cong{}_A\cY_A\text{,}
$}\]
where $\cY$ is an $\cA$-enriched fusion category and $A\in Z^{\cA}(\cY)$ is a condensable algebra.
For brevity, we denote $\cC:=Z^{\cA}(\cY)$,   $\cX:=\cY_A$, and $\cW\cong{}_A\cY\xF_\cY\cY_A\cong{}_A\cY_A$ is the $\cY_A-\cY_A$ bimodule category describing the composite domain wall.
Thus, $Z^{\cA}(\cX)\cong Z^{\cA}(\cY_A)\cong\cC_A^{\loc}$,
and the $\cC_A^{\loc}-\cC_A^{\loc}$ bimodule multifusion category of excitations on the composite domain wall is $\End(\cW)\cong{}_A\cC\xBh_\cC\cC_A\cong{}_A\cC_A$, the category of $A-A$ bimodules in $\cC$.

We take a moment to discuss our choice to focus on such examples, which we call \textit{$\cX\cY\cX$ examples}.
In the previous section, we have just seen that $\cX\cY\cX$ examples arise naturally when analyzing operators of the form \eqref{eq:oneWallMischief}, which are natural to consider whenever a domain wall based on anyon condensation appears.
Beyond that, $\cX\cY\cX$ examples are parameterized by a minimal amount of data: we only need a choice of anomaly representative $\cA$, an $\cA$-enriched fusion category $\cY$ to determine the uncondensed bulk topological order, and a choice of condensable algebra $A\in Z^{\cA}(\cY)$.
Then, the $\cA$-enriched fusion and bimodule categories labelling the regions where $A$ is condensed and the walls bounding those regions are all just $\cY_A$.
As explained in Remark~\ref{rem:whyDoubleCondensation}, examples of the form ${}_A\cX\xF_\cX\cX_B$ are already able to illustrate all the interactions between the decomposition of composite domain walls and tunneling operators.
In studying $\cX\cY\cX$-examples, we only impose the further restriction that $B=A$.
This class of examples therefore simplifies the computational task ahead, and as we will see, still allows for a variety of interesting behavior.

We now show that, for $\cX\cY\cX$ examples, the trivial domain wall from $Z^{\cA}(\cX)$ topological order to itself always appears as a summand of the composite domain wall $\cW$.
This observation will prove useful in the analyses of the examples that folow.
To show this, we recall that 
under the correspondence set out in Theorem \ref{thm:DecomposeCompositeDomainWall}, (indecomposable) summands of $\cW$ correspond to (minimal) projections in the commutative algebra $\cC(A\to A)$ with the convolution product.
However, $\cC(A\to A)$ also carries another product: the composition $\circ$ of morphisms in the category $\cC$, with identity $\id_A$.
$$
\tikzmath{
\draw[thick, red] (0,-.7) --node[right]{$\scriptstyle A$} (0,-.3);
\draw[thick, red] (0,.7) --node[right]{$\scriptstyle A$} (0,.3);
\roundNbox{fill=white}{(0,0)}{.3}{0}{0}{$x$}
}
\circ\,
\tikzmath{
\draw[thick, red] (0,-.7) --node[right]{$\scriptstyle A$} (0,-.3);
\draw[thick, red] (0,.7) --node[right]{$\scriptstyle A$} (0,.3);
\roundNbox{fill=white}{(0,0)}{.3}{0}{0}{$y$}
}
:=
\tikzmath{
\draw[thick, red] (0,-.7) --node[right]{$\scriptstyle A$} (0,-.3);
\draw[thick, red] (0,1.7) --node[right]{$\scriptstyle A$} (0,1.3);
\draw[thick, red] (0,.3) --node[right]{$\scriptstyle A$} (0,.7);
\roundNbox{fill=white}{(0,0)}{.3}{0}{0}{$y$}
\roundNbox{fill=white}{(0,1)}{.3}{0}{0}{$x$}
}
\qquad\qquad
\tikzmath{
\draw[thick, red] (0,-.7) --node[right]{$\scriptstyle A$} (0,-.3);
\draw[thick, red] (0,.7) --node[right]{$\scriptstyle A$} (0,.3);
\roundNbox{fill=white}{(0,0)}{.3}{0}{0}{$\id_A$}
}
=
\tikzmath{
\draw[thick, red] (0,-.7) --node[right]{$\scriptstyle A$} (0,.7);
}
\,.
$$
That is, if $A\cong\oplus_ix_i$, where each $x_i\in\Irr(\cC)$ and $p_i:A\to x_i$ is the projection onto the $i$-th summand, then $\id_A=\sum_ip_i^\dag p_i$; interpreted as short string operators as in \eqref{eq:shortString}, $\id_A$ is a sum of operators which bring each channel in $A$ across the strip.
Because $A\in Z^{\cA}(\cX)$ is separable, the morphism $\id_A$ is a projection for $\star$:
$$
\tikzmath{
\draw[thick, red] (0,-.7) --node[right]{$\scriptstyle A$} (0,-.3);
\draw[thick, red] (0,.7) --node[right]{$\scriptstyle A$} (0,.3);
\roundNbox{fill=white}{(0,0)}{.3}{0}{0}{$\id_A$}
}
\star\,
\tikzmath{
\draw[thick, red] (0,-.7) --node[right]{$\scriptstyle A$} (0,-.3);
\draw[thick, red] (0,.7) --node[right]{$\scriptstyle A$} (0,.3);
\roundNbox{fill=white}{(0,0)}{.3}{0}{0}{$\id_A$}
}
=
\tikzmath{
\draw[thick, red] (0,0) circle (.3cm);
\draw[thick, red] (0,.3) -- (0,.7);
\draw[thick, red] (0,-.3) -- (0,-.7);
}
=
\tikzmath{
\draw[thick, red] (0,-.7) --node[right]{$\scriptstyle A$} (0,.7);
}
=
\tikzmath{
\draw[thick, red] (0,-.7) --node[right]{$\scriptstyle A$} (0,-.3);
\draw[thick, red] (0,.7) --node[right]{$\scriptstyle A$} (0,.3);
\roundNbox{fill=white}{(0,0)}{.3}{0}{0}{$\id_A$}
}
\,.
$$

Since $A$ is connected, it follows directly from $\End_{\cC_A}(A_A)=\bbC\id_A$ that $\id_A$ is a minimal projection in $(\cC(A\to A),\star)$ (visibly, for any projection $p\leq \id_A$, $p=p\star \id_A\in \End_{\cC_A}(A_A)$).
Therefore, its image is a superselection sector denoted $\cW_1$ of the composite domain wall $\cW$.
In order to characterize this sector, we compute the action of $\id_A$ on the spaces of tunneling operators; as described in \S~\ref{ssec:tunnelingOpAction}, the space of tunneling operators $M_A^\circ\to N_A^\circ$ through $\cW_1$ is just the image of $\id_A$ acting on $\cC(M\to N)$ with the action \eqref{eq:convolutionAction}.
But by Lemma \ref{lem:modulesAndConv}, the image of $\id_A$ is just $\cC_A(M_A\to N_A)\subseteq\cC(M\to N)$, showing 
that $\End^{\cA}_{\cY-\cY}(\cW_1)$ is just the identity Witt autoequivalence of $\cC_A^{\loc}$.  Thus, as claimed, for any $\cX\cY\cX$ composite domain wall, the trivial domain wall appears as a superselection sector, since it is the image $\cW_1$ of a minimal projection.  We will therefore refer to $\cW_1$ as the identity superselection sector.

Furthermore, we see that the inclusions $\cC_A(M_A\to N_A)\subseteq\cC(M\to N)$ have two natural interpretations in our setting.
The first is that $M$ and $N$ are classical mixtures of anyons, which result from bringing a domain wall excitation (in $\cC_A$) into the $\cC$ bulk.  $\cC(M\to N)$ is thus the space of topological local operators taking one such mixture to another, and $\cC_A(M_A\to N_A)$ is the subspace of these operators which is stable under fusion with the condensate $A$.\footnote{As described in Footnote \ref{foot:homSpaces},
operators in $\cC(M\to N)$ are not very interesting;
in general, $\cC(M\to N)$ is just a multiplicity space.
However, the choice of subspace $\cC_A(M_A\to N_A)$ is part of the data specifying the wall excitations $M_A$ and $N_A$.
}
The second is that $\cC(M\to N)$ is the space of tunneling operators $M_A^\circ\to N_A^\circ$ through the composite domain wall, and $\cC_A(M_A\to N_A)$ is the subspace of tunneling operators through the identity superselection sector.

Both interpretations will play a role in the examples below.
In particular, suppose $M_A=N_A=cA_A$ is a free module, where $c\in\Irr(\cC)$ is an anyon.
Then under the first interpretation, $\cC_A(cA_A,cA_A)\subseteq\cC(cA,cA)$ contains the morphisms which split the anyon $c$ into indecomposable wall excitations at the domain wall.
Under the second interpretation, $\cC(cA,cA)$ consists of local operators bringing a $c$ anyon from the left domain wall to the right one, and $\cC_A(cA_A,cA_A)$ is again the subspace of those operators supported in the superselection sector $\cW_1$ associated to $\id_A$.

\begin{rem}
 \label{rem:modulesAreBimodules}
 There is an alternative way to see that the identity domain wall must appear as a summand in our $\cX\cY\cX$-examples, which does not involve the results of \S~\ref{sec:wallsFromAlgebras} and \ref{sec:Tunneling}.
 In our chosen examples, $\cX\cong\cY_A$, and the bimodule category labelleing the composite domain wall is ${}_A\cY\xF_\cY\cY_A\cong{}_A\cY_A$, the category of $A-A$-bimodules in $\cY$.
 As described in Appendix \ref{appendix:condensable}, the tensor product on $\cY_A$ comes from a monoidal embedding into ${}_A\cY_A$, and both the left and right actions of $\cY_A$ on ${}_A\cY_A$ will also come from this embedding.
 Therefore, the trivial domain wall, corresponding to the identity $\cY_A-\cY_A$-bimodule category $\cY_A$, appears as a summand of ${}_A\cY_A$.
\end{rem}

In the following examples, we will show that the other superselection sectors can exhibit a wide variety of behaviors: they may also be equivalent to the trivial domain wall, as in \S~\ref{ssec:tcExample}, they can be different invertible domain walls, as in \S~\ref{sec:tunnelingIsing} and \S~\ref{sec:chiralExample}, or they can be noninvertible domain walls, where additional anyons become condensed or confined, as in \S~\ref{sec:dihedralExample}.

For each of the examples below, we use all of the tools developed in the preceding sections to analyze the composite domain wall, according to the following steps.
\begin{enumerate}[(1)]
 \item Specify a UMTC $\cA$ (fixing the anomaly), an $\cA$-enriched fusion category $\cY$ (\S~\ref{ssec:Enriched}), and a condensable algebra $A\in Z^{\cA}(\cY)$ (\S~\ref{appendix:condensable}). The condensed phase is given by the enriched fusion category $\cX=\cY_A$, which has bulk excitations $Z^{\cA}(\cX)\cong Z^{\cA}(\cY_A)\cong Z^{\cA}(\cY)_A^{\loc}$.
In all cases, we also provide references to literature describing a specific lattice model for the domain wall in question.
 \item 
 Compute the convolution algebra ${(Z^{\cA}(\cY)(A\to A),\star)}$ 
 of short string operators on the middle strip, where $\star$ is the convolution multiplication from Definition \ref{defn:convolutionMult}.
 Identify the minimal projections in this algebra, which are local operators that project onto superselection sectors of the composite domain wall, by Theorem \ref{thm:DecomposeCompositeDomainWall}.
 \item
 Identify the summands of any anyons $c\in Z^{\cA}(\cY)$ from the uncondensed phase which split on the domain wall to the condensed phase, by computing the decomposition of free modules $cA_A$.
 Identify the spaces of tunneling channels for each pair of anyons through the composite domain wall, or at least give dimensions.
 \item 
 Using sets of tunneling channels (Definition~\ref{defn:tunnelingOperator}) for the original domain walls, the decomposition of splitting anyons, and the action \eqref{eq:convolutionAction} of $(Z^{\cA}(\cY)(A\to A),\star)$ on spaces of tunneling operators, we work out sets of tunneling channels for the indecomposable domain wall in each superselection sector.
 This allows us to identify the indecomposable domain wall associated to each minimal central projection $w_i\in Z^{\cA}(\cY)(A\to A)$, either as an indecomposable $\cX-\cX$ $\cA$-enriched bimodule category $\cW_i$, the associated Witt equivalence $\End^{\cA}_{\cX-\cX}(\cW_i)$, or the Lagrangian algebra in $Z^{\cA}(\cX)\boxtimes\overline{Z^{\cA}(\cX)}$ which is condensed at the wall.
 \item 
 Based on \eqref{eq:allPointDefects}, we briefly describe how different superselection sectors are related by extended string operators which are confined to the uncondensed region, as well as how anyons from the middle bulk become point defects between superselection sectors.
\end{enumerate}

\subsection{Toric code}
\label{ssec:tcExample}

\begin{enumerate}[(1)]
\item In this example, we choose
\begin{align*}
 \cA &= \Hilb\\
 \cX &= \Hilb[\bbZ/2]\\
 \cY &= \Hilb[\bbZ/4]\\
 {}_{\cX}\cM_{\cY} &= \Hilb[\bbZ/2]\text{.}
\end{align*}
Setting $A=\mathbb{C}^{\mathbb{Z}/2}\in Z(\cY)$ to be the algebra obtained by condensing the boson $m^2$, we obtain  
$Z(\cX)=Z(\cY)_A^{\loc}=\cD(\bbZ/2)$, the toric code.

We will use the same Hamiltonian as described above in \eqref{eq:HSN}-\eqref{eq:CondensedToricCodePlaquette}, but a different configuration of green and black edges, as shown.
\[\tikzmath[scale=0.5]{
\draw[thick,DarkGreen] (-1,-.5) -- (-1,2.5);
\draw[thick,DarkGreen] (0,-.5) -- (0,2.5);
\draw[thick] (1,-.5) -- (1,2.5);
\draw[thick] (2,-.5) -- (2,2.5);
\draw[thick] (3,-.5) -- (3,2.5);
\draw[thick,DarkGreen] (4,-.5) -- (4,2.5);
\draw[thick,DarkGreen] (5,-.5) -- (5,2.5);
\draw[thick,DarkGreen] (-1.5,0) -- (1,0);
\draw[thick,DarkGreen] (-1.5,1) -- (1,1);
\draw[thick,DarkGreen] (-1.5,2) -- (1,2);
\draw[thick,DarkGreen] (3,0) -- (5.5,0);
\draw[thick,DarkGreen] (3,1) -- (5.5,1);
\draw[thick,DarkGreen] (3,2) -- (5.5,2);
\draw[thick] (1,0) -- (3,0);
\draw[thick] (1,1) -- (3,1);
\draw[thick] (1,2) -- (3,2);
}
\]

\item

We now analyze the wall $\cW:=\cM$ using Theorem \ref{thm:DecomposeCompositeDomainWall}.
As an object in $\cC\cong Z(\cY)\cong\cD(\bbZ/4)$, we have $A\cong 1\oplus m^2$.
Therefore, $\cC(A\to A)$ is $2$-dimensional, generated by projections $\pi_1:A\to 1\to A$ and $\pi_{m^2}:A\to m^2\to A$ for the usual \textit{categorical composition} in $\End_{Z(\cY)}(A)$ (as opposed to the convolution multiplication $\star$ for $\End_{Z(\cY)}(A)$).
The convolution multiplication $\star$ has identity $2\pi_1$, and $\pi_{m^2}\star \pi_{m^2}=\frac{1}{2}\pi_1$, so that the generators $2\pi_1$ and $2\pi_{m^2}$ form the group $\bbZ/2$, i.e.  $(\cC(A\to A),\star)\cong\bbC[\bbZ/2]$.  (The factors of $2$ and $\frac{1}{2}$ appear because $|\mathbb{Z}/2|=2$, and are necessary to make the condensate $A$ unitarily separable.)
The minimal convolution projections in $\cC(A\to A)$ are thus 
\begin{align*}
w_1&:=\id_A=\pi_1+\pi_{m^2}\\
w_{-1}&:=\pi_1-\pi_{m^2}\text{.}
\end{align*}

\item
In this case, the objects of $\Irr(\cC_A)$ are all of the form $x\oplus m^2 x$ for $x\in\Irr(\cC)$, so they partition the objects of $\Irr(\cC)$.
Hence, for $M_A\neq N_A\in\Irr(\cC_A^{\loc})$, we have $\cC(M\to N)\cong0$
meaning that there are no tunneling operators between distinct simple objects.
On the other hand, for $M_A\in\Irr(\cC_A^{\loc})$, there are $2$ tunneling channels $M_A\to M_A$, one for each projection $w_{\pm 1}$.

\item
Since $w_1=\id_A$, the corresponding summand $\cW_1$ of the composite domain wall is equivalent to the trivial domain wall (i.e. no domain wall) from $Z^{\cA}(\cX)$ topological order to itself.
The other summand $\cW_{-1}$ is also an invertible domain wall, which applies a $\mathbb{Z}_2$ symmetry  $\Phi\in\Aut_{\otimes}^{\br}(\cD(\bbZ/2))$ which does not permute anyon types, but under which the $m$ anyon is charged.
For a single domain wall, however, the resulting phase cannot be detected physically; hence $\cW_{-1}$ must be equivalent to $\cW_1$, in the sense that both produce the same Witt autoequivalence of $D(\mathbb{Z}/2)$.
\item
We can understand why there are two distinct summands $\cW_1$ and $\cW_{-1}$ by examining the ground-state degeneracy obtained from putting our system on a sphere with $A$ uncondensed at the equator and condensed near the poles, as in Remark \ref{rem:sphereGSD}.
In other words, we impose periodic boundary conditions orthogonal to the domain walls, and then cap off the two bulk regions where $A$ is condensed.
In this case, the algebra $\cC(A\to A)$ becomes the $2$-dimensional space of ground states.

To understand the difference between $\cW_1$ and $\cW_{-1}$,  consider the closed string operator $L_e$ of type $e$ which runs around a non-contractible loop in the middle unshaded region.
$$
\tikzmath{
\fill[DarkGreen!30] (-105:1) arc (209:151:2cm) arc (105:255:1cm);
\fill[DarkGreen!30] (-75:1) arc (-29:29:2cm) arc (75:-75:1cm);
\draw[thick,DarkGreen] (-105:1) arc (209:151:2cm);
\draw[thick,DarkGreen] (-75:1) arc (-29:29:2cm);
\draw[thick,red, mid>] (0,-1) node[below]{$e$} arc (210:150:2cm);
\draw[thick,red, dotted] (0,1) arc (30:-30:2cm);
\draw[thick] (0,0) circle (1cm);
\draw[thick] (-1,0) arc (-120:-60:2cm);
\draw[thick, dotted] (-1,0) arc (120:60:2cm);
}
$$
$L_a$ for $a=e^2,m, m^2,$ or $ m^3$ act trivially on the space of ground states, since these strings can slide into the green region and be contracted.  However, $e$ is confined in the green regions; hence $L_e$ need not have a trivial action.  Indeed,
because $[L_e,e_{m^2}]=-1$, we know that $L_ew_1=w_{-1}L_e$, so $L_e$ exchanges the two summands of $\cW$.
In summary, the two summands of $\cW$ can be distinguished by the number of $e$-lines modulo $2$ running around a non-contractible loop in the $\cC$-bulk region.

Thus, we see that in the geometry depicted above, the image of the projectors $w_1$ and $w_{-1}$ differs by an extended $e$-string operator in the strip of the original uncondensed phase. 
Although the two summands exhibit the same behavior in terms of particle mobility, the composite tunneling operators $T_m:mA_A\to m\to mA_A$ and $T_{m^3}:mA_A\to m^3\to mA_A$ 
will pick up different phases on each summand, which we denote by $\lambda_{m,1}$, $\lambda_{m^3,1}$, $\lambda_{m,-1}$, and $\lambda_{m^3,-1}$ respectively.
The phases themselves are not well defined, since they can be modified by adding strictly local operators in the vicinity of the domain wall.
However, we can define a phase that is independent of these details, by first tunneling an $m$ across using $T_m$ (or $T_{m^3}$), then applying the non-local $L_e$ operator, and finally tunneling the particle back using $T_m^{-1}$ (or $T_{m^3}^{-1})$.
The net phase acquired in this process is $\lambda_{m,-1}/\lambda_{m,1}$ (or $\lambda_{m^3,-1}/\lambda_{m^3,-1}$), which is $\pm i$.
The overall sign is also not well defined, since it can be modified by adding a contractible $L_{e^2}$ string encircling the domain wall.  
However, the ratios
\[(\lambda_{m,-1})^2/(\lambda_{m,1})^2=(\lambda_{m^3,-1})^2/(\lambda_{m^3,-1})^2=-1\]
are well-defined, in the sense that they are independent of any local or contractible operators.
This is the sense in which the two boundaries represent different $\mathbb{Z}_2$-symmetry actions on the $m$ particle, which carries a $1/2$ $\mathbb{Z}_2$ charge under one of the symmetry actions.

The process of computing $\lambda_{m,-1}/\lambda_{m,1}$ described above cannot tell us whether the system was initially in the image of $\cW_1$, or that of $\cW_{-1}$.
On the spherical geometry
this is true in general: no physical process can reliably tell us which of the two domain wall types we have, since there is no way to measure the presence of the $L_e$ string that is robust to adding local operators near the domain wall; the physically well-defined information is simply that the two domain wall types are different.
This is reflected mathematically by the fact that $\cW_1$ and $\cW_{-1}$ are equivalent as $\cX-\cX$ bimodule categories, even though they are different summands of $\cW$.

\begin{rem} In a given microscopic realization, we can see that we obtain the summand with an odd number of extended $e$-strings parallel to the domain walls in either the image of $w_1$ or of $w_{-1}$, but not both.
Recall that when we defined the lattice model, we had to make choice of operator $C_\ell$ to  condense $m^2$ in the green region.
If we choose $C_\ell=\frac{1}{2}(1+Z)^2$ in both regions, then $w_1$ selects states with an even number of extended $e$-strings, and $w_{-1}$ selects states with an odd number.
If we instead choose $C_\ell=\frac{1}{2}(1+Z)^2$ in the left bulk region and $C_l=\frac{1}{2}(1-Z)^2$ in the right bulk region, then the situation is reversed.
\end{rem}

Finally, we consider adding an  $e$-$e^3$ pair to the uncondensed $\cD(\bbZ/4)$ region.
Each of these excitations corresponds to a \textit{nontrivial} point defect separating the two distinct types of domain wall identified above.
As above, because there is no physical process that can detect a closed $L_e$ string, a priori the resulting domain wall can either have $\cW_1$-type domain wall below the $e$ particle, and a $\cW_{-1}$ wall above it, or be in a configuration where the roles of $\cW_1$ and $\cW_{-1}$ are reversed.
Note also that there is no universal meaning to which particle we call $e$ and which we call $e^3$, since this can be altered by bringing a pair of $e^2$ particles from one of the toric code regions, and binding one to each defect.

In the presence of such defects, however, we can see the projective $\mathbb{Z}_2$ symmetry action more explicitly.
Tunneling an $m:=mA_A$ particle from the condensed $\cD(\bbZ/2)$ toric code in a closed path encircling an $e$ or $e^3$ defect gives a phase of $\phi = \pm i$.  The overall sign is not universal, in the sense that it can be modified by binding an $e^2$ particle to the defect.  However, the fact that $\phi^2 = - 1$ is universal, and cannot be modified by either local operators, or binding anyons from the Toric code bulk to the defects.  Thus, we see that the $m$ particle carries a $1/2$ charge under one of the $\mathbb{Z}_2$ symmetries associated with this domain wall.
\end{enumerate}

\subsection{Doubled Ising}
\label{sec:tunnelingIsing}

\begin{enumerate}[(1)]
\item
In this example, we choose
\begin{align*}
 \cA &= \Hilb\\
 \cX &= \Hilb[\bbZ/2]\\
 \cY &= \Ising\\
 {}_{\cX}\cM_{\cY} &= \Hilb[\bbZ/2]\text{.}
\end{align*}
Recall that the $\Ising$ UMTC\footnote{
The $\Ising$ UMTC (which is distinct from $SU(2)_2$) is the semisimple part of the Temperley-Lieb-Jones category $\cT\cL\cJ(ie^{-\frac{2\pi i}{16}})$ with braiding \eqref{eq:TLJBraidingSU2_4}, pivotal structure given by the identity for all simples, and dagger structure from Footnote \ref{footnote:TLJDagger}.
The formulas for $S,T$ are given in \cite[p.~19]{MR2640343}; compare with Footnotes \ref{footnote:SU(2)4-SMatrix} and \ref{footnote:SU(2)4-TMatrix}.} 
has anyons $1$, $\psi$, and $\sigma$, where $\psi$ is a fermion and $\dim(\sigma)=\sqrt{2}$.
Like any modular tensor category, its double $\cC:=Z(\cY)\cong\Ising\boxtimes\overline{\Ising}$ consists of two copies with opposite chirality.  
The particle $\psi\boxtimes\overline{\psi}$ is a boson, and after condensing this boson, i.e. condensing the algebra
\[1\boxtimes\overline{1}\oplus\psi\boxtimes\overline{\psi}\in\Ising\boxtimes\overline{\Ising}\cong Z(\Ising)\text{,}\]
we obtain $\cC_A^{\loc}\cong Z(\cX)=\cD(\Z/2)$ \cite{PhysRevB.79.045316,1012.0317}, the toric code topological order.

A concrete description of the string-net \cite{PhysRevB.71.045110} lattice model of $Z(\Ising)\cong\Ising\boxtimes\overline{\Ising}$, modified to condense $\psi\boxtimes\overline{\psi}$, 
appears in \cite{1012.0317,PhysRevB.94.235136}.

\item
As in \S~\ref{ssec:tcExample}, the algebra $\cC(A\to A)$ is $2$-dimensional, and generated by composition projections $\pi_1$ and $\pi_{\psi}$ corresponding to the simple objects $1\boxtimes\overline{1}$ and $\psi\boxtimes\overline{\psi}$ respectively.  
Again, $2\pi_1$ is the identity for $\star$, and $\pi_\psi*\pi_\psi=\frac{1}{2}\pi_1$.
The minimal convolution projections are thus
\begin{align*}
w_1&:=\id_A=\pi_1+\pi_{\psi}
\\
w_{-1}&:=\pi_1-\pi_{\psi}
\end{align*}

\begin{rem}
 One might wonder why, despite the fact that the UMTC $\cC=Z(\Ising)$ and algebra $A$ in this example are different from the UMTC $\cC'=\cD(\mathbb{Z}/4)$ and algebra $A'\cong 1\oplus m^2$ from the previous example, the algebras $(\cC(A\to A),\star)$ and $(\cC'(A'\to A'),\star)$ are isomorphic.
 This happens because the subcategories generated by each algebra are equivalent; both are equivalent to $\Hilb[\mathbb{Z}/2]$ with the trivial/symmetric braiding.
 On the level of objects, the equivalence is given by $1\mapsto 1$ and $m^2\mapsto\psi\boxtimes\overline{\psi}$, and it indeed maps $A'$ to $A$.
 However, because the overall UMTCs $\cC'$ and $\cC$ are different, the remaining details of this example will differ significantly from those in the previous one.
\end{rem}

\item
After condensing $\psi\boxtimes\overline{\psi}$, the boson $\sigma\boxtimes\overline{\sigma}$ of $Z(\Ising)$ splits into the direct sum $e\oplus m$ in the toric code, while $1\boxtimes\overline{\psi}$ and $\psi\boxtimes 1$ become $em\cong\epsilon$.
Thus there are tunneling operators $T_{e \rightarrow m}$ and $T_{m \rightarrow e}$, as well as tunneling operators $T_{a \rightarrow a}$ for $a = 1, e, m, \epsilon.$ There are 
no tunneling operators between any other pairs of Toric code anyons, since fusion with $A$ at the domain wall is equivalent to fusion with the vacuum.  
More precisely, the free modules $cA_A$ for $c\in\Irr(\cC)$ partition the simple objects of $\cC$, {i.e.} each $c\in\Irr(\cC)$ appears in exactly one (isomorphism class of) $dA_A$ where $d\in\Irr(\cC)$.

For the free modules $A_A$ and $\epsilon\cong(\psi\boxtimes 1)A_A$, we have $\cC(A\to A)\cong\bbC\oplus\bbC\cong\cC((\psi\boxtimes 1)A\to(\psi\boxtimes 1)A)$, so there are $2$ tunneling channels $1\to 1$ and $2$ channels $\epsilon\to\epsilon$, corresponding to the two choices of string operator ($1$ and $\psi \boxtimes \overline{\psi}$, or $ (\psi\boxtimes 1)$ and  $(1\boxtimes  \overline{\psi})$, respectively)  which can transport these between the left and right domain walls.  As in the previous example, this leads to one tunneling channel for each particle in each of the two summands $\cW_{\pm 1}$. 
Again, this is because the toric code particles $1$ and $\epsilon$ are free modules, and hence the spaces of tunneling operators $1\to 1$ and $\epsilon\to\epsilon$ are isomorphic to $\cC(A\to A)$ as representations of the algebra $(\cC(A\to A),\star)$ of short string operators.

On the other hand, both $e$ and $m$ are local $A$-modules (i.e. anyons in the condensed region) with the same underlying \textit{simple} object (anyon) in the uncondensed topological order, namely $\sigma\boxtimes\overline{\sigma}\in\Irr(\cC)$. 
Thus each of the $4$ spaces of tunneling operators involving $e$ and $m$ is 1-dimensional, given by $\cC(\sigma\boxtimes\overline{\sigma}\to\sigma\boxtimes\overline{\sigma})\cong\bbC$.  
In other words, there are unique (up to phase) tunneling channels $e\to e$, $e\to m$, $m\to e$, and $m\to m$.
In a given summand of $\cW$, only one of the two possible channels $e \rightarrow e, e \rightarrow m$ is non-vanishing (and similarly for $m$).

\item 
To see how $\cW_1$ and $\cW_{-1}$ differ, we must analyze the spaces of tunneling operators through each wall.  The analysis for $1$ and $\epsilon$ is analogous to that in \S~\ref{ssec:tcExample}; as promised, we find that $\cW_{1}$ acts as the trivial superselection sector for these anyons, while $\cW_{-1}$ induces an extra phase factor when a $\psi$ particle crosses.

To analyze the tunneling channels for $e$ and $m$,
we must compute the action of $(\cC(A\to A),\star)$ on the spaces of tunneling operators $e\to e$, $m\to m$, $m\to e$, and $e\to m$.
As noted above, $e$ and $m$ particles are both transported through the $ Z(\Ising)$ strip connecting the left and right Toric code regions using  the hopping operator for the anyon $\sigma\boxtimes\overline{\sigma}$.
Hence the four tunneling spaces are all isomorphic to $\cC(\sigma\boxtimes\overline{\sigma}\to\sigma\boxtimes\overline{\sigma})\cong\mathbb{C}$; however, they carry different actions of $(\cC(A\to A),\star)$.

We now show how to use this action to obtain a description of the superselection sectors, using the mathematical machinery of \S~\ref{ssec:tunnelingOpAction}.
We begin by investigating how, at either domain wall, the anyon $\sigma\boxtimes\overline{\sigma}\in\Irr(D(\Ising))$ splits as the mixture $e\oplus m$ of toric code anyons.
If a $\sigma\boxtimes\overline{\sigma}$ anyon is near a domain wall to the $A$-condensate, we can apply an operator bringing a $\psi\boxtimes\overline{\psi}$ anyon out of the wall and fusing it with the $\sigma\boxtimes\overline{\sigma}$.
Since two copies $\psi\boxtimes\overline{\psi}$ fuse to $1$, performing this operation twice returns us to the original state.
Thus, our $\sigma\boxtimes\overline{\sigma}$ anyon has a $2$-dimensional space of configurations.
The local operators which can act on a $\sigma\boxtimes\overline{\sigma}$ anyon near the $A$ condensate are the space
\begin{equation}
\label{eq:IsingM2}
\cC((\sigma\boxtimes\overline{\sigma})A\to (\sigma\boxtimes\overline{\sigma})A)\cong M_2(\mathbb{C})\text{,}
\end{equation}
However, the operators which are stable under fusion with the $A$ condensate are the subspace
\begin{align}
\cD(\mathbb{Z}/2)&(e\oplus m\to e\oplus m)
\notag
\\&\cong
\cC_A((\sigma\boxtimes\overline{\sigma})A_A\to (\sigma\boxtimes\overline{\sigma})A_A)
\label{eq:IsingSplit}
\\&\cong
\mathbb{C}\oplus\mathbb{C}\notag
\end{align}
spanned by projectors which pick out the toric code excitations $e\oplus m$ into which $\sigma\boxtimes\overline{\sigma}$ splits at the wall.

In order to write down operators in $\cC((\sigma\boxtimes\overline{\sigma})A\to(\sigma\boxtimes\overline{\sigma})A)$ as $2\times 2$ matrices, we must choose a particular isomorphism \eqref{eq:IsingM2}, analogous to a choice of orthonormal basis for the $2$-dimensional configuration space.
We choose the basis $\{|0\rangle,|1\rangle\}$, where $|j\rangle$ is a state where the number of $\psi\boxtimes\overline{\psi}$ lines from the condensate fused into the $\sigma\boxtimes\overline{\sigma}$ anyon modulo $2$ is $j$.
Mathematically, we can describe this choice as follows.
Since $\End_\cC(A)$ is generated by the orthogonal projections $\pi_1$ and $\pi_\psi$ onto the $1\boxtimes\overline{1}$ and $\psi\boxtimes\overline{\psi}$ anyons, and there are unique fusion channels $(\sigma\boxtimes\overline{\sigma})(1\boxtimes\overline{1})\cong\sigma\boxtimes\overline{\sigma}$ and $(\sigma\boxtimes\overline{\sigma})(\psi\boxtimes\overline{\psi})\cong\sigma\boxtimes\overline{\sigma}$,
$\End_\cC((\sigma\boxtimes\overline{\sigma}A)$ contains $e_1:=\id_{\sigma\boxtimes\overline{\sigma}}\otimes\pi_1$ and $e_\psi:=\id_{\sigma\boxtimes\overline{\sigma}}\otimes\pi_\psi$ as orthogonal rank $1$ projections.
Then $e_1$ is the projection onto $|0\rangle$ and $e_\psi$ is the projection onto $|1\rangle$, so the isomorphism \eqref{eq:IsingM2} becomes
\[e_1=\left[\begin{array}{cc}1&0\\0&0\end{array}\right]\text{,}\qquad\qquad e_\psi=\left[\begin{array}{cc}0&0\\0&1\end{array}\right]\text{.}\]
The operator (unique up to phase) which brings a $\psi\boxtimes\overline{\psi}$ line out of the condensate and fuses it with $\sigma\boxtimes\overline{\sigma}$ exchanges $|0\rangle$ and $|1\rangle$, so it is just the Pauli matrix
\[X=\left[\begin{array}{cc}0&1\\1&0\end{array}\right]\]

Now, we can determine matrices for the subalgebra \eqref{eq:IsingSplit} of operators stable under fusion with the $A$ condensate.
To do so, we will first compute the $(\cC(A\to A),\star)$ action on \eqref{eq:IsingM2}, and then apply Lemma \ref{lem:modulesAndConv}.
We know that $\id_A=\pi_1+\pi_\psi$, that $2\pi_1$ is the identity for the convolution $\star$, and that $\pi_\psi\star\pi_\psi=\frac{1}{2}\pi_1$.
Thus, the matrices satisfying the condition from Lemma \ref{lem:modulesAndConv} are exactly those which are preserved under the action of $2\pi_\psi$, which is conjugation by Pauli $X$, {i.e.}
\begin{equation}
 \label{eq:XConj}
 M\mapsto\left[\begin{array}{cc}0&1\\1&0\end{array}\right]M\left[\begin{array}{cc}0&1\\1&0\end{array}\right]
\end{equation}
If we think of $\End_\cC((\sigma\boxtimes\overline{\sigma})A\to(\sigma\boxtimes\overline{\sigma})A)$ as a space of tunneling operators through the composite domain wall, we can interpret this fact as follows.
Operators which take a $\sigma\boxtimes\overline{\sigma}$ anyon from the left domain wall to the right domain wall map the $2$-dimensional space of configurations near the left wall to the $2$-dimensional space of configurations near the right wall.
Thus, for example, $e_\psi$ is the operator $|1_R\rangle\langle1_L|$, where the subscripts $L$ and $R$ denote which configuration space a state lives in.
If $T$ is such a tunneling operator, then $T\lhd2\pi_\psi$ is obtained by applying the short string operator associated with $\psi$ parallel to $T$; this short string operator can be fused into the $\sigma\boxtimes\overline{\sigma}$ string near each domain wall, which evidently has the effect \eqref{eq:XConj}.

The matrices which are invariant under \eqref{eq:XConj} are just those which are diagonalized in the eigenbasis of Pauli $X$, and are hence spanned by
\[\frac{1}{2}\left[\begin{array}{cc}1&1\\1&1\end{array}\right]\text{ and }\frac{1}{2}\left[\begin{array}{cc}1&-1\\-1&1\end{array}\right]\text{.}\]
Since these matrices are orthogonal Hermitian projections, one must be $\id_e$, the operator that projects onto an $e$-type domain wall excitation, and the other must be $\id_m$.
The choice of which is which is arbitrary, because making a different choice amounts to applying the symmetry exchanging the $e$ and $m$ labels in toric code.
We choose
\[\id_e:=\frac{1}{2}\left[\begin{array}{cc}1&1\\1&1\end{array}\right]\text{ and } \id_m:=\frac{1}{2}\left[\begin{array}{cc}1&-1\\-1&1\end{array}\right]\text{.}\]

Now that we know the projections which select an $e$ or $m$ particle on each domain wall, it is easy to determine the tunneling operators $e\to e$, $e\to m$, $m\to e$, and $m\to m$ across the composite domain wall.
Since each of these projections is rank $1$, for each $x,y\in\{e,m\}$, the space of tunneling operators $x\to y$ is spanned by the unique (up to phase) partial isometry $T_{x\to y}$ from $\id_x$ to $\id_y$.
Of course, a projection is a partial isometry from itself to itself, so the unique (up to scalar) tunneling operators $e\to e$ and $m\to m$ are respectively
\[T_{e\to e}:=\frac{1}{2}\left[\begin{array}{cc}1&1\\1&1\end{array}\right]\text{ and } T_{m\to m}:=\frac{1}{2}\left[\begin{array}{cc}1&-1\\-1&1\end{array}\right]\text{,}\]
As $2\times 2$ matrices, these are the same as $\id_e$ and $\id_m$, but we now interpret them as tunneling operators which map between the two configuration spaces of $\sigma\boxtimes\overline{\sigma}$ near each of the two boundaries, rather than local operators on the single configuration space near one of the boundaries.
The other two partial isometries, which take $\id_e$ to $\id_m$ and vice-versa, are:
 \[T_{e\to m}=\frac{1}{2}\left[\begin{array}{cc}1&1\\-1&-1\end{array}\right]\text{ and }
 T_{m\to m}=\frac{1}{2}\left[\begin{array}{cc}1&-1\\1&-1\end{array}\right]\text{.}\]
These correspond to tunneling channels exchanging $e$ and $m$.

Now that we have concrete matrices for each tunneling channel $T_{x\to y}$, as well as for the action of $(\cC(A\to A),\star)$ on these tunneling operators, we can reap the rewards.
That is, we can compute $T_{x\to y}\lhd w_i$ for each choice of $(x,y,i)$, and thereby determine which tunneling operators correspond to which summand of the composite domain wall $\cW$.
The tunneling operators $T_{e\to e}$ and $T_{m\to m}$ are stable under conjugation by Pauli $X$, and hence are tunneling operators through the summand $\cW_1$ determined by the projection $w_1=\id_A$.
Meanwhile, conjugating the tunneling operators $T_{e\to m}$ and $T_{m\to e}$ by $X$ introduces a factor of $-1$, so these are the tunneling operators through the summand $\cW_{-1}$ determined by the projection $w_{-1}$.

This shows that $\End_{\cX-\cX}(\cW_1)$ is the trivial Witt autoequivalence of $\cC_A^{\loc}\cong\cD(\mathbb{Z}/2)$, while $\End_{\cX-\cX}(\cW_{-1})$ is the autoequivalence coming from the symmetry $\Phi:\cD(\mathbb{Z}/2)\to\cD(\mathbb{Z}/2)$ which exchanges the anyons $e$ and $m$.

\item
To understand the different superselection sectors intuitively, we once again place our system on a sphere as in Remark \ref{rem:sphereGSD}.   The ground state Hilbert space contains states in which extended $\sigma\boxtimes1$ or $1 \boxtimes \overline{\sigma}$ string operators encircle the strip where $A$ is not condensed.  
Since $\sigma \boxtimes 1$ and $1 \boxtimes \overline{\sigma}$ are the only anyons that become confined in the Toric code regions, and they differ by fusion with anyons that are deconfined everywhere, the operator $L_{\sigma}$ that inserts such strings is the only operator that can act non-trivially on the ground state Hilbert space.
$$
\tikzmath{
\fill[red!30] (-105:1) arc (209:151:2cm) arc (105:255:1cm);
\fill[red!30] (-75:1) arc (-29:29:2cm) arc (75:-75:1cm);
\draw[thick,red] (-105:1) arc (209:151:2cm);
\draw[thick,red] (-75:1) arc (-29:29:2cm);
\draw[thick,orange, mid>] (0,-1) node[below]{$\scriptstyle\sigma\boxtimes 1$} arc (210:150:2cm);
\draw[thick,orange, dotted] (0,1) arc (30:-30:2cm);
\draw[thick] (0,0) circle (1cm);
\draw[thick] (-1,0) arc (-120:-60:2cm);
\draw[thick, dotted] (-1,0) arc (120:60:2cm);
}
$$

The operator $L_{\sigma}$ anti-commutes with $\pi_{\psi}$, since $S_{(\psi\boxtimes\overline{\psi})(\sigma\boxtimes 1)}/S_{1(\sigma\boxtimes 1)}=S_{(\psi\boxtimes\overline{\psi}),(1\boxtimes\overline{\sigma})}/S_{1,(1\boxtimes\overline{\sigma})}=-1$.  Thus $L_{\sigma}$ exchanges the two summands $\cW_{1}$ and $\cW_{-1}$.  Moreover, with this geometry, the image of $w_1$ ($w_{-1}$) is a strip with an odd (even) number of $\sigma \boxtimes 1$ or $1 \boxtimes \overline{\sigma}$ lines encircling the white strip at the center of the sphere.  

We can now understand why the projections $w_{\pm 1}$ have the action that they do.  First, because the $\cW_{-1}$ domain wall contains an odd number of extended $\sigma$ strings, as shown in  \cite{1012.0317,PhysRevB.94.235136}, an  $e$ particle that enters this domain wall will exit as an $m$, and vice versa.  Moreover, an $\epsilon$ particle crossing such a wall will incur an extra phase of $\pm i$ relative to the $\epsilon$ particle crossing the trivial, $\cW_{1}$ domain wall due to its half-braiding with the extended $\sigma$ line.

Finally, if we allow pairs of $\sigma\boxtimes 1$ or $1 \boxtimes \overline{\sigma}$ particles in our uncondensed strip, each particle corresponds to a defect separating $\cW_{1}$ and $\cW_{-1}$ domain wall types.  Once again, local operators cannot be used to distinguish  the locus of each domain wall type, but they are clearly distinct: if we braid an $e$ particle around such a defect it returns as an $m$, and vice versa.   
Bringing an $\epsilon$ around such a defect in principle incurs  a net phase of $-1$; however in this case the phase is not well-defined, as it can be modified by attaching an $e$ or $m$ anyon from the bulk to the defect.  

\begin{remark}
As in the previous example, in the spherical geometry discussed above the two summands $\cW_{\pm 1}$ cannot be distinguished by any local  process, since the choice of which anyon to call $e$ or $m$ in each region is a matter of convention.
However, as before, we can see that the two domain wall types are distinct, by first tunneling an $e$ across the domain wall from left to right, then applying $L_{\sigma}$, and finally tunneling the same anyon back across from right to left.  The particle that returns is necessarily an $m$, showing that one of the two domain wall types exchanges $e$ and $m$, while the other does not.

However, if we put the system on the torus, then we find that in $\cW_{-1}$, an $e$ particle that traverses the torus in the direction perpendicular to the strip returns as an $m$. 
This changes the nature of the topologically distinct ground states on the torus (though not their number).
\end{remark}

\begin{remark}
Depending on the topology of our system, we can relabel the toric code anyons on only one side of the composite domain wall, exchanging $e$ and $m$.
However, such a relabeling will only preserve the Witt equivalence labelling one domain wall $\cC_A\cong\End_{\cX_A-\cX}(\cX_A)\cong\End_{\cX-\cX_A}(\cX_A)$ up to monoidal bimodule equivalence, so it will also exchange which short string operators are identified with $w_{\pm 1}$.
Thus, the Witt equivalences $End_{\cX_A-\cX_A}(\cW_{\pm 1})$ will be preserved.
\end{remark}

\end{enumerate}

\subsection{Chiral example: \texorpdfstring{$\cT\cY_{3,-}$}{TY3-} wall between \texorpdfstring{$SU(3)_1$}{SU(3)1} and \texorpdfstring{$SU(2)_4$}{SU(2)4}}
\label{sec:chiralExample}

We now turn to a chiral example, based on Example \ref{egs:chiralExample}.
\begin{enumerate}[(1)]
\item
We take
\begin{align*}
\cA&=\overline{SU(3)_1}
\\
\cX&= SU(3)_1
\\
\cY&= \cT\cY_{3,-} = (SU(2)_4)_{1\oplus g}
\\
{}_\cX\cM_\cY&= {}_\cX\cY^{\rm mp}_\cY.
\end{align*}
As we saw in Example \ref{egs:chiralExample}, $Z^\cA(\cY)=SU(2)_4$.
It is clear that $Z^\cA(\cX)=SU(3)_1$, as $SU(3)_1$ is modular.
The algebra in $Z^{\cA}(\cY)\cong SU(2)_4$ which condenses at the boundary is $1\oplus g$.
It follows that $\cM:=\End^{\cA}_{\cX-\cY}(\cY)\cong((SU(2)_4)_{1\oplus g})^{\rm mp}\cong (\TY_{3,-})^{\rm mp}$.

These choices of $\cY$ and of the bimodule ${}_{\cX}\cY_{\cX}$ have an obvious implementation in terms of the lattice model described in \S~\ref{ssec:chiralLatticeModel}: we simply restrict the set of labels of edges in regions labelled by $\cX$ to the simple objects of $\cX\cong\cA$ (as UFCs), rather than allowing the additional simple object of $\cY=\TY_{3,-}$.
In other words, while the description of the topological boundary corresponding to $\cY$ in \S~\ref{ssec:chiralLatticeModel} was somewhat involved, $\cX$ is the topological boundary for the $\cA$-bulk obtained by simply cutting it off at a plane.

\item
The analysis of the algebra $(\cC(A\to A),\star)\cong\bbC[\bbZ/2]$ is identical to the non-chiral examples \ref{ssec:tcExample} and \ref{sec:tunnelingIsing}.
In summary, if $\pi_1$ and $\pi_g$ are minimal composition projections in $\cC(A\to A)$, then the identity for $\star$ is $2\pi_1$, and minimal projections for $\star$ are
\begin{align*}
w_1&:=\id_A=\pi_1+\pi_g
\\
w_{-1}&:=\pi_1-\pi_g
\end{align*}
The domain wall $\cW:=\cM\xF_\cY\overline{\cM}$ therefore decomposes as a direct sum $\cW=\cW_1\oplus\cW_{-1}$ of two indecomposable domain walls.
As before, $\cW_1$ is the identity domain wall of $\cC_A^{\loc}\cong SU(3)_1$.
As we will see, $\cW_{-1}$ is the nontrivial invertible boundary which exchanges the anyons $\alpha$ and $\alpha^2$ of $SU(3)_1$.

\item
The analysis is parallel to that of \S~\ref{sec:tunnelingIsing}.
The space of tunneling operators $1\to 1$ across the composite $\cW$ domain wall is, as always, just $(\cC(A\to A),\star)$.
The only other anyons in $SU(3)_1$, $\alpha$ and $\alpha^2$, arise as summands when the anyon $f_2\in\cC$ splits on the domain wall: $f_2A_A\cong\alpha\oplus\alpha^2$.
As in the previous example, the forgetful functor $SU(3)_1\cong\cC_A\to\cC\cong SU(2)_4$ maps both $\alpha$ and $\alpha^2$ to $f_2$.
Therefore, the space
\begin{equation}
 \label{eq:chiralM2}
 \cC(f_2A\to f_2A)\cong M_2(\mathbb{C})
\end{equation}
which can be interpreted as the space of operators bringing an $f_2$ anyon from the left domain wall to the right domain wall, is the direct sum of the four $1$-dimensional spaces of tunneling operators $\alpha\to\alpha$, $\alpha\to\alpha^2$, $\alpha^2\to\alpha$, and $\alpha^2\to\alpha^2$.

\item To analyze the tunneling operators and determine which Witt autoequivalence of $SU(3)_1$ describes each superselection sector, we must again work out a concrete isomorphism for \eqref{eq:chiralM2}.
One can think of an $f_2$ anyon near either of the domain walls as having a $2$-dimensional configuration space, again spanned by two orthogonal states $|0\rangle$ and $|1\rangle$, where the state $|i\rangle$ contains $i$ modulo $2$ $g$-lines between the condensate and the $f_2$ anyon.
Operators which pull a $g$ out of the condensate and fuse it with $f_2$ again act as Pauli $X$ on this configuration space.

In terms of our hom-spaces, $\cC(f_2A\to f_2A)$ is contains the orthogonal projections
$e_1:=\id_{f_2}\otimes\pi_1$ and $e_g:=\id_{f_2}\otimes\pi_g$, which become the matrices
\[e_1=\left[\begin{array}{cc}1&0\\0&0\end{array}\right]\qquad\qquad e_g=\left[\begin{array}{cc}0&0\\0&1\end{array}\right]\]
This completely determines the isomorphism \eqref{eq:chiralM2}.
Then, by definition, $e_1\lhd2\pi_g=e_g$ and $e_g\lhd2\pi_g=e_1$, so $2\pi_g$ acts as Pauli $X$ as promised.

When interpreting \eqref{eq:chiralM2} as the sum of spaces of tunneling operators, we have a right action of $(\cC(A\to A),\star)$; the action of $\pi_g$ is to apply a short $g$-string operator between the two domain walls in parallel to the tunneling operator.
As before, since this string could be fused with the $f_2$ string crossing the $\cC$ bulk, the action of $\pi_g$ is therefore conjugation by Pauli $X$, i.e. by the matrix
 $\left[\begin{array}{cc}0&1\\1&0\end{array}\right]$.

This has two consequences.
First, if we interpret $\cC(f_2A\to f_2A)$ as a space of local operators acting on an $f_2$ anyon near one of the domain walls, then by Lemma \ref{lem:modulesAndConv}, the subspace
\begin{align}
SU(3)_1&(\alpha\oplus\alpha^2\to\alpha\oplus\alpha^2)
\notag
\\&\cong
\cC_A(f_2A_A\to f_2A_A)
\label{eq:chiralSplit}
\\&\cong
\mathbb{C}\oplus\mathbb{C}\notag
\end{align}
of \eqref{eq:chiralM2}, which is the space of operators stable under fusion with the $A$ condensate, is the space of matrices stable under conjugation by Pauli $X$.
Just as in \S~\ref{sec:tunnelingIsing}, minimal projections in this subspace are given by the following matrices.
\[\id_{\alpha}:=\left[\begin{array}{cc}1&1\\1&1\end{array}\right]\text{ and }\id_{\alpha^2}:=\left[\begin{array}{cc}1&-1\\-1&1\end{array}\right]\]
Of course, we could swap the roles of $\alpha$ and $\alpha^2$, which amounts to composing with an invertible domain wall which exchanges the two.

Second, now that we have identified projections onto the $\alpha$ and $\alpha^2$ summands near each domain wall, we can resume our interpretation of \eqref{eq:chiralM2} as the space of operators bringing an $f_2$ anyon from one domain wall to the other, in which case the subspace \eqref{eq:chiralSplit} consists of tunneling operators in the $\cW_1$ sector.
The above projections are thus also tunneling operators
\[T_{\alpha\to\alpha}:=\left[\begin{array}{cc}1&1\\1&1\end{array}\right]\text{ and }T_{\alpha^2\to\alpha^2}:=\left[\begin{array}{cc}1&-1\\-1&1\end{array}\right]\text{,}\]
since a projection is a partial isometry from itself to itself.
Thus, $\cW_1$ we see that $\End^{\cA}(\cW_1)$ is the trivial Witt autoequivalance.

Meanwhile, tunneling operators through the $\cW_{-1}$ sector form the orthogonal complement of \eqref{eq:chiralSplit} in \eqref{eq:chiralM2}, {i.e.} the space of matrices which obtain a phase of $-1$ when conjugated by Pauli $X$.
Thus, they are spanned by the tunneling operators
\[
 T_{\alpha\to\alpha^2}=\frac{1}{\sqrt{2}}\left[\begin{array}{cc}1&1\\-1&-1\end{array}\right]\qquad
 T_{\alpha^2\to\alpha}=\frac{1}{\sqrt{2}}\left[\begin{array}{cc}1&-1\\1&-1\end{array}\right]
\]
This confirms that the anyons $\alpha$ and $\alpha^2$ are exchanged in the $\cW_{-1}$ sector.
\item
The anyons $f_1$ and $f_3$ in $Z^{\cA}(\cY)\cong SU(2)_4$ become confined in the condensate.  These differ by fusion with the condensing $g$ anyon. 
Thus if we put our model on a capped sphere, there is a single operator 
$L_f$, corresponding to an $f_1$ or $f_3$ string encircling the uncondensed strip, that acts non-trivially on the ground state Hilbert space.  
(In this case, $f_1$ and $f_3$ strings become identified at the boundaries of the strip, as $f_1A_A\cong f_3A_A$).  

From the $S$-matrix in \S~\ref{ssec:chiralLatticeModel}, we can see that the short string operator $\pi_g$ and an extended string operator $L_{f}$  anticommute, so that $L_f$ exchanges the images of the projections $w_1$ and $w_{-1}$.  Moreover, we see that for the trivial domain wall $ \cW_1$, which is the image of $w_1$, the strip must contain an even number of extended $f_1$ or $f_3$ loops.  The non-trivial domain wall $\cW_{-1}$, in contast, is encircled by an odd number of such loops.  

Finally,  a pair of $f_1$ or $f_3$ anyons in the strip are point defects separating $\cW_1$ and $\cW_{-1}$ wall regions.  An $\alpha$ particle that is sent through the domain wall on one side of such a point defect, and brought back on the other, returns as $\alpha^2$.
\end{enumerate}

\subsection{Nonabelian condensate example: dihedral groups}
\label{sec:dihedralExample}
\begin{enumerate}[(1)]
\item
In this example, we consider
\begin{align*}
 \cA &= \Hilb\\
 \cX &= \Hilb[D_a]\\
 \cY &= \Hilb[D_n]\\
 {}_{\cX}\cM_{\cY} &= \Hilb[D_a]\text{,}
\end{align*}
where $n$ is odd, and $a$ is a divisor of $n$.
Here, $D_k$ is the dihedral group of order $2k$, with the presentation $D_k\cong\langle r,f|r^k,f^2,(rf)^2\rangle$.
Anyons in $\cD(D_k)\cong Z(\fdHilb[D_k])$ are irreducible $D_k$-graded $D_k$-representations, which we denote by $(g,\rho)$, where $g\in D_k$ and $\rho\in\Irr(\Rep(\Stab_{D_n}(g)))$.
For an overview of $\cD(G):= Z(\fdHilb[G])$, where $G$ is a non-Abelian group, see Appendix~\ref{appendixSec:D(G)} below.
In particular, the case of dihedral groups, including notation, representation theory, and a detailed description of the UMTC $\cD(D_n)$, is covered in Appendix~\ref{appendixSec:dihedral}.

At the domain wall, we condense the subgroup algebra 
\[A=\bbC[\langle r^a\rangle]\cong\bigoplus_{k=1}^\frac{n/a-1}{2}(r^k,1)\text{.}\]
The resulting fusion category $\cX\cong\cY_A$ is equivalent to $\Hilb[D_a]$, because $D_n/\langle r^a\rangle\cong D_a$.
Thus, we have $Z(\cX)\cong\cD(D_a)$.
The Witt equivalence $\End_{\cX-\cY}(\cX)$ of wall excitations is given by $\cD(D_a,D_n)$, where objects are $D_a$-graded $D_n$-representations; see Definition \ref{defn:mqd} and Example \ref{egs:mqdUse}.
As for the bimodule tensor category structure, the action $\cD(D_a)\to \cD(D_a,D_n)$ involves inducing each $D_a$ representation to all of $D_n$, while the action $\cD(D_n)\to \cD(D_a,D_n)$ involves applying the quotient map $D_n\to D_a$ to the grading.

Before doing any computations, we will summarize the results of our analysis.
The composite domain wall $\cW$ has $\frac{n/a+1}{2}$ summands.
One of these $\cW_0$, is the identity domain wall of $\cD(D_a)$.
The remaining $\frac{n/a-1}{2}$ summands all correspond to the Lagrangian algebra $L(B,B,\id)\in\cD(D_a)\xBh\overline{\cD(D_a)}$,
where $B=\cC^{D_a/\langle f\rangle}\cong 1\oplus(1,\epsilon)$.
The anyon $(f,1)\in\cD(D_n)$ splits as a direct sum of several wall excitations at each $\cD(D_n)-\cD(D_a)$ domain wall, one of which becomes the anyon $(f,1)\in\cD(D_a)$, and the rest of which are distinct and confined to the wall.
Each summand of $\cW$ gives a different identification of the summands of $(f,1)$ on the left and right domain walls, with only the identity summand of $\cW$ identifying the mobile summands $(f,1)\in\cD(D_a)$ on both sides.
We will see that the summands other than $\cW_0$ are all equivalent as Witt equivalences $\cD(D_a)\to\cD(D_a)$.

A description of string-net models for a class of similar boundaries appears in Appendix \ref{sec:groupModel}; this boundary is the case where $G=D_n$, $H=\langle r^a\rangle\subseteq D_n$.
\item
The algebra $\cC(A\to A)$ is generated under composition by projections $\pi_{r^{ka}}=\pi_{r^{-ka}}$ onto the simple summands of $A$, where $k$ runs from $0$ to $\frac{n/a-1}{2}$.
We also adopt the notation $\pi_1:=\pi_{r^0}$.
Observe that $\End(A)\cong(\bbC^{\langle r^a\rangle})^f$, the part of the algebra $\bbC^{\langle r^a\rangle}$ of functions on $\langle r^a\rangle\cong\bbZ/\frac{n}{a}$ stable under the action of $f$, which exchanges $r^a$ and $r^{-a}$.
Along the same lines, the algebra $(\cC(A\to A),\star)$ is just $\bbC[\langle r^a\rangle]^f$, the part of the group algebra of $\langle r^a\rangle$ stable under the action of $f$.
Explicitly, the identity for $\star$ is $\frac{n}{a}\pi_1$, and the convolution product is given by
\[\pi_{r^{ka}}\star\pi_{r^{ja}}=\frac{a}{n}(\pi_{r^{(k+j)a}}+\pi_{r^{(k-j)a}})\text{.}\]
Minimal projections for the convolution product are as follows.
\begin{align*}
 w_0:=&\sum_{j=0}^{\frac{n/a-1}{2}}\pi_{r^{ja}}\\
 w_k:=&\sum_{j=0}^{\frac{n/a-1}{2}}(e^{2\pi ijk(a/n)}+e^{-2\pi ijk(a/n)})\pi_{r^{ja}}
\end{align*}
We again denote the summand of $\cW$ corresponding to $w_k$ by $\cW_k$.
As an algebra $(\cC(A\to A),\star)\cong\bbC[\langle r^a\rangle]^f$, where $\bbC[\langle r^a\rangle]$ denotes the group algebra of $\langle r^a\rangle\cong\bbZ/\frac{n}{a}$, and the notation $\cdot^f$ denotes the subalgebra of invariants under the action of $f$, which exchanges $r^a$ and $r^{-a}$.
The minimal projections correspond to indecomposable $D_n$-subrepresentations of $\bbC[\langle r^a\rangle]$.
As usual, $w_0$ is proportional to $\id_A$, and $\cW_0$ is the identity/trivial domain wall from $Z(\cX)\cong\cC_A^{\loc}\cong\cD(D_a)$ to itself.

\item
Most of the anyons $d\in\cD(D_a)\cong\cD(D_n)_A^{\loc}$ are free $A$-modules. Similar to the the previous two examples, distinct free modules partition those simple objects of $\cD(D_n)$ which have the form $(r^x,\omega^j)$, including the objects $(1,\sigma_j)$.  (Here $r^x$ represents a conjugacy class of $D_n$, while $\omega^j$ and $\sigma^j$ are associated with irreducible representations of $D_n$; see Appendix \ref{appendixSec:dihedral} for details).  
For each such anyon $c\in\cD(D_n)$, either $c$ is a summand of $A$, or $cA$ is the direct sum of $\frac{n}{a}$ distinct simple objects.
In fact, in this case we have $\cC(cA\to cA)\cong\bbC[\langle r^a\rangle]$ as a $\cC(A\to A)\cong\bbC[\langle r^a\rangle]^f$-module.
Therefore, for anyons $d$ of the form $(1,\sigma_k)$ or $(r^k,\omega^j)$, there are $1$-dimensional spaces of tunneling operators $d\to d$ for the summand $\cW_0$, and $2$-dimensional spaces of tunneling operators $d\to d$ for $\cW_k$ when $k\neq 0$.

The remaining free modules can be divided into two overlapping pairs: $A$ and $(1,\epsilon)A$; and $(f,1)A$ and $(f,-1)A$.
We begin with the latter.
Something more interesting happens to the anyons $(f,\pm 1)\in\cD(D_a)$, because the anyons $(f,\pm 1)\in\cD(D_n)$ split into several summands at the domain wall, only one of which is local.
Since $(f,1)\cong (f,-1)(1,\epsilon)$, and the Abelian anyon $(1,\epsilon)$ remains an Abelian anyon when $A$ is condensed, we focus solely on the fate of $(f,1)$.
As described in Appendix \ref{appendixSec:dihedral}, the free module $(f,1)A_A$ decomposes as follows.
\[(f,1)A_A\cong(f,1)_A\oplus\left(\bigoplus_{k=1}^{\frac{n/a-1}{2}}(f,\sigma_k)_A\right)\]
Here $(g,\rho)A_A$ denotes a free module over an object $(g,\rho)\in\Irr(\cD(D_n))$, where $g\in D_n$ and $\rho\in\Irr(\Rep(\Stab_{D_n}(g)))$, while $(h,\lambda)_A$ denotes an object in $\Irr(\cD(D_n)_A\cong\cD(D_a,D_n)$, where $h\in D_a$, and $\lambda\in\Irr(\Rep(\Stab_{D_n}(h)))$, with $D_n$ acting on $D_a\cong D_n/\langle r^a\rangle$ by conjugacy. 
In particular, when we write $(f,\rho)_A$, $\rho$ is an irreducible representation of
\[\Stab_{D_n}(f)=\langle f,r^a\rangle\cong D_{n/a}\text{.}\]

The underlying object of $(f,1)_A$ is the simple object $(f,1)$, while the underlying object of $(f,\sigma_k)_A$ is $(f,1)\oplus (f,-1)$.
Consequently, there are $1$-dimensional spaces of tunneling operators $(f,1)_A\to (f,1)_A$.
\item
In this case, we will be able to identify the domain wall summands $\cW_i$ by the numbers of tunneling channels alone.
Since
\[\cW\cong\bigoplus_{k=0}^{\frac{n-1}{2a}}\cW_k\] has many summands, yet $\cC_A((f,1)_A,(f,1)_A)\cong\cC((f,1),(f,1))$ is only $1$-dimensional, we can immediately see that there  \textit{cannot} be nonzero tunneling operators for the anyon $(f,1)_A\in Z(\cX)$ through all summands of $\cW$ - in fact, $(f,1)$ must be confined in every summand but $\cW_0$.

The fate of $(1,\epsilon)$ is closely related to that of $(f,1)$.
The anyon $(1,\epsilon)$ in the condensed $\cD(D_a)$ topological order is just the free module $(1,\epsilon)A_A$.
However, since $(1,\epsilon)(r^{ak},1)\cong(r^{ak},1)$, there are many nonzero tunneling operators $1\to(1,\epsilon)$ and $(1,\epsilon)\to 1$.

Explicitly, we have $\dim(\cC(A\to A))=\dim(\cC((1,\epsilon)A\to(1,\epsilon)A))=\frac{n/a+1}{2}$, while $\dim(\cC(A\to(1,\epsilon)A))=\dim(\cC((1,\epsilon)A\to A))=\frac{n/a-1}{2}=\frac{n/a+1}{2}-1$.
Thus, we see that the anyon $(1,\epsilon)\in\cC_A^{\loc}$ from the condensed region \textit{condenses} on at least some summands of the domain wall.

We can use these observations to give a more precise description of the various summands in terms of a Lagrangian algebra  $L(B,B,\Phi)$ of the folded theory $\cD(D_a) \boxtimes \overline{\cD(D_a)} $. 
We know that $(f,1)$ is confined by all but the trivial summand $\cW_0$.  Thus some anyon which braids nontrivially with $(f,1)$ must condense on each of the remaining $\frac{n/a+1}{2}-1$ summands.
This can only be $(1,\epsilon)$, since no other anyons of $\cC_A^{\loc}$ have nonzero tunneling operators to the vacuum. 
Thus, with the exception of $\cW_0$, all summands of $\cW$ must correspond to a Lagrangian algebra $L(B,B,\Phi)$, where $B\cong 1\oplus(1,\epsilon)$ is the condensable algebra in $\cD(D_a)$ described above, and $\Phi$ is an outer automorphism of $\cD(D_a)_B^{\loc}\cong\cD(\mathbb{Z}/a)$.
In other words, the indecomposable domain walls in these superselection sectors correspond to a strip of $\mathbb{Z}/a$ toric code containing an invertible boundary corresponding to the autoequivalence $\Phi$, separating the two $\cD(D_a)$ bulk regions.
Since $\mathbb{Z}/a\subseteq D_a$ is the subgroup generated by $r\in D_a$, this is an example of the `smooth' boundary $\cD(G)-\cD(K)$ for $K\subseteq G$ for which a lattice model is given in \S~\ref{sec:groupModel}.

As shown in Appendix \ref{appendix:D(D_n)}, $\cD(D_a)_B^{\loc}\cong\cD(\bbZ/a)$, the $\bbZ/a$ toric code, with $(r^x,\omega^j)\in\cD(D_a)$ splitting as $e^xm^j\oplus e^{-x}m^{-j}$.
The possible autoequivalences of $\cD(\bbZ/a)$ are computed in \cite{MR3370609,MR3975865}.
Because each $(r^x,\omega^j)$ only tunnels to itself through each $\cW_k$, the autoequivalence $\Phi:\cD(\bbZ/a)\to\cD(\bbZ/a)$ must be either the identity, or the autoequivalence $\Psi$ which maps $e\to\overline{e}$, $m\to\overline{m}$.
However, because $B\cong\bbC^{D_n/\langle r\rangle}\cong\bbC^{\bbZ/2}$, there is a nontrivial automorphism $\psi$ of $B$, which acts as $1$ on the $1$ component and $-1$ on the $(1,\epsilon)$ component.
This automorphism $\psi$ induces the automorphism $\Psi$ of $\cD(D_n)_B^{\loc}$, so as explained in Remark \ref{rem:LIsoClasses}, $L(B,B,\id)$ and $L(B,B,\Psi)$ are isomorphic Lagrangian algebras, meaning that the two possible autoequivalences $\id$ and $\Psi$ give equivalent domain walls. 
\item Anyons of the form $(1,\sigma_k)$ where $\frac{n}{a}$ does not divide $k$ are confined to the middle $\cD(D_n)$ strip.
From the half-braiding \eqref{eq:dGBraiding} on $\cD(D_n)$ and the idempotents in $\cC(A\to A)$ computed above, we can see that the operator
$L_{\sigma_k}$ that inserts an extended $(1,\sigma_k)$-string operator in the middle strip satisfies $L_{\sigma_k}w_0=w_kL_{\sigma_k}$, meaning that $L_{\sigma_k}$ maps the $\cW_0$ sector onto the $\cW_i$ sector.
More generally, $L_{\sigma_k}$ maps the $\cW_i$ sector onto the $\cW_{i+k}$ and $\cW_{i-k}$ sectors.

One might interpret this by saying that the idempotents $w_i$ count (up to sign) the number of extended $m$-strings in the \textit{uncondensed} bulk region running parallel to the domain walls region in an appropriate non-Abelian sense, where $m\oplus m^{-1}:=(1,\sigma_1)$.
Of course, just as we saw in the toric code example of \S~\ref{ssec:tcExample}, the exact number of $m$-strings in a particular superselection sector is not well-defined (even up to sign), because of the possibility of different microscopic realizations of the condensation domain walls.
This is again reflected mathematically by the fact that the summands $\cW_i$ for $i\neq 0$ are equivalent $\cX-\cX$ bimodule categories, even though they are different as summands of $\cW$.

The description of point defects between different summands in terms of $\cD(D_n)$ bulk anyons is exactly analogous to the examples discussed above:
the anyon $(1,\sigma_k)$ becomes a point defect between $\cW_i$ and $\cW_{i+k}\oplus\cW_{i-k}$.
When an $(r^x,1)$ particle from one of the condensed regions with $\cD(D_a)$ topological order is brought into the bulk, braids around a $(1,\sigma_k)$ anyon, and returns to the boundary, it acquires a phase of $\omega^{\pm kx}$, with the sign determined by the choice of superselection sectors above and below the defect.

Similar to the previous examples, we can view an $(f,1)$ anyon near a region where $A$ is condensed as having a configuration space corresponding to the object $(f,1)A\in\cD(D_n)$.
When an $(f,1)$ anyon braids around a $(1,\sigma_k)$ anyon, summands of $((f,1)A)(1,\sigma_k)\cong(1,\sigma_k)((f,1)A)$ acquire different phases, so that the summands of $(f,1)A_A$ are permuted.
One of these summands becomes the anyon $(f,1)_A$ in the condensed phase $\cD(D_a)$, while the others are confined excitations on the domain wall.
Thus, the $(f,1)$ particle in the $\cD(D_a)$ condensate is unable to tunnel through at least one of the wall segments adjacent to a $(1,\sigma_k)$ point defect, and hence its braiding with such defects cannot be defined.
\end{enumerate}

\section{Conclusions and outlook}
\label{sec:Conclusions}

In this article, we introduced enriched UFCs as a new categorical framework of for describing (2+1)D topologically ordered phases and topological domain walls between them.
We propose that particle mobility through domain walls is characterized by tunneling operators, which appear as higher morphisms in a 3-category of (2+1)D topological orders. 
We used the action of tunneling operators on spherical ground states to identify the superselection sectors of composite domain walls, analyzing each sector from a particle mobility perspective, and explicitly demonstrated our methods in several examples.

In particular, we saw that when performing anyon condensation in the complement of a strip, the composite domain wall between condensed bulks has several distinct superselection sectors.
If our condensate is associated with deequivariantization, then different superselection sectors of the domain wall act differently on anyons in the condensed phase which arise from the splitting of an anyon in the uncondensed phase.
In examples which go beyond deequivariantization, such as the dihedral example of \S~\ref{sec:dihedralExample}, summands of the composite boundary may differ in terms of particle mobility: anyons from the condensed phase may become condensed or confined at the domain wall in some superselection sectors and not others.

We also saw that, when two parallel domain walls are placed on a sphere, non-contractible loop operators in the uncondensed strip exchange the superselection sectors for the domain wall.
One direction for future investigation would be to uncover the general story of non-local operators in the middle strip which map between superselection sectors of the composite domain wall.
Loop operators associated with confined anyons act transitively on superselection sectors, and point defects between the domain walls which arise as summands of a composite domain wall come from confined anyons in the middle bulk region, but such point defects may also have a simpler description in terms of wall excitations on the original domain walls.

Another possible direction involves using a more general notion of boundary between (2+1)D topological orders.
In this article, we considered only (1+1)D domain walls which live on the boundary of Walker-Wang models, which do not extend into the (3+1)D bulk.
One could also consider domain walls which do extend into the bulk, i.e.,~(1+1)D domain walls on the boundary which are attached to a (2+1)D domain wall between Walker-Wang bulks.
This could potentially include gapless domain walls between topological orders with a different Witt class as the anomaly, which were studied extensively in \cite{1905.04924,1912.01760}. 

The ground state degeneracy associated with domain walls in (2+1)D which we have analyzed has potential uses in quantum computation \cite{PhysRevB.87.045130}.
Also, there has been much recent work on the machinery of fusion $2$-categories \cite{1812.11933,2103.15150} and nondegenerate braided fusion $2$-categories \cite{2105.15167}, which provide a mathematical description of (3+1)D topological order \cite{PhysRevX.8.021074,MR4037491,MR4239374}.
Some of the theory of condensable algebras which underlies our results has already been lifted to this setting \cite{1905.09566,2107.11037,2205.06453}, and further lifting our results to the case of fusion $2$-categories would be a natural approach to understanding the composition of (2+1)D domain walls between (3+1)D topological orders.

\appendix

\section{Fusion and modular categories}
\label{appendix:fusionBasics}

Recall that a multifusion category $\cX$ is a finite
semisimple tensor category where all objects are dualizable \cite[Def.~4.1.1]{MR3242743}.
If the tensor unit $1_{\cX}$ is a simple object then $\cX$ is called a fusion category.
As noted in the introduction, we will omit the tensor product symbol and simply write $ab$ for the tensor product of two objects in $\cX$.
We denote by $\Irr(\cX)$ a set of representatives for the simple objects in $\cX$, so that every object is isomorphic to one of the form $\bigoplus_{f\in \Irr(\cX)} N_f f$.
The \emph{associator} is a natural isomorphism
$$
\alpha_{a,b,c} : (ab)c \to a(bc),
$$
which will be suppressed whenever possible.
For the description in terms of $F$-matrices, we refer the reader to \cite{MR2640343}.

Given a fusion category $\cX$, one can construct the fusion category $\cX^{\rm mp}$ where the order of tensor product is reversed, i.e. $x\otimes_{\cX^{\rm mp}}y:=y\otimes_{\cX}x$, replacing $\alpha$ with $\alpha^{-1}$.

By \cite{1906.09710}, if a fusion category is unitarizable, then it is uniquely unitarizable. 
This means being unitary is a property of a fusion category, which manifests as the existence of a set $\Irr(\cX)$ such that the $F$-matrices are unitary.

In particular, a UMTC is a unitary fusion category equipped with a unitary nondegenerate braiding, where nondegenerate means that the $S$-matrix is invertible.
If $\cC$ is a UMTC, objects in $\Irr(\cC)$ classify anyon types in a $\cC$ topological order.

We make heavy use of the graphical calculus for UMTCs \cite{MR3971584,1410.4540}.
Strings are labelled by objects in a UMTC, and correspond to the worldlines of direct sums of anyons \cite{cond-mat/0506438,0707.4206,Bonderson2007Thesis,MR2640343}.
We denote the braiding of $\cC$ by $\beta$, which we depict graphically by a crossing:
\begin{equation} \label{Eq:Braiding}
\beta_{c,d}
=
\tikzmath{
\draw (.2,-.2) node[below]{$\scriptstyle d$} .. controls ++(90:.2cm) and ++(270:.2cm) .. (-.2,.2);
\draw[super thick, white] (-.2,-.2) .. controls ++(90:.2cm) and ++(270:.2cm) .. (.2,.2);
\draw (-.2,-.2) node[below]{$\scriptstyle c$} .. controls ++(90:.2cm) and ++(270:.2cm) .. (.2,.2);
}
:
cd \to dc.
\end{equation}
For the description of the braiding in terms of unitary $R$-matrices, we refer the reader to \cite{MR2640343}.
In general, the objects $c$ and $d$ in the above diagram can be direct sums of simple objects; the braiding between two direct sums is determined by the braidings between each pair of summands.

By \cite{MR3650080,MR3590516}, 
the collection of fusion categories forms a 3-category named $\UFC$, whose objects are fusion categories, 1-morphisms are finitely semisimple bimodule categories, 2-morphisms are bimodule functors, and 3-morphisms are bimodule natural transformations.
We refer the reader to \cite{MR4254952} for more details.
This 3-category has a symmetric monoidal structure given by Deligne tensor product $\boxtimes$.
Given fusion categories $\cX,\cY$, the simple objects of $\cX\boxtimes \cY$ are exactly $f\boxtimes g$ such that $f\in \Irr(\cX)$ and $g\in \Irr(\cY)$, with the obvious tensor product fusion rules.

A braided fusion category $\cC$ can act on a fusion category $\cX$ via a braided tensor functor $\Phi:\cC\to Z(\cX)$ into the Drinfeld center of $\cX$.
A fusion category $\cX$ equipped with such a $\cC$-action is called a \emph{module tensor category} for $\cC$ \cite{MR3578212}.
If $U:Z(\cX)\to\cX$ is the functor which forgets the half-braiding, then the action $\cC\boxtimes\cX\to\cX$ given by $c\boxtimes f\mapsto U(\Phi(c))f$ makes $\cX$ an ordinary module category for the underlying fusion category of $\cC$.
The full data of $\Phi$ can be thought of as compatibility between this action, the braiding in $\cC$, and the tensor product in $\cX$.
Given two braided fusion categories $\cC,\cD$, a \emph{bimodule tensor category} is a module tensor category for the Deligne product $\cC\boxtimes \overline{\cD}$, where $\overline{\cD}$ denotes taking the reverse braiding.\footnote{We use the same labels for objects in $\mathcal{D}$ and $\overline{\mathcal{D}}$, so that $\overline{d}$ denotes the dual of an object $d$, rather than the corresponding object in $\overline{\mathcal{D}}$.}

Similarly, braided fusion categories form a symmetric monoidal 4-category called $\UBFC$, whose objects are braided fusion categories, 1-morphisms are bimodule multifusion categories, 
2-morphisms are finitely semisimple bimodule categories with appropriate coherences, 3-morphisms are bimodule functors, and 4-morphisms are natural transformations.
We refer the reader to \cite{MR4228258,1910.03178} for more details.
Again, $\UBFC$ is symmetric monoidal under the Deligne product.
Physically, the Deligne product $\cC\boxtimes \cD$ corresponds to stacking two decoupled layers of (2+1)D phases.
The composition of $1$-morphisms will be discussed in \S~\ref{ssec:wallsMath} below.

\section{Condensable algebras}
\label{appendix:condensable}

In this appendix, we recall the mathematical notions necessary to understand anyon condensation, most of which appear in \cite{MR3039775}, along with their physical interpretation, which mostly follows \cite{MR3246855}.
The basic idea of anyon condensation is that some collection of anyons, the condensate, becomes identified with the vacuum.
In order to consistently identify a condensate with the vacuum, the condensate must come equipped with additional data: the structure of a \emph{condensable algebra}, also known as a unitarly separable \'etale algebra.
We also define the category of modules over a condensable algebra, which can be used to describe the topological order in and at the boundary of regions where the condensable algebra has been condensed. 

An algebra object $A$ in the UMTC $\mathcal{C}$ consists of the data $(A,m_A,i_A)$, where $A$ is an object in $\cC$, i.e. a direct sum of anyons, the multiplication operator $m_A:AA\to A$ is denoted by a trivalent vertex, and the unit operator $i_A:1_\cC \to A$ is denoted by a univalent vertex, and $m_A$ and $i_A$ satisfy the identities below.
In the remainder of this subsection, unlabelled black strings refer to the object $A$.
$$
\underbrace{\,
\tikzmath{
\draw (-.2,-.2) arc (180:0:.2cm);
\draw (0,0) arc (180:0:.3cm) -- (.6,-.2);
\draw (.3,.3) -- (.3,.5);
\filldraw (0,0) circle (.05cm);
\filldraw (.3,.3) circle (.05cm);
}
=
\tikzmath[xscale=-1]{
\draw (-.2,-.2) arc (180:0:.2cm);
\draw (0,0) arc (180:0:.3cm) -- (.6,-.2);
\draw (.3,.3) -- (.3,.5);
\filldraw (0,0) circle (.05cm);
\filldraw (.3,.3) circle (.05cm);
}\,}_{\text{associative}}
\qquad\qquad
\underbrace{\,
\tikzmath{
\draw (0,-.4) -- (0,.4);
\draw (0,0) -- (-.2,-.2);
\filldraw (0,0) circle (.05cm);
\filldraw (-.2,-.2) circle (.05cm);
}
=
\tikzmath{
\draw (.2,-.4) -- (.2,.4);
}
=
\tikzmath[xscale=-1]{
\draw (0,-.4) -- (0,.4);
\draw (0,0) -- (-.2,-.2);
\filldraw (0,0) circle (.05cm);
\filldraw (-.2,-.2) circle (.05cm);
}\,}_{\text{unital}}.
$$
We denote the adjoints of $m_A$ and $i_A$ by their vertical reflections.
By composing $m_A$ with members of an orthonormal basis for $\bigoplus_{x\in\Irr(\cC)}\cC(x,A)$ (or $\bigoplus_{x\in\Irr(\cC)}\cC(A,x)$) on each strand, one can see that the choice of $m_A$ is equivalent to the choice of vertex lifting coefficients \cite{PhysRevB.90.195130} for the condensate $A$.

An algebra $(A,m_A,i_A)\in \cC$ is called \emph{condensable} if it is also:
\begin{itemize}
\item
\emph{commutative}: 
$
\tikzmath{
\draw (.2,-.2) .. controls ++(90:.2cm) and ++(270:.2cm) .. (-.2,.2);
\draw[super thick, white] (-.2,-.2) .. controls ++(90:.2cm) and ++(270:.2cm) .. (.2,.2);
\draw (-.2,-.2) .. controls ++(90:.2cm) and ++(270:.2cm) .. (.2,.2);
\draw (-.2,.2) arc (180:0:.2cm);
\draw (0,.4) -- (0,.6);
\filldraw (0,.4) circle (.05cm);
}
=
m_A\circ \beta_{A,A}
=
m_A
=
\tikzmath{
\draw (-.2,.2) arc (180:0:.2cm);
\draw (0,.4) -- (0,.6);
\filldraw (0,.4) circle (.05cm);
}
$
\item
\emph{unitarily separable}: $m_A^\dag$ is an $A-A$ bimodule map and $m_A\circ m_A^\dag = \id_A$.
$$
\tikzmath{
\draw (-.2,.2) -- (-.2,.4);
\draw (.2,-.2) -- (.2,-.4);
\draw (-.4,-.4) -- (-.4,0) arc (180:0:.2cm) arc (-180:0:.2cm) -- (.4,.4);
\filldraw (-.2,.2) circle (.05cm);
\filldraw (.2,-.2) circle (.05cm);
}
=
\tikzmath{
\draw (-.2,.4) arc (-180:0:.2cm);
\draw (-.2,-.4) arc (180:0:.2cm);
\draw (0,-.2) -- (0,.2);
\filldraw (0,.2) circle (.05cm);
\filldraw (0,-.2) circle (.05cm);
}
=
\tikzmath[xscale=-1]{
\draw (-.2,.2) -- (-.2,.4);
\draw (.2,-.2) -- (.2,-.4);
\draw (-.4,-.4) -- (-.4,0) arc (180:0:.2cm) arc (-180:0:.2cm) -- (.4,.4);
\filldraw (-.2,.2) circle (.05cm);
\filldraw (.2,-.2) circle (.05cm);
}
\qquad\qquad
\tikzmath{
\draw (0,.2) -- (0,.4);
\draw (0,-.2) -- (0,-.4);
\draw (0,0) circle (.2cm);
\filldraw (0,.2) circle (.05cm);
\filldraw (0,-.2) circle (.05cm);
}
=
\tikzmath{
\draw (0,-.4) -- (0,.4);
}
$$
\end{itemize}
Typically, we also assume $A$ is \emph{connected}, i.e., $\dim(\cC(1_\cC\to A)) = 1$.
In this case, by \cite[Rem.~5.6.3]{MR1966524}, we get standard duality pairings for $A$ by
$$
\ev_A:= 
\tikzmath{
\draw (-.2,.2) arc (180:0:.2cm);
\draw (0,.4) -- (0,.6);
\filldraw (0,.4) circle (.05cm);
\filldraw (0,.6) circle (.05cm);
}
\qquad\qquad
\coev_A:= 
\tikzmath{
\draw (-.2,.8) arc (-180:0:.2cm);
\draw (0,.4) -- (0,.6);
\filldraw (0,.4) circle (.05cm);
\filldraw (0,.6) circle (.05cm);
}\,.
$$

Explicitly, as an important consequence of unitary separability of a condensable algebra $A$, the multiplication map $m_A$ determines an orthogonal projection $m_A^\dag m_A$, selecting specific fusion channels between the objects in the direct sum $A$.
Likewise, the unit map selects the unique vacuum channel.
The physical interpretation of all these conditions is that, when the images of these projections are energetically favored, strings labelled by $A$ behave like the vacuum string \cite{PhysRevB.90.195130},
so that any 2D network of $A$-strands only depends on the connectivity, and not the genus.
Thus, we may replace a single $A$-strand by an `$A$-strand mesh' which behaves like a 2D foam/defect \cite{MR3246855,1905.09566,MR3459961,2101.02482}, making $A$ the new vacuum in this 2D region. 

Given a condensable algebra $A\in \cC$, we would like to define the UMTC describing excitations in a (2+1)D bulk region where $A$ is condensed.
However, it is more convenient to first define the UFC $\cC_A$, which describes wall excitations at the boundary between a region with $\cC$ bulk region, and a region where $A$ has been condensed.
This $\cC_A$ is the category of right $A$-modules $M_A=(M,r_M)$ in $\cC$ \cite{MR1940282} (see also \cite{PhysRevB.79.045316} and \cite[\S3.3]{MR3039775}).

The category $\cC_A$ is a UFC derived from $\cC$ with fusion rules consistent with the identification of $A$ with the vacuum.
To a first approximation, modules $M_A=(M,r_M)$ are the equivalence classes of anyons which are identified by condensing $A$.
More precisely, $M$ is a direct sum of anyons and the right action $r_M:MA\to M$, is a choice of fusion channels between $M$ and the condensate $A$, allowing $M$ to absorb $A$.
We denote $M_A$ by a \textcolor{DarkGreen}{green} string and the right action $r_M$ by a \textcolor{DarkGreen}{green} trivalent vertex.
The choice of $r_M$ must make $M$ stable under repeated fusion with the condensate, leading to the following associativity and unitality conditions.
$$
\underbrace{
\tikzmath{
\draw[thick, DarkGreen] (0,-.5) -- (0,.5);
\node[DarkGreen] at (0,-.7) {$\scriptstyle M$};
\draw (0,.2) arc (90:0:.7cm);
\draw (0,-.2) arc (90:0:.3cm);
\filldraw[DarkGreen] (0,.2) circle (.05cm);
\filldraw[DarkGreen] (0,-.2) circle (.05cm);
}
=
\tikzmath{
\draw[thick, DarkGreen] (0,-.5) -- (0,.5);
\node[DarkGreen] at (0,-.7) {$\scriptstyle M$};
\draw (0,.3) arc (90:0:.5cm);
\draw (.2,-.5) arc (180:0:.3cm);
\filldraw[DarkGreen] (0,.3) circle (.05cm);
\filldraw (.5,-.2) circle (.05cm);
}\,}_{\text{associative}}
\qquad\qquad
\underbrace{
\tikzmath{
\draw[thick, DarkGreen] (.2,-.4) -- (.2,.4);
\node[DarkGreen] at (.2,-.6) {$\scriptstyle M$};
}
=
\tikzmath[xscale=-1]{
\draw[thick, DarkGreen] (0,-.4) -- (0,.4);
\node[DarkGreen] at (0,-.6) {$\scriptstyle M$};
\draw (0,0) -- (-.2,-.2);
\filldraw[DarkGreen] (0,0) circle (.05cm);
\filldraw (-.2,-.2) circle (.05cm);
}\,}_{\text{unital}}.
$$

Since $A$ is a condensable algebra, the category $\mathcal{C}_A$ of $A$-modules has a tensor product $\otimes_A$,
which describes the fusion of two excitations living on the domain wall.
The tensor product of two $A$-modules is the image of the projection
\[
p_{M,N}
:=
\tikzmath[scale=.666]{
\draw (0.5,-.4) -- (0.5,-.1);
\draw (0.75,.15) arc (0:-180:.25cm);
\filldraw (.5,-.1) circle (.075cm);
\filldraw (.5,-.4) circle (.075cm);
\draw (0.25,.15) arc (0:90:.25cm);
\filldraw[DarkGreen] (0,.4) circle (.075cm);
\draw[knot] (.75,.15) .. controls ++(90:.25cm) and ++(270:.25cm) .. (1.25,.65) arc (0:90:.25cm);
\draw[thick,DarkGreen] (0,-.6) -- (0,1.2);
\draw[thick,orange,knot] (1,-.6) -- (1,1.2);
\filldraw[orange] (1,.9) circle (.075cm);
}
\in\End_\cC(M N),
\]
where the \textcolor{orange}{orange} string denotes $N$, and the crossing is the braiding in $\cC$.
One can interpret the tensor product on $\cC_A$ as being defined by embedding $\cC_A\to{}_A\cC_A$, where each right $A$-module is equipped with a left action defined via the right action and the half-braiding on $A$.
The category ${}_A\cC_A$ of $A-A$-bimodules has a natural monoidal structure for any algebra $A$ in a fusion category \cite[7.8.25]{MR3242743}. 

The image of the projection $p_{M,N}$ is the largest subobject of $M N$ where the effects of fusing a copy of $A$ into the $M$ strand agrees with the effect of fusing $A$ into the $N$ strand.
Thus, fusion channels in $MN$ which are preserved by $p_{M,N}$ are stable under fusion with the condensate, whereas fusion channels which are killed by $p_{M,N}$ are also killed when fusing with the condensate.
In particular, $A_A$ is the tensor unit of $\mathcal{C}_A$; when $A$ is condensed, $A$ becomes the new vacuum.

The condensed region has topological order described by $\mathcal{C}_A^{\loc}$, the subcategory of $\mathcal{C}_A$ which consists of wall excitations which braid trivially with the condensate, and hence can be pulled off the wall into the bulk region where $A$ is condensed \cite[\S~2.4]{MR3246855}.
The UMTC $\cC_{A}^{\loc}$ consists of the \emph{local} right $A$-modules \cite[Def~3.12]{MR3039775}, which satisfy
\begin{itemize}
\item \emph{local}:
$
\tikzmath{
\draw[DarkGreen,thick] (0,-.5) -- (0,.5);
\draw[knot] (.3,-1) .. controls ++(90:.2cm) and ++(270:.2cm) .. (-.3,-.5) .. controls ++(90:.2cm) and ++(270:.2cm) .. (.2,0) arc (0:90:.2cm);
\draw[DarkGreen,thick, knot] (0,-1) -- (0,-.5);
\filldraw[DarkGreen] (0,.2) circle (.05cm);
}
=
r_M\circ \beta_{A,M}\circ \beta_{M,A}
=
r_M
=
\tikzmath{
\draw[DarkGreen,thick] (0,-.3) -- (0,.3);
\draw (.3,-.3) arc (0:90:.3cm);
\filldraw[DarkGreen] (0,0) circle (.05cm);
}\,.
$
\end{itemize}
The braiding on $\mathcal{C}_A^{\loc}$ is inherited from $\mathcal{C}$, and one checks that it is compatible with the tensor product in $\mathcal{C}_A$.
It is known that $\cC_A^{\loc}$ is again a modular tensor category \cite[Thm~4.5]{MR1936496}\cite[Cor~3.30]{MR3039775}.
Since local modules braid trivially with the condensate, they can still move adiabatically about the system where $A$ is condensed; modules that do not satisfy the locality condition braid nontrivially with $A$, and are therefore confined to the domain wall.
In the special case where $\cC_A^{\loc}\cong\fdHilb$, i.e. the condensed region has trivial topological order, we say that $A$ is a \textit{Lagrangian algebra}.

Often, we also consider inclusions $A\subseteq B$ of condensable algebras in $\mathcal{C}$.
Restricting the multiplication map $BB\to B$ to $BA$ gives $B$ the structure of an $A$-module, and the commutativity of $B$ implies that this $A$-module is local.
Hence, condensable algebras $B$ which contain $A$ as a subalebra are condensable algebras in $\mathcal{C}_A^{\loc}$.
Conversely, if $B_A$ is a condensable algebra in $\mathcal{C}_A^{\loc}$, then forgetting the module action makes $B$ a condensable algebra in $\mathcal{C}$, with the unit map $A_A\to B_A$ giving an inclusion of algebras $A\to B$ \cite[3.6]{MR3039775}.

Categories of modules over an algebra also play an important role in understanding anyon condensation from the perspective of enriched fusion categories, as outline in Example \ref{egs:condensationBoundary}.
Observe that the definition of the tensor product $\otimes_A$ does not use the fact that $\cC$ is braided, but only that $A$ can half-braid under objects in $\cC$, i.e. that $A\in Z(\cC)$.
Therefore, given a unitary fusion category $\cX$ and a condensable algebra $A\in Z(\cX)$, we can define the unitary fusion category $\cX_A$ of right $A$-modules in $\cX$.
We then have $Z(\cX_A)\cong Z(\cX_A)^{\loc}$ \cite[Thm.~3.20]{MR3039775}, so that $\cX_A$ provides the data for a lattice model realization for both the domain wall and the bulk region where $A$ is condensed.
As we outlined in Example \ref{egs:condensationBoundary}, this generalizes straightforwardly to the case where $\cX$ is $\cA$-enriched and $A\in Z^{\cA}(\cX)$.

\section{\texorpdfstring{$\cD(D_n)$}{D(Dn)}}
\label{appendix:D(D_n)}

In this appendix, we provide some background on the UMTC $\cD(D_n)$, where $D_n$ is the dihedral group with $2n$ elements, with presentation $D_n=\langle r,f|r^n,f^2,(rf)^2\rangle$.
We begin in \S~\ref{appendixSec:D(G)} by establish general facts and notation for $\cD(G)$, where $G$ is a finite group.
In \S~\ref{sec:groupModel}, we describe a commuting projector lattice model for $\cD(G)$ topological order.
Finally, in \S~\ref{appendixSec:dihedral}, we work out the specific case $G=D_n$ in detail.

\subsection{\texorpdfstring{$\cD(G)$}{D(G)} for a non-Abelian group \texorpdfstring{$G$}{G}}
\label{appendixSec:D(G)}
Anyons in $\mathcal{D}(G)$ correspond mathematically to irreducible $G$-graded $G$-representations \cite[\S3.2]{MR1797619}
Explicitly, a $G$-graded Hilbert space is a Hilbert space $V$ together with a decomposition $V=\bigoplus_{g\in G}V_g$.
The subspace $V_g$ is called the $g$-\emph{graded component} of $V$, and states $v\in V_g$ are said to be $g$-\emph{graded}.
Maps between $G$-graded Hilbert spaces must preserve the grading, sending $g$-graded vectors to $g$-graded vectors.
A $G$-graded $G$-representation consists of a $G$-graded Hilbert space $V$, together with an action $\phi:G\to\End(V)$ of $G$ by unitary operators, satisfying $\phi_hV_g\cong V_{hgh^{-1}}$.

This algebraic description comes from the fact that $\mathcal{D}(G)\cong Z(\Hilb[G])$, the center of the category of finite-dimensional $G$-graded Hilbert spaces.
The tensor product on $\Hilb[G]$ is the usual tensor product of Hilbert spaces, and the grading is defined by \[(V\otimes W)_g=\bigoplus_hV_h\otimes W_{h^{-1}g}\text{;}\]
the grading of a pure tensor is the product of the two gradings.
A half-braiding amounts to the data of a $G$-representation.
Explicitly, if $\mathbb{C}_h$ is the $1$-dimensional $h$-graded Hilbert space spanned by $e_h$, then the half-braiding $\mathbb{C}_h\otimes V\to V\otimes\mathbb{C}_h$ is given by
\[e_h\otimes v\mapsto\phi_h(v)\otimes e_h\text{.}\]
The requirement that the half-braiding preserves the $G$-grading means the representation $\phi$ must conjugate the grading, as described above.
Thus, if $V$ and $W$ are objects in $\mathcal{D}(G)$, and $v^g\in V$ and $w^h\in W$ are graded vectors, then the braiding has the form
\begin{equation}
\label{eq:dGBraiding}
 \beta_{V,W}:v^g\otimes w^h\mapsto \phi^W_g( w^h)\otimes v^g\text{.}
\end{equation}

An irreducible $G$-graded $G$-representation is determined by a pair $(g,\rho)$ where $g\in G$ and $\rho\in\Irr(\Rep(\Stab_G(g)))$ is an action of $\Stab_G(g)$ on a Hilbert space $V^\rho$.
Consequently, we will generally label simple objects in $\cD(G)$ by such pairs.
(Often, we just speak of the action $\rho$ and suppress the Hilbert space $V^\rho$.)
We can define a $G$-graded $G$-representation $W$ by setting $W_g:= V^\rho$, which gives an irreducible $G$-graded $\Stab_G(g)$-representation, and inducing to get a $G$-representation. 
The final Hilbert space $W$ will still have $W_g\cong V^\rho$, $W_{hgh^{-1}}\cong W_g$ for every $h\in G$, and $W_k\cong 0$ if $k$ is not conjugate to $g$.
For completeness, we give an explicit description of the construction of the $G$-graded $G$-representation $(g,\rho)$.
Suppose $\{v_i\}$ is a basis for $V^\rho$.
Then the induced representation will have the basis $\set{v_i^h}{h\in[g]}$, where $[g]$ is the conjugacy class of $g$.
Choose $R\subseteq G$ so that each $h\in[g]$ satisfies $h=rgr^{-1}$ for precisely one $r\in R$.
Then every $h\in G$ can be written as $rs$ for some $r\in R$ and $s\in\Stab_G(g)$.
Setting $\widetilde{\rho}_s(v_i^g):=(\rho_h(v_i))^g$ and $\widetilde{\rho}_r(v_i^g):=(v_i^{rgr^{-1}})$ completely determines the representation $\widetilde{\rho}$.

\begin{example}
For an Abelian group $A$, the theory of $A$-graded $A$-representations is straightforward: an $A$-graded $A$-representation is just a pair $(a,\rho)$ where $a\in A$ and $\rho\in\Rep(A)$, with the entire representation graded by $a$.
As a fusion category, $\cD(A)\cong\Hilb[A]\boxtimes\Rep(A)$.
\end{example}

We now provide a fully explicit example of a $G$-graded $G$-representation where $G$ is not Abelian.
The theory of dihedral groups will be analyzed in much more generality in Appendix \ref{appendixSec:dihedral} below, but we still use the dihedral group $G=D_3\cong S_3$ here, because it provides the simplest possible non-Abelian example.
\begin{example}
 Consider 
 $$
 G=D_3=\langle r,f|r^3=1,rf,fr^{-1}\rangle.
 $$
 Another way of understanding $G$ is that $G\cong S_3$, with $r=(123)$ and $f=(23)$.
 The group $D_3$ has a single $2$-dimensional representation $\sigma$ given by
 \[\sigma:r\mapsto\left[\begin{array}{cc}e^{2\pi i/3}&0\\0&e^{-2\pi i/3}\end{array}\right],\,f\mapsto\left[\begin{array}{cc}0&1\\1&0\end{array}\right]\text{.}\]
 Here, $\sigma$ acts on the Hilbert space $V\cong\mathbb{C}^2$; a basis of $V$ is $\{e_1=(1,0)^T,e_2=(0,1)^T\}$.
 
 There are $3$ possible ways to define a $D_3$-grading on $\sigma$.
 First, we could set $V_1:= V$ (and $V_g:= 0$ for $g\neq 1$).
 Since $D_3=\Stab_{D_3}(1)$, this is the representation $(1,\sigma)$.
 
 Second, we could set $V_r:=\spann{e_1}$ and $V_{r^{-1}}:=\spann{e_2}$.
 Since $r^{-1}=frf^{-1}$, the $D_3$-grading is compatible with the $D_3$-action.
 Since $\Stab_{D_3}(r)=\langle r\rangle$, this is the representation $(r,e^{2\pi i/3})$.
 Here, we denote a representation of the cyclic group $\langle r\rangle$ by the eigenvalue of $r$, which uniquely determines the action of $\langle r\rangle$ on the $r$-graded component, since all irreducible representations of Abelian groups are $1$-dimensional.
 The formulae for the induced representation described above then force $\widetilde{e^{2\pi i/3}}\cong\sigma$.
 This representation could also be called $(r^{-1},e^{-2\pi i/3})$.
 In general, whenever $g\notin Z(G)$, we should expect $(g,\rho)$ to have multiple equivalent labels in this way.
 
 Finally, we could set $V_{r^{-1}}:=\spann{e_1}$ and $V_r:=\spann{e_2}$.
 This is the representation $(r,e^{-2\pi i/3})\cong(r^{-1},e^{2\pi i/3})$.
\end{example}

We also make some preliminary observations about anyons which must braid trivially with each other.
If $g\in Z(G)$, then $(g,1)$ is a pure flux excitation.
Since $g\in\Stab_G(h)$ for every $h$, the anyons $(g,1)$ and $(k,1)$ braid trivially for every $k$.
Anyons of the form $(1,\rho)$ are pure charge excitations, which braid trivially since $\rho(1)=1$ for every $\rho$.
Other anyons, including $(g,1)$ for $g\notin Z(G)$, are dyonic, and determining whether they braid trivially involves actually computing the braiding \eqref{eq:dGBraiding}.
In general, the double braiding between representations $(g,\lambda)$ and $(h,\rho)$ is
\[v^k\otimes w^{\ell}\mapsto\widetilde{\lambda}_{k\ell k^{-1}}(v^k)\otimes\widetilde{\rho}_k(w^{\ell})\text{,}\]
where $\widetilde{\cdot}$ refers to the induced representation, $k\in[g]$, and $\ell\in[h]$.
In particular $(g,\lambda)$ and $(1,\rho)$ commute precisely when $\rho_g=\id$, since the double braiding simplifies to \[v^g\otimes w^1\mapsto\rho_g(v^g)\otimes w^1\text{.}\]

We now briefly describe some condensable algebras over $\cD(G)$, along with the categories of modules; full details, as well as a description of all condensable algebras in $\cD(G,\omega)$, appear in \cite{MR2584958}.
We make use of the modified quantum double construction \cite{cond-mat/0602115}.
\begin{defn}
 \label{defn:mqd}
 If $L$ and $K$ are finite groups, and $L$ is equipped with a $K$-action $\phi:K\to\End(L)$, then the \textbf{modified quantum double} $\cD(L,K)$ is the category of $L$-graded $K$-representations, where the action of $k\in K$ takes $\ell$-graded vectors to $(\phi(k)(\ell))$-graded vectors.
\end{defn}
\begin{example}
 \label{egs:mqdUse}
 If $G$ is a finite group, $H$ is a normal subgroup of $G$, and $K$ is any subgroup of $G$, then we have the fusion category $\cD(G/H,K)$, where $K$ acts on $G/H$ by conjugacy.
 
 In particular, the usual quantum double $\cD(G)$ is a modified quantum double, namely $\cD(G,G)$.
 The category $\cD(G/H,K)$ then becomes $\cD(G)$ module tensor category, with the action simply applying the quotient map $G\to G/H$ to the grading and restricting the representation from $G$ to $K$.
\end{example}

If $H\lhd G$ is a normal subgroup, then the group algebra $A:=\bbC[H]$ is a condensable algebra in $\cD(G)$.
This algebra is spanned by $H$, and the generator $h$ is an $h$-graded vector; as an object in $\cD(G)$, we have 
$A\cong\bigoplus_{r\in R}(r,1)$, where $R$ is a set containing one representative of each conjugacy class in $G$ which is contained in $H$.
We have $\cD(G)_A\cong\cD(G/H,G)$, and $\cD(G)_A^{\loc}\cong\cD(G/H)$.

If $K\subseteq G$ is any subgroup, then the algebra $B:=\bbC^{G/K}$ of functions on $G/K$ (which need not be a group, but carries a left action of $G$) is a condensable algebra in $\cD(G)$.
This algebra is a $G$-representation, and so can be viewed as an object of $\cD(G)$ which is entirely graded by $1$.
We have $\cD(G)_B\cong\cD(G,K)$, and $\cD(G)_B^{\loc}\cong\cD(K)$.
 
\subsection{String-net realization}
\label{sec:groupModel}
We now describe a string-net model adapted to the special case of $\cD(G)$.
We begin with a model dual to Kitaev's quantum double model \cite{MR1951039}, built on a square lattice.
This model can also be viewed as a variant of \cite{PhysRevB.71.045110}, taking advantage of the fact that $\Hilb[G]$ is multiplicity free.
The special case of $\cD(S_3)\cong\cD(D_3)$, appears explicitly in \cite[\S~5.4]{MR3157248}.

To produce the $\mathcal{D}(G)$ bulk, we assign to each link the Hilbert space $\mathbb{C}^G$.
An orthonormal basis of this Hilbert space is $\{|g\rangle:g\in G\}$,
i.e. $\langle g | h \rangle = \delta_{g,h}$.  
We define the operators $\lambda_g$ and $\rho_g$ by $\lambda_g|h\rangle=|gh\rangle$ and $\rho_g|h\rangle=|hg\rangle$.
Notice that $\lambda_g^\dag=\lambda_{g^{-1}}$ and $\rho_g^\dag=\rho_{g^{-1}}$; also, for every $g$ and $h$, we have $[\lambda_g,\rho_h]=0$.

For each plaquette $p$, we define the operator
\[B_p=\frac{1}{|G|}\sum_{g\in G}
\tikzmath{
\draw[thick] (1,1) -- node[left] {$\scriptstyle \rho_g$} (1,2) -- node[above] {$\scriptstyle \rho_g$} (2,2) -- node[right] {$\scriptstyle \lambda_g^\dag$} (2,1) -- node[below] {$\scriptstyle \lambda_g^\dag$} (1,1);
\node at (1.5,1.5) {$\scriptstyle p$};
}
\text{.}
\]
For each vertex $v$, we define
\[
\resizebox{.47\textwidth}{!}{
$A_v\left|
\tikzmath{
\draw[thick] (0,.5) node[left]{$\scriptstyle a$} -- (1,.5) node[right]{$\scriptstyle c^\dag$};
\draw[thick] (.5,0) node[below]{$\scriptstyle b$} -- (.5,1) node[above]{$\scriptstyle d^\dag$};
\node at (.65,.65) {$\scriptstyle v$};
}
\right\rangle
={-1}^{1-\chi_1(abc^{-1}d^{-1})}\left|
\tikzmath{
\draw[thick] (0,.5) node[left]{$\scriptstyle a$} -- (1,.5) node[right]{$\scriptstyle c^\dag$};
\draw[thick] (.5,0) node[below]{$\scriptstyle b$} -- (.5,1) node[above]{$\scriptstyle d^\dag$};
\node at (.65,.65) {$\scriptstyle v$};
}
\right\rangle
$}
\text{,}\]
where $\chi_1(1)=1$ and $\chi_1(x)=0$ for $x\neq 1$.
The Hamiltonian
\[H=-\sum_vA_v-\sum_pB_p\]
is then a sum of commuting projectors.

We can describe a model for a `rough' boundary, where $A=\mathbb{C}[H]\in \cD(G)$ is condensed, by replacing a strip of black links with red ones, where each red link carries a Hilbert space spanned by $G/H=\{gH:g\in G\}$.
\[\tikzmath[scale=0.5]{
\draw[thick,red] (-1,-.5) -- (-1,2.5);
\draw[thick,red] (0,-.5) -- (0,2.5);
\draw[thick,red] (1,-.5) -- (1,2.5);
\draw[thick] (2,-.5) -- (2,2.5);
\draw[thick] (3,-.5) -- (3,2.5);
\draw[thick,red] (4,-.5) -- (4,2.5);
\draw[thick,red] (5,-.5) -- (5,2.5);
\draw[thick,red] (6,-.5) -- (6,2.5);
\draw[thick,red] (-1.5,0) -- (1,0);
\draw[thick,red] (-1.5,1) -- (1,1);
\draw[thick,red] (-1.5,2) -- (1,2);
\draw[thick,red] (4,0) -- (6.5,0);
\draw[thick,red] (4,1) -- (6.5,1);
\draw[thick,red] (4,2) -- (6.5,2);
\draw[thick] (1,0) -- (4,0);
\draw[thick] (1,1) -- (4,1);
\draw[thick] (1,2) -- (4,2);
}
\]
We redefine the operators $A_v$ and $B_p$ at vertices and plaquettes with red links as follows.
\[
\resizebox{.47\textwidth}{!}{$
\begin{aligned}
A_v\left|
\tikzmath{
\draw[thick] (0,.5) node[left]{$\scriptstyle a$} -- (.5,.5);
\draw[thick,red] (.5,0) node[below,black]{$\scriptstyle bH$} -- (.5,1) node[above,black]{$\scriptstyle dH$};
\node at (.65,.65) {$\scriptstyle v$};
\draw[thick,red] (.5,.5) -- (1,.5) node[right,black]{$\scriptstyle cH$};
}
\right\rangle
&=(-1)^{1-\chi_H(abc^{-1}d^{-1}H)}\left|
\tikzmath{
\draw[thick] (0,.5) node[left]{$\scriptstyle a$} -- (.5,.5);
\draw[thick,red] (.5,0) node[below,black]{$\scriptstyle bH$} -- (.5,1) node[above,black]{$\scriptstyle dH$};
\node at (.65,.65) {$\scriptstyle v$};
\draw[thick,red] (.5,.5) -- (1,.5) node[right,black]{$\scriptstyle cH$};
}
\right\rangle\text{,}\\
A_v\left|
\tikzmath{
\draw[thick,red] (0,.5) node[left,black]{$\scriptstyle aH$} -- (.5,.5);
\draw[thick,red] (.5,0) node[below,black]{$\scriptstyle bH$} -- (.5,1) node[above,black]{$\scriptstyle dH$};
\node at (.65,.65) {$\scriptstyle v$};
\draw[thick,red] (.5,.5) -- (1,.5) node[right,black]{$\scriptstyle cH$};
}
\right\rangle
&=(-1)^{1-\chi_H(abc^{-1}d^{-1}H)}\left|
\tikzmath{
\draw[thick,red] (0,.5) node[left,black]{$\scriptstyle aH$} -- (.5,.5);
\draw[thick,red] (.5,0) node[below,black]{$\scriptstyle bH$} -- (.5,1) node[above,black]{$\scriptstyle dH$};
\node at (.65,.65) {$\scriptstyle v$};
\draw[thick,red] (.5,.5) -- (1,.5) node[right,black]{$\scriptstyle cH$};
}
\right\rangle\text{,}
\end{aligned}
$}%
\]
\[
\resizebox{.47\textwidth}{!}{$
B_p=\frac{1}{|G|}\sum_{g\in G}
\tikzmath{
\draw[thick] (1,1) -- node[left] {$\scriptstyle \rho_g$} (1,2) -- node[above] {$\scriptstyle \rho_g$} (2,2);
\draw[thick] (2,1) -- node[below] {$\scriptstyle \lambda_g^\dag$} (1,1);
\node at (1.5,1.5) {$\scriptstyle p$};
\draw[thick,red] (2,1) -- node[right,black] {$\scriptstyle \lambda_{gH}^\dag$} (2,2);
}
\quad\text{or}\quad
\frac{1}{|G|}\sum_{g\in G}
\tikzmath{
\draw[thick,red] (1,1) -- node[left,black] {$\scriptstyle \rho_{gH}$} (1,2) -- node[above,black] {$\scriptstyle \rho_{gH}$} (2,2) -- node[right,black] {$\scriptstyle \lambda_{gH}^\dag$} (2,1) -- node[below,black] {$\scriptstyle \lambda_{gH}^\dag$} (1,1);
\node at (1.5,1.5) {$\scriptstyle p$};
}
$}
\text{.}\]
Again, $\chi_H(H)=1$ and $\chi_H(xH)\neq 1$ if $xH$ is not the identity coset $H$.

For $h\in H$, we have $\chi_H(hH)=\chi_H(H)=1$.
Consequently, anyons in $\cD(G)$ of the form $(h,\rho)$ for $h\in H$ will not excite $A_v$ terms at vertices $v$ incident to a red link, explaining why $A$ becomes condensed.

We can also create a horizontal `smooth' boundary, where $B=\mathbb{C}^{G/K}$ is condensed, by replacing a strip of black links with blue ones, where each blue link carries a Hilbert space spanned by $K$.
\[\tikzmath[scale=0.5]{
\draw[thick,blue] (-1,-.5) -- (-1,2.5);
\draw[thick,blue] (0,-.5) -- (0,2.5);
\draw[thick] (1,-.5) -- (1,2.5);
\draw[thick] (2,-.5) -- (2,2.5);
\draw[thick] (3,-.5) -- (3,2.5);
\draw[thick] (4,-.5) -- (4,2.5);
\draw[thick,blue] (5,-.5) -- (5,2.5);
\draw[thick,blue] (6,-.5) -- (6,2.5);
\draw[thick,blue] (-1.5,0) -- (1,0);
\draw[thick,blue] (-1.5,1) -- (1,1);
\draw[thick,blue] (-1.5,2) -- (1,2);
\draw[thick,blue] (4,0) -- (6.5,0);
\draw[thick,blue] (4,1) -- (6.5,1);
\draw[thick,blue] (4,2) -- (6.5,2);
\draw[thick] (1,0) -- (4,0);
\draw[thick] (1,1) -- (4,1);
\draw[thick] (1,2) -- (4,2);
}
\]
We then redefine the operators $A_v$ and $B_p$ at vertices and plaquettes with blue links as follows.
\[
\resizebox{.47\textwidth}{!}{$
\begin{aligned}
A_v\left|
\tikzmath{
\draw[thick] (0,.5) node[left]{$\scriptstyle a$} -- (.5,.5);
\draw[thick,blue] (1,.5) node[right,black]{$\scriptstyle k$} -- (.5,.5);
\draw[thick] (.5,0) node[below,black]{$\scriptstyle b$} -- (.5,.5);
\node at (.65,.65) {$\scriptstyle v$};
\draw[thick] (.5,.5) -- (.5,1) node[above,black]{$\scriptstyle d$};
}
\right\rangle &= (-1)^{1-\chi_1(abk^{-1}d^{-1})}\left|
\tikzmath{
\draw[thick] (0,.5) node[left]{$\scriptstyle a$} -- (.5,.5);
\draw[thick,blue] (1,.5) node[right,black]{$\scriptstyle k$} -- (.5,.5);
\draw[thick] (.5,0) node[below,black]{$\scriptstyle b$} -- (.5,.5);
\node at (.65,.65) {$\scriptstyle v$};
\draw[thick] (.5,.5) -- (.5,1) node[above,black]{$\scriptstyle d$};
}
\right\rangle\text{,}\\
A_v\left|
\tikzmath{
\draw[thick,blue] (0,.5) node[left,black]{$\scriptstyle i$} -- (1,.5) node[right,black]{$\scriptstyle k$};
\draw[thick,blue] (.5,0) node[below,black]{$\scriptstyle j$} -- (.5,1) node[above,black]{$\scriptstyle \ell$};
\node at (.65,.65) {$\scriptstyle v$};
}
\right\rangle
&= (-1)^{1-\chi_1(ijk^{-1}\ell^{-1})}\left|
\tikzmath{
\draw[thick,blue] (0,.5) node[left,black]{$\scriptstyle i$} -- (1,.5) node[right,black]{$\scriptstyle k$};
\draw[thick,blue] (.5,0) node[below,black]{$\scriptstyle j$} -- (.5,1) node[above,black]{$\scriptstyle \ell$};
\node at (.65,.65) {$\scriptstyle v$};
}
\right\rangle
\end{aligned}
$}\]
\[
\resizebox{.47\textwidth}{!}{$
B_p=\frac{1}{|K|}\sum_{k\in K}
\tikzmath{
\draw[thick,blue] (1,2) -- node[left,black] {$\scriptstyle \rho_k$} (1,1) -- node[below,black] {$\scriptstyle \lambda_k^{\dag}$} (2,1);
\draw[thick] (2,1) -- node[right] {$\scriptstyle \lambda_k^\dag$} (2,2);
\draw[thick,blue] (1,2) -- node[above,black] {$\scriptstyle \rho_k$} (2,2);
\node at (1.5,1.5) {$\scriptstyle p$};
}
\quad\text{or}\quad
\frac{1}{|K|}\sum_{k\in K}
\tikzmath{
\draw[thick,blue] (1,1) -- node[left,black] {$\scriptstyle \rho_k$} (1,2) -- node[above,black] {$\scriptstyle \rho_k$} (2,2) -- node[right,black] {$\scriptstyle \lambda_k^\dag$} (2,1) -- node[below,black] {$\scriptstyle \lambda_k^\dag$} (1,1);
\node at (1.5,1.5) {$\scriptstyle p$};
}
$}\]

\subsection{Dihedral groups}
\label{appendixSec:dihedral}

We begin with a brief discussion of the UMTC $\mathcal{D}(D_n)$, the quantum double of the dihedral group, where $n$ is odd.
The dihedral group $D_n$ has $2n$ elements and is given by the presentation 
$$
D_n=\langle r,f|r^n=f^2=1,rf=fr^{-1}\rangle.
$$

If $n$ is even, we have $D_n\cong D_{n/2}\times\bbZ/2$, and hence $\cD(D_n)\cong\cD(D_{n/2})\boxtimes\cD(\bbZ/2)$.
We are interested in behavior specific to non-Abelian topological orders, so we always choose $n$ to be odd in order to avoid the unnecessary complication of keeping track of one or more extra layers of toric code.

The one dimensional irreducible representations of $D_n$ are the trivial representation $1$ and the sign representation $\epsilon:r\mapsto1,\,f\mapsto-1$.
The $2$-dimensional irreducible representations of $D_n$ are given by
\[\sigma_j:r\mapsto\left[\begin{array}{cc}\omega^j&0\\0&\omega^{-j}\end{array}\right],\,f\mapsto\left[\begin{array}{cc}0&1\\1&0\end{array}\right]\text{,}\]
where $\omega=e^{2\pi i/n}$ and $j$ ranges from $1$ to $\lfloor\frac{n}{2}\rfloor$.
We sometimes also mention the reducible representation $\sigma_0\cong 1\oplus\epsilon$, or irreducible representations $\sigma_{-j}\cong\sigma_{n-j}\cong\sigma_j$.

The conjugacy classes in $D_n$ are $[1]=\{1\}$, $[r^k]=\{r^k,r^{-k}\}$ for each $k\neq 0$, and $[f]=\set{r^kf}{0\le k<n}$.
The stabilizer of $r^k$ is $\langle r\rangle\cong \Z/n$, with irreducible representations determined by an eigenvalue $\omega^j$ of $r$, while the stabilizer of $r^kf$ is $\langle r^kf\rangle\cong\Z/2$, with two possible irreducible representations $\pm1$.
We have 
\begin{align*}
\Ind_{\langle r\rangle}^{D_n}(\omega^j)
&\cong
\sigma_j
\\
\Ind_{\langle r^kf\rangle}^{D_n}(1)
&\cong
1\oplus\left(\bigoplus_{k=1}^{\lfloor\frac{n}{2}\rfloor}\sigma_k\right)
\\
\Ind_{\langle r^kf\rangle}^{D_n}(-1)
&\cong
\epsilon\oplus\left(\bigoplus_{k=1}^{\lfloor\frac{n}{2}\rfloor}\sigma_k\right).
\end{align*}

The fusion rules of $\mathcal{D}(D_n)$ are abelian, and are summarized in the following table.
\[
\resizebox{.47\textwidth}{!}{
$\begin{array}{|c|c|c|}
    \hline
     X & Y & X\otimes Y \\
    \hline
     (1,\epsilon) & (1,\epsilon) & (1,1)\\
     (1,\epsilon) & (1,\sigma_k) & (1,\sigma_k)\\
     (1,\sigma_k) & (1,\sigma_j) & (1,\sigma_{k+j})\oplus (1,\sigma_{k-j})\\
     (r^a,\omega^k) & (r^b,\omega^j) & (r^{a+b},\omega^{j+k})\oplus(r^{a-b},\omega^{j-k})\\
     (1,\epsilon) & (f,1) & (f,-1)\\
     (r^a,\omega^k) \text{$k\neq 0$} & (f,1) & (f,1)\oplus(f,-1)\\
     (f,1) & (f,1) & (1,1)\oplus\left(\bigoplus_{a=1}^{\lfloor\frac{n}{2}\rfloor}\bigoplus_{k=0}^{n-1}(r^a,\omega^k)\right)\\
    \hline
\end{array}$
}
\]
We note the isomorphisms $(r^a,\omega^k)\cong(r^{-a},\omega^{-k})$ and define by convention $(r^0,\omega^k):=(1,\sigma_k)$.
Most entries of the table involving $(f,-1)$ are omitted, but since $(f,-1)\cong(f,1)\otimes(1,\epsilon)$ and $(1,\epsilon)$ is an Abelian anyon, they can be easily computed.

The fusion graph for $\cD(D_n)$ with respect to $(f,1)$ is given by
$$
\tikzmath{
\draw (0,0) -- (1,0) -- (2,.5) -- (3,0) -- (4,0);
\draw (1,0) -- (2,-.5) -- (3,0);
\filldraw (0,0) node[below]{$\scriptstyle 1$} circle (.05cm);
\filldraw (1,0) node[below, xshift=-.1cm]{$\scriptstyle (f,+1)$} circle (.05cm);
\filldraw (2,-.5) node[below]{$\scriptstyle (1,\sigma_1)$} circle (.05cm);
\node at (2,.1) {$\scriptstyle \vdots$};
\filldraw (2,.5) node[above]{$\scriptstyle (r^{\lfloor n/2\rfloor},\omega^{n-1})$} circle (.05cm);
\filldraw (3,0) node[below, xshift=.1cm]{$\scriptstyle (f,-1)$} circle (.05cm);
\filldraw (4,0) node[below]{$\scriptstyle \epsilon$} circle (.05cm);
}
$$
where the middle vertices range over all remaining simple objects.

Observe that $D_n$ has the index $2$ normal subgroup $\bbZ/n\bbZ=\langle r\rangle$, and thus $\Hilb(D_n)$ is a $\bbZ/2\bbZ=\langle f\rangle$-graded extension of $\Hilb(\bbZ/n\bbZ)$.
By \cite{MR2587410}, 
the \emph{relative center} $Z_{\Hilb(\bbZ/n\bbZ)}(\Hilb(D_n))$ is a $\bbZ/2\bbZ$-crossed braided extension of $\cD(\bbZ/n\bbZ)$,
and the $\bbZ/2\bbZ$-equivariantization of this relative center is $\cD(D_n)$.
Physically, this tells us that anyons with a flux other than $[f]$ can be thought of as direct sums of the abelian anyons in $\Z/n\Z$-toric code, resulting from the inclusion of loops labelled $f$ in the ground state.
If we condense the algebra $(1,1)\oplus(1,\epsilon)\cong\mathbb{C}[\mathbb{Z}/2]$, confining $f$, we end up with $\cD(\Z/n)$ topological order, and the $2$-dimensional anyons in $\cD(D_n)$ all split.
Explicitly, $(1,\sigma_k)\cong m^k\oplus m^{-k}$, $(r^a,1)\cong e^a\oplus e^{-a}$, and in general, $(r^a,\omega^k)\cong e^am^k\oplus e^{-a}m^{-k}$.
Since $frf^{-1}=r^{-1}$, lines labelled by $f$ correspond to the automorphism $e\mapsto e^{-1}$, $m\mapsto m^{-1}$ of $\Z/n$-toric code.

Thus, $\cD(D_n)$ is an $\langle f\rangle$-graded tensor category, with $\mathcal{D}(D_n)_1\cong \mathcal{D}(\Z/n)^{\langle f\rangle}$ as the trivial graded component, and $(f,\pm 1)$ as the simple objects in the $f$-graded component.
As a $\mathcal{D}(D_n)_1$ module, the component $\mathcal{D}(D_n)_f$ is the category of modules over the algebra $\mathbb{C}[\langle r\rangle]\otimes\mathbb{C}^{D_n/\langle f\rangle}$.
Note that $\mathbb{C}^{D_n/\langle f\rangle}\cong 1\oplus\left(\bigoplus_{k=1}^{\lfloor n/2\rfloor}\sigma_k\right)$ is isomorphic to the standard representation of $D_n$, where $D_n$ acts on (a vector space spanned by) the vertices of a regular $n$-gon.

\newpage 
\bibliographystyle{alpha}
\onecolumngrid{
\footnotesize{
\bibliography{main}

\newcommand{\etalchar}[1]{$^{#1}$}
\begin{thebibliography}{LWKW20}

\bibitem[ABP{\etalchar{+}}20]{PhysRevResearch.2.043165}
David Aasen, Daniel Bulmash, Abhinav Prem, Kevin Slagle, and Dominic~J.
  Williamson.
\newblock Topological defect networks for fractons of all types.
\newblock {\em Phys. Rev. Research}, 2:043165, Oct 2020.
\newblock \doi{10.1103/PhysRevResearch.2.043165} \arxiv{2002.05166}.

\bibitem[AF16]{1504.02476}
Jason Alicea and Paul Fendley.
\newblock Topological phases with parafermions: Theory and blueprints.
\newblock {\em Annual Review of Condensed Matter Physics}, 7(1):119--139, 2016.
\newblock \doi{10.1146/annurev-conmatphys-031115-011336} \arxiv{1504.02476}.

\bibitem[AP95]{MR1328736}
Henning~Haahr Andersen and Jan Paradowski.
\newblock Fusion categories arising from semisimple {L}ie algebras.
\newblock {\em Comm. Math. Phys.}, 169(3):563--588, 1995.
\newblock \mathscinet{MR1328736}.

\bibitem[AS10]{1004.5456}
Eddy Ardonne and J.~K. Slingerland.
\newblock {Clebsch-Gordan and 6j-coefficients for rank two quantum groups}.
\newblock {\em J. Phys. A}, 43:395205, 2010.
\newblock \doi{10.1088/1751-8113/43/39/395205} \arxiv{1004.5456}.

\bibitem[BB07]{quant-ph/0609070}
Stephen~S Bullock and Gavin~K Brennen.
\newblock Qudit surface codes and gauge theory with finite cyclic groups.
\newblock {\em Journal of Physics A: Mathematical and Theoretical},
  40(13):3481--3505, mar 2007.
\newblock \doi{10.1088/1751-8113/40/13/013} \arxiv{quant-ph/0609070}.

\bibitem[BB20]{1907.06692}
Jacob~C. Bridgeman and Daniel Barter.
\newblock Computing data for {L}evin-{W}en with defects.
\newblock {\em {Quantum}}, 4:277, June 2020.
\newblock \doi{10.22331/q-2020-06-04-277} \arxiv{1907.06692}.

\bibitem[BBCW19]{1410.4540}
Maissam Barkeshli, Parsa Bonderson, Meng Cheng, and Zhenghan Wang.
\newblock Symmetry fractionalization, defects, and gauging of topological
  phases.
\newblock {\em Physical Review B}, 100:115147, 2019.
\newblock \arxiv{1410.4540} \doi{10.1103/PhysRevB.100.115147}.

\bibitem[BBJ19a]{MR3975865}
Daniel Barter, Jacob~C. Bridgeman, and Corey Jones.
\newblock Domain walls in topological phases and the {B}rauer-{P}icard ring for
  {${\rm Vec} (\Bbb{Z}/p\Bbb{Z})$}.
\newblock {\em Comm. Math. Phys.}, 369(3):1167--1185, 2019.
\newblock \mathscinet{MR3975865} \doi{10.1007/s00220-019-03338-2}
  \arxiv{1806.01279}.

\bibitem[BBJ19b]{MR4043811}
Jacob~C. Bridgeman, Daniel Barter, and Corey Jones.
\newblock Fusing binary interface defects in topological phases: the
  {$\Bbb{Z}/p\Bbb{Z}$} case.
\newblock {\em J. Math. Phys.}, 60(12):121701, 34, 2019.
\newblock \mathscinet{MR4043811} \doi{10.1063/1.5095941} \arxiv{1810.09469}.

\bibitem[BD95]{MR1355899}
John~C. Baez and James Dolan.
\newblock Higher-dimensional algebra and topological quantum field theory.
\newblock {\em J. Math. Phys.}, 36(11):6073--6105, 1995.
\newblock \mathscinet{MR1355899} \arXiv{q-alg/9503002} \doi{10.1063/1.531236}.

\bibitem[BJQ13]{PhysRevB.87.045130}
Maissam Barkeshli, Chao-Ming Jian, and Xiao-Liang Qi.
\newblock Twist defects and projective non-abelian braiding statistics.
\newblock {\em Phys. Rev. B}, 87:045130, Jan 2013.
\newblock \doi{10.1103/PhysRevB.87.045130} \arxiv{1208.4834}.

\bibitem[BJS21]{MR4228258}
Adrien Brochier, David Jordan, and Noah Snyder.
\newblock On dualizability of braided tensor categories.
\newblock {\em Compos. Math.}, 157(3):435--483, 2021.
\newblock \mathscinet{MR4228258} \doi{10.1112/s0010437x20007630}
  \arxiv{1804.07538}.

\bibitem[BJSS21]{MR4302495}
Adrien Brochier, David Jordan, Pavel Safronov, and Noah Snyder.
\newblock Invertible braided tensor categories.
\newblock {\em Algebr. Geom. Topol.}, 21(4):2107--2140, 2021.
\newblock \mathscinet{MR4302495} \doi{10.2140/agt.2021.21.2107}
  \arxiv{2003.13812}.

\bibitem[BK01]{MR1797619}
Bojko Bakalov and Alexander Kirillov, Jr.
\newblock {\em Lectures on tensor categories and modular functors}, volume~21
  of {\em University Lecture Series}.
\newblock American Mathematical Society, Providence, RI, 2001.
\newblock \mathscinet{MR1797619}.

\bibitem[BK05]{PhysRevA.71.022316}
Sergey Bravyi and Alexei Kitaev.
\newblock Universal quantum computation with ideal {C}lifford gates and noisy
  ancillas.
\newblock {\em Phys. Rev. A}, 71:022316, Feb 2005.
\newblock \doi{10.1103/PhysRevA.71.022316} \arxiv{quant-ph/0403025}.

\bibitem[BM06]{cond-mat/0602115}
Ferdinand Bais and C.~Mathy.
\newblock The breaking of quantum double symmetries by defect condensation.
\newblock {\em Annals of Physics}, 322:552--598, 03 2006.
\newblock \doi{{10.1016/j.aop.2006.05.010}} \arxiv{cond-mat/0602115}.

\bibitem[BMS12]{1211.0529}
John~W. Barrett, Catherine Meusburger, and Gregor Schaumann.
\newblock Gray categories with duals and their diagrams, 2012.
\newblock \arxiv{1211.0529}.

\bibitem[Bon07]{Bonderson2007Thesis}
Parsa Bonderson.
\newblock {\em Non-Abelian Anyons and Interferometry}.
\newblock PhD thesis, {C}alifornia {I}nstitute of {T}echnology, 2007.
\newblock PhD Thesis, \doi{10.7907/5NDZ-W890}.

\bibitem[BQ12]{PhysRevX.2.031013}
Maissam Barkeshli and Xiao-Liang Qi.
\newblock Topological nematic states and non-abelian lattice dislocations.
\newblock {\em Phys. Rev. X}, 2:031013, Aug 2012.
\newblock \doi{10.1103/PhysRevX.2.031013} \arxiv{1112.3311}.

\bibitem[BQ14]{PhysRevX.4.041035}
Maissam Barkeshli and Xiao-Liang Qi.
\newblock Synthetic topological qubits in conventional bilayer quantum hall
  systems.
\newblock {\em Phys. Rev. X}, 4:041035, Nov 2014.
\newblock \doi{10.1103/PhysRevX.4.041035} \arxiv{1302.2673}.

\bibitem[BS09]{PhysRevB.79.045316}
F.~A. Bais and J.~K. Slingerland.
\newblock Condensate-induced transitions between topologically ordered phases.
\newblock {\em Phys. Rev. B}, 79:045316, Jan 2009.
\newblock \doi{10.1103/PhysRevB.79.045316} \arxiv{0808.0627}.

\bibitem[BSH09]{MR2516228}
F.~A. Bais, J.~K. Slingerland, and S.~M. Haaker.
\newblock Theory of topological edges and domain walls.
\newblock {\em Phys. Rev. Lett.}, 102(22):220403, 4, 2009.
\newblock \mathscinet{MR2516228} \doi{10.1103/PhysRevLett.102.220403}
  \arxiv{0812.4596}.

\bibitem[BSS08]{0707.4206}
Parsa Bonderson, Kirill Shtengel, and J.K. Slingerland.
\newblock Interferometry of non-abelian anyons.
\newblock {\em Annals of Physics}, 323(11):2709--2755, 2008.
\newblock \doi{10.1016/j.aop.2008.01.012} \arxiv{0707.4206}.

\bibitem[BSS11]{PhysRevB.84.125434}
F.~J. Burnell, Steven~H. Simon, and J.~K. Slingerland.
\newblock Condensation of achiral simple currents in topological lattice
  models: Hamiltonian study of topological symmetry breaking.
\newblock {\em Phys. Rev. B}, 84:125434, Sep 2011.
\newblock \doi{10.1103/PhysRevB.84.125434} \arxiv{1104.1701}.

\bibitem[BSS12]{1012.0317}
F~J Burnell, Steven~H Simon, and J~K Slingerland.
\newblock Phase transitions in topological lattice models via topological
  symmetry breaking.
\newblock 14(1):015004, jan 2012.
\newblock \doi{10.1088/1367-2630/14/1/015004} \arxiv{1012.0317}.

\bibitem[Bur18]{1706.04940}
F.J. Burnell.
\newblock Anyon condensation and its applications.
\newblock {\em Annual Review of Condensed Matter Physics}, 9(1):307--327, 2018.
\newblock \doi{10.1146/annurev-conmatphys-033117-054154} \arxiv{1706.04940}.

\bibitem[CAS13]{1204.5479}
David~J. Clarke, Jason Alicea, and Kirill Shtengel.
\newblock Exotic non-abelian anyons from conventional fractional quantum hall
  states.
\newblock {\em Nature Communications}, 4(1):1348, 2013.
\newblock \doi{10.1038/ncomms2340} \arxiv{1204.5479}.

\bibitem[Che12]{PhysRevB.86.195126}
Meng Cheng.
\newblock Superconducting proximity effect on the edge of fractional
  topological insulators.
\newblock {\em Phys. Rev. B}, 86:195126, Nov 2012.
\newblock \doi{10.1103/PhysRevB.86.195126} \arxiv{1204.6084}.

\bibitem[CMR{\etalchar{+}}21]{2101.02482}
Nils Carqueville, Vincentas Mulevi{\v{c}}ius, Ingo Runkel, Gregor Schaumann,
  and Daniel Scherl.
\newblock Orbifold graph {TQFT}s, 2021.
\newblock \arxiv{2101.02482}.

\bibitem[CR16]{MR3459961}
Nils Carqueville and Ingo Runkel.
\newblock Orbifold completion of defect bicategories.
\newblock {\em Quantum Topol.}, 7(2):203--279, 2016.
\newblock \mathscinet{MR3459961} \doi{10.4171/QT/76} \arxiv{1210.6363}.

\bibitem[Dav10]{MR2584958}
Alexei Davydov.
\newblock Modular invariants for group-theoretical modular data. {I}.
\newblock {\em J. Algebra}, 323(5):1321--1348, 2010.
\newblock \mathscinet{MR2584958} \doi{0.1016/j.jalgebra.2009.11.041}
  \arxiv{0908.1044}.

\bibitem[D{\'{e}}c21]{2107.11037}
Thibault~D. D{\'{e}}coppet.
\newblock Finite semisimple module 2-categories, 2021.
\newblock \arxiv{2107.11037}.

\bibitem[D{\'{e}}c22a]{2205.06453}
Thibault~D. D{\'{e}}coppet.
\newblock {Rigid and Separable Algebras in Compact Semisimple Monoidal
  2-Categories}, 2022.
\newblock \arxiv{2205.06453}.

\bibitem[D{\'{e}}c22b]{2103.15150}
Thibault~D. D{\'{e}}coppet.
\newblock Weak fusion 2-categories.
\newblock {\em Cahiers Topologie G\'eom. Diff\'erentielle}, 63(1):3--24, 2022.
\newblock \doi{https://doi.org/10.48550/arXiv.2103.15150} \arxiv{2103.15150}.

\bibitem[DKR15]{MR3406516}
Alexei Davydov, Liang Kong, and Ingo Runkel.
\newblock Functoriality of the center of an algebra.
\newblock {\em Adv. Math.}, 285:811--876, 2015.
\newblock \mathscinet{MR3406516} \doi{10.1016/j.aim.2015.06.023}
  \arxiv{1307.5956}.

\bibitem[DMNO13]{MR3039775}
Alexei Davydov, Michael M{\"u}ger, Dmitri Nikshych, and Victor Ostrik.
\newblock The {W}itt group of non-degenerate braided fusion categories.
\newblock {\em J. Reine Angew. Math.}, 677:135--177, 2013.
\newblock \mathscinet{MR3039775} \arxiv{1009.2117}.

\bibitem[DN20]{2006.08022}
Alexei Davydov and Dmitri Nikshych.
\newblock {B}raided {P}icard groups and graded extensions of braided tensor
  categories, 2020.
\newblock \arxiv{2006.08022}.

\bibitem[DNO13]{MR3022755}
Alexei Davydov, Dmitri Nikshych, and Victor Ostrik.
\newblock On the structure of the {W}itt group of braided fusion categories.
\newblock {\em Selecta Math. (N.S.)}, 19(1):237--269, 2013.
\newblock \mathscinet{MR3022755} \doi{10.1007/s00029-012-0093-3}
  \arxiv{1109.5558}.

\bibitem[DR18]{1812.11933}
Christopher~L. Douglas and David~J. Reutter.
\newblock Fusion 2-categories and a state-sum invariant for 4-manifolds, 2018.
\newblock \arxiv{1812.11933}.

\bibitem[DSPS20]{MR4254952}
Christopher~L. Douglas, Christopher Schommer-Pries, and Noah Snyder.
\newblock Dualizable tensor categories.
\newblock {\em Mem. Amer. Math. Soc.}, 268(1308):vii+88, 2020.
\newblock \mathscinet{MR4254952} \doi{10.1090/memo/1308} \arxiv{1312.7188}.

\bibitem[EGNO15]{MR3242743}
Pavel Etingof, Shlomo Gelaki, Dmitri Nikshych, and Victor Ostrik.
\newblock {\em Tensor categories}, volume 205 of {\em Mathematical Surveys and
  Monographs}.
\newblock American Mathematical Society, Providence, RI, 2015.
\newblock \mathscinet{MR3242743} \doi{10.1090/surv/205}.

\bibitem[EMM]{1710.07362}
Cain Edie-Michell and Scott Morrison.
\newblock A field guide to categories with type ${A}_n$ fusion rules.
\newblock \arxiv{1710.07362}.

\bibitem[ENO10]{MR2677836}
Pavel Etingof, Dmitri Nikshych, and Victor Ostrik.
\newblock Fusion categories and homotopy theory.
\newblock {\em Quantum Topol.}, 1(3):209--273, 2010.
\newblock With an appendix by Ehud Meir, \mathscinet{2677836}
  \doi{10.4171/QT/6} \arxiv{0909.3140}.

\bibitem[ERB14]{PhysRevB.90.195130}
I.~S. Eli\"ens, J.~C. Romers, and F.~A. Bais.
\newblock Diagrammatics for bose condensation in anyon theories.
\newblock {\em Phys. Rev. B}, 90:195130, Nov 2014.
\newblock \doi{10.1103/PhysRevB.90.195130} \arxiv{1310.6001}.

\bibitem[ET19]{PhysRevB.99.115116}
Dominic~V. Else and Ryan Thorngren.
\newblock Crystalline topological phases as defect networks.
\newblock {\em Phys. Rev. B}, 99:115116, Mar 2019.
\newblock \doi{10.1103/PhysRevB.99.115116} \arxiv{1810.10539}.

\bibitem[FK93]{MR1239440}
J{\"u}rg Fr{\"o}hlich and Thomas Kerler.
\newblock {\em Quantum groups, quantum categories and quantum field theory},
  volume 1542 of {\em Lecture Notes in Mathematics}.
\newblock Springer-Verlag, Berlin, 1993.
\newblock \mathscinet{MR1239440}.

\bibitem[FPSV15]{MR3370609}
J\"{u}rgen Fuchs, Jan Priel, Christoph Schweigert, and Alessandro Valentino.
\newblock On the {B}rauer groups of symmetries of abelian {D}ijkgraaf-{W}itten
  theories.
\newblock {\em Comm. Math. Phys.}, 339(2):385--405, 2015.
\newblock \arxiv{1404.6646} \doi{10.1007/s00220-015-2420-y}.

\bibitem[FRS02a]{MR1882479}
J\"urgen Fuchs, Ingo Runkel, and Christoph Schweigert.
\newblock Conformal correlation functions, {F}robenius algebras and
  triangulations.
\newblock {\em Nuclear Phys. B}, 624(3):452--468, 2002.
\newblock \mathscinet{MR1882479} \doi{10.1016/S0550-3213(01)00638-1}.

\bibitem[FRS02b]{MR1940282}
J{\"u}rgen Fuchs, Ingo Runkel, and Christoph Schweigert.
\newblock T{FT} construction of {RCFT} correlators. {I}. {P}artition functions.
\newblock {\em Nuclear Phys. B}, 646(3):353--497, 2002.
\newblock \mathscinet{MR1940282} \doi{10.1016/S0550-3213(02)00744-7}
  \arxiv{hep-th/0204148}.

\bibitem[FSS11]{MR2740251}
J\"{u}rgen Fuchs, Christoph Schweigert, and Carl Stigner.
\newblock The classifying algebra for defects.
\newblock {\em Nuclear Phys. B}, 843(3):673--723, 2011.
\newblock \mathscinet{MR2740251} \doi{10.1016/j.nuclphysb.2010.10.008}
  \arxiv{1007.0401}.

\bibitem[FSV13]{MR3063919}
J\"{u}rgen Fuchs, Christoph Schweigert, and Alessandro Valentino.
\newblock Bicategories for boundary conditions and for surface defects in 3-d
  {TFT}.
\newblock {\em Comm. Math. Phys.}, 321(2):543--575, 2013.
\newblock \mathscinet{MR3063919} \doi{10.1007/s00220-013-1723-0}
  \arxiv{1203.4568}.

\bibitem[GJF19]{1905.09566}
Davide Gaiotto and Theo Johnson-Freyd.
\newblock Condensations in higher categories, 2019.
\newblock \arxiv{1905.09566}.

\bibitem[GNN09]{MR2587410}
Shlomo Gelaki, Deepak Naidu, and Dmitri Nikshych.
\newblock Centers of graded fusion categories.
\newblock {\em Algebra Number Theory}, 3(8):959--990, 2009.
\newblock \mathscinet{MR2587410} \doi{10.2140/ant.2009.3.959}
  \arxiv{0905.3117}.

\bibitem[Gut19]{1903.05777}
Peter Guthmann.
\newblock The tricategory of formal composites and its strictification, 2019.
\newblock \arxiv{1903.05777}.

\bibitem[Hau17]{MR3650080}
Rune Haugseng.
\newblock The higher {M}orita category of {$\mathbb{E}_n$}-algebras.
\newblock {\em Geom. Topol.}, 21(3):1631--1730, 2017.
\newblock \mathscinet{MR3650080} \doi{10.2140/gt.2017.21.1631}.

\bibitem[HBFL16]{PhysRevB.94.235136}
Chris Heinrich, Fiona Burnell, Lukasz Fidkowski, and Michael Levin.
\newblock Symmetry-enriched string nets: Exactly solvable models for set
  phases.
\newblock {\em Phys. Rev. B}, 94:235136, Dec 2016.
\newblock \doi{10.1103/PhysRevB.94.235136} \arxiv{1606.07816}.

\bibitem[HGW18]{PhysRevB.97.195154}
Yuting Hu, Nathan Geer, and Yong-Shi Wu.
\newblock Full dyon excitation spectrum in extended levin-wen models.
\newblock {\em Phys. Rev. B}, 97:195154, May 2018.
\newblock \doi{10.1103/PhysRevB.97.195154} \arxiv{1502.03433}.

\bibitem[HLP{\etalchar{+}}18]{1706.03329}
Yuting {Hu}, Zhu-Xi {Luo}, Ren {Pankovich}, Yidun {Wan}, and Yong-Shi {Wu}.
\newblock {Boundary Hamiltonian theory for gapped topological phases on an open
  surface}.
\newblock {\em Journal of High Energy Physics}, 2018(1):134, January 2018.
\newblock \arxiv{1706.03329} \doi{10.1007/JHEP01(2018)134}.

\bibitem[HPT16]{MR3578212}
Andr\'e Henriques, David Penneys, and James Tener.
\newblock Categorified trace for module tensor categories over braided tensor
  categories.
\newblock {\em Doc. Math.}, 21:1089--1149, 2016.
\newblock \mathscinet{MR3578212} \arxiv{1509.02937}.

\bibitem[HV19]{MR3971584}
Chris Heunen and Jamie Vicary.
\newblock {\em Categories for quantum theory}, volume~28 of {\em Oxford
  Graduate Texts in Mathematics}.
\newblock Oxford University Press, Oxford, 2019.
\newblock An introduction, \mathscinet{MR3971584}
  \doi{10.1093/oso/9780198739623.001.0001}.

\bibitem[HW15]{1502.02026}
Ling-Yan {Hung} and Yidun {Wan}.
\newblock {Generalized ADE classification of topological boundaries and anyon
  condensation}.
\newblock {\em Journal of High Energy Physics}, 2015:120, July 2015.
\newblock \arxiv{1502.02026} \doi{10.1007/JHEP07(2015)120}.

\bibitem[Izu01]{MR1832764}
Masaki Izumi.
\newblock The structure of sectors associated with {L}ongo-{R}ehren inclusions.
  {II}. {E}xamples.
\newblock {\em Rev. Math. Phys.}, 13(5):603--674, 2001.
\newblock \mathscinet{MR1832764} \doi{10.1142/S0129055X01000818}.

\bibitem[JF20]{2003.06663}
Theo Johnson-Freyd.
\newblock On the classification of topological orders, 2020.
\newblock \arxiv{2003.06663}.

\bibitem[JFR21]{2105.15167}
Theo Johnson-Freyd and David Reutter.
\newblock Minimal nondegenerate extensions, 2021.
\newblock \arxiv{2105.15167}.

\bibitem[JFS17]{MR3590516}
Theo Johnson-Freyd and Claudia Scheimbauer.
\newblock ({O}p)lax natural transformations, twisted quantum field theories,
  and ``even higher'' {M}orita categories.
\newblock {\em Adv. Math.}, 307:147--223, 2017.
\newblock \mathscinet{MR3590516} \doi{10.1016/j.aim.2016.11.014}
  \arxiv{1502.06526}.

\bibitem[JMPP19]{1910.03178}
Corey Jones, Scott Morisson, David Penneys, and Julia Plavnik.
\newblock Extension theory for braided-enriched fusion categories, 2019.
\newblock \arxiv{1910.03178}.

\bibitem[Jon83]{MR0696688}
Vaughan F.~R. Jones.
\newblock Index for subfactors.
\newblock {\em Invent. Math.}, 72(1):1--25, 1983.
\newblock \mathscinet{MR696688}, \doi{10.1007/BF01389127}.

\bibitem[{Kir}11]{1106.6033}
Alexander {Kirillov Jr}.
\newblock {String-net model of Turaev-Viro invariants}, 2011.
\newblock \arxiv{1106.6033}.

\bibitem[Kit03]{MR1951039}
A.~Yu. Kitaev.
\newblock Fault-tolerant quantum computation by anyons.
\newblock {\em Ann. Physics}, 303(1):2--30, 2003.
\newblock \mathscinet{MR1951039} \doi{10.1016/S0003-4916(02)00018-0}
  \arxiv{quant-ph/9707021}.

\bibitem[Kit06]{cond-mat/0506438}
Alexei Kitaev.
\newblock Anyons in an exactly solved model and beyond.
\newblock {\em Annals of Physics}, 321(1):2--111, 2006.
\newblock January Special Issue. \doi{10.1016/j.aop.2005.10.005}
  \arxiv{cond-mat/0506438}.

\bibitem[KK12]{MR2942952}
Alexei Kitaev and Liang Kong.
\newblock Models for gapped boundaries and domain walls.
\newblock {\em Comm. Math. Phys.}, 313(2):351--373, 2012.
\newblock \mathscinet{MR2942952} \doi{10.1007/s00220-012-1500-5}
  \arxiv{1104.5047}.

\bibitem[KKR10]{MR2726654}
Robert Koenig, Greg Kuperberg, and Ben~W. Reichardt.
\newblock Quantum computation with {T}uraev-{V}iro codes.
\newblock {\em Ann. Physics}, 325(12):2707--2749, 2010.
\newblock \mathscinet{MR2726654} \doi{10.1016/j.aop.2010.08.001}
  \arxiv{1002.2816}.

\bibitem[KL94]{MR1280463}
Louis~H. Kauffman and S{\'o}stenes~L. Lins.
\newblock {\em Temperley-{L}ieb recoupling theory and invariants of
  {$3$}-manifolds}, volume 134 of {\em Annals of Mathematics Studies}.
\newblock Princeton University Press, Princeton, NJ, 1994.
\newblock \mathscinet{MR1280463}.

\bibitem[KO02]{MR1936496}
Alexander Kirillov, Jr. and Viktor Ostrik.
\newblock On a {$q$}-analogue of the {M}c{K}ay correspondence and the {ADE}
  classification of {$\mathfrak{sl}_2$} conformal field theories.
\newblock {\em Adv. Math.}, 171(2):183--227, 2002.
\newblock \mathscinet{MR1936496} \arXiv{math.QA/0101219}
  \doi{10.1006/aima.2002.2072}.

\bibitem[Kon14a]{MR3204497}
L.~Kong.
\newblock Some universal properties of {L}evin-{W}en models.
\newblock In {\em X{VII}th {I}nternational {C}ongress on {M}athematical
  {P}hysics}, pages 444--455. World Sci. Publ., Hackensack, NJ, 2014.
\newblock \mathscinet{MR3204497} \doi{10.1142/9789814449243\_0042}
  \arxiv{1211.4644}.

\bibitem[Kon14b]{MR3246855}
Liang Kong.
\newblock Anyon condensation and tensor categories.
\newblock {\em Nuclear Phys. B}, 886:436--482, 2014.
\newblock \mathscinet{MR3246855} \doi{10.1016/j.nuclphysb.2014.07.003}
  \arxiv{1307.8244}.

\bibitem[KS11]{1008.0654}
Anton Kapustin and Natalia Saulina.
\newblock Topological boundary conditions in abelian chern--simons theory.
\newblock {\em Nuclear Physics B}, 845(3):393--435, 2011.
\newblock \doi{10.1016/j.nuclphysb.2010.12.017} \arxiv{1008.0654}.

\bibitem[KT17]{MR3702386}
Anton Kapustin and Ryan Thorngren.
\newblock Higher symmetry and gapped phases of gauge theories.
\newblock In {\em Algebra, geometry, and physics in the 21st century}, volume
  324 of {\em Progr. Math.}, pages 177--202. Birkh\"{a}user/Springer, Cham,
  2017.
\newblock \mathscinet{MR3702386} \arxiv{1309.4721}.

\bibitem[KTZ20a]{MR4239374}
Liang Kong, Yin Tian, and Zhi-Hao Zhang.
\newblock Defects in the 3-dimensional toric code model form a braided fusion
  2-category.
\newblock {\em Journal of High Energy Physics}, 2020(12):Paper No. 078, 32, Dec
  2020.
\newblock \mathscinet{MR4239374} \arxiv{2009.06564}
  \doi{10.1007/jhep12(2020)078}.

\bibitem[KTZ20b]{MR4037491}
Liang Kong, Yin Tian, and Shan Zhou.
\newblock The center of monoidal 2-categories in {$3+1{\rm D}$}
  {D}ijkgraaf-{W}itten theory.
\newblock {\em Adv. Math.}, 360:106928, 25, 2020.
\newblock \mathscinet{MR4037491} \doi{10.1016/j.aim.2019.106928}
  \arxiv{1905.04644}.

\bibitem[KZ18a]{MR3866911}
Liang Kong and Hao Zheng.
\newblock The center functor is fully faithful.
\newblock {\em Adv. Math.}, 339:749--779, 2018.

\bibitem[KZ18b]{MR3725882}
Liang Kong and Hao Zheng.
\newblock Drinfeld center of enriched monoidal categories.
\newblock {\em Adv. Math.}, 323:411--426, 2018.
\newblock \mathscinet{1704.01447} \doi{10.1016/j.aim.2017.10.038}
  \arxiv{1704.01447}.

\bibitem[KZ20]{1905.04924}
Liang Kong and Hao Zheng.
\newblock A mathematical theory of gapless edges of 2d topological orders. part
  i.
\newblock {\em Journal of High Energy Physics}, 2020(2):150, 2020.
\newblock \doi{10.1007/JHEP02(2020)150} \arxiv{1905.04924}.

\bibitem[KZ21]{1912.01760}
Liang Kong and Hao Zheng.
\newblock A mathematical theory of gapless edges of 2d topological orders. part
  ii.
\newblock {\em Nuclear Physics B}, 966:115384, 2021.
\newblock \doi{10.1016/j.nuclphysb.2021.115384} \arxiv{1912.01760}.

\bibitem[LBRS12]{PhysRevX.2.041002}
Netanel~H. Lindner, Erez Berg, Gil Refael, and Ady Stern.
\newblock Fractionalizing majorana fermions: Non-abelian statistics on the
  edges of abelian quantum hall states.
\newblock {\em Phys. Rev. X}, 2:041002, Oct 2012.
\newblock \doi{10.1103/PhysRevX.2.041002} \arxiv{1204.5733}.

\bibitem[Lev13]{PhysRevX.3.021009}
Michael Levin.
\newblock Protected edge modes without symmetry.
\newblock {\em Phys. Rev. X}, 3:021009, May 2013.
\newblock \doi{10.1103/PhysRevX.3.021009}\arxiv{1301.7355}.

\bibitem[LKW18]{PhysRevX.8.021074}
Tian Lan, Liang Kong, and Xiao-Gang Wen.
\newblock Classification of $\mathbf{(}3+1\mathbf{)}\mathrm{D}$ bosonic
  topological orders: The case when pointlike excitations are all bosons.
\newblock {\em Phys. Rev. X}, 8:021074, Jun 2018.
\newblock \doi{10.1103/PhysRevX.8.021074} \arxiv{1704.04221}.

\bibitem[LLB21]{PhysRevB.103.195155}
Chien-Hung Lin, Michael Levin, and Fiona~J. Burnell.
\newblock Generalized string-net models: A thorough exposition.
\newblock {\em Phys. Rev. B}, 103:195155, May 2021.
\newblock \doi{10.1103/PhysRevB.103.195155} \arxiv{2012.14424}.

\bibitem[LSDS10]{PhysRevLett.105.077001}
Roman~M. Lutchyn, Jay~D. Sau, and S.~Das~Sarma.
\newblock Majorana fermions and a topological phase transition in
  semiconductor-superconductor heterostructures.
\newblock {\em Phys. Rev. Lett.}, 105:077001, Aug 2010.
\newblock \doi{10.1103/PhysRevLett.105.077001} \arxiv{1002.4033}.

\bibitem[Lur09]{MR2555928}
Jacob Lurie.
\newblock On the classification of topological field theories.
\newblock In {\em Current developments in mathematics, 2008}, pages 129--280.
  Int. Press, Somerville, MA, 2009.
\newblock \mathscinet{MR2555928} \arxiv{0905.0465}.

\bibitem[LW05]{PhysRevB.71.045110}
Michael~A. Levin and Xiao-Gang Wen.
\newblock String-net condensation: A physical mechanism for topological phases.
\newblock {\em Phys. Rev. B}, 71:045110, Jan 2005.
\newblock doi{10.1103/PhysRevB.71.045110} \arxiv{cond-mat/0404617}.

\bibitem[LW14]{PhysRevB.90.115119}
Tian Lan and Xiao-Gang Wen.
\newblock Topological quasiparticles and the holographic bulk-edge relation in
  $(2+1)$-dimensional string-net models.
\newblock {\em Phys. Rev. B}, 90:115119, Sep 2014.
\newblock \doi{10.1103/PhysRevB.90.115119} \arxiv{1311.1784}.

\bibitem[LWKW20]{PhysRevResearch.2.023331}
Tian Lan, Xueda Wen, Liang Kong, and Xiao-Gang Wen.
\newblock Gapped domain walls between 2+1d topologically ordered states.
\newblock {\em Phys. Rev. Research}, 2:023331, Jun 2020.
\newblock \doi{10.1103/PhysRevResearch.2.023331} \arxiv{1911.08470}.

\bibitem[LWW15]{PhysRevLett.114.076402}
Tian Lan, Juven~C. Wang, and Xiao-Gang Wen.
\newblock Gapped domain walls, gapped boundaries, and topological degeneracy.
\newblock {\em Phys. Rev. Lett.}, 114:076402, Feb 2015.
\newblock \arxiv{1408.6514} \doi{10.1103/PhysRevLett.114.076402}.

\bibitem[MP19]{MR3961709}
Scott Morrison and David Penneys.
\newblock Monoidal {C}ategories {E}nriched in {B}raided {M}onoidal
  {C}ategories.
\newblock {\em Int. Math. Res. Not. IMRN}, (11):3527--3579, 2019.
\newblock \mathscinet{MR3961709} \doi{10.1093/imrn/rnx217} \arxiv{1701.00567}.

\bibitem[MPP18]{1809.09782}
Scott Morrison, David Penneys, and Julia Plavnik.
\newblock Completion for braided enriched monoidal categories, 2018.
\newblock \arxiv{1809.09782}.

\bibitem[M{\"u}g03a]{MR1966524}
Michael M{\"u}ger.
\newblock From subfactors to categories and topology. {I}. {F}robenius algebras
  in and {M}orita equivalence of tensor categories.
\newblock {\em J. Pure Appl. Algebra}, 180(1-2):81--157, 2003.
\newblock \mathscinet{MR1966524} \doi{10.1016/S0022-4049(02)00247-5}
  \arXiv{math.CT/0111204}.

\bibitem[M{\"u}g03b]{MR1966525}
Michael M{\"u}ger.
\newblock From subfactors to categories and topology. {II}. {T}he quantum
  double of tensor categories and subfactors.
\newblock {\em J. Pure Appl. Algebra}, 180(1-2):159--219, 2003.
\newblock \mathscinet{MR1966525} \doi{10.1016/S0022-4049(02)00248-7}
  \arXiv{math.CT/0111205}.

\bibitem[M{\"u}g03c]{MR1990929}
Michael M{\"u}ger.
\newblock On the structure of modular categories.
\newblock {\em Proc. London Math. Soc. (3)}, 87(2):291--308, 2003.
\newblock \mathscinet{MR1990929} \doi{10.1112/S0024611503014187}.

\bibitem[ORvO10]{PhysRevLett.105.177002}
Yuval Oreg, Gil Refael, and Felix von Oppen.
\newblock Helical liquids and majorana bound states in quantum wires.
\newblock {\em Phys. Rev. Lett.}, 105:177002, Oct 2010.
\newblock \doi{10.1103/PhysRevLett.105.177002} \arxiv{1003.1145}.

\bibitem[Pac12]{MR3157248}
Jiannis~K. Pachos.
\newblock {\em Introduction to topological quantum computation}.
\newblock Cambridge University Press, Cambridge, 2012.
\newblock \mathscinet{MR3157248} \doi{10.1017/CBO9780511792908}.

\bibitem[Reu19]{1906.09710}
David Reutter.
\newblock Uniqueness of unitary structure for unitarizable fusion categories,
  2019.
\newblock \arxiv{1906.09710}.

\bibitem[RSW09]{MR2544735}
Eric Rowell, Richard Stong, and Zhenghan Wang.
\newblock On classification of modular tensor categories.
\newblock {\em Comm. Math. Phys.}, 292(2):343--389, 2009.
\newblock \mathscinet{MR2544735} \arXiv{0712.1377}
  \doi{10.1007/s00220-009-0908-z}.

\bibitem[Saw06]{MR2286123}
Stephen~F. Sawin.
\newblock Quantum groups at roots of unity and modularity.
\newblock {\em J. Knot Theory Ramifications}, 15(10):1245--1277, 2006.
\newblock \mathscinet{MR2286123} \arXiv{math.QA/0308281}.

\bibitem[SB16]{PhysRevB.94.165110}
Marc~D. Schulz and Fiona~J. Burnell.
\newblock Frustrated topological symmetry breaking: Geometrical frustration and
  anyon condensation.
\newblock {\em Phys. Rev. B}, 94:165110, Oct 2016.
\newblock \doi{10.1103/PhysRevB.94.165110} \arxiv{1510.08104}.

\bibitem[SCD{\etalchar{+}}22]{2202.05442}
Wilbur {Shirley}, Yu-An {Chen}, Arpit {Dua}, Tyler~D. {Ellison}, Nathanan
  {Tantivasadakarn}, and Dominic~J. {Williamson}.
\newblock {Three-dimensional quantum cellular automata from chiral semion
  surface topological order and beyond}.
\newblock {\em arXiv e-prints}, page arXiv:2202.05442, February 2022.
\newblock \arxiv{2202.05442}.

\bibitem[Sun22]{2201.00192}
Long Sun.
\newblock The symmetry enriched center functor is fully faithful, 2022.
\newblock \arxiv{2201.00192}.

\bibitem[TK10]{PhysRevB.82.115120}
Jeffrey C.~Y. Teo and C.~L. Kane.
\newblock Topological defects and gapless modes in insulators and
  superconductors.
\newblock {\em Phys. Rev. B}, 82:115120, Sep 2010.
\newblock \doi{10.1103/PhysRevB.82.115120} \arxiv{1006.0690}.

\bibitem[TY98]{MR1659954}
Daisuke Tambara and Shigeru Yamagami.
\newblock Tensor categories with fusion rules of self-duality for finite
  abelian groups.
\newblock {\em J. Algebra}, 209(2):692--707, 1998.
\newblock \mathscinet{MR1659954}.

\bibitem[vKBS13]{PhysRevB.87.045107}
C.~W. von Keyserlingk, F.~J. Burnell, and S.~H. Simon.
\newblock Three-dimensional topological lattice models with surface anyons.
\newblock {\em Phys. Rev. B}, 87:045107, Jan 2013.
\newblock \arxiv{1208.5128} \doi{10.1103/PhysRevB.87.045107}.

\bibitem[Wan10]{MR2640343}
Zhenghan Wang.
\newblock {\em Topological quantum computation}, volume 112 of {\em CBMS
  Regional Conference Series in Mathematics}.
\newblock Published for the Conference Board of the Mathematical Sciences,
  Washington, DC; by the American Mathematical Society, Providence, RI, 2010.
\newblock \mathscinet{MR2640343} \doi{10.1090/cbms/112}.

\bibitem[WW12]{1104.2632}
Kevin Walker and Zhenghan Wang.
\newblock (3+1)-tqfts and topological insulators.
\newblock {\em Frontiers of Physics}, 7(2):150--159, 2012.
\newblock \doi{10.1007/s11467-011-0194-z} \arxiv{1104.2632}.

\bibitem[YW12]{PhysRevB.86.161107}
Yi-Zhuang You and Xiao-Gang Wen.
\newblock Projective non-abelian statistics of dislocation defects in a
  ${\mathbb{z}}_{N}$ rotor model.
\newblock {\em Phys. Rev. B}, 86:161107, Oct 2012.
\newblock \doi{10.1103/PhysRevB.86.161107} \arxiv{1204.0113}.

\end{thebibliography}
}}
\end{document}